\documentclass[12pt]{amsart}
\usepackage{calligra}
\usepackage{amsthm}
\usepackage{amsmath}
\usepackage{amssymb}
\usepackage{amscd}
\usepackage{bbm}
\usepackage[all]{xy}
\usepackage{mathrsfs}
\usepackage[top=72pt,bottom=96pt,left=72pt,right=72pt]{geometry}
\newtheorem{Lemma}{Lemma}[section]
\newtheorem{Remark}[Lemma]{Remark}

\newtheorem{Theorem}[Lemma]{Theorem}

\newtheorem{Definition}[Lemma]{Definition}
\newtheorem{Proposition}[Lemma]{Proposition}
\numberwithin{equation}{section}
\def\F{\mathfrak{F}}
\def\f{\mathfrak{f}}

\def\qS{\mathscr{S}}
\def\C{\mathbb{C}}

\def\F{\mathfrak{F}}
\def\qGG{\mathfrak{qGG}}
\def\Hor{\mathrm{Hor}}
\def\dvol{\mathrm{dvol}}
\def\id{\mathrm{id}}
\def\Ker{\mathrm{Ker}}
\def\Im{\mathrm{Im}}
\def\YM{\mathrm{YM}}
\def\SM{\mathrm{SM}}

\def\YMSM{\mathrm{YMSM}}

\def\H{\mathrm{Hor}}
\def\Mor{\textsc{Mor}}
\def\N{\mathbb{N}}

\def\R{\mathbb{R}}
\def\Z{\mathbb{Z}}
\def\C{\mathbb{C}}
\def\W{\mathbb{W}}
\def\A{\mathbb{A}}
\def\h{\mathbbm{h}}
\def\v{\mathbbm{v}}

\def\triv{\mathrm{triv}}

\def\l{\mathrm{L}}
\def\Id{\mathrm{Id}}

\def\ad{\mathrm{ad}}
\def\Ad{\mathrm{Ad}}
\def\class{\mathrm{class}}

\def\r{\mathrm{R}}
\def\U{\mathcal{U}}
\def\G{\mathcal{G}}
\def\X{\mathcal{X}}
\def\T{\mathcal{T}}

\def\SU{\mathcal{SU}}
\def\U{\mathcal{U}}

\def\m{\mathfrak{m}}
\def\su{\mathfrak{su}}
\def\u{\mathfrak{u}}

\begin{document}
\date{\today}
\title{On the Non--Commutative Geometry of the Electroweak Interaction}
\author{Gustavo Amilcar Salda\~na Moncada}
\address{Gustavo Amilcar Salda\~na Moncada\\
CIMAT, Unidad Guanajuato}
\email{gustavo.saldana@cimat.mx, gamilcar@ciencias.unam.mx}
\begin{abstract}
In differential geometry, the underlying mathematical structure behind the electroweak theory (the Glashow--Weinberg--Salam model) is a trivial principal $(SU(2)\times U(1))$--bundle over $\R^4$ with the Minkowski metric. The aim of this paper is to show that, using non--commutative geometry, the  electroweak theory can be described using a trivial quantum principal $SU(2)$--bundle over $C^\infty_\C(\R^4)$  with the Minkowski metric. 
 \begin{center}
  \parbox{300pt}{\textit{MSC 2010:}\ 46L87, 58B99.}
  \\[5pt]
  \parbox{300pt}{\textit{Keywords:}\ Quantum principal bundles, quantum principal connections, electroweak interaction.}
 \end{center}
\end{abstract}
\maketitle
\section{Introduction}

This paper arises from an \emph{elementary} observation in Yang--Mills theory: in differential geometry, the dimension of the differential calculus of the Lie group (which, in this case, coincides with the dimension of the group) used for the theory is exactly equal to the number of gauge boson fields of the theory \cite{diff1,na,nodg,gtvp}. For example, in the case of the electroweak theory, there are $4$ gauge boson fields, corresponding to the $4$ dimensions of the standard differential calculus on $SU(2)\times U(1)$, that is, the differential calculus on $SU(2)\times U(1)$ given by its space of differential forms. 

However, in non--commutative geometry, the dimension of a differential calculus on a Lie group need not coincide with the dimension of the group, since in this setting there is no restriction on using non--standard differential calculus. For example, in non--commutative geometry, for the Lie group $SU(2)$, there exists a $4$--dimensional differential calculus (of course, it is not the standard one given by the differential forms of $SU(2)$) \cite{stach}. This dimension is precisely the number of gauge boson fields of the electroweak theory. 

In this way, the aim of this paper is not to provide new information on the electroweak interaction, rather, the aim of this paper is to show that the non--commutative geometrical formulation of the Yang--Mills theory presented in~\cite{sald1,sald2,saldcon} allows one to describe the electroweak theory, including the Higgs mechanism, using solely a quantum principal $SU(2)$--bundle, by changing the standard differential calculus of $SU(2)$ given by its differential forms, to a concrete $4$--dimensional differential calculus on $SU(2)$. 

The importance of this paper lies in the following 3 points.
\begin{enumerate}
    \item The description of the full electroweak theory using only the Lie group $SU(2)$ and not $SU(2)\times U(1)$.
    \item The general theory used in this paper (see~\cite{sald1,sald2,saldcon}) was not developed specifically for this work; rather, this paper is an application of it. This illustrates both the scope and the generality of the theory in \cite{sald1,sald2,saldcon}.
    \item Grand unification theories (GUTs) are models in theoretical physics whose objective is to describe the physics of non--gravitational fundamental interactions of nature using a single principal $G$--bundle and symmetry--breaking mechanisms, for a Lie group $G$ that contains $SU(3)\times SU(2)\times U(1)$ as a subgroup, among other requirements on $G$ \cite{diff1}. However, at the time of writing, there is not yet sufficient theoretical and experimental evidence to conclude that such models can be satisfactorily realized. 
    
    In this way, points (1) and (2) open the door to the following question.\\
    
    \emph{What if GUTs fail at the theoretical level because we are only searching for a Lie group $G$ with the appropriate properties, whereas we should instead be looking for a Lie group $G$ together with \textbf{a non--standard differential calculus on it} in order to construct such theories in the framework of non--commutative geometry (not in the framework of differential geometry)?\\}
    
    We will address this question in forthcoming publications.\\
\end{enumerate}

This paper is organized into six sections. Following this introduction, Section~2 presents the canonical $\ast$--Hopf algebra $\SU(2)$ associated with the Lie group $SU(2)$, as well as the $4$--dimensional bicovariant differential calculus on $SU(2)$ (more precisely, on $\SU(2)$) that will be used throughout the paper.

In Section~3, we introduce the trivial quantum principal $\SU(2)$--bundle over $B:=C^\infty_\C(\R^4)$ $$\zeta_{4D}=(P,B,\Delta_P)$$ on which our model of the electroweak theory is developed, together with its differential calculus. In this section, we also show that the \emph{non--commutative geometrical} Bianchi identity of $\zeta_{4D}$ coincides with the \emph{classical} Bianchi identity of the trivial principal $G$--bundle over $\R^4$ $$\pi_\class : \R^4\times G\longrightarrow \R^4,\qquad (x,C)\longmapsto x,$$ where $G=SU(2)\times U(1)$. Furthermore, Theorem~\ref{prop3.6} proves that the gauge group $\mathfrak{GG}$ of $\pi_\class$ embeds into the \emph{quantum gauge group} $\qGG$ of $\zeta_{4D}$, while Proposition~3.13 shows that the action of $\mathfrak{GG}$ on the space of principal connections of $\pi_\class$ coincides with the action of $\qGG$ on the space of quantum principal connections of $\zeta_{4D}$.

Section~4 is devoted to proving that the \emph{non--commutative geometrical} Yang--Mills equation of $\zeta_{4D}$, in the sense of~\cite{sald2,saldcon}, coincides with the \emph{classical} Yang--Mills equation of $\pi_\class$, that is, the Yang--Mills equation of the electroweak theory. 

In Section~5, we incorporate scalar matter fields into our model. More precisely, we prove that the \emph{non--commutative geometrical} equations of motion for scalar matter fields coupled to Yang--Mills fields (i.e., gauge boson fields) of $\zeta_{4D}$ coincide with the \emph{classical} equations of motion for scalar matter fields coupled to Yang--Mills fields of $\pi_\class$. In particular, in Subsection~5.3 we show that $\zeta_{4D}$ provides an exact description of the Higgs mechanism in the electroweak theory. Finally, in the sixth section we show that $\zeta_{4D}$ provides a  description of fermionic matter fields coupled to Yang--Mills fields in the electroweak theory.

It is worth noticing that, throughout the text, the term \emph{quantum} is used to refer to the framework of non--commutative geometry and should not be confused with second quantization or related notions typically used in physics. Similarly, the term \emph{classical} refers to the framework of differential geometry. In the same way, throughout the text we work with associated left/right quantum vector bundles, whose elements are interpreted as \emph{left/right non--commutative geometrical} matter fields. Clearly, the term \emph{non--commutative geometrical} refers to the fact that we are working within the framework of non--commutative geometry, while the terms \emph{left} and \emph{right} refer to the underlying $B$--module structure. The reader should not confuse this terminology with that of chiral fields in physics. For example, in Section 6.3 we will verify that right--handed Lepton singlets can be described as \emph{left non--commutative geometrical} matter fields.

On the other hand, throughout the text we work with unital $\ast$--algebras, so the term \emph{unital} will be omitted.  In addition, although the symbol $\ast$ is commonly used to denote both the pull--back of a function and the dual space of a vector space, in this paper, the symbol $\ast$ is reserved for the antilinear involution of a $\ast$--algebra, or the antilinear involution of a vector space. Consequently, we use the symbol $\#$ to denote both the pull--back of a function and the dual space of a vector space. Furthermore, we use Sweedler notation throughout the text.

To conclude this section, it is worth mentioning the general theory used in this paper \cite{sald1,sald2,saldcon} was developed within Durdevich’s formulation of quantum principal bundles (\cite{micho1,micho2,micho3,stheve}); accordingly, this paper is also developed within that framework. Nevertheless, this formulation is not widely known. For this reason, in each section we have included a subsection titled \emph{General Theory} in order to provide the reader with a clear, coherent, and reasonable self--contained exposition of Durdevich’s formulation. This is necessary, as familiarity with the essential aspects of Durdevich’s formulation is required to correctly follow the theory developed for $\zeta_{4D}$ in each section.

\section{The Group}
In differential geometry, the structure group of a principal bundle plays a central role. The same holds in non--commutative geometry, and in this section we will address this aspect.

\subsection{General Theory}

As it is common in non--commutative geometry, we will use  quantum groups and $\ast$--Hopf algebras instead Lie groups (see references \cite{woro1,woro2}). However, the differential calculus on a quantum group will be the universal differential envelope $\ast$--calculus (not the Woronowicz high--order calculus of references \cite{woro2}). Here, we will present a brief summary of this calculus. For more details, see  references \cite{micho1,stheve}.

Let $$(\Lambda,d)$$ be a $\ast$--First--Order Differential Calculus (abbreviated $\ast$--FODC) over a $\ast$--Hopf algebra $$(A,\cdot,\mathbbm{1},\ast,\Delta,\epsilon,S)$$ and consider the graded vector space  ($k\in \N$)
$$\otimes^\bullet_A\Lambda:=\bigoplus_k (\otimes^k_A\Lambda)\quad \mbox{ with } \quad \otimes^0_A \Lambda=A,\quad  \otimes^k_A\Lambda:=\underbrace{\Lambda\otimes_A\cdots\otimes_A \Lambda}_{k\; times}$$ endowed with its canonical graded $\ast$--algebra structure, which is given by
$$(\vartheta_1\otimes_{A}\cdots\otimes_{A}\vartheta_k)\cdot(\vartheta'_1\otimes_{A}\cdots\otimes_{A}\vartheta'_l):=\vartheta_1\otimes_{A}\cdots\otimes_{A}\vartheta_k\otimes_{A}\vartheta'_1\otimes_{A}\cdots\otimes_{A}\vartheta'_l,$$ $$(\vartheta_1\otimes_{A}\cdots\otimes_{A}\vartheta_k)^{\ast}:=(-1)^{k(k-1)\over 2}\,\vartheta^{\ast}_k\otimes_{A}\cdots\otimes_{A}\vartheta^{\ast}_1, $$ for $\vartheta_1\otimes_{A}\cdots\otimes_{A}\vartheta_k$ $\in$ $\otimes^{k}_A\Lambda$ and $\vartheta'_1\otimes_{A}\cdots\otimes_{A}\vartheta'_l$ $\in$ $\otimes^{l}_A\Lambda$. Now, let us consider the quotient graded space
  \begin{equation}
      \label{ec.2.1}     \Lambda^\wedge:=\otimes^\bullet_A\Lambda/\mathcal{Q},
  \end{equation}
  where $\mathcal{Q}$ is the two--side ideal of $\otimes^\bullet_A\Lambda$ generated by elements
  \begin{equation}
      \label{ec.2.2}
     \sum_i dg_i\otimes_A dh_i \quad \mbox{ such that } \quad \sum_i g_i\,dh_i=0,
  \end{equation}
  for all $g_i$, $h_i$ $\in$ $A$. According to \cite{micho1,stheve}, the graded $\ast$--algebra structure of $\otimes^\bullet_A \Lambda$ endows $\Lambda^\wedge$ with structure of graded $\ast$--algebra. The product in $\Lambda^\wedge$ will be denoted simply by juxtaposition of elements. 
  
  On the other hand, for a given $t=\vartheta_1\cdots \vartheta_n$ $\in$ $\Lambda^{\wedge\,n }$ with $\vartheta_1$,..., $\vartheta_n$ $\in$ $\Lambda$, the linear map
  \begin{equation}
      \label{ec.2.3}
     d:\Lambda^\wedge\longrightarrow \Lambda^\wedge 
  \end{equation}
  given by $$d(t)=d(\vartheta_1\cdots \vartheta_n)=\displaystyle \sum^n_{j=1}(-1)^{j-1}\vartheta_1\cdots \vartheta_{j-1}\cdot d\vartheta_j\cdot \vartheta_{j+1}\cdots \vartheta_n \; \in \; \Lambda^{\wedge\,n+1 },$$ where $d\vartheta_j=\displaystyle \sum_l dg_l\,dh_l$ if $\vartheta_j=\displaystyle \sum_l g_l\,(dh_l)$, is well--defined, satisfies the graded Leibniz rule, $d^2=0$ and $d(t^\ast)=(dt)^\ast$ \cite{micho1,stheve}. In this way, 
  \begin{equation}
    \label{ec.2.4}
     (\Lambda^\wedge,d,\ast)
\end{equation}
  is a graded differential $\ast$--algebra generated by its degree 0 elements $\Lambda^{\wedge\,0}= A$  and it is called {\it the universal differential envelope $\ast$--calculus}.

 In references \cite{micho1,stheve}, the reader can find a proof of the following statement.

\begin{Proposition}
\label{prop2.0}
Suppose $(\Omega^\bullet=\bigoplus_k \Omega^k,d,\ast)$ is a graded differential $\ast$--algebra and $(\Lambda,d)$ is a $\ast$--FODC over $A$.

Let $$\phi^0: \Lambda^{\wedge 0}\longrightarrow \Omega^0$$ be a $\ast$--algebra morphism and $$\phi^1:\Lambda^{\wedge 1}\longrightarrow \Omega^1$$ be a linear map such that $$\phi^1(a \,db)=\phi^0(a)\,d(\phi^0(b))$$ for all $a$, $b$ $\in$ $A$. Then, there exist unique linear maps $$\phi^k: \Lambda^{\wedge k}\longrightarrow \Omega^k$$ for all $k\geq 2$ such that $$\phi:=\bigoplus_k\phi^k: \Lambda^\wedge\longrightarrow \Omega^\bullet$$ is a graded differential $\ast$--algebra morphism.
\end{Proposition}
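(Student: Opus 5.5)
The plan is to build $\phi$ in two stages, following the construction of $\Lambda^\wedge$ in (\ref{ec.2.1}): first a graded $\ast$--algebra morphism out of the tensor algebra $\otimes^\bullet_A\Lambda$, then a check that it vanishes on the ideal $\mathcal{Q}$ of (\ref{ec.2.2}).

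\textbf{Consequences of the hypothesis on $\phi^1$.} For $a,b,c\in A$ we have $(a\,db)c=a\,d(bc)-ab\,dc$. Applying $\phi^1$ and the Leibniz rule in $\Omega^\bullet$ gives
$$\phi^1(\vartheta c)=\phi^1(\vartheta)\phi^0(c),\qquad \phi^1(c\vartheta)=\phi^0(c)\phi^1(\vartheta).$$
Thus $\phi^1$ is an $A$--bimodule map, with $\Omega^1$ regarded as an $A$--bimodule via $\phi^0$. Since $\ast$ is antilinear with $(a\,db)^\ast=(db^\ast)a^\ast=d(b^\ast a^\ast)-b^\ast\,da^\ast$, the same computation in $\Omega^\bullet$, using $d\circ\ast=\ast\circ d$ and that $\phi^0$ is a $\ast$--morphism, shows that $\phi^1$ commutes with $\ast$. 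I assume here, as is implicit in the statement, that $d$ commutes with $\ast$ in $\Omega^\bullet$.

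\textbf{Stage one: the tensor algebra.} Define
$$\tilde\phi^k(\vartheta_1\otimes_A\cdots\otimes_A\vartheta_k)=\phi^1(\vartheta_1)\cdots\phi^1(\vartheta_k).$$
The bimodule property above makes this balanced over $A$, so it is well defined on $\otimes^k_A\Lambda$. It is multiplicative by construction. It respects $\ast$ provided the sign $(-1)^{k(k-1)/2}$ matches the graded $\ast$ convention in $\Omega^\bullet$, namely $(\omega\eta)^\ast=(-1)^{|\omega||\eta|}\eta^\ast\omega^\ast$; I take this convention as given, as the paper's sign in the definition of $\ast$ on $\otimes^\bullet_A\Lambda$ indicates.

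\textbf{Stage two: vanishing on $\mathcal{Q}$.} Suppose $\sum_i g_i\,dh_i=0$. Then in $\Omega^\bullet$,
$$\tilde\phi^2\Big(\sum_i dg_i\otimes_A dh_i\Big)=\sum_i d\phi^0(g_i)\,d\phi^0(h_i)=d\Big(\sum_i\phi^0(g_i)\,d\phi^0(h_i)\Big)=d\,\phi^1\Big(\sum_i g_i\,dh_i\Big)=0.$$
Since $\tilde\phi$ is multiplicative, it kills the two--sided ideal $\mathcal{Q}$ generated by these elements, and so descends to a graded $\ast$--morphism $\phi$ on $\Lambda^\wedge$.

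\textbf{Compatibility with $d$.} Because $d$ on $\Lambda^\wedge$ and $d$ on $\Omega^\bullet$ are both graded derivations, it suffices to check $\phi\circ d=d\circ\phi$ on generators. In degree $0$ this is exactly the hypothesis $\phi^1(a\,db)=\phi^0(a)\,d\phi^0(b)$ with $a=\mathbbm{1}$. In degree $1$ take $\vartheta=\sum_l g_l\,dh_l$; then $d\vartheta=\sum_l dg_l\,dh_l$ by (\ref{ec.2.3}), and
$$\phi(d\vartheta)=\sum_l d\phi^0(g_l)\,d\phi^0(h_l)=d\,\phi^1(\vartheta).$$
The graded Leibniz rule then propagates the identity to all degrees.

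\textbf{Uniqueness.} $\Lambda^\wedge$ is generated as an algebra by $A$ and $\Lambda$, indeed by $A$ together with $dA$. Any graded morphism is therefore determined by $\phi^0$ and $\phi^1$, and multiplicativity forces $\phi^k$ on products $\vartheta_1\cdots\vartheta_k$.

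\textbf{Main obstacle.} The only nonformal point is the well--definedness in stage two: checking that the relations in $\mathcal{Q}$ are respected, which reduces to $d^2=0$ in $\Omega^\bullet$. Beyond that, the care needed is in the bimodule bookkeeping that makes $\tilde\phi^k$ well defined on $\otimes_A$, and in matching the $\ast$--sign conventions.
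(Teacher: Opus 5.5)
Your proof is correct and complete: the bimodule and $\ast$-compatibility of $\phi^1$ make the multiplicative extension to $\otimes^\bullet_A\Lambda$ well defined, and $d^2=0$ in $\Omega^\bullet$ kills the generators of $\mathcal{Q}$. Agreement of $\phi\circ d$ and $d\circ\phi$ on $A\cup\Lambda$ propagates by the Leibniz rule, uniqueness follows from generation in degrees $0$ and $1$, and since the paper gives no proof of its own (it defers to the references cited just before the statement), your argument is the standard one obtained from constructing $\Lambda^\wedge$ as $\otimes^\bullet_A\Lambda/\mathcal{Q}$.
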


For bicovariant $\ast$--FODC's, there is another construction of the universal differential envelope $\ast$--calculus which is very useful. In fact, let $(\Lambda,d)$ be a bicovariant $\ast$--FODC (\cite{stheve}). Then, it is well--known that  $$(\Lambda,d)\cong (A\otimes {\Ker(\epsilon)\over {\mathcal{R}}},d ) $$ for some right $A$--ideal ${\mathcal{R}}$ $\subseteq$ $\Ker(\epsilon)$ that satisfies $\Ad({\mathcal{R}})\subseteq {\mathcal{R}}\otimes A$, $S({\mathcal{R}})^\ast\subseteq {\mathcal{R}}$ \cite{stheve,woro2}. Let us define the \emph{quantum dual Lie algebra} as $$\mathfrak{qa}^\#:= {\Ker(\epsilon)\over {\mathcal{R}}}.$$ In addition, let us take
\begin{equation}
    \label{ec.2.5}
    \begin{aligned}
        \mathfrak{qa}^{\#\wedge}=\otimes^\bullet \mathfrak{qa}^\#/S^\wedge, \qquad \otimes^\bullet\mathfrak{qa}^\#:=\bigoplus_k (\otimes^k\mathfrak{qa}^\#)\\
    \otimes^0\mathfrak{qa}^\#=\C\mathbbm{1},\qquad \otimes^k\mathfrak{qa}^\#:=\underbrace{\mathfrak{qa}^\#\otimes\cdots\otimes \mathfrak{qa}^\#}_{k\; times}
    \end{aligned}
\end{equation}
($k\in \N$), where $S^\wedge$ is the graded two--side ideal of $\otimes^\bullet\mathfrak{qa}^\#$ generated by elements 
\begin{equation}
    \label{ec.2.6}
    \pi(g^{(1)})\otimes \pi(g^{(2)})\qquad \mbox{ for all }\qquad g \,\in\, {\mathcal{R}}.
\end{equation}
Then, in light of \cite{micho1}, we have
\begin{equation}
    \label{ec.2.7}
    (\Lambda^\wedge,d,\ast)\cong (A\otimes \mathfrak{qa}^{\#\wedge},d,\ast).
\end{equation}
For degree 0, the previous isomorphism is given by $A\cong A\otimes \mathbbm{1}$ in the canonical way and for degree $1$, the previous isomorphism matches with the well--known isomorphism $(\Lambda,d)\cong (A\otimes \mathfrak{qa}^\#,d)$ (\cite{micho1}). Furthermore, there is a Maurer--Cartan formula (\cite{micho1,stheve})
\begin{equation}
    \label{ec.2.8}
    d\pi(a)=-\pi(a^{(1)})\pi(a^{(2)})
\end{equation}
for all $a$ $\in$ $A$ (in Sweedler's notation), where
\begin{equation}
    \label{qgermsmap}
\pi:A\longrightarrow \mathfrak{qa}^\#,\qquad a\longmapsto \pi(a):=S(a^{(1)})da^{(2)}
\end{equation}
is the corresponding quantum germs map \cite{stheve}. The map $\pi$ has several useful properties, for example, $\pi|_{\Ker(\epsilon)}$ is surjective, and
\begin{equation}
\label{properties}
    \begin{aligned}
\ker(\pi)=\mathcal{R}_{4D}\oplus \C\mathbbm{1}, \quad & \qquad dg=g^{(1)}\pi(g^{(2)}),\quad  \qquad   \pi(g)^\ast=-\pi(S(g)^\ast)\\
\pi(g)=-(d&S(g^{(1)}))g^{(2)},\qquad dS(g)=-\pi(g^{(1)})S(g^{(2)}).&
    \end{aligned}
\end{equation}

If the $\ast$--FODC $(\Lambda,d)$ is bicovariant, then, the coproduct $\Delta$ of $A$ can be extended using Proposition \ref{prop2.0} to a graded differential $\ast$--algebra morphism (\cite{micho1,stheve})
\begin{equation}
\label{ec.2.9}
    \Delta: \Lambda^\wedge\longrightarrow \Lambda^\wedge\otimes \Lambda^\wedge.
\end{equation}
In the last tensor product, we have taken the tensor product of graded differential $\ast$--algebras. This map satisfies
\begin{equation}
\label{ec.2.10}
\Delta(\theta)=\ad(\theta)+\mathbbm{1}\otimes \theta 
\end{equation}
for every $\theta$ $\in$ $\mathfrak{qa}^\#$, where 
\begin{equation}
    \label{adgeneral}
    \ad: \mathfrak{qa}^\#\longrightarrow \mathfrak{qa}^\#\otimes A
\end{equation}
is the $A$--corepresentation that fulfills $$\ad\circ \pi=(\pi\otimes \id_A)\circ \Ad,$$ where $$\Ad:A\longrightarrow A\otimes A, \qquad a\longmapsto a^{(2)}\otimes S(a^{(1)})a^{(3)}$$ 
is the right adjoint coaction of $A$ \cite{micho1,stheve}. In addition, for bicovariant $\ast$--FODCs, the counit $\epsilon$ and the antipode $S$ can also be extended to $\Gamma^\wedge$
\begin{equation}
\label{ec.2.11}
\epsilon: \Lambda^\wedge \longrightarrow \C,
\end{equation}
\begin{equation}
\label{ec.2.12}
S: \Lambda^\wedge \longrightarrow \Lambda^\wedge.
\end{equation}
The extension of the coint is given by
$$\epsilon|_{\Lambda^{\wedge\,k}}=0 \quad \mbox{ for all } \quad k\geq 1 $$ and the extension of the antipode is as follows: for every $\theta$ $\in$ $\mathfrak{qa}^\#$ with $\theta=\pi(a)$ for some $a$ $\in$ $A$,  the formula
$$S(\theta)=S(\pi(a))=-\pi(a^{(2)})S(a^{(3)})S(S(a^{(1)}))$$ is well--defined. With this, it is possible to define $S:\Lambda\longrightarrow \Lambda$ such that
$$S(b\,\pi(a))=S(\pi(a))S(b),\qquad S(b\, da)=d(S(a))\,S(b)$$ and since $\Lambda^\wedge$ is generated by its degree 0, we can extend $S$ to whole space $\Lambda^\wedge$. For more details, see reference \cite{micho1}.

In light of \cite{micho1},  $$(\Lambda^\wedge,\cdot,\mathbbm{1},\ast,d,\Delta,\epsilon,S)$$ is a graded differential $\ast$--Hopf algebra. With this structure, the right $A$--corepresentation $\Ad$ can also be extended, by means of
\begin{equation}
\label{ec.2.13}
\Ad:\Lambda^\wedge \longrightarrow \Lambda^\wedge \otimes \Lambda^\wedge
\end{equation}
(the last tensor product, we have taken the tensor product of graded differential $\ast$--algebras) such that $$\Ad(t)=(-1)^{\partial t^{(1)}\partial t^{(2)}} t^{(2)}\otimes S(t^{(1)})t^{(3)}$$ for all $t$ $\in$ $\Gamma^\wedge$, where $\partial x$ denotes the grade of $x$ and $$(\id_{\Gamma^\wedge}\otimes \Delta)\Delta(t)=(\Delta\otimes \id_{\Gamma^\wedge})\Delta(t)=t^{(1)}\otimes t^{(2)}\otimes t^{(3)}.$$

\subsection{The Model}

For the model of the electroweak theory  that we want to develop using the advantages of  non--commutative geometry, the structure group will be $SU(2)$ or, to be more precisely, its canonical $\ast$--Hopf algebra. 

Consider the Lie group 
\begin{equation}
    \label{ec.2.14}
    SU(2)=\left\{ \begin{pmatrix}
a & -b^\ast \\
b & a^\ast 
\end{pmatrix} \in M_2(\C)\;\; \left|\right.\;\; |a|^2+ |b|^2=1 \right\}.
\end{equation}

According to reference \cite{woro1}, there is a canonical (matrix compact) quantum group associated to $SU(2)$. The dense $\ast$--Hopf algebra of this quantum group will be denoted by
\begin{equation}
    \label{ec.2.15}
    (\SU(2),\cdot,\mathbbm{1},\ast,\Delta,\epsilon,S),
\end{equation}
where $$\SU(2)$$ is the $\ast$--subalgebra of $C_\C(SU(2)):=\{g: SU(2)\longrightarrow \C \mid g \mbox{ is continous} \}$,  generated by the following smooth functions 
\begin{equation}
\label{ec.2.16}
    \begin{aligned}
\alpha: SU(2) &\longrightarrow \C, \qquad \qquad & \gamma: SU(2) &\longrightarrow \C \\
A &\longmapsto a_{11}, \qquad \qquad &  A &\longmapsto a_{21},
\end{aligned}
\end{equation}
where $A=(a_{ij})$. Notice that $\alpha$ and $\gamma$ satisfies 
\begin{equation}
\label{ec.2.17}
    \alpha\,\alpha^\ast+\gamma\,\gamma^\ast=\mathbbm{1}
\end{equation}
and of course, $\SU(2)$ is a commutative $\ast$--algebra. The structure of $\ast$--Hopf algebra is given by the pull--back of the product of the group, i.e., 
    \begin{equation}
        \label{ec.2.18}
        \Delta(g): SU(2)\times SU(2) \longmapsto \C, \qquad \Delta(g)(A,B)=g(AB)\quad \mbox{ with }\quad A,\,B \,\in \, SU(2)
    \end{equation}
for all $g$ $\in$ $\SU(2)$. In particular, in tensorial notation, we have
\begin{equation}
\label{ec.2.19}
    \Delta(\alpha)=\alpha\otimes \alpha-\gamma^\ast \otimes \gamma, \qquad \Delta(\gamma)=\gamma\otimes \alpha+\alpha^\ast\otimes \gamma.
\end{equation}
The counit and the coinverse are defined as follows: 
\begin{equation}
    \label{ec.2.20}
    \epsilon:\SU(2) \longrightarrow \C,\qquad g \longmapsto g(e),
\end{equation}
where $e$ $\in$ $SU(2)$ is the identity element; and
\begin{equation}
    \label{ec.2.21}
    S(g): SU(2) \longrightarrow \C,\qquad A \longmapsto S(g)(A)=g(A^{-1})=g(A^\ast)
\end{equation}
for all $g$ $\in$ $\SU(2)$, with $A^\ast$ the complex transpose matrix of $A$. In particular, we obtain
\begin{equation}
    \label{ec.2.22}
    \epsilon(\alpha)=1,\qquad \epsilon(\gamma)=0,
\end{equation}
\begin{equation}
    \label{ec.2.23}
    S(\alpha)=\alpha^\ast,\qquad S(\gamma)=-\gamma, \qquad S\circ \ast=\ast \circ S.
\end{equation}

\begin{Definition}
\label{def2.1}
    We define the (right) adjoint $\SU(2)$--corepresentation on  $\SU(2)$ as the linear map $$\Ad:\SU(2)\longrightarrow \SU(2)\otimes \SU(2),\qquad g \longmapsto g^{(2)}\otimes S(g^{(1)})g^{(3)},$$ where $\Delta(g)=g^{(1)}\otimes g^{(2)}$. 
\end{Definition}

However, we will not use the space of {\it classical} differential forms on $\SU(2)$. Let us consider the $\ast$--FODC
\begin{equation}
    \label{ec.2.85}
    (\Gamma,d)
\end{equation}
described in reference \cite{woro2} for $\mu= 1$. By definition, the corresponding right $\SU(2)$--ideal $$\mathcal{R}_{4D}\subseteq \Ker(\epsilon)$$ associated with $(\Gamma,d)$ is generated by the multiplets
$$\mathcal{R}_1:=\{(\alpha+\alpha^\ast-2\,\mathbbm{1})^2\},\qquad \mathcal{R}_2:=\{ (\alpha+\alpha^\ast-2\,\mathbbm{1})\,\gamma,(\alpha+\alpha^\ast-2\,\mathbbm{1})\,(\alpha-\alpha^\ast),(\alpha+\alpha^\ast-2\,\mathbbm{1})\,\gamma^\ast\},$$ $$\mathcal{R}_3:=\{\gamma^2,\gamma\,(\alpha-\alpha^\ast),\alpha^{\ast 2}-2(\alpha\,\alpha^\ast-\gamma\,\gamma^\ast)+\alpha^2,\gamma^\ast\,(\alpha-\alpha^\ast),\gamma^{\ast 2} \}.$$ As it is proven in \cite{stach}, the $\ast$--FODC $(\Gamma,d)$ is bicovariant and $4$--dimensional, and 
$$\{\pi(\gamma\gamma^\ast),\;\pi(\alpha-\alpha^\ast),\;\pi(\gamma),\;\pi(\gamma^\ast) \} $$ is a linear basis of
\begin{equation}
    \label{ec.2.86}
    \su(2,\C)_{4D}^{\#}:={\Ker(\epsilon)\over \mathcal{R}_{4D}}.
\end{equation}
Here 
\begin{equation}
    \label{ec.2.87}
    \pi: \SU(2)\longrightarrow \su(2,\C)_{4D}^{\#}, \qquad g \longmapsto S(g^{(1)})\,dg^{(2)}\cong [g-\epsilon(g)\mathbbm{1}]_{\mathcal{R}_{4D}},
\end{equation}
is the corresponding quantum germs map, where $[h]_{\mathcal{R}_{4D}}$ denotes the equivalence class of $h$ (see Section 6 of \cite{stheve}).

From $\mathcal{R}_3$ we get $$0=\pi(\alpha^{\ast2})-2\,\pi(\alpha\alpha^\ast-\gamma\gamma^\ast)+\pi(\alpha^2)=\pi(\alpha^{\ast2})+4\,\pi(\gamma\gamma^\ast)+\pi(\alpha^2) $$ and from $\mathcal{R}_1$ we obtain 
\begin{eqnarray*}
    0&=&\pi(\alpha^2)+\pi(\alpha^{\ast 2})+2\,\pi(\alpha\alpha^\ast)-4\,\pi(\alpha)-4\,\pi(\alpha^\ast)
    \\
    &=&
    \pi(\alpha^2)+\pi(\alpha^{\ast 2})-4\,\pi(\gamma\gamma^\ast)-4\,\pi(\alpha)-4\,\pi(\alpha^\ast)
    \\
    &=&
    -8\,\pi(\gamma\gamma^\ast)-4\,\pi(\alpha)-4\,\pi(\alpha^\ast);
\end{eqnarray*}
which implies that 
\begin{equation}
    \label{ec.2.88}
    \pi(\gamma\gamma^\ast)=-{1\over 2}\pi(\alpha+\alpha^\ast).
\end{equation}
In this way, we will fix the following linear basis of $\su(2,\C)_{4D}^{\#}$: 
\begin{equation}
    \label{ec.2.89}
    \beta^\#_{4D}:=\{\eta_1:=-{i\,\sqrt{2}\over q_w}\,\pi(\gamma),\quad \eta_2:={i\,\sqrt{2}\over q_w}\,\pi(\gamma^\ast),\quad \eta_3:=-{i\over q_w} \pi(\alpha-\alpha^\ast),\quad \eta_4:={i\over q_4}\pi(\alpha+\alpha^\ast) \},
\end{equation}
for $q_w$, $q_4$ some real constants. Consequently, $\beta^\#_{4D}$ is a left/right $\SU(2)$--basis of $\Gamma$, according to  \cite{stheve}. It is worth mentioning that, in accordance with Section 9 of reference \cite{stach}, we have
\begin{equation}
    \label{ec.2.90}
    \pi(\alpha-\alpha^\ast)\,b=b\,\pi(\alpha-\alpha^\ast), \quad \pi(\gamma)\,b=b\,\pi(\gamma),\quad \pi(\gamma^\ast)\,b=b\,\pi(\gamma^\ast)
\end{equation}
for all $b$ $\in$ $\SU(2)$.

There is a right $\SU(2)$--module structure on $\su(2,\C)_{4D}^{\#}$ given by (\cite{stheve})
\begin{equation}
    \label{ec.2.91}
    \pi(g_1)\diamondsuit g_2:=\pi(g_1g_2-\epsilon(g_1)g_2))=S(g^{(1)}_2)\pi(g_1) g^{(2)}_2.
\end{equation}
For example, 
\begin{eqnarray*}
    \pi(\alpha-\alpha^\ast)\diamondsuit \alpha=S(\alpha^{(1)})\pi(\alpha-\alpha^\ast)\alpha^{(2)}&=&\alpha^\ast \pi(\alpha-\alpha^\ast)\alpha+\gamma^{\ast}\pi(\alpha-\alpha^\ast)\gamma \nonumber
    \\
    &=&
    (\alpha^\ast \alpha+\gamma\gamma^\ast)\pi(\alpha-\alpha^\ast)
    \\
    &=&
    \pi(\alpha-\alpha^\ast)\nonumber
\end{eqnarray*}
and 
\begin{eqnarray*}
    \pi(\alpha-\alpha^\ast)\diamondsuit \alpha=\pi(\alpha^2-\alpha\alpha^\ast)=\pi(\alpha^2)-\pi(\alpha\alpha^\ast)&=&\pi(\alpha^2)+\pi(\gamma\gamma^\ast)
    \\
    &=&
    \pi(\alpha^2)-{1\over 2}\pi(\alpha+\alpha^\ast).
\end{eqnarray*}
Thus
\begin{equation}
    \label{ec.2.92}
    \pi(\alpha^2)={1\over 2}\pi(\alpha+\alpha^\ast)+\pi(\alpha-\alpha^\ast).
\end{equation}
Since $0=\pi(\alpha^{\ast2})+4\,\pi(\gamma\gamma^\ast)+\pi(\alpha^2)$, it follows that
\begin{equation}
    \label{ec.2.93}
    \pi(\alpha^{\ast 2})={3\over 2}\pi(\alpha+\alpha^\ast)-\pi(\alpha-\alpha^\ast).
\end{equation}

A direct calculation using the definition of $\mathcal{R}_{4D}$ and the properties of $\pi$ 
proves that
\begin{Proposition}
    \label{prop2.4}
    The following relations hold
    $$ \pi(\gamma^2)=\pi(\gamma^{\ast 2})=0,\quad \pi(\alpha\gamma)=\pi(\alpha^\ast\gamma)=\pi(\gamma),\quad \pi(\alpha\gamma^\ast)=\pi(\alpha^\ast\gamma^\ast)=\pi(\gamma^\ast), $$ $$\pi(\alpha\alpha^\ast-\gamma\gamma^\ast)=\pi(\alpha+\alpha^\ast), $$ $$\pi(\gamma)^\ast=\pi(\gamma^\ast),\quad \pi(\gamma^\ast)^\ast=\pi(\gamma),\quad \pi(\alpha-\alpha^\ast)^\ast=-\pi(\alpha-\alpha^\ast),\quad \pi(\alpha+\alpha^\ast)^\ast=-\pi(\alpha+\alpha^\ast).$$
\end{Proposition}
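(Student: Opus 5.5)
The plan is to derive every relation from three facts already recorded in the paper. First, $\pi$ is linear and vanishes on $\mathcal{R}_{4D}\oplus\C\mathbbm{1}$, by (\ref{properties}). Second, $\pi(g)^\ast=-\pi(S(g)^\ast)$. Third, we have (\ref{ec.2.88}) together with the formulas (\ref{ec.2.23}) for $S$. Since $\SU(2)$ is commutative, products of generators may be reordered freely, and I will use this without comment.

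\emph{The quadratic relations.} Since $\gamma^2$ and $\gamma^{\ast2}$ lie in $\mathcal{R}_3\subseteq\mathcal{R}_{4D}$, we get $\pi(\gamma^2)=\pi(\gamma^{\ast2})=0$ at once. For $\pi(\alpha\gamma)$, I will write $\alpha\gamma=(\alpha-\mathbbm{1})\gamma+\gamma$, so it suffices to show that $(\alpha-\mathbbm{1})\gamma\in\mathcal{R}_{4D}$. This follows from
$$2(\alpha-\mathbbm{1})\gamma=(\alpha+\alpha^\ast-2\mathbbm{1})\gamma+\gamma(\alpha-\alpha^\ast),$$
where the first summand lies in $\mathcal{R}_2$ and the second in $\mathcal{R}_3$. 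Similarly, $2(\alpha^\ast-\mathbbm{1})\gamma$ is the difference of the same two elements, which gives $\pi(\alpha^\ast\gamma)=\pi(\gamma)$. The identical argument, using $(\alpha+\alpha^\ast-2\mathbbm{1})\gamma^\ast\in\mathcal{R}_2$ and $\gamma^\ast(\alpha-\alpha^\ast)\in\mathcal{R}_3$, yields $\pi(\alpha\gamma^\ast)=\pi(\alpha^\ast\gamma^\ast)=\pi(\gamma^\ast)$.

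\emph{The relation $\pi(\alpha\alpha^\ast-\gamma\gamma^\ast)=\pi(\alpha+\alpha^\ast)$.} By (\ref{ec.2.17}) we have $\alpha\alpha^\ast-\gamma\gamma^\ast=\mathbbm{1}-2\gamma\gamma^\ast$. Since $\pi(\mathbbm{1})=0$, it follows that $\pi(\alpha\alpha^\ast-\gamma\gamma^\ast)=-2\pi(\gamma\gamma^\ast)$. By (\ref{ec.2.88}) this equals $\pi(\alpha+\alpha^\ast)$.

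\emph{The $\ast$--relations.} Here I will apply $\pi(g)^\ast=-\pi(S(g)^\ast)$ together with (\ref{ec.2.23}).
\begin{itemize}
\item $S(\gamma)^\ast=-\gamma^\ast$, so $\pi(\gamma)^\ast=\pi(\gamma^\ast)$.
\item $S(\gamma^\ast)=S(\gamma)^\ast=-\gamma^\ast$, whose star is $-\gamma$, so $\pi(\gamma^\ast)^\ast=\pi(\gamma)$.
\item $S(\alpha-\alpha^\ast)^\ast=(\alpha^\ast-\alpha)^\ast=\alpha-\alpha^\ast$, so $\pi(\alpha-\alpha^\ast)^\ast=-\pi(\alpha-\alpha^\ast)$.
\item $S(\alpha+\alpha^\ast)^\ast=\alpha+\alpha^\ast$, so $\pi(\alpha+\alpha^\ast)^\ast=-\pi(\alpha+\alpha^\ast)$.
\end{itemize}

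The only step needing any thought is the middle group of product relations, namely spotting the right combination of $\mathcal{R}_2$ and $\mathcal{R}_3$ generators. The rest is bookkeeping with signs and $\ast$.
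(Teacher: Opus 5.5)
Your approach is the one the paper intends: a direct computation from the generators of $\mathcal{R}_{4D}$ and the listed properties of $\pi$. Your arguments for $\pi(\gamma^2)=\pi(\gamma^{\ast 2})=0$, for the four product relations, and for the four $\ast$--relations are correct.

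The step that fails is ``by (\ref{ec.2.88}) this equals $\pi(\alpha+\alpha^\ast)$''. Your reduction to $-2\pi(\gamma\gamma^\ast)$ is fine, but (\ref{ec.2.88}) itself contains an arithmetic slip. In its derivation, $2\pi(\alpha\alpha^\ast)$ is replaced by $-4\pi(\gamma\gamma^\ast)$, whereas $\alpha\alpha^\ast=\mathbbm{1}-\gamma\gamma^\ast$ gives $-2\pi(\gamma\gamma^\ast)$. Redone correctly, the $\mathcal{R}_1$ generator minus the third $\mathcal{R}_3$ generator is $(\alpha+\alpha^\ast-2\mathbbm{1})^2-\bigl(\alpha^{\ast 2}-2(\alpha\alpha^\ast-\gamma\gamma^\ast)+\alpha^2\bigr)=-6\,\gamma\gamma^\ast-4\,(\alpha+\alpha^\ast-2\mathbbm{1})\in\mathcal{R}_{4D}$. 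Hence $\pi(\gamma\gamma^\ast)=-\frac{2}{3}\pi(\alpha+\alpha^\ast)$, and therefore $\pi(\alpha\alpha^\ast-\gamma\gamma^\ast)=\frac{4}{3}\pi(\alpha+\alpha^\ast)$.

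So this relation cannot be proved as stated; it is false, because $\pi(\alpha+\alpha^\ast)\neq 0$. Here is a certificate that does not rely on the dimension count. Near $e$ write $A=\exp\bigl(i(x_1\sigma_1+x_2\sigma_2+x_3\sigma_3)\bigr)$, so that $\alpha=1+ix_3-\frac{1}{2}|x|^2+O(|x|^3)$ and $\gamma=ix_1-x_2+O(|x|^3)$. Let $L(f):=(\triangle f)(e)$ be the Laplacian at $e$ in these coordinates.
\begin{itemize}
\item Every generator $g$ of $\mathcal{R}_{4D}$ vanishes to second order at $e$, so $L(ga)=a(e)L(g)$ for all $a$.
\item $L(g)=0$ for each generator: those in $\mathcal{R}_1\cup\mathcal{R}_2$ vanish to order at least three, and those in $\mathcal{R}_3$ have traceless quadratic parts (e.g.\ $2x_1^2+2x_2^2-4x_3^2$ for the third one).
\item Since $\mathcal{R}_{4D}$ consists of sums $\sum_i g_ia_i$, $L$ vanishes on $\mathcal{R}_{4D}$.
\end{itemize}
On the other hand, $L(\alpha+\alpha^\ast-2\mathbbm{1})=-6$ and $L(\gamma\gamma^\ast)=4$. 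Thus $L\bigl(\gamma\gamma^\ast+\frac{1}{2}(\alpha+\alpha^\ast-2\mathbbm{1})\bigr)=1$ and $L\bigl(-2\gamma\gamma^\ast-(\alpha+\alpha^\ast-2\mathbbm{1})\bigr)=-2$, so neither (\ref{ec.2.88}) nor the claimed relation holds.

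The remaining relations of the proposition do not use (\ref{ec.2.88}), and your proofs of them stand. The middle one should read $\pi(\alpha\alpha^\ast-\gamma\gamma^\ast)=\frac{4}{3}\pi(\alpha+\alpha^\ast)$.
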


The following map will play the role of the adjoint action of a Lie group on its Lie algebra. 
\begin{Definition}
\label{def2.5}
    We define the adjoint $\SU(2)$--corepresentation $$\ad:\su(2,\C)_{4D}^{\#}\longrightarrow \su(2,\C)_{4D}^{\#}\otimes \SU(2)$$ as the one define for the formula $$\ad\circ \pi=(\pi\otimes \id)\Ad.$$ 
\end{Definition}

In the {\it classical} case, the following map coincides with the pull--back of the Lie bracket of the Lie  algebra of a Lie group \cite{appendix}.

\begin{Definition}
\label{def2.6}
    We define the quantum Lie bracket, also known as the transpose commutator of $\su(2,\C)_{4D}^{\#}$, as the linear map $$c^T:=(\id\otimes \pi)\circ \ad:\su(2,\C)^\#_{4D} \longrightarrow \su(2,\C)^\#_{4D}\otimes\su(2,\C)^\#_{4D}.$$
\end{Definition}
 
 Also, we have

\begin{Proposition}
    \label{prop2.5}
    The following relations hold
\begin{equation*}
    \ad(\eta_1)=\eta_1\otimes \alpha^2+\eta_2\otimes -\gamma^2+\eta_3\otimes -\sqrt{2}\,\alpha\,\gamma,
\end{equation*}
\begin{equation*}
    \ad(\eta_2)=\eta_1\otimes -\gamma^{\ast 2}+\eta_2\otimes \alpha^{\ast 2}+\eta_3\otimes -\sqrt{2}\,\alpha^\ast\,\gamma^\ast,
\end{equation*}
\begin{equation*}
    \ad(\eta_3)=\eta_1\otimes \sqrt{2}\,\alpha\,\gamma^\ast+\eta_2\otimes \sqrt{2}\,\alpha^\ast\,\gamma+\eta_3\otimes(\alpha\,\alpha^\ast-\gamma\,\gamma^\ast).
\end{equation*}
\begin{equation*}
    \ad(\eta_4)=\eta_4\otimes \mathbbm{1}.
\end{equation*}
\end{Proposition}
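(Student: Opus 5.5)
We compute $\ad$ on each basis element directly from Definition~\ref{def2.5}, $\ad\circ\pi=(\pi\otimes\id)\circ\Ad$. For a generator $g$ we first compute $\Ad(g)=g^{(2)}\otimes S(g^{(1)})g^{(3)}$ using \eqref{ec.2.19} and \eqref{ec.2.23}. We then apply $\pi$ to the first tensor factor and rewrite every resulting $\pi(\cdot)$ in the basis $\beta^\#_{4D}$, using Proposition~\ref{prop2.4}, \eqref{ec.2.88}, \eqref{ec.2.92}, \eqref{ec.2.93}, and $\pi(\mathbbm{1})=0$. Finally we divide by the normalizing constants in \eqref{ec.2.89}.

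\textbf{Coproducts and $\Ad$ on generators.} Iterating \eqref{ec.2.19} gives
\[
(\Delta\otimes\id)\Delta(\gamma)=\gamma\otimes\alpha\otimes\alpha-\gamma\otimes\gamma^\ast\otimes\gamma+\alpha^\ast\otimes\gamma\otimes\alpha+\alpha^\ast\otimes\alpha^\ast\otimes\gamma,
\]
and analogously for $\alpha$; the cases of $\alpha^\ast$ and $\gamma^\ast$ follow by applying $\ast$. Using $S(\gamma)=-\gamma$ and $S(\alpha^\ast)=\alpha$, this yields
\[
\Ad(\gamma)=\alpha\otimes(-\gamma\alpha+\alpha\gamma)+\gamma^\ast\otimes\gamma\gamma+\gamma\otimes\alpha\alpha+\alpha^\ast\otimes\alpha\gamma,
\]
up to the sign bookkeeping of the middle legs. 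Simplifying by commutativity,
\[
(\pi\otimes\id)\Ad(\gamma)=\pi(\gamma^\ast)\otimes\gamma^2+\pi(\gamma)\otimes\alpha^2+\pi(\alpha^\ast-\alpha)\otimes\alpha\gamma .
\]
Substituting $\eta_1,\eta_2,\eta_3$ from \eqref{ec.2.89}, so that $\pi(\gamma)=\tfrac{iq_w}{\sqrt2}\eta_1$, $\pi(\gamma^\ast)=-\tfrac{iq_w}{\sqrt2}\eta_2$ and $\pi(\alpha-\alpha^\ast)=iq_w\eta_3$, produces the factors $-\gamma^2$ and $-\sqrt2\,\alpha\gamma$ in $\ad(\eta_1)$.

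\textbf{Remaining basis elements.} The formula for $\ad(\eta_2)$ follows from that for $\eta_1$ by applying $\ast$. This uses that $\ad$ is a $\ast$-corepresentation, together with Proposition~\ref{prop2.4} ($\pi(\gamma)^\ast=\pi(\gamma^\ast)$) and the conjugation of the scalar in \eqref{ec.2.89}. For $\eta_3$ we compute $\Ad(\alpha)-\Ad(\alpha^\ast)$. Here $\pi$ of the first legs produces $\pi(\alpha)$, $\pi(\alpha^\ast)$, $\pi(\gamma)$ and $\pi(\gamma^\ast)$, and the coefficients combine into $\alpha\alpha^\ast-\gamma\gamma^\ast$ and $\sqrt2\,\alpha\gamma^\ast$, $\sqrt2\,\alpha^\ast\gamma$. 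For $\eta_4$, note that $\alpha+\alpha^\ast$ is the trace function $A\mapsto \mathrm{tr}(A)$. It is therefore conjugation invariant, so $\Ad(\alpha+\alpha^\ast)=(\alpha+\alpha^\ast)\otimes\mathbbm{1}$ directly from $\Ad(g)(A,B)=g(B^{-1}AB)$. This gives $\ad(\eta_4)=\eta_4\otimes\mathbbm{1}$ with no reduction modulo $\mathcal{R}_{4D}$ needed.

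\textbf{Main obstacle.} The delicate step is $\eta_3$ and, more generally, the identification of the first legs in the basis. Here the quantum germs map is not the classical one: $\pi(\alpha)$ and $\pi(\alpha^\ast)$ individually carry an $\eta_4$ component. We must check that these $\eta_4$ contributions cancel in the difference. They appear with coefficient $\alpha\alpha^\ast+\gamma\gamma^\ast-\mathbbm{1}$, or a similar combination, which vanishes by \eqref{ec.2.17}. To organize this, we will write $\pi(\alpha)=\tfrac12\pi(\alpha+\alpha^\ast)+\tfrac12\pi(\alpha-\alpha^\ast)$, and similarly for $\pi(\alpha^\ast)$, before collecting terms. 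The sign conventions in the triple coproduct also need careful tracking.
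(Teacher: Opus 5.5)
Your proposal is correct and is essentially the paper's computation: triple coproducts, $\Ad$ on the generators, $\pi$ on the first leg, then rescaling by the constants in (\ref{ec.2.89}). The only deviations are two harmless shortcuts. You obtain $\ad(\eta_2)$ from $\ad(\eta_1)$ via $(\ast\otimes\ast)\circ\ad=\ad\circ\ast$ (which is (\ref{ec.3.19.1})) together with $\eta_1^\ast=\eta_2$, $\eta_3^\ast=\eta_3$. You obtain $\Ad(\alpha+\alpha^\ast)$ from conjugation invariance of the trace. The paper instead computes $\Ad(\gamma^\ast)$ and $\Ad(\alpha)+\Ad(\alpha^\ast)$ directly, and the latter is where (\ref{ec.2.17}) is actually needed.

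Two small slips to fix. In your displayed $\Ad(\gamma)$ the $\alpha$-leg must carry $-\alpha\gamma$; your $-\gamma\alpha+\alpha\gamma$ is zero, although the line you derive from it is correct. In the $\eta_3$ step, the legs $\pi(\alpha)$ and $\pi(\alpha^\ast)$ have the same $\eta_4$-component and come with opposite coefficients $\pm(\alpha\alpha^\ast-\gamma\gamma^\ast)$. That cancellation is therefore identical and does not rely on (\ref{ec.2.17}).
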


\begin{proof}
The proof consists of direct calculations. In fact,
    $$\Delta(\alpha )=\alpha\otimes \alpha -\gamma^\ast \otimes \gamma,\qquad \Delta(\alpha^\ast)=\alpha^\ast \otimes \alpha^\ast -\gamma \otimes \gamma^\ast , $$
    $$\Delta(\gamma)=\gamma \otimes \alpha +\alpha^\ast \otimes \gamma , \qquad \Delta (\gamma^\ast )=\gamma^\ast \otimes \alpha^\ast +\alpha \otimes \gamma^\ast;$$
    so
    \begin{eqnarray*}
        (\Delta\otimes \id)(\Delta(\alpha))&=& \Delta(\alpha)\otimes \alpha-\Delta(\gamma^\ast)\otimes \gamma
        \\
        &=&
        \alpha\otimes \alpha\otimes \alpha\otimes \mathbbm{1}-\gamma^\ast\otimes \gamma \otimes \alpha- \gamma^\ast \otimes \alpha^\ast \otimes \gamma -\alpha \otimes \gamma^\ast \otimes \gamma,
    \end{eqnarray*}
    \begin{eqnarray*}
        (\Delta\otimes \id)(\Delta(\alpha^\ast))&=& \Delta(\alpha^\ast)\otimes \alpha^\ast-\Delta(\gamma)\otimes \gamma^\ast
        \\
        &=&
        \alpha^\ast\otimes \alpha^\ast\otimes \alpha^\ast-\gamma\otimes \gamma^\ast\otimes \alpha^\ast- \gamma\otimes \alpha\otimes \gamma^\ast-\alpha^\ast\otimes \gamma\otimes \gamma^\ast,
    \end{eqnarray*}
    \begin{eqnarray*}
        (\Delta\otimes \id)(\Delta(\gamma))&=& \Delta(\gamma)\otimes \alpha+\Delta_\otimes(\alpha^\ast)\otimes \gamma
        \\
        &=&
        \gamma\otimes \alpha\otimes \alpha+\alpha^\ast\otimes \gamma\otimes \alpha+\alpha^\ast\otimes \alpha^\ast\otimes \gamma+\gamma\otimes \gamma^\ast\otimes \gamma,
    \end{eqnarray*}
    \begin{eqnarray*}
        (\Delta\otimes \id)(\Delta(\gamma^\ast))&=& \Delta(\gamma^\ast)\otimes \alpha^\ast+\Delta_\otimes(\alpha)\otimes \gamma^\ast
        \\
        &=&
        \gamma^\ast\otimes \alpha^\ast\otimes \alpha^\ast+\alpha\otimes \gamma^\ast\otimes \mathbbm{1}\otimes \alpha^\ast+\alpha\otimes \alpha\otimes \gamma^\ast-\gamma^\ast\otimes \gamma\otimes \gamma^\ast.
    \end{eqnarray*}
Furthermore, since
$$S(\alpha)=\alpha^\ast, \qquad S(\alpha^\ast)=\alpha,\qquad S(\gamma)=-\gamma, \qquad S(\gamma^\ast)=-\gamma^\ast,$$ we get
\begin{eqnarray*}
        \Ad(\alpha)&=& 
         \alpha\otimes S(\alpha) \alpha-  \gamma\otimes \otimes S(\gamma^\ast)\alpha-\alpha^\ast\otimes S(\gamma^\ast)\gamma-  \gamma^\ast\otimes S(\alpha)\gamma
          \\
        &=&
        \alpha\otimes \alpha\,\alpha^\ast+  \gamma\otimes \alpha\,\gamma^\ast +\alpha^\ast\otimes \gamma\,\gamma^\ast -  \gamma^\ast\otimes \alpha^\ast\,\gamma,
    \end{eqnarray*}
\begin{eqnarray*}
        \Ad(\alpha^\ast)&=&  \alpha^\ast\otimes S(\alpha^\ast) \alpha^\ast-  \gamma^\ast\otimes S(\gamma)\alpha^\ast-\alpha\otimes S(\gamma)\gamma^\ast-  \gamma\otimes S(\alpha^\ast)\gamma^\ast
        \\
        &=&
      \alpha^\ast\otimes \alpha\,\alpha^\ast+  \gamma^\ast\otimes  \alpha^\ast\,\gamma+\alpha\otimes \gamma\,\gamma^\ast-  \gamma\otimes \alpha\,\gamma^\ast,
    \end{eqnarray*}
    and 
    \begin{eqnarray*}
        \Ad(\alpha-\alpha^\ast)&=&(\alpha-\alpha^\ast)\otimes (\alpha\,\alpha^\ast- \gamma\,\gamma^\ast)+\gamma\otimes 2\,\alpha\,\gamma^\ast+\gamma^\ast \otimes -2\,\alpha^\ast\,\gamma,
    \end{eqnarray*}
    $$\Ad(\alpha+\alpha^\ast)= (\alpha+\alpha^\ast)\otimes \mathbbm{1}.$$
    Moreover,
    \begin{eqnarray*}
        \Ad(\gamma)
        &=&
        \alpha\otimes S(\gamma)\alpha+  \gamma\otimes S(\alpha^\ast)\alpha+\alpha^\ast\otimes S(\alpha^\ast)\gamma+  \gamma^\ast\otimes S(\gamma)\gamma
        \\
        &=&
        -\alpha\otimes \alpha\,\gamma+  \gamma\otimes  \alpha^2 
        +
        \alpha^\ast\otimes \alpha\,\gamma-  \gamma^\ast\otimes   \gamma^2 
        \\
        &=&
        (\alpha-\alpha^\ast) \otimes -\alpha\,\gamma +\gamma \otimes \alpha^2 +\gamma^\ast \otimes \gamma^2,
    \end{eqnarray*}
\begin{eqnarray*}
        \Ad(\gamma^\ast)
        &=&
          \alpha^\ast\otimes S(\gamma^\ast)(\alpha^\ast)+  \gamma^\ast\otimes S(\alpha)\alpha^\ast
        +
         \alpha\otimes S(\alpha)\gamma^\ast-  \gamma\otimes S(\gamma^\ast)\gamma^\ast
         \\
         &=&
         -\alpha^\ast\otimes  \alpha^\ast\,\gamma^\ast +  \gamma^\ast\otimes  \alpha^{\ast 2}
         +
         \alpha\otimes  \alpha^\ast\,\gamma^\ast+  \gamma\otimes   \gamma^{\ast 2}
         \\
         &=&
         (\alpha-\alpha^\ast)\otimes \alpha^\ast\,\gamma^\ast+  \gamma\otimes   \gamma^{\ast 2}+  \gamma^\ast\otimes  \alpha^{\ast 2}.
    \end{eqnarray*}
Therefore $$\ad(\pi(\alpha-\alpha^\ast))=\pi(\alpha-\alpha^\ast)\otimes (\alpha\,\alpha^\ast- \gamma\,\gamma^\ast)+\pi(\gamma)\otimes 2\,\alpha\,\gamma^\ast+\pi(\gamma^\ast) \otimes -2\,\alpha^\ast\,\gamma,$$ $$\ad(\pi(\alpha+\alpha^\ast))=\pi(\alpha+\alpha^\ast)\otimes \mathbbm{1},$$ $$\ad(\pi(\gamma))= \pi(\alpha-\alpha^\ast) \otimes -\alpha\,\gamma +\pi(\gamma) \otimes \alpha^2 +\pi(\gamma^\ast) \otimes \gamma^2,$$ $$\ad(\pi(\gamma^\ast))=\pi(\alpha-\alpha^\ast)\otimes \alpha^\ast\,\gamma^\ast+  \pi(\gamma)\otimes   \gamma^{\ast 2}+  \pi(\gamma^\ast)\otimes  \alpha^{\ast 2}$$ and the proposition follows by considering the definitions of $\eta_1,$ $\eta_2$, $\eta_3$, $\eta_4$.
\end{proof}

Next proposition follows directly by a straightforward calculation using Propositions \ref{prop2.4}, \ref{prop2.5}, so we will omit its proof.

\begin{Proposition}
    \label{prop2.6}
    The following relations hold
    $$c^T(\eta_1)=-{i\,q_4\over 2}\,\eta_1\otimes\eta_4+i\,q_w\,(\eta_1\otimes\eta_3-\eta_3\otimes \eta_1),$$ $$c^T(\eta_2)=-{i\,3\,q_4\over 2}\,\eta_2\otimes \eta_4+i\,q_w\,(\eta_3\otimes\eta_2-\eta_2\otimes\eta_3),$$ $$c^T(\eta_3)=-i\,q_4\,\eta_3\otimes\eta_4+i\,q_w\,(\eta_2\otimes \eta_1-\eta_1\otimes\eta_2),$$ $$c^T(\eta_4)=0.$$
\end{Proposition}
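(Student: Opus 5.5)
The plan is to compute $c^T(\eta_j)=(\id\otimes\pi)\ad(\eta_j)$ directly. Since $\ad(\eta_j)$ is already given in Proposition \ref{prop2.5} as $\sum_k \eta_k\otimes a_{kj}$, we only need to apply the quantum germs map $\pi$ to each coefficient $a_{kj}\in\SU(2)$. We then rewrite the result in the basis $\beta^\#_{4D}$, using $\pi(\gamma)=\tfrac{i q_w}{\sqrt2}\eta_1$, $\pi(\gamma^\ast)=-\tfrac{iq_w}{\sqrt2}\eta_2$, $\pi(\alpha-\alpha^\ast)=iq_w\eta_3$ and $\pi(\alpha+\alpha^\ast)=-iq_4\eta_4$. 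The case $c^T(\eta_4)=0$ is immediate: $\ad(\eta_4)=\eta_4\otimes\mathbbm{1}$ and $\pi(\mathbbm{1})=0$ by \eqref{properties}.

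For $\eta_1$ we need $\pi(\alpha^2)$, $\pi(\gamma^2)$ and $\pi(\alpha\gamma)$.
\begin{itemize}
\item $\pi(\alpha^2)$ comes from \eqref{ec.2.92}: $\pi(\alpha^2)=\tfrac12\pi(\alpha+\alpha^\ast)+\pi(\alpha-\alpha^\ast)=-\tfrac{iq_4}{2}\eta_4+iq_w\eta_3$.
\item $\pi(\gamma^2)=0$ and $\pi(\alpha\gamma)=\pi(\gamma)$ come from Proposition \ref{prop2.4}.
\end{itemize}
Substituting gives
$$c^T(\eta_1)=\eta_1\otimes\bigl(-\tfrac{iq_4}{2}\eta_4+iq_w\eta_3\bigr)-\sqrt2\,\eta_3\otimes\tfrac{iq_w}{\sqrt2}\eta_1,$$
which is the claim.

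For $\eta_2$ we need $\pi(\gamma^{\ast2})=0$, $\pi(\alpha^\ast\gamma^\ast)=\pi(\gamma^\ast)$ and $\pi(\alpha^{\ast2})$. The last one comes from \eqref{ec.2.93}: $\pi(\alpha^{\ast2})=\tfrac32\pi(\alpha+\alpha^\ast)-\pi(\alpha-\alpha^\ast)=-\tfrac{3iq_4}{2}\eta_4-iq_w\eta_3$. Together with $-\sqrt2\,\eta_3\otimes\pi(\gamma^\ast)=iq_w\,\eta_3\otimes\eta_2$, this yields the stated formula.

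For $\eta_3$ we need $\pi(\alpha\gamma^\ast)=\pi(\gamma^\ast)$, $\pi(\alpha^\ast\gamma)=\pi(\gamma)$ and $\pi(\alpha\alpha^\ast-\gamma\gamma^\ast)=\pi(\alpha+\alpha^\ast)$, all from Proposition \ref{prop2.4}. This gives
$$\eta_1\otimes\sqrt2\bigl(-\tfrac{iq_w}{\sqrt2}\eta_2\bigr)+\eta_2\otimes\sqrt2\,\tfrac{iq_w}{\sqrt2}\eta_1-iq_4\,\eta_3\otimes\eta_4,$$
which is again as stated.

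The only genuinely nontrivial inputs are the identities in Proposition \ref{prop2.4}, which we assume, together with \eqref{ec.2.92}–\eqref{ec.2.93}. The main thing to watch is sign and normalization bookkeeping: the factors $\pm i\sqrt2/q_w$ and the $\sqrt2$ coefficients in Proposition \ref{prop2.5} must cancel correctly. A secondary consistency check is that $\pi(\alpha\alpha^\ast-\gamma\gamma^\ast)$ reproduces exactly $-iq_4\eta_4$ with no $\eta_3$ component. Should any coefficient disagree, I would first recheck \eqref{ec.2.92} via both expressions of $\pi(\alpha-\alpha^\ast)\diamondsuit\alpha$.
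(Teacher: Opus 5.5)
Your route is the paper's own omitted one: apply $\id\otimes\pi$ to the matrix of Proposition~\ref{prop2.5} and rewrite in $\beta^\#_{4D}$. Your normalization bookkeeping is right, and the $q_w$-terms and $c^T(\eta_4)=0$ come out correctly. The gap is in inputs you take for granted. The identities (\ref{ec.2.92}), (\ref{ec.2.93}) and the relation $\pi(\alpha\alpha^\ast-\gamma\gamma^\ast)=\pi(\alpha+\alpha^\ast)$ of Proposition~\ref{prop2.4} do not hold in $\Ker(\epsilon)/\mathcal{R}_{4D}$, and with the correct values the displayed $\eta_4$-coefficients are false.

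To see this, reduce directly modulo the generators. Put $u=\alpha+\alpha^\ast-2\mathbbm{1}$ and $v=\alpha-\alpha^\ast$. Then $u^2\in\mathcal{R}_1$, $uv,\,u\gamma\in\mathcal{R}_2$, and $v\gamma,\ v^2+2\gamma\gamma^\ast\in\mathcal{R}_3$; the last is the middle element of $\mathcal{R}_3$. Commutativity and $\alpha\alpha^\ast+\gamma\gamma^\ast=\mathbbm{1}$ give $4\gamma\gamma^\ast=-u^2-4u+v^2$. Hence $6\gamma\gamma^\ast\equiv-4u$, i.e.\ $\pi(\gamma\gamma^\ast)=-\tfrac23\pi(\alpha+\alpha^\ast)$, not $-\tfrac12\pi(\alpha+\alpha^\ast)$. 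The derivation of (\ref{ec.2.88}) replaces $2\pi(\alpha\alpha^\ast)$ by $-4\pi(\gamma\gamma^\ast)$ instead of $-2\pi(\gamma\gamma^\ast)$. Writing $\alpha=\mathbbm{1}+\tfrac12(u+v)$ and $\alpha^\ast=\mathbbm{1}+\tfrac12(u-v)$ then gives
\[
\pi(\alpha^2)=\tfrac43\pi(\alpha+\alpha^\ast)+\pi(\alpha-\alpha^\ast),\quad
\pi(\alpha^{\ast2})=\tfrac43\pi(\alpha+\alpha^\ast)-\pi(\alpha-\alpha^\ast),\quad
\pi(\alpha\alpha^\ast-\gamma\gamma^\ast)=\tfrac43\pi(\alpha+\alpha^\ast).
\]
The relations $\pi(\gamma^2)=0$ and $\pi(\alpha\gamma)=\pi(\alpha^\ast\gamma)=\pi(\gamma)$, and their $\gamma^\ast$-analogues, are fine. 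Your own substitution then yields $-\tfrac{4iq_4}{3}$ as the coefficient of $\eta_j\otimes\eta_4$ in $c^T(\eta_j)$ for each $j=1,2,3$. The statement claims $-\tfrac{iq_4}{2}$, $-\tfrac{3iq_4}{2}$ and $-iq_4$. So the statement as displayed cannot be proved; your argument, like the paper's, is only as good as those upstream identities.

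Your safeguards would not have caught this. The fallback of rechecking (\ref{ec.2.92}) via the two expressions of $\pi(\alpha-\alpha^\ast)\diamondsuit\alpha$ fails because one of them uses (\ref{ec.2.90}), which is inconsistent with the calculus. If $\pi(\gamma)$ commuted with every $b$, then $\pi(\gamma\gamma^\ast)=\pi(\gamma)\diamondsuit\gamma^\ast=\epsilon(\gamma^\ast)\pi(\gamma)=0$, forcing $\pi(\alpha+\alpha^\ast)=0$. In fact $\pi(\alpha-\alpha^\ast)\diamondsuit\alpha=\pi(\alpha-\alpha^\ast)+\tfrac23\pi(\alpha+\alpha^\ast)$. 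Your consistency check on $\pi(\alpha\alpha^\ast-\gamma\gamma^\ast)$, if done from the generators rather than from Proposition~\ref{prop2.4}, would have exposed the discrepancy.

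A quick reliable test is the Hopf automorphism $g\mapsto g\circ c$ with $c(A)=\bar A$. It swaps $\alpha\leftrightarrow\alpha^\ast$ and $\gamma\leftrightarrow\gamma^\ast$, maps $\mathcal{R}_{4D}$ onto itself, and intertwines $\pi$ and $c^T$ with $\eta_1\mapsto-\eta_2$, $\eta_2\mapsto-\eta_1$, $\eta_3\mapsto-\eta_3$, $\eta_4\mapsto\eta_4$. It therefore forces equal $\eta_4$-coefficients in $c^T(\eta_1)$ and $c^T(\eta_2)$, which rules out the $\tfrac12$ versus $\tfrac32$ asymmetry inherited from (\ref{ec.2.92})--(\ref{ec.2.93}).
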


Let us consider the universal differential envelope $\ast$--calculus
\begin{equation}
    \label{ec.2.95}
    (\Gamma^\wedge,d,\ast)
\end{equation}
of $(\Gamma,d)$ (see reference \cite{micho1,stheve} or Section 2.1 for a brief summary).

\begin{Proposition}
    \label{prop2.7}
    The following relations hold $$\pi(\gamma)^2=\pi(\gamma^\ast)^2=\pi(\alpha+\alpha^\ast)^2=0, $$ $$\pi(\alpha-\alpha^\ast)^2={1\over 2}\left(\pi(\alpha+\alpha^\ast)\pi(\alpha-\alpha^\ast)+\pi(\alpha-\alpha^\ast)\pi(\alpha+\alpha^\ast) \right),$$  $$\pi(\gamma^\ast)\pi(\gamma)=-\pi(\gamma)\pi(\gamma^\ast), $$ $$\pi(\gamma)\pi(\alpha-\alpha^\ast)=-\pi(\alpha-\alpha^\ast)\pi(\gamma),\quad \pi(\gamma^\ast)\pi(\alpha-\alpha^\ast)=-\pi(\alpha-\alpha^\ast)\pi^\ast(\gamma^\ast),$$ $$\pi(\gamma)\pi(\alpha+\alpha^\ast)=-\pi(\alpha+\alpha^\ast)\pi(\gamma),\quad \pi(\gamma^\ast)\pi(\alpha+\alpha^\ast)=-\pi(\alpha+\alpha^\ast)\pi^\ast(\gamma^\ast).$$ 
\end{Proposition}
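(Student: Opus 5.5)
The plan is to read all the relations off the degree--$2$ part of the identification \eqref{ec.2.7}, $\Gamma^\wedge\cong \SU(2)\otimes \su(2,\C)^{\#\wedge}_{4D}$. Here $\su(2,\C)^{\#\wedge}_{4D}$ is the tensor algebra on $\su(2,\C)^{\#}_{4D}$ modulo the ideal $S^\wedge$ generated by the elements $\pi(g^{(1)})\otimes\pi(g^{(2)})$ with $g\in\mathcal{R}_{4D}$. The statement only asserts that the listed relations hold, not that they generate all quadratic relations. So it suffices to exhibit, for each relation, a suitable element (or linear combination) $g\in\mathcal{R}_{4D}$ and check that $\pi(g^{(1)})\pi(g^{(2)})=0$ reduces to it.

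An equivalent, convenient route is the Maurer--Cartan formula \eqref{ec.2.8}. For $g\in\mathcal{R}_{4D}$ we have $\pi(g)=0$, hence
$$0=d\pi(g)=-\pi(g^{(1)})\pi(g^{(2)}).$$
I would work with the generators in the multiplets $\mathcal{R}_1,\mathcal{R}_2,\mathcal{R}_3$ themselves, since they already lie in $\mathcal{R}_{4D}$.

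For each generator $g$, the computation has three steps.
\begin{enumerate}
\item Expand $\Delta(g)$ multiplicatively from \eqref{ec.2.19}.
\item Apply $\pi\otimes\pi$.
\item Reduce every $\pi$ of a monomial of degree $\leq 2$ to the basis $\beta^\#_{4D}$.
\end{enumerate}
The reduction in step 3 uses Proposition \ref{prop2.4}, \eqref{ec.2.88}, \eqref{ec.2.92} and \eqref{ec.2.93}, together with $\pi(\mathbbm{1})=0$ and $\pi(\alpha)=\tfrac12\pi(\alpha+\alpha^\ast)+\tfrac12\pi(\alpha-\alpha^\ast)$.

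Take for example $g=\gamma^2$. Then
$$\Delta(\gamma^2)=\gamma^2\otimes\alpha^2+2\,\gamma\alpha^\ast\otimes\alpha\gamma+\alpha^{\ast2}\otimes\gamma^2.$$
Since $\pi(\gamma^2)=0$ and $\pi(\gamma\alpha^\ast)=\pi(\alpha\gamma)=\pi(\gamma)$, this immediately gives $\pi(\gamma)^2=0$. The same computation with $\gamma^{\ast2}$ gives $\pi(\gamma^\ast)^2=0$. Next, $\gamma(\alpha-\alpha^\ast)$ and $\gamma^\ast(\alpha-\alpha^\ast)$ from $\mathcal{R}_3$ should give the anticommutation of $\pi(\gamma)$ and $\pi(\gamma^\ast)$ with $\pi(\alpha-\alpha^\ast)$. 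Possibly this requires subtracting the contributions from the $\mathcal{R}_2$ elements $(\alpha+\alpha^\ast-2)\gamma$ and $(\alpha+\alpha^\ast-2)\gamma^\ast$. Those $\mathcal{R}_2$ elements, in turn, give the anticommutation with $\pi(\alpha+\alpha^\ast)$.

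The element $(\alpha+\alpha^\ast-2\mathbbm{1})^2$ from $\mathcal{R}_1$ should yield $\pi(\alpha+\alpha^\ast)^2=0$. The relation $\pi(\gamma^\ast)\pi(\gamma)=-\pi(\gamma)\pi(\gamma^\ast)$ and the formula for $\pi(\alpha-\alpha^\ast)^2$ should come from the third element of $\mathcal{R}_3$, namely $\alpha^{\ast2}-2(\alpha\alpha^\ast-\gamma\gamma^\ast)+\alpha^2$, combined with $(\alpha+\alpha^\ast-2)(\alpha-\alpha^\ast)$ from $\mathcal{R}_2$. To separate the terms I would use linear combinations of these relations, together with the relations already obtained.

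As a consistency check, I would apply $\ast$ using Proposition \ref{prop2.4} and $(\vartheta_1\vartheta_2)^\ast=-\vartheta_2^\ast\vartheta_1^\ast$. This pairs the $\gamma$-relations with the $\gamma^\ast$-relations, so only half of them need to be computed directly.

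The main obstacle I expect is this last group of relations. Expanding $\Delta$ of the quartic-looking elements $(\alpha+\alpha^\ast-2)^2$ and $\alpha^{\ast2}-2(\alpha\alpha^\ast-\gamma\gamma^\ast)+\alpha^2$ produces many terms. It also produces $\pi$ of quadratic monomials such as $\pi(\alpha^2)$, $\pi(\alpha^{\ast2})$ and $\pi(\alpha\alpha^\ast)$, whose reductions via \eqref{ec.2.88}, \eqref{ec.2.92} and \eqref{ec.2.93} carry the asymmetric coefficients $\tfrac12$ and $\tfrac32$. Disentangling $\pi(\alpha-\alpha^\ast)^2$ from the mixed products therefore requires careful bookkeeping. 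If the direct combination does not isolate the relation cleanly, I would first try multiplying generators on the right by $\alpha$ or $\alpha^\ast$, which is allowed since $\mathcal{R}_{4D}$ is a right ideal, to obtain additional independent quadratic relations.
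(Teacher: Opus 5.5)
Your method is the paper's own: apply $(\pi\otimes\pi)\circ\Delta$ (equivalently, the Maurer--Cartan formula) to the generators of $\mathcal{R}_{4D}$. The paper only writes out the case $g=\gamma^2$. This does give the three vanishing squares and the four anticommutation relations:
\begin{itemize}
\item $\gamma^2$ and $\gamma^{\ast2}$ give $\pi(\gamma)^2=\pi(\gamma^\ast)^2=0$;
\item $\gamma(\alpha-\alpha^\ast)$ and $\gamma^\ast(\alpha-\alpha^\ast)$ give the anticommutation with $\pi(\alpha-\alpha^\ast)$;
\item $(\alpha+\alpha^\ast-2\mathbbm{1})^2$ gives $\pi(\alpha+\alpha^\ast)^2=0$;
\item the three elements of $\mathcal{R}_2$ give, up to nonzero factors, that $\pi(\alpha+\alpha^\ast)$ anticommutes with $\pi(\gamma)$, $\pi(\gamma^\ast)$ \emph{and} $\pi(\alpha-\alpha^\ast)$.
\end{itemize}
The gap is exactly the step you flagged, and it is not a bookkeeping issue. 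The third element of $\mathcal{R}_3$ equals $(\alpha-\alpha^\ast)^2+2\gamma\gamma^\ast$, and it yields only the single relation
\[
\pi(\alpha-\alpha^\ast)^2=-\bigl(\pi(\gamma)\pi(\gamma^\ast)+\pi(\gamma^\ast)\pi(\gamma)\bigr).
\]
Since $\pi(\alpha+\alpha^\ast)$ anticommutes with $\pi(\alpha-\alpha^\ast)$, each of the two remaining claims (the formula for $\pi(\alpha-\alpha^\ast)^2$, and $\pi(\gamma^\ast)\pi(\gamma)=-\pi(\gamma)\pi(\gamma^\ast)$) is equivalent to the extra relation $\pi(\alpha-\alpha^\ast)^2=0$. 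That relation is not in $S^\wedge$. So no linear combination of generators, and no right multiplication by $\alpha$ or $\alpha^\ast$, can produce it. These two items of the statement are false for the ideal as defined, and the paper's one-example proof does not address them.

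To see this, let $X_1,X_2,X_3$ be a basis of $\mathfrak{su}(2)$ orthonormal for $-\tfrac12\mathrm{tr}(XY)$, viewed as derivations at $e$, i.e.\ as functionals on $\SU(2)$. Let $C:=\sum_iX_i^2$ be the Casimir evaluated at $e$.
\begin{itemize}
\item Each $X_i$ kills $\mathbbm{1}$ and $(\Ker\epsilon)^2\supseteq\mathcal{R}_{4D}$, so it descends to $\su(2,\C)^\#_{4D}$.
\item $C$ kills $\mathcal{R}_{4D}$. For $r\in(\Ker\epsilon)^2$ one has $C(rh)=C(r)\epsilon(h)$, so it suffices to check the generators. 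Those in $\mathcal{R}_1,\mathcal{R}_2$ vanish to third order at $e$. For the rest, $C(\gamma^2)=C(\gamma(\alpha-\alpha^\ast))=0$ and $C((\alpha-\alpha^\ast)^2+2\gamma\gamma^\ast)=-8+8=0$.
\item Because $(\sum_iX_i\otimes X_i)\circ\Delta=C$, the functional $\sum_iX_i\otimes X_i$ annihilates every $(\pi\otimes\pi)\Delta(g)$ with $g\in\mathcal{R}_{4D}$. That is, it annihilates the whole degree-two part of $S^\wedge$.
\end{itemize}
Yet this functional takes the following values:
\begin{itemize}
\item $\sum_iX_i(\alpha-\alpha^\ast)^2=-4$ on $\pi(\alpha-\alpha^\ast)\otimes\pi(\alpha-\alpha^\ast)$;
\item $0$ on $\pi(\alpha+\alpha^\ast)\otimes\pi(\alpha-\alpha^\ast)+\pi(\alpha-\alpha^\ast)\otimes\pi(\alpha+\alpha^\ast)$, since $X_i(\alpha+\alpha^\ast)=0$;
\item $4$ on $\pi(\gamma)\otimes\pi(\gamma^\ast)+\pi(\gamma^\ast)\otimes\pi(\gamma)$.
\end{itemize}
Hence, in $\Gamma^\wedge$, $\pi(\alpha-\alpha^\ast)^2\neq\tfrac12\bigl(\pi(\alpha+\alpha^\ast)\pi(\alpha-\alpha^\ast)+\pi(\alpha-\alpha^\ast)\pi(\alpha+\alpha^\ast)\bigr)$ and $\pi(\gamma^\ast)\pi(\gamma)\neq-\pi(\gamma)\pi(\gamma^\ast)$. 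In fact the degree-two part of $S^\wedge$ is exactly nine-dimensional. It is spanned by the seven valid items above, $\pi(\alpha+\alpha^\ast)\pi(\alpha-\alpha^\ast)=-\pi(\alpha-\alpha^\ast)\pi(\alpha+\alpha^\ast)$, and the displayed relation.

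Be careful also with the reduction formulas you planned to use in step~3. In the $\mathcal{R}_1$ computation, $2\pi(\alpha\alpha^\ast)=-2\pi(\gamma\gamma^\ast)$, not $-4\pi(\gamma\gamma^\ast)$. So actually $\pi(\gamma\gamma^\ast)=-\tfrac23\pi(\alpha+\alpha^\ast)$, and \eqref{ec.2.88}, \eqref{ec.2.92} and \eqref{ec.2.93} all need correcting. The obstruction above, however, does not depend on these coefficients.
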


\begin{proof}
    Consider the two--side ideal $S^\wedge$ of equation (\ref{ec.2.5}). Then, the proof of this proposition consists in a straightforward but tedious calculation applying the operator $(\pi\otimes \pi)\circ \Delta$ on every single  generator element of $\mathcal{R}_{4D}$. For example, $\gamma^2$ $\in$ $\mathcal{R}_3$; so 
    \begin{eqnarray*}
        (\pi\otimes \pi)\Delta(\gamma^2)&=&\pi(\gamma^2)\otimes \pi(\alpha^2)+\pi(\alpha^\ast\gamma)\otimes \pi(\alpha\gamma)+\pi(\alpha^\ast\gamma)\otimes \pi(\alpha\gamma)+\pi(\alpha^{\ast 2})\otimes \pi(\gamma^2)
        \\
        &=&
        \pi(\alpha^\ast\gamma)\otimes \pi(\alpha\gamma)+\pi(\alpha^\ast\gamma)\otimes \pi(\alpha\gamma)
        \\
        &=&
        \pi(\gamma)\otimes \pi(\gamma)+\pi(\gamma)\otimes \pi(\gamma).
    \end{eqnarray*}
    This implies that $\pi(\gamma)\otimes \pi(\gamma)+\pi(\gamma)\otimes \pi(\gamma)$ $\in$ $S^\wedge$ and hence, in $\Gamma^\wedge$, we get $$0=\pi(\gamma) \pi(\gamma)+\pi(\gamma) \pi(\gamma)=2\,\pi(\gamma)^2 \quad \Longrightarrow  \quad \pi(\gamma)^2=0.$$
\end{proof}

\noindent In addition

\begin{Proposition}
    \label{prop2.8}
    We have
    $$d\eta_1=-i\,q_w\,\eta_1\,\eta_3,$$
    $$d\eta_2=i\,q_w\,\eta_2\,\eta_3,$$ $$d\eta_3=i\,q_w\,\eta_1\,\eta_2+{i\,q_4\over 2}\,(\eta_3\,\eta_4+\eta_3\,\eta_4),$$ $$d\eta_4=-{i\,q_w\over 4}\,(\eta_3\,\eta_4+\eta_4\,\eta_3). $$
\end{Proposition}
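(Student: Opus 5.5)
The plan is to use the Maurer--Cartan formula (\ref{ec.2.8}), $d\pi(a)=-\pi(a^{(1)})\pi(a^{(2)})$, applied to $a=\gamma,\gamma^\ast,\alpha-\alpha^\ast,\alpha+\alpha^\ast$. Since the coproducts of the generators are listed in (\ref{ec.2.19}) and in the proof of Proposition \ref{prop2.5}, each $d\pi(a)$ becomes a finite sum of products $\pi(x)\pi(y)$ with $x,y\in\{\alpha,\alpha^\ast,\gamma,\gamma^\ast\}$. I will then rewrite every degree-one factor in the basis $\beta^\#_{4D}$, using $\pi(\alpha)=\tfrac12(\pi(\alpha+\alpha^\ast)+\pi(\alpha-\alpha^\ast))$ and $\pi(\alpha^\ast)=\tfrac12(\pi(\alpha+\alpha^\ast)-\pi(\alpha-\alpha^\ast))$. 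Finally I will simplify the resulting quadratic expressions in $\Gamma^{\wedge 2}$ with the relations of Proposition \ref{prop2.7}, and convert back to $\eta_1,\dots,\eta_4$ via (\ref{ec.2.89}).

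For $\eta_1$ we have
\[
d\pi(\gamma)=-\pi(\gamma)\pi(\alpha)-\pi(\alpha^\ast)\pi(\gamma).
\]
Expanding $\pi(\alpha)$ and $\pi(\alpha^\ast)$, the $\pi(\alpha+\alpha^\ast)$ terms combine into $-\tfrac12\bigl(\pi(\gamma)\pi(\alpha+\alpha^\ast)+\pi(\alpha+\alpha^\ast)\pi(\gamma)\bigr)$, which vanishes by the anticommutation relation in Proposition \ref{prop2.7}. The $\pi(\alpha-\alpha^\ast)$ terms give $-\tfrac12\bigl(\pi(\gamma)\pi(\alpha-\alpha^\ast)-\pi(\alpha-\alpha^\ast)\pi(\gamma)\bigr)=-\pi(\gamma)\pi(\alpha-\alpha^\ast)$. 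Rescaling by the constants in $\eta_1,\eta_3$ should give $d\eta_1=-iq_w\eta_1\eta_3$. The same steps with $\gamma^\ast$ yield $d\eta_2$. Alternatively, $d\eta_2$ can be obtained from $d\eta_1$ by applying $\ast$, using $d\ast=\ast d$, Proposition \ref{prop2.4}, and the graded rule $(\theta\vartheta)^\ast=-\vartheta^\ast\theta^\ast$ in degree one.

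For $\eta_3$ and $\eta_4$ we use
\[
d\pi(\alpha\mp\alpha^\ast)=-\pi(\alpha)\pi(\alpha)+\pi(\gamma^\ast)\pi(\gamma)\pm\bigl(\pi(\alpha^\ast)\pi(\alpha^\ast)-\pi(\gamma)\pi(\gamma^\ast)\bigr).
\]
Here the squares $\pi(\alpha)^2$ and $\pi(\alpha^\ast)^2$ expand into $\pi(\alpha+\alpha^\ast)^2=0$, the cross terms, and $\pi(\alpha-\alpha^\ast)^2$. The last is replaced by the symmetrized cross term from Proposition \ref{prop2.7}, and $\pi(\gamma^\ast)\pi(\gamma)=-\pi(\gamma)\pi(\gamma^\ast)$. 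Collecting terms, $d\pi(\alpha-\alpha^\ast)$ should become a multiple of $\pi(\gamma)\pi(\gamma^\ast)$ plus a multiple of the cross terms, while in $d\pi(\alpha+\alpha^\ast)$ the $\gamma$-terms cancel and only the symmetric cross term $\pi(\alpha+\alpha^\ast)\pi(\alpha-\alpha^\ast)+\pi(\alpha-\alpha^\ast)\pi(\alpha+\alpha^\ast)$ survives. Translating to the $\eta$'s gives the stated formulas. Note that $\eta_3\eta_4+\eta_3\eta_4$ in the statement is presumably meant as the symmetrization $\eta_3\eta_4+\eta_4\eta_3$, or a specific combination; the computation will decide which.

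The main obstacle is bookkeeping of the scalar constants $\pm i\sqrt2/q_w$, $i/q_w$, $i/q_4$ and of signs. The most delicate point is the product $\pi(\alpha-\alpha^\ast)^2$, whose reduction uses the nontrivial relation of Proposition \ref{prop2.7}, and this is what produces the $q_4$-dependence of $d\eta_3$ and the $q_w$-dependence of $d\eta_4$. As a consistency check on the coefficients, I will verify $d^2\eta_j=0$ and compare with $c^T$ from Proposition \ref{prop2.6}, since for bicovariant calculi one expects the Maurer--Cartan formula $d\theta=-\pi(a^{(1)})\pi(a^{(2)})$ to be related to $c^T(\theta)$.
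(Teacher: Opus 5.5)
Your proposal is correct and follows the paper's proof essentially line by line. Both apply the Maurer--Cartan formula (\ref{ec.2.8}) to $\gamma,\gamma^\ast,\alpha\mp\alpha^\ast$, split $\pi(\alpha),\pi(\alpha^\ast)$ as $\tfrac12(\pi(\alpha+\alpha^\ast)\pm\pi(\alpha-\alpha^\ast))$, simplify with Proposition~\ref{prop2.7}, and rescale to the $\eta_j$; as you suspected, the computation gives $\eta_3\eta_4+\eta_4\eta_3$ in $d\eta_3$.

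Two of your side remarks need correcting. First, in $d\pi(\alpha-\alpha^\ast)$ the $\pi(\alpha-\alpha^\ast)^2$ terms coming from $-\pi(\alpha)^2+\pi(\alpha^\ast)^2$ cancel outright, so the $q_4$-term of $d\eta_3$ comes directly from the cross terms, and the squared relation of Proposition~\ref{prop2.7} is needed only for $d\eta_4$. Second, the meaningful consistency check is $d\theta=m_{\Gamma^\wedge}(\Theta(\theta))$ with $\Theta$ from (\ref{ec.2.99})--(\ref{ec.2.102}), not a comparison with $c^T$, since the paper notes in (\ref{ec.2.96}) that $d\theta\neq m_{\Gamma^\wedge}(-\tfrac12 c^T(\theta))$ for this calculus.
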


\begin{proof}
    By equations (\ref{ec.2.8}), (\ref{ec.2.19}) and Proposition \ref{prop2.7}, we obtain 
    \begin{eqnarray*}
        d\pi(\gamma)&=&-\pi(\gamma)\,\pi(\alpha)-\pi(\alpha^\ast)\,\pi(\gamma)
        \\
        &=&
        -\pi(\gamma)\,({1\over 2}\pi(\alpha+\alpha^\ast)+{1\over 2}\pi(\alpha-\alpha^\ast))-({1\over 2}\pi(\alpha+\alpha^\ast)-{1\over 2}\pi(\alpha-\alpha^\ast))\,\pi(\gamma)
        \\
        &=&
      {1\over 2}(-\pi(\gamma)\,\pi(\alpha+\alpha^\ast)-\pi(\gamma)\,\pi(\alpha-\alpha^\ast)-\pi(\alpha+\alpha^\ast)\,\pi(\gamma)+\pi(\alpha-\alpha^\ast)\,\pi(\gamma)) 
      \\
        &=&
        -\pi(\gamma)\,\pi(\alpha-\alpha^\ast)
    \end{eqnarray*}
    and hence  $d\eta_1=-i\,q_w\,\eta_1\,\eta_3.$ 
    
    Similarly, 
   \begin{eqnarray*}
        d\pi(\gamma^\ast)&=&-\pi(\gamma^\ast)\,\pi(\alpha^\ast)-\pi(\alpha)\,\pi(\gamma^\ast)
        \\
        &=&
        -\pi(\gamma^\ast)\,({1\over 2}\pi(\alpha+\alpha^\ast)-{1\over 2}\pi(\alpha-\alpha^\ast))-({1\over 2}\pi(\alpha+\alpha^\ast)+{1\over 2}\pi(\alpha-\alpha^\ast))\,\pi(\gamma^\ast)
        \\
        &=&
      {1\over 2}(-\pi(\gamma^\ast)\,\pi(\alpha+\alpha^\ast)+\pi(\gamma^\ast)\,\pi(\alpha-\alpha^\ast)-\pi(\alpha+\alpha^\ast)\,\pi(\gamma^\ast)-\pi(\alpha-\alpha^\ast)\,\pi(\gamma^\ast)) 
      \\
        &=&
        \pi(\gamma^\ast)\,\pi(\alpha-\alpha^\ast)
    \end{eqnarray*}
    and therefore $d\eta_2=i\,q_w\,\eta_2\,\eta_3.$ 
    
    \noindent Moreover, 
    \begin{eqnarray*}
        d\pi(\alpha-\alpha^\ast)&=&-\pi(\alpha)\,\pi(\alpha)+\pi(\gamma^\ast)\,\pi(\gamma)+\pi(\alpha^\ast)\,\pi(\alpha^\ast)-\pi(\gamma)\,\pi(\gamma^\ast)
        \\
        &=&
        -({1\over 2}\pi(\alpha+\alpha^\ast)+{1\over 2}\pi(\alpha-\alpha^\ast))({1\over 2}\pi(\alpha+\alpha^\ast)+{1\over 2}\pi(\alpha-\alpha^\ast))-2\,\pi(\gamma)\,\pi(\gamma^\ast)
        \\
        &+&
        ({1\over 2}\pi(\alpha+\alpha^\ast)-{1\over 2}\pi(\alpha-\alpha^\ast))({1\over 2}\pi(\alpha+\alpha^\ast)-{1\over 2}\pi(\alpha-\alpha^\ast))
        \\
        &=&
        -{1\over 2}(\pi(\alpha+\alpha^\ast)\,\pi(\alpha-\alpha^\ast)+\pi(\alpha-\alpha^\ast)\,\pi(\alpha+\alpha^\ast))-2\,\pi(\gamma)\,\pi(\gamma^\ast)
    \end{eqnarray*}
    and then  $d\eta_3=\displaystyle i\,q_w\,\eta_1\,\eta_2+{i\,q_4\over 2}\,(\eta_3\,\eta_4+\eta_3\,\eta_4).$ 

    \noindent Finally, 
    \begin{eqnarray*}
        d\pi(\alpha+\alpha^\ast)&=&-\pi(\alpha)\,\pi(\alpha)+\pi(\gamma^\ast)\,\pi(\gamma)-\pi(\alpha^\ast)\,\pi(\alpha^\ast)+\pi(\gamma)\,\pi(\gamma^\ast)
        \\
        &=&
        -({1\over 2}\pi(\alpha+\alpha^\ast)+{1\over 2}\pi(\alpha-\alpha^\ast))({1\over 2}\pi(\alpha+\alpha^\ast)+{1\over 2}\pi(\alpha-\alpha^\ast))
        \\
        &-&
        ({1\over 2}\pi(\alpha+\alpha^\ast)-{1\over 2}\pi(\alpha-\alpha^\ast))({1\over 2}\pi(\alpha+\alpha^\ast)-{1\over 2}\pi(\alpha-\alpha^\ast))
        \\
        &=&
        -{1\over 2}\,\pi(\alpha-\alpha^\ast)^2
        \\
        &=&
        -{1\over 4}\,(\pi(\alpha+\alpha^\ast)\,\pi(\alpha-\alpha^\ast)+\pi(\alpha-\alpha^\ast)\,\pi(\alpha+\alpha^\ast)).
    \end{eqnarray*}
    Thus  $d\eta_4=\displaystyle-{i\,q_w\over 4}\,(\eta_3\,\eta_4+\eta_4\,\eta_3).$ 
\end{proof}

In differential geometry, the quantum Lie bracket $c^T$ is actually the pull--back of the Lie bracket of the Lie algebra $\mathfrak{g}$ of a Lie group $G$ (\cite{appendix}) and in this case, it satisfies
$$\ad^{\#\otimes 2}\circ c^T=(c^T\otimes \id)\circ \ad^\#$$ where $\ad^\#$ is the pullback of the adjoint action of the Lie group on its Lie algebra, and $\ad^{\#\otimes 2}$ is the corepresentation tensor product of $\ad^\#$ with itself. Furthermore, the Maurer--Cartan 
equation (\cite{nodg}) is expressed as
$$d\theta=m\left(-{1\over 2}c^T(\theta)\right),$$ where $\theta$ $\in$ $\mathfrak{g}^\#$ (the dual space of $\mathfrak{g}$) and $m$ is the product map of differential forms.

In our case, it is easy to verify that (\cite{stheve})
$$\ad^{\otimes 2}\circ c^T=(c^T\otimes \id)\circ \ad,$$ where $\ad^{\otimes 2}$ is the corepresentation tensor product of $\ad$ with itself; however, by comparing Proposition \ref{prop2.6} with Proposition \ref{prop2.8}, it is clear that the Maurer–Cartan equation is no longer satisfied:
\begin{equation}
    \label{ec.2.96}
    d\theta\not=m_{\Gamma^\wedge}\left(-{1\over 2}c^T(\theta)\right),
\end{equation}
where $$m_{\Gamma^\wedge}:\Gamma^\wedge\otimes \Gamma^\wedge\longrightarrow \Gamma^\wedge$$ is the product map. Equation (\ref{ec.2.96}) motivates the following definition.

\begin{Definition}
    \label{def2.7}
    We define an embedded differential as a linear map $$\Theta: \su(2,\C)^\#_{4D}\longrightarrow  \su(2,\C)^\#_{4D}\otimes \su(2,\C)^\#_{4D}$$ such that
    \begin{enumerate}
        \item $\ad^{\otimes 2}\circ \Theta=(\Theta\otimes \id)\circ \ad$, where $\ad^{\otimes 2}$ is the corepresentation tensor product of $\ad$ with itself.
        \item $d\theta=m_{\Gamma^\wedge}(\Theta(\theta))$  for all $\theta$ $\in$ $\su(2,\C)^\#_{4D}$.
    \end{enumerate}
\end{Definition}

Notice that in differential geometry, $\displaystyle -{1\over 2}c^T$ is an embedded differential. Hence, we can regard an embedded differential $\Theta$ as a generalization of $\displaystyle -{1\over 2}c^T$, in order to preserve the Maurer–Cartan equation. In other words, an embedded differential provides {\it a way to correctly combine the differential structure with the Lie algebra structure} in the non--commutative geometrical setting. 

In the general case, there may exist multiple embedded differentials associated with a given $\ast$--Hopf algebra and a given bicovariant $\ast$--FODC. Fortunately, in some cases there exists a \emph{kind of canonical} embedded differential.

In light of Section 5 of reference \cite{micho1}, if $L\subseteq \Ker(\epsilon)$ is a $\ast$--$S$--invariant $\Ad$--invariant complement of $\mathcal{R}_{4D}$, then $$\Theta=-(\pi\otimes \pi)\circ \Delta \circ \pi^{-1}|_L $$ defines an embedded differential. In this way, taking into account the  basis $\beta^\#_{4D}$ of equation (\ref{ec.2.89}),  there is a {\it canonical} space $L$ given by 
\begin{equation}
    \label{ec.2.97}
    L=\mathrm{span}_{\C}\{\gamma,\,\gamma^\ast,\,\alpha-\alpha^\ast,\,\alpha+\alpha^\ast \}.
\end{equation}
Indeed, since $\beta^\#_{4D}$ is a linear basis of $\su(2,\C)^\#_{4D}$, then, by definition, we have $$L\oplus \mathcal{R}_{4D}=\Ker(\epsilon).$$
Our calculations of Propositions \ref{prop2.5}  show that $$\Ad(L)\subseteq L\otimes \SU(2)$$ and it is easy to see that $$S(L)^\ast\subseteq L.$$ Therefore, for our case, in which we are interesting in working with the basis $\beta^\#_{4D}$, there is a {\it canonical} embedded differential
\begin{equation}
    \label{ec.2.98}
    \Theta:  \su(2,\C)^\#_{4D}\longrightarrow  \su(2,\C)^\#_{4D}\otimes \su(2,\C)^\#_{4D}
\end{equation}
given by
\begin{equation}
    \label{ec.2.99} 
    \Theta(\eta_1):={i\,\sqrt{2}\over q_w}\,(\pi\otimes \pi)\Delta(\gamma)={i\,q_4\over 2}\,(\eta_1\otimes \eta_4+\eta_4\otimes \eta_1)+{i\,q_w\over 2}\,(\eta_3\otimes \eta_1- \eta_1\otimes\eta_3),
\end{equation}
\begin{equation}
    \label{ec.2.100} 
    \Theta(\eta_2):=-{i\,\sqrt{2}\over q_w}\,(\pi\otimes \pi)\Delta(\gamma^\ast)={i\,q_4\over 2}\,(\eta_2\otimes \eta_4+\eta_4\otimes \eta_2)+{i\,q_w\over 2}\,(\eta_2\otimes \eta_3- \eta_3\otimes \eta_2),
\end{equation}
\begin{equation}
    \label{ec.2.101} 
    \Theta(\eta_3):={i\over q_w}\,(\pi\otimes \pi)\Delta(\alpha-\alpha^\ast)={i\,q_4\over 2}\,(\eta_3\otimes \eta_4+\eta_4\otimes \eta_3)+{i\,q_w\over 2}\,(\eta_1\otimes \eta_2- \eta_2\otimes \eta_1),
\end{equation}
\begin{equation}
    \label{ec.2.102} 
    \Theta(\eta_4):=-{i\over q_4}\,(\pi\otimes \pi)\Delta(\alpha+\alpha^\ast)={i\,q_4\over 2}\,\eta_4\otimes \eta_4+{i\,q^2_w\over 2\,q_4}\,(\eta_1\otimes \eta_2+\eta_2\otimes \eta_1+\eta_3\otimes \eta_3).
\end{equation}

\begin{Definition}
    \label{def2.8}
    We define the inner product $$\langle -|-\rangle_{4D}: \su(2,\C)^\#_{4D}\times \su(2,\C)^\#_{4D}\longrightarrow \C$$ such that $\beta^\#_{4D}$  is an orthonormal basis of  $\su(2,\C)^\#_{4D}$. 
\end{Definition}
\noindent 

Thus, we have
\begin{Proposition}
    \label{prop2.9}
    The $\SU(2)$--corepresentation $\ad$ on $\su(2,\C)^\#_{4D}$ is unitary with respect to $\langle -|-\rangle_{4D}$.
\end{Proposition}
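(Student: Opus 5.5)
The plan is to reduce the statement to a matrix identity over the commutative $\ast$--algebra $\SU(2)$. Write $\ad(\eta_j)=\sum_{i=1}^4\eta_i\otimes u_{ij}$ with $u=(u_{ij})\in M_4(\SU(2))$. Since $\beta^\#_{4D}$ is orthonormal for $\langle -|-\rangle_{4D}$ (Definition \ref{def2.8}), unitarity of $\ad$ means
$$\sum_{k}u_{ki}^\ast u_{kj}=\delta_{ij}\mathbbm{1}\quad\text{and}\quad\sum_k u_{ik}u^\ast_{jk}=\delta_{ij}\mathbbm{1}.$$
Equivalently, $u^\ast u=uu^\ast=I$ in $M_4(\SU(2))$. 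This expresses that $\sum_{k,l}\langle\eta_k|\eta_l\rangle_{4D}\,u_{ki}^\ast u_{lj}=\langle\eta_i|\eta_j\rangle_{4D}\mathbbm{1}$, which is the usual definition of a unitary corepresentation.

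The matrix $u$ is read off directly from Proposition \ref{prop2.5}. Its columns are:
$$(\alpha^2,\,-\gamma^2,\,-\sqrt2\,\alpha\gamma,\,0),\qquad (-\gamma^{\ast2},\,\alpha^{\ast2},\,-\sqrt2\,\alpha^\ast\gamma^\ast,\,0),$$
$$(\sqrt2\,\alpha\gamma^\ast,\,\sqrt2\,\alpha^\ast\gamma,\,\alpha\alpha^\ast-\gamma\gamma^\ast,\,0),\qquad (0,0,0,\mathbbm{1}).$$
So $u$ is block diagonal: a $3\times3$ block (the spin--$1$ representation) plus the trivial $1\times1$ block coming from $\eta_4$. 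The trivial block is obviously unitary, so it suffices to treat the $3\times 3$ block.

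For the $3\times3$ block I would verify column orthonormality using commutativity of $\SU(2)$ and relation (\ref{ec.2.17}), $\alpha\alpha^\ast+\gamma\gamma^\ast=\mathbbm{1}$.
\begin{itemize}
\item The first column has squared norm $(\alpha\alpha^\ast)^2+(\gamma\gamma^\ast)^2+2\alpha\alpha^\ast\gamma\gamma^\ast=(\alpha\alpha^\ast+\gamma\gamma^\ast)^2=\mathbbm{1}$. The second column is the same computation after conjugation.
\item The third column has squared norm $4\alpha\alpha^\ast\gamma\gamma^\ast+(\alpha\alpha^\ast-\gamma\gamma^\ast)^2=(\alpha\alpha^\ast+\gamma\gamma^\ast)^2=\mathbbm{1}$.
\item Columns $1$ and $2$ give $-\alpha^{\ast2}\gamma^{\ast2}-\gamma^{\ast2}\alpha^{\ast2}+2\alpha^{\ast2}\gamma^{\ast2}=0$.
\item Columns $1$ and $3$ give $\sqrt2\,\alpha^\ast\gamma^\ast\bigl(\alpha\alpha^\ast-\gamma\gamma^\ast-(\alpha\alpha^\ast-\gamma\gamma^\ast)\bigr)=0$.
\item Columns $2$ and $3$ follow from the previous case by the analogous (conjugate) computation.
\end{itemize}
These identities give $u^\ast u=I$.

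The remaining point, and the only step that is not a mechanical check, is the other identity $uu^\ast=I$. I would obtain it abstractly rather than by computation. Since $\SU(2)$ is commutative, $u^\ast u=I$ gives $\overline{\det}(u)\det(u)=\mathbbm{1}$, where $\overline{\det}(u)$ denotes the determinant of the entrywise conjugate. Hence $\det u$ is invertible, so $u$ is invertible with inverse $u^\ast$, and therefore $uu^\ast=I$.

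Alternatively, one can use the general fact that a corepresentation matrix satisfies $(\id\otimes S)u=u^{-1}$. With $S(\alpha)=\alpha^\ast$ and $S(\gamma)=-\gamma$, this again identifies $u^{-1}$ with $u^\ast$. If needed, the row identities can also be checked directly, since they are the same computation with $\alpha\leftrightarrow\alpha^\ast$ and $\gamma\mapsto-\gamma^\ast$.
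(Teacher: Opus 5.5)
Your proof is correct and follows the paper's route: read off the matrix of $\ad$ in the orthonormal basis $\beta^\#_{4D}$ from Proposition \ref{prop2.5}, then check unitarity entrywise using commutativity and $\alpha\alpha^\ast+\gamma\gamma^\ast=\mathbbm{1}$. The only difference is that you verify $u^\ast u=I$ explicitly and deduce $uu^\ast=I$ from the invertibility of $u$ (via the determinant, or via $(\id\otimes S)u=u^{-1}$), whereas the paper simply asserts both identities as a straightforward calculation; your shortcut is valid.
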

\begin{proof}
    By Proposition \ref{prop2.5}, we get that the matrix associated with $\ad$ with respect to the basis $\beta^\#_{4D}$ is
   $$U=\left(
\begin{array}{cccc}  
\alpha^2 &  -\gamma^{\ast 2} & \sqrt{2}\,\alpha\,\gamma^\ast &0 \\
-\gamma^2 & \alpha^{\ast 2} & \sqrt{2}\,\alpha^\ast\,\gamma &0 \\
-\sqrt{2}\,\alpha\,\gamma & -\sqrt{2}\,\alpha^\ast\,\gamma^\ast & \alpha\alpha^\ast-\gamma\gamma^\ast & 0\\
0 & 0 &0 & 1
\end{array}
\right)$$
An straightforward calculation shows that $$U\,U^\dagger=U^\dagger\,U=\Id_3,$$ where $\Id_3$ is the identity matrix of dimension $3$ and $U^\dagger$ is composition of the usual matrix transpose operation and the $\ast$ operation of $\G$. Hence, $\ad$ is unitary with respect to $\langle-|-\rangle_{4D}$.
\end{proof}

Notice that the linear basis of $\su(2,\C)^\#_{4D}$
\begin{equation}
    \label{ec.2.103}
    \widetilde{\beta}^\#_{4D}:=\{ \beta_1:={1\over \sqrt{2}}(\eta_1+\eta_2),\quad \beta_2:={i\over \sqrt{2}}(\eta_1-\eta_2),\quad \beta_3:=\eta_3,\quad \beta_4:=\eta_4\}
\end{equation}
is also an orthonormal basis with respect to $\langle -|-\rangle_{4D}$ and a simple change of basis proves the next propositions
\begin{Proposition}
    \label{prop2.3.2}
    The following relations hold
    $$\Theta(\beta_1)={i\,q_4\over 2}\,(\beta_1\otimes \beta_4+\beta_4\otimes \beta_1)-{q_w\over 2}\,(\beta_2\otimes \beta_3- \beta_3\otimes\beta_2),$$ $$\Theta(\beta_2)={i\,q_4\over 2}\,(\beta_2\otimes \beta_4+\beta_4\otimes \beta_2)-{q_w\over 2}\,(\beta_3\otimes \beta_1- \beta_1\otimes \beta_3),$$
    $$\Theta(\beta_3)={i\,q_4\over 2}\,(\beta_3\otimes \beta_4+\beta_4\otimes \beta_3)-{q_w\over 2}\,(\beta_1\otimes \beta_2- \beta_2\otimes \beta_1),$$ $$\Theta(\beta_4)={i\,q_4\over 2}\,\beta_4\otimes \beta_4+{i\,q^2_w\over 2\,q_4}\,(\beta_1\otimes \beta_1+ \beta_2\otimes \beta_2+\beta_3\otimes \beta_3).$$ 
\end{Proposition}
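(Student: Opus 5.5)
The plan is a direct change of basis in $\su(2,\C)^\#_{4D}\otimes\su(2,\C)^\#_{4D}$, using linearity of $\Theta$ and the explicit values (\ref{ec.2.99})--(\ref{ec.2.102}).

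\emph{Step 1: invert the change of basis.} From (\ref{ec.2.103}) we get
$$\eta_1={1\over\sqrt2}(\beta_1-i\beta_2),\qquad \eta_2={1\over\sqrt2}(\beta_1+i\beta_2),\qquad \eta_3=\beta_3,\qquad \eta_4=\beta_4.$$
Then $\Theta(\beta_3)$ and $\Theta(\beta_4)$ are simply $\Theta(\eta_3)$ and $\Theta(\eta_4)$ rewritten in the new basis. For $\beta_1,\beta_2$ we use
$$\Theta(\beta_1)={1\over\sqrt2}\bigl(\Theta(\eta_1)+\Theta(\eta_2)\bigr),\qquad \Theta(\beta_2)={i\over\sqrt2}\bigl(\Theta(\eta_1)-\Theta(\eta_2)\bigr).$$

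\emph{Step 2: the $q_4$-terms.} These are immediate. In (\ref{ec.2.99})--(\ref{ec.2.101}) they have the form ${i q_4\over 2}(\eta_j\otimes\eta_4+\eta_4\otimes\eta_j)$, which is linear in $\eta_j$ with $\eta_4=\beta_4$. So they transform exactly like $\eta_j\mapsto\beta_j$, giving ${iq_4\over2}(\beta_j\otimes\beta_4+\beta_4\otimes\beta_j)$.

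\emph{Step 3: the $q_w$-terms for $\beta_1,\beta_2$.} Adding the $q_w$-parts of $\Theta(\eta_1)$ and $\Theta(\eta_2)$ gives
$${iq_w\over2}\bigl(\eta_3\otimes(\eta_1-\eta_2)-(\eta_1-\eta_2)\otimes\eta_3\bigr).$$
Since $\eta_1-\eta_2=-\sqrt2\,i\,\beta_2$, multiplying by ${1\over\sqrt2}$ yields ${q_w\over2}(\beta_3\otimes\beta_2-\beta_2\otimes\beta_3)$, which is the claimed term. Subtracting them instead involves $\eta_1+\eta_2=\sqrt2\,\beta_1$, and multiplying by ${i\over\sqrt2}$ produces the $\beta_2$ formula.

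\emph{Step 4: the quadratic pieces for $\beta_3,\beta_4$.} Substituting the expressions for $\eta_1,\eta_2$ gives
$$\eta_1\otimes\eta_2-\eta_2\otimes\eta_1=i(\beta_1\otimes\beta_2-\beta_2\otimes\beta_1),\qquad \eta_1\otimes\eta_2+\eta_2\otimes\eta_1=\beta_1\otimes\beta_1+\beta_2\otimes\beta_2.$$
Inserting these into (\ref{ec.2.101}) and (\ref{ec.2.102}) gives the last two formulas, since ${iq_w\over2}\cdot i=-{q_w\over2}$.

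There is no real obstacle here. The only step needing care is the bookkeeping of factors of $i$ and signs in Step 3, because ordering errors in the antisymmetric tensors are easy to make. I would verify each one by direct expansion.
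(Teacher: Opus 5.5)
Your proposal is correct and takes the same approach as the paper, which gives no computation beyond saying that a simple change of basis from (\ref{ec.2.99})--(\ref{ec.2.102}) via (\ref{ec.2.103}) yields these formulas. Your inversion $\eta_{1,2}=\frac{1}{\sqrt2}(\beta_1\mp i\beta_2)$ is right, and so is your bookkeeping of signs and factors of $i$ in Steps 3 and 4.
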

\begin{Proposition}
    \label{prop2.3.3}
    The following relations hold
    $$c^T(\beta_1)=-i\,q_4\,\beta_1\otimes \beta_4+{q_4\over 2}\,\beta_2\otimes \beta_4-q_w\,(\beta_3\otimes\beta_2-\beta_2\otimes\beta_3),$$ $$c^T(\beta_2)=-{\,q_4\over 2}\,\beta_1\otimes \beta_4-i\,q_4\,\beta_2\otimes\beta_4-q_w\,(\beta_1\otimes\beta_3-\beta_3\otimes\beta_1),$$ $$c^T(\beta_3)=-i\,q_4\,\beta_3\otimes \beta_4-q_w\,(\beta_2\otimes\beta_1-\beta_1\otimes\beta_2),$$ $$c^T(\beta_4)=0.$$
\end{Proposition}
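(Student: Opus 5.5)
The statement is the last one displayed, Proposition \ref{prop2.3.3}, which gives $c^T$ in the basis $\widetilde{\beta}^\#_{4D}$. The plan is a pure change of basis from Proposition \ref{prop2.6}.

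First invert (\ref{ec.2.103}) to express the old basis in terms of the new one:
$$\eta_1={1\over\sqrt2}(\beta_1-i\beta_2),\qquad \eta_2={1\over\sqrt2}(\beta_1+i\beta_2),\qquad \eta_3=\beta_3,\qquad \eta_4=\beta_4.$$
Since $c^T$ is linear, we have
$$c^T(\beta_1)={1\over\sqrt2}\bigl(c^T(\eta_1)+c^T(\eta_2)\bigr),\qquad c^T(\beta_2)={i\over\sqrt2}\bigl(c^T(\eta_1)-c^T(\eta_2)\bigr).$$
Also $c^T(\beta_3)=c^T(\eta_3)$ and $c^T(\beta_4)=c^T(\eta_4)=0$. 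The last of these disposes of the fourth relation immediately.

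Next, substitute the formulas of Proposition \ref{prop2.6} and rewrite every tensor $\eta_j\otimes\eta_k$ through the inverse relations above, expanding bilinearly. The work separates into two kinds of terms.

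\begin{itemize}
\item \emph{The $q_w$ terms.} These are antisymmetric combinations $\eta_a\otimes\eta_3-\eta_3\otimes\eta_a$ and $\eta_2\otimes\eta_1-\eta_1\otimes\eta_2$. Under the change of basis they should recombine into the $\mathfrak{su}(2)$ cross-product pattern $-q_w(\beta_j\otimes\beta_k-\beta_k\otimes\beta_j)$. One check is the identity
$$\eta_2\otimes\eta_1-\eta_1\otimes\eta_2=i(\beta_1\otimes\beta_2-\beta_2\otimes\beta_1),$$
which, multiplied by $iq_w$, gives the $c^T(\beta_3)$ formula.
\item \emph{The $q_4$ terms.} These come from $-{iq_4\over2}\eta_1\otimes\eta_4$ and $-{3iq_4\over2}\eta_2\otimes\eta_4$. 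Because the coefficients differ, $c^T(\beta_1)$ produces
$$-{iq_4\over2}\cdot{1\over2}\bigl[(\beta_1-i\beta_2)+3(\beta_1+i\beta_2)\bigr]\otimes\beta_4=-iq_4\,\beta_1\otimes\beta_4+{q_4\over2}\,\beta_2\otimes\beta_4.$$
This is exactly the mixed term in the statement. The analogous computation handles $c^T(\beta_2)$.
\end{itemize}

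The only real obstacle is bookkeeping of signs and factors of $i$ and $\sqrt2$. In particular, the asymmetry between the $\eta_1$ and $\eta_2$ coefficients $q_4/2$ versus $3q_4/2$ is what creates the off-diagonal $\beta_2\otimes\beta_4$ terms, so that is where I would be most careful.

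As a consistency check, one can alternatively recompute directly from Definition \ref{def2.6}. This means writing $\ad$ in the $\beta$ basis via Proposition \ref{prop2.5} and applying $\pi$ to the matrix entries using Proposition \ref{prop2.4} together with (\ref{ec.2.92}), (\ref{ec.2.93}). For example, $\pi(\alpha^2)$ and $\pi(\alpha^{\ast2})$ contribute the $\eta_4$ parts with coefficients differing by the factor $3$ seen above.
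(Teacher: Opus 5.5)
Your route is the paper's: the paper gives no details beyond saying that Proposition \ref{prop2.3.3} follows by ``a simple change of basis'' (\ref{ec.2.103}), which amounts to your reduction to Proposition \ref{prop2.6}. Your inversion of the basis, the formulas $c^T(\beta_1)=\frac{1}{\sqrt2}\bigl(c^T(\eta_1)+c^T(\eta_2)\bigr)$ and $c^T(\beta_2)=\frac{i}{\sqrt2}\bigl(c^T(\eta_1)-c^T(\eta_2)\bigr)$, and your handling of the $q_4$ terms (including the off-diagonal $\beta_2\otimes\beta_4$ and $\beta_1\otimes\beta_4$ terms) are all correct.

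There is, however, a sign slip in the one $q_w$ identity you display. In fact
\[
\eta_2\otimes\eta_1-\eta_1\otimes\eta_2=i\,(\beta_2\otimes\beta_1-\beta_1\otimes\beta_2)=-i\,(\beta_1\otimes\beta_2-\beta_2\otimes\beta_1).
\]
Multiplying by $i\,q_w$ then gives $-q_w(\beta_2\otimes\beta_1-\beta_1\otimes\beta_2)$, as in the statement. With your version you would get the opposite sign, and the claimed match with the $c^T(\beta_3)$ formula would fail.

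The remaining $q_w$ terms are easiest if you group them. The $q_w$ parts of $c^T(\beta_1)$ and $c^T(\beta_2)$ are
\[
\frac{iq_w}{\sqrt2}\bigl[(\eta_1-\eta_2)\otimes\eta_3-\eta_3\otimes(\eta_1-\eta_2)\bigr]
\quad\text{and}\quad
-\frac{q_w}{\sqrt2}\bigl[(\eta_1+\eta_2)\otimes\eta_3-\eta_3\otimes(\eta_1+\eta_2)\bigr].
\]
Using $\eta_1-\eta_2=-i\sqrt2\,\beta_2$ and $\eta_1+\eta_2=\sqrt2\,\beta_1$, these collapse directly to the stated antisymmetric combinations.
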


It is well--known that there exists another $4$--dimensional differential calculus on $\SU(2)$ (see references \cite{woro1,stach}). The one presented in this subsection is known as the $4D_+$--differential calculus, while the other one is known as the $4D_-$--differential calculus. We have chosen to work with the $4D_+$--differential calculus rather than the $4D_-$--differential calculus because $4D_+$ has the following property: let us denote by $d_{dR}$ the de--Rham differential of $SU(2)$  and let us denote by $d$ the differential of the $\ast$--FODC of this subsection. In light Section 9 of reference \cite{stach}, we get
\begin{equation}
    \label{ec.2.104}
    dg=d_{dR}(g)+w\,\triangle(g)\,\eta_4
\end{equation}
for every $g$ $\in$ $\SU(2),$ with $w$ $\in$ $\C$. Here, $\triangle(g)$ denotes the Laplacian of the function $g$. In the $4D_-$--differential calculus of $\SU(2)$, equation (\ref{ec.2.104}) does not hold. Equation~(\ref{ec.2.104}) shows that the $4D_+$--differential calculus on $\SU(2)$ is closely related to its {\it classical} counterpart.

\section{The Principal Bundle}

In the {\it classical} case, the electroweak theory is developed in $\R^4$. Since $\R^4$ is contractible, every bundle on it is trivializable and therefore, in this paper, we also use a trivial quantum principal bundle.

\subsection{General Theory}

Durdevich's formulation of quantum principal bundles is presented in \cite{micho1,micho2,micho3,stheve} among other references. Let us present a brief summary.

Let  $(A,\cdot,\mathbbm{1},\ast,\Delta,\epsilon,S)$ be a $\ast$--Hopf algebra.  In Durdevich's formulation, a quantum principal $A$--bundle (abbreviated qpb) is a triple 
\begin{equation}
    \label{ec.3.1}
    \zeta=(P,B,\Delta_P),
\end{equation}
where $(P,\cdot,\mathbbm{1},\ast)$ is a $\ast$--algebra called the {\it the quantum total space}, $(B,\cdot,\mathbbm{1},\ast)$ is a $\ast$--subalgebra called {\it the quantum base space} and $$\Delta_P:P\longrightarrow P\otimes A $$ is a $\ast$--algebra morphism that satisfies
\begin{enumerate}
\item $\Delta_P$ is a $A$--corepresentation.
\item $\Delta_P(x)=x\otimes \mathbbm{1}$ if and only if $x$ $\in$ $B$.
\item The linear map $\beta:P\otimes P\longrightarrow P\otimes G$ given by $$\beta(x\otimes y):=x\cdot \Delta_P(y):=(x\otimes \mathbbm{1})\cdot \Delta_P(y) $$ is surjective. 
\end{enumerate}
 
Given $\zeta$ a qpb, a {\it differential calculus} on it is:
 \begin{enumerate}
 \item A graded differential $\ast$--algebra $(\Omega^\bullet(P),d,\ast)$ generated by $\Omega^0(P)=P$ ({\it quantum differential forms of $P$}).
 \item  A bicovariant $\ast$--FODC $(\Gamma,d)$ over $A$.
 \item The map $\Delta_P$ is extensible to a graded differential $\ast$--algebra morphism $$\Delta_{\Omega^\bullet(P)}:\Omega^\bullet(P)\longrightarrow \Omega^\bullet(P)\otimes \Lambda^{\wedge},$$ where $(\Lambda^\wedge,d,\ast)$ is the universal differential envelope $\ast$--calculus. Here, we have considered that $\otimes$ is the tensor product of graded differential $\ast$--algebras.
 \end{enumerate}

Notice that if $\Delta_{\Omega^\bullet(P)}$ exists, then it is unique because all our graded differential $\ast$--algebras are generated by their degree $0$ elements.

Let $\zeta$ be a qpb with a differential calculus. The space 
\begin{equation}
    \label{ec.3.2}
    \Hor^\bullet\,P:=\{ \psi \in \Omega^\bullet(P)\mid \Delta_{\Omega^\bullet(P)}(\psi)\in \Omega^\bullet(P)\otimes A\}
\end{equation}
is a graded $\ast$--algebra called the space of horizontal forms. Additionally, we define the map
\begin{equation}
    \label{ec.3.3}
    \Delta_\Hor:=\Delta_{\Omega^\bullet(P)}|_{\Hor^\bullet\,P}: \Hor^\bullet\,P\longrightarrow \Hor^\bullet\,P\otimes A,
\end{equation}
which, according to \cite{stheve}, is an $A$--corepresentation.

On the other hand, the space 
\begin{equation}
    \label{ec.3.4}
    \Omega^\bullet(B):=\{ \psi \in \Omega^\bullet(P)\mid \Delta_{\Omega^\bullet(P)}(\psi)=\psi\otimes \mathbbm{1}\}
\end{equation}
is a graded differential $\ast$--subalgebra of $(\Omega^\bullet(P),d,\ast)$ which is called the space of base forms. Moreover, we define a quantum principal connection (abbreviated qpc) on $\zeta$ as a linear map 
\begin{equation}
    \label{ec.3.5}
    \omega:\mathfrak{qa}^\#\longrightarrow \Omega^1(P)
\end{equation}
such that $$\Delta_{\Omega^\bullet(P)}(\omega(\theta))=(\omega\otimes \id_A)\ad(\theta)+\mathbbm{1}\otimes \theta, \qquad \omega(\theta^\ast)=\omega(\theta)^\ast $$ for all $\theta$ $\in$ $\mathfrak{qa}^\#$ (see equation (\ref{adgeneral})). The space of all qpc will be denoted by
\begin{equation}
    \label{ec.3.6}
    \mathfrak{qpc}(\zeta).
\end{equation}
In addition, according to reference \cite{stheve}, $\mathfrak{qpc}(\zeta)$ is a real affine space modeled by the $\R$--vector space
\begin{equation}
    \label{ec.3.7}
    \overrightarrow{\mathfrak{qpc}(\zeta)}:=\{\lambda:\mathfrak{qa}^\# \longrightarrow \Hor^1 P\mid \lambda \mbox{ is linear and }
    (\lambda\otimes \id_A)\circ \ad=\Delta_\Hor \circ \lambda,\;\; \lambda\circ \ast=\ast\circ \lambda \}.
\end{equation}

A qpc $\omega$ is called regular if 
\begin{equation}
    \label{ec.3.7.1}
    \omega(\theta)\,\varphi=(-1)^k\,\varphi^{(0)}\,\omega(\theta\diamondsuit \varphi^{(1)}),
\end{equation}
for all $\varphi$ $\in$ $\Hor^k\,P$, where $\Delta_\Hor(\varphi)=\varphi^{(0)}\otimes \varphi^{(1)}$ and $\theta\diamondsuit g$ is defined as in equation (\ref{ec.2.91}).

For a given qpc $\omega$, we define its covariant derivative as the first--order linear map 
\begin{equation}
    \label{ec.3.8}
    D^\omega: \Hor^\bullet\,P\longrightarrow \Hor^\bullet\,P
\end{equation}
such that for all $\varphi$ $\in$ $\Hor^k\,P$ we have $$D^\omega(\varphi)=d\varphi-(-1)^k\,\varphi^{(0)}\,\omega(\pi(\varphi^{(1)})),$$ with $\Delta_\Hor(\varphi)=\varphi^{(0)}\otimes\varphi^{(1)}$ and $\pi$ being the quantum germs map (see equation (\ref{qgermsmap}). Additionally, we define the dual covariant derivative as the first--order linear map 
\begin{equation}
    \label{ec.3.8.1}
    \widehat{D}^\omega:=\ast \circ D^\omega \circ \ast: \Hor^\bullet\,P\longrightarrow \Hor^\bullet\,P.
\end{equation}
In concrete
\begin{equation}
    \label{ec.3.8.2}
    \widehat{D}^\omega(\varphi)=d\varphi+\omega(\pi(S^{-1}(\varphi^{(1)})))\,\varphi^{(0)}.
\end{equation}
It is worth mentioning that, according to \cite{micho3}, we get
\begin{equation}
    \label{ec.3.8.3}
    \widehat{D}^\omega=D^\omega\;\;\Longleftrightarrow \;\; \omega \;\mbox{ is regular.} 
\end{equation}
In fact, for every $\varphi$ $\in$ $\Hor^k\,P$, we obtain
\begin{equation}
\label{iguales}
    \widehat{D}^{\omega}(\varphi)=D^{\omega}(\varphi)+\ell^{\omega}(\pi(S^{-1}(\varphi^{(1)})),\varphi^{(0)}),
\end{equation}
where 
\begin{equation*}
\ell^{\omega}:\mathfrak{qg}^\#\times \Hor^\bullet P \longrightarrow \Hor^\bullet P, \qquad
(\theta,\varphi)\longmapsto \omega(\theta)\varphi-(-1)^k \varphi^{(0)}\omega(\theta\diamondsuit \varphi^{(1)}).
\end{equation*}

 The map $\ell^{\omega}$ {\it measures the degree of non--regularity} of $\omega$, in the sense of $\ell^{\omega}=0$ if and only if $\omega$ is regular \cite{micho3}. For regular qpc's we get $D^\omega=\widehat{D}^{\omega}$, which is the situation for qpc's arising from {\it classical} principal connections (\cite{micho1}). In other words, $D^\omega$ and $\widehat{D}^\omega$ are two different horizontal operators that generalize the covariant derivative of a principal connection in differential geometry. In the next sections, we will work with both operators.

Direct calculations shows that (\cite{micho3})
\begin{equation}
\label{2.f32}
    D^{\omega}(\varphi\psi)=D^{\omega}(\varphi)\psi+(-1)^k\varphi D^{\omega}(\psi)+ (-1)^k \varphi^{(0)}\ell^{\omega}(\pi(\varphi^{(1)}),\psi),
\end{equation}
and 
\begin{equation}
\label{2.f32.1}
\widehat{D}^{\omega}(\varphi\psi)=\widehat{D}^{\omega}(\varphi)\psi+(-1)^k\varphi \widehat{D}^{\omega}(\psi)
+ \ell^{\omega}(\pi(S^{-1}(\psi^{(1)}))\circ S^{-1}(\varphi^{(1)}),\varphi^{(0)})\psi^{(0)},
\end{equation}
for all $\varphi$ $\in$ $\Hor^k P$, $\psi$ $\in$ $\Hor^\bullet P$.

The notion of embedded differential (see Definition~\ref{def2.7}) can be defined for every bicovariant $\ast$--FODC over any $\ast$--Hopf algebra; this concept is not exclusive to the $\ast$--FODC presented in the previous section \cite{micho1,micho2,stheve}. In this way, in Durdevich's formulation of qpb's, if $\Theta$ is an embedded differential, we define the curvature of $\omega$ as the linear map
\begin{equation*}
    R^\omega:\mathfrak{qa}^\#\longrightarrow \Omega^2(P)
\end{equation*}
given by 
\begin{equation}
    \label{ec.3.10}
    R^\omega=d\omega-\langle\omega,\omega\rangle,\qquad \langle\omega,\omega\rangle:=m_\Omega\circ (\omega\otimes \omega)\circ \Theta,
\end{equation}
where $m_\Omega:\Omega^\bullet(P)\otimes \Omega^\bullet(P)\longrightarrow \Omega^\bullet(P)$ is the product map. 

In differential geometry, the image of the curvature of a principal connection is the Lie algebra $\mathfrak{g}$ of the structure group $G$ of the bundle, so in non--commutative geometry, it is natural to think that the the domain of the curvature of a quantum principal connection has to be the quantum dual Lie algebra $\mathfrak{qa}^\#$. This is why in Durdevich's formulation of qpb’s,  the curvature is defined by equation (\ref{ec.3.10}) and not as in reference \cite{libro}: 
 \begin{equation}
    \label{ec.3.11}
    r^\omega: \Ker(\epsilon)\longrightarrow \Omega^2(P),\qquad g\longmapsto d\omega(\pi(g))+\omega(\pi(g^{(1)}))\,\omega(\pi(g^{(2)})) 
\end{equation}
with $\Delta(g)=g^{(1)}\otimes g^{(2)}$. 

Let $\omega_\class$ be a principal connection on a principal $G$--bundle, and let $\Omega^{\omega_\class}$ be its curvature. If $\omega$ denotes the pullback of $\omega_\class$, then the pullback of $\Omega^{\omega_\class}$ is given by 
\begin{equation}
    \label{ec.3.12}
    d\omega+{1\over 2}[\omega,\omega],\qquad [\omega,\omega]:=m_\Omega\circ (\omega\otimes \omega)\circ c^T.
\end{equation}
Thus, one could consider equation~(\ref{ec.3.12}) as the definition of the curvature of a qpc $\omega$. However, due to equation~(\ref{ec.2.96}) and the discussion below Definition~\ref{def2.7}, it should be clear that, in general, the curvature of a qpc cannot be defined in terms of the transpose commutator $c^T$; instead, it must be defined through an embedded differential $\Theta$. This is why in Durdevich's formulation of qpb’s, the curvature is defined by equation~(\ref{ec.3.10}), which can be viewed as a \emph{non--commutative geometrical generalization} of equation~(\ref{ec.3.12}). The reader can consult references \cite{micho2,stheve} for more details about this definition of the curvature. 
 
 It is worth mentioning that in differential geometry, the curvature $\Omega^{\omega_\class}$  of $\omega_\class$ is a basic differential form of type $\ad^\#$, where $\ad^\#$ is the adjoint action of $G$ on its Lie algebra $\mathfrak{g}$ (\cite{nodg}). The non--commutative geometrical counterpart of the space of basic differential forms of type $\ad^\#$ is the space
\begin{equation}
    \label{ec.3.13}
    \begin{aligned}
        \Mor(\ad,\Delta_\Hor):=\{\tau:\; &\mathfrak{qa}^\#\longrightarrow \Hor^\bullet\,P\mid
        \\ 
        &\tau \mbox{ is linear such that } (\tau\otimes \id_{\mathfrak{qa}^\#})\circ \ad=\Delta_\Hor\circ \tau\}
    \end{aligned}
\end{equation}
and according to Section 12.8 of reference \cite{stheve}, we have 
\begin{equation}
    \label{ec.3.14}
    \Im(R^\omega)\subseteq \Hor^2\,P, \qquad R^\omega\in \Mor(\ad,\Delta_\Hor), \qquad R^\omega(\theta^\ast)=R^\omega(\theta)^\ast,
\end{equation}
for all $\theta$ $\in$ $\mathfrak{qa}^\#$. In other words, we can regard $R^\omega$ as a \emph{quantum basic differential form of type} $\ad$.

Let $(\X,m_\X,\mathbbm{1},\ast)$ be a $\ast$-algebra, where $m_\X:\X\otimes \X\longrightarrow \X$ is the product map. In general, given two linear maps $$L_1,\,L_2:\mathfrak{qa}^\#\longrightarrow \X,$$ we define the linear maps
\begin{equation}
    \label{brakets}
    [L_1,L_2]:=m_\X\circ (L_1\otimes L_2)\circ c^T,\qquad \langle L_1,L_2\rangle:= m_\Omega\circ (\omega\otimes \omega)\circ \Theta.
\end{equation}

Let us consider the operator (\cite{micho2,sald2})
\begin{equation}
    \label{ec.3.18}
    \begin{aligned}
        S^\omega:\Mor(\ad,\Delta_\Hor)\longrightarrow \Mor(\ad,\Delta_\Hor),\qquad
        \tau\longmapsto S^\omega(\tau)
    \end{aligned}
\end{equation}
given by 
\begin{equation}
    \label{ec.3.19}
    S^\omega(\tau):=\langle\omega,\tau\rangle-(-1)^k\langle \tau,\omega\rangle-(-1)^k[\tau,\omega]
\end{equation}
for every $\tau$ $\in$ $\Mor(\ad,\Delta_\Hor)$ such that $\Im(\tau)\subseteq \Hor^k P$; and the operator (\cite{micho2,sald2})
\begin{equation}
    \label{ec.3.17}
    \begin{aligned}
        D^\omega:\Mor(\ad,\Delta_\Hor)\longrightarrow \Mor(\ad,\Delta_\Hor),\qquad
        \tau\longmapsto D^\omega(\tau)
    \end{aligned}
\end{equation}
given by $$D^\omega(\tau)(\theta):=D^\omega(\tau(\theta))$$ for every $\theta$ $\in$ $\mathfrak{qa}^\#$. In particular, by Proposition 4.7 of reference \cite{micho2}, we get
$$D^\omega(\tau)=d\tau-(-1)^k[\tau,\omega].$$

\begin{Definition}
    \label{twisted}
    We define the twisted covariant derivative as the operator
    $$DS^\omega:\Mor(\ad,\Delta_\Hor)\longrightarrow \Mor(\ad,\Delta_\Hor), \qquad \tau \longmapsto DS^\omega(\tau):=D^\omega(\tau)-S^\omega(\tau).$$ Explicitly, we have (\cite{sald2}) $$DS^\omega(\tau):=d\tau-\langle\omega,\tau\rangle+(-1)^k\langle \tau,\omega\rangle $$ for every $\tau$ $\in$ $\Mor(\ad,\Delta_\Hor)$ such that $\Im(\tau)\subseteq \Hor^k P$.
\end{Definition}

In accordance with the analysis below Definition (\ref{def2.7}), the operator $S^\omega$ may be interpreted as a measure of the \emph{discrepancy between the differential structure and the Lie algebra structure} in the non--commutative geometrical setting. For example, in differential geometry, the graded differential $\ast$--algebras involved are the algebras of differential forms, which are graded--commutative and the only embedded differential is given by $\displaystyle -\frac{1}{2}c^T$. Consequently, $S^\omega\equiv0$; which reflects the \emph{compatibility between the differential structure and the Lie algebra structure} in the \emph{classical} case.

In the same sense, the operator $$DS^\omega$$
can be interpreted as the \emph{correct} covariant derivative on quantum basic  differential forms of type $\ad$  because it corrects  this \emph{discrepancy}.

In differential geometry, given a principal connection $\omega_\class$ of a principal $G$--bundle, its curvature $\Omega^{\omega_\class}$ satisfies the (second) Bianchi identity
\begin{equation}
    \label{ec.3.15}
    D^{\omega_\class}\Omega^{\omega_\class}=0,
\end{equation}
where $D^{\omega_\class}$ is the covariant derivative of $\omega_\class$ \cite{nodg}. In light of Proposition 4.9 of reference \cite{micho2}, for every qpc $\omega$, the non--commutative geometrical (second) Bianchi identity is given by 
\begin{equation}
    \label{ec.3.16}
DS^\omega(R^\omega)=\langle\omega,\langle\omega,\omega\rangle\rangle-\langle\langle\omega,\omega\rangle,\omega\rangle.
\end{equation}

\noindent Notice that in the \emph{classical} case (\cite{micho2}) $$S^\omega\equiv 0 \qquad \mbox{ and }\qquad \langle\omega,\langle\omega,\omega\rangle\rangle-\langle\langle\omega,\omega\rangle,\omega\rangle=0$$ and hence,  equation (\ref{ec.3.16}) turns into (the pull--back of) equation (\ref{ec.3.15}). In other words, equation (\ref{ec.3.16}) is indeed, a \emph{non--commutative geometrical generalization} of equation (\ref{ec.3.15}). 

It is worth mentioning that in non--commutative geometry, in general, the equation
\begin{equation}
    \label{notbianchi}
    D^\omega R^\omega=0
\end{equation}
\emph{does not hold}. In Remark \ref{rema3.2} we will discuss this fact in more detail in our particular case of study.

According to Section 12 of reference \cite{stheve}, since the following identity holds 
\begin{equation}
    \label{ec.3.19.1}
    (\ast\otimes \ast)\circ \ad=\ad\circ \ast,
\end{equation}
there is a canonical involution on $\Mor(\ad,\Delta_\Hor)$ given by
\begin{equation}
    \label{ec.3.19.2}
    \wedge: \Mor(\ad,\Delta_\Hor)\longrightarrow \Mor(\ad,\Delta_\Hor), \qquad \tau\longmapsto \widehat{\tau}:=\ast \circ \tau\circ \ast.
\end{equation}
In this way, the dual covariant derivative, as an operator of $\Mor(\ad,\Delta_\Hor)$, satisfies
\begin{equation}
    \label{ec.3.19.3}
    \widehat{D}^\omega=\wedge\circ D^\omega \circ \wedge: \Mor(\ad,\Delta_\Hor)\longrightarrow \Mor(\ad,\Delta_\Hor);
\end{equation}
and defining the dual $S^\omega$ operator as
\begin{equation}
    \label{ec.3.19.4}
    \widehat{S}^\omega=\wedge\circ S^\omega \circ \wedge: \Mor(\ad,\Delta_\Hor)\longrightarrow \Mor(\ad,\Delta_\Hor),
\end{equation}
we obtain the dual twisted covariant derivative
\begin{equation}
    \label{dualtwisted}
    \widehat{DS}^\omega=\wedge\circ DS^\omega \circ \wedge: \Mor(\ad,\Delta_\Hor)\longrightarrow \Mor(\ad,\Delta_\Hor).
\end{equation}

In light of Section 2.2 of reference \cite{sald2}, we have
\begin{equation}
    \label{ec.3.19.5}
    \widehat{DS}^\omega(\tau)=DS^\omega(\tau) 
\end{equation}
for all $\tau$ $\in$ $\Mor(\ad,\Delta_\Hor)$ such that 
\begin{equation}
    \label{ec.3.19.5.1}
    \widehat{\tau}=\tau.
\end{equation}
In other words, if we define
\begin{equation}
    \label{ec.3.19.5.2}
    \Mor(\ad,\Delta_\Hor)^\dagger:=\{\tau\in \Mor(\ad,\Delta_\Hor)\mid \widehat{\tau}=\tau\},
\end{equation}
we get
\begin{equation}
    \label{ec.3.19.5.3}
    DS^\omega=\widehat{DS}^\omega :\Mor(\ad,\Delta_\Hor)^\dagger\longrightarrow \Mor(\ad,\Delta_\Hor)^\dagger.
\end{equation}
In particular, 
\begin{equation}
    \label{ec.3.19.5.5}
    \overrightarrow{\mathfrak{qpc}(\zeta)}\subseteq \Mor(\ad,\Delta_\Hor)^\dagger, \qquad R^\omega\in\Mor(\ad,\Delta_\Hor)^\dagger
\end{equation}
and for instance, we obtain
\begin{equation}
    \label{ec.3.19.6}
    DS^\omega(R^\omega) =\widehat{DS}^\omega(R^\omega)=\langle\omega,\langle\omega,\omega\rangle\rangle-\langle\langle\omega,\omega\rangle,\omega\rangle.
\end{equation}
It is worth mentioning that, when $\omega$ is regular, we get (see Lemma 4.8 of reference \cite{micho2})
\begin{equation}
    \label{ec.3.19.7}
    S^\omega=\widehat{S}^\omega=0.
\end{equation}

\begin{Definition}
\label{3.def1}
    We define the quantum gauge group of the qpb $\zeta$ as
    $$\qGG=\{\F:(P\oplus\Omega^1(P))\longrightarrow (P\oplus\Omega^1(P))\mid \F  \mbox{ is a } \mbox{graded left } (B\oplus \Omega^1(B))-\mbox{module isomorphism  }$$   $$\mbox{such that}\;\;\;\F(\mathbbm{1})=\mathbbm{1}, \;\;\; \Delta_{\Omega^\bullet(P)}\circ \F=(\F\otimes \id_{\Gamma^\wedge})\circ \Delta_{\Omega^\bullet(P)}$$ $$ 
     \mbox{ and }\;\;\F(\Im(\omega)^\ast)=\F(\Im(\omega))^\ast \;\mbox{ for all }\; \omega \in \mathfrak{qpc}(\zeta) \}.$$
\end{Definition}

It is worth mentioning that in general, elements of $\qGG$ do not commute with the differential of $\Omega^\bullet(P)$.

The {\it a priori} motivation for our definition of $\qGG$ is the fact that in differential geometry, gauge transformations are vertical principal bundle automorphisms. In this way, the definition \ref{3.def1} was derived by {\it dualizing} this {\it classical} fact, while ensuring that $\qGG$ is defined for every degree in the most general manner, without imposing any unnecessary condition except to guaranty an action of $\qGG$ on $\mathfrak{qpc}(\zeta)$.

According to \cite{sald1}, $\qGG$ is isomorphic to a subgroup of the group of all convolution--invertible maps $$\f:A\oplus \Lambda\longrightarrow P\oplus \Omega^1(P)$$ such that $$\f(\mathbbm{1})=\mathbbm{1}\quad \mbox{ and }(\f\otimes \id_{\Gamma^\wedge})\circ \Ad=\Delta_{\Omega^\bullet(P)}\circ \f,$$ where the map $\Ad$ is given in equation (\ref{ec.2.13}). Elements of $\qGG$ are called {\it quantum gauge transformations}. In Durdevich's framework of qpb's, the action of $\qGG$ on $\mathfrak{qpc}(\zeta)$ given by (\cite{libro}) $$ \omega \longmapsto \f^{-1}\ast \omega \ast \f+ \f^{-1}\ast (d\circ \f)$$ is not well--defined. Furthermore, even if we extend the domain of $\omega$ to $A$, the induced action on the curvature (\cite{libro}) $$\f^{-1}\ast R^\omega \ast \f $$ remains ill-defined. However, in accordance with Theorem 4.7 points 1 and 2 of reference \cite{sald1}, $\qGG$ has a well--defined group action on  $\mathfrak{qpc}(\zeta)$ by 
\begin{equation}
    \label{ec.3.20}
    \F^\circledast \omega:=\F\circ \omega.
\end{equation}
 The reader is encouraged to consult the reference \cite{sald1} for more details. Notice that $\F^\circledast \omega$ is only the {\it dualization} of the action of the gauge group on principal connections via the pull--back in the {\it classical} case.

\subsection{The Model}

For the model of the electroweak interaction that we want to develop using the advantages of  non--commutative geometry, we will consider the trivial quantum principal $\SU(2)$--bundle given by
\begin{equation}
    \label{ec.3.50}
    \zeta_{4D}=(P,B,\Delta_P),
\end{equation}
where $$B:=C^\infty_\C(\R^4),\qquad P:=B\otimes \SU(2), \qquad \Delta_P:=\id_B\otimes \Delta.$$
Furthermore, the pair 
\begin{equation}
    \label{ec.3.51}
    (\Omega^\bullet(P):=\Omega^\bullet(B)\otimes \Gamma^\wedge,\;\Delta_{\Omega^\bullet(P)}:=\id_{\Omega^\bullet(B)}\otimes \Delta)
\end{equation}
constitutes a differential calculus for $\zeta_{4D}$, where $$(\Omega^\bullet(B):=\Omega^\bullet_\C(\R^4),d,\ast) $$ is the graded differential $\ast$--algebra of $\C$--valued differential forms of $\R^4$, and $$(\Gamma^\wedge,d,\ast)$$ is the universal differential $\ast$--calculus of the bicovariant $\ast$--FODC of $\SU(2)$ of equation (\ref{ec.2.85}), and the symbol $\otimes$ in  $\Omega^\bullet(B)\otimes \Gamma^\wedge$ denotes the tensor product of graded differential $\ast$--algebras. Moreover, $\Delta: \Gamma^\wedge\longrightarrow \Gamma^\wedge\otimes \Gamma^\wedge$ is the extension of the coproduct of $\SU(2)$ given in equation (\ref{ec.2.9}).

By equation (\ref{ec.3.2}), we have
\begin{equation}
    \label{ec.3.52}
    \Hor^\bullet\,P:=\Omega^\bullet(B)\otimes \SU(2)
\end{equation}
and by equation (\ref{ec.3.3}) we get that
\begin{equation}
    \label{ec.3.53}
    \Omega^\bullet(B)\otimes \mathbbm{1}\cong \Omega^\bullet(B)
\end{equation}
is space of base forms. 

In light of Lemma $6.11$ of reference \cite{micho2}, the linear map
\begin{equation}
\label{ec.3.54}
\omega^\triv:\mathfrak{su}(2,\C)^\#_{4D} \longrightarrow \Omega^1(P),\qquad
\theta \longmapsto \mathbbm{1}\otimes \theta.
\end{equation}
is a qpc and it will be called the trivial qpc. Furthermore,  by Lemma $6.11$ of reference \cite{micho2}, there is a bijection between
\begin{equation}
    \label{ec.3.55}
    \begin{aligned}
        \mathrm{Hom}_\C(\mathfrak{su}(2,\C)^\#_{4D},\Omega^1(B))^\dagger=\{ A^\omega:\;&\mathfrak{su}(2,\C)^\#_{4D}\longrightarrow \Omega^1(B)\mid \\ & A^\omega \mbox{ is linear such that }\, \ast\circ A^\omega=A^\omega\circ \ast\}
    \end{aligned}
\end{equation}
 and the set $\mathfrak{qpc}(\zeta_{4D})$. This bijection is given by 
\begin{equation}
\label{ec.3.56}
\omega=(A^\omega\otimes \id_{\SU(2)})\circ \ad+\omega^\triv,
\end{equation}
The hermitian linear map 
\begin{equation}
    \label{ec.3.57}
    A^\omega:\mathfrak{su}(2,\C)^\#_{4D}\longrightarrow \Omega^1(B)
\end{equation}
can be interpreted as the {\it quantum gauge potential} of $\omega$.

\begin{Proposition}
    \label{regular}
    A quantum principal connection $\omega$ is regular if and only if $A^\omega(\eta_4)=0$.
\end{Proposition}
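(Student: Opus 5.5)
The plan is to reduce regularity to a linear condition on the quantum gauge potential $A^\omega$ that can be tested on generators, and then to identify that condition with $A^\omega(\eta_4)=0$.

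\textbf{Step 1: reduction to generators.} From (\ref{ec.3.52}), $\Hor^\bullet P=\Omega^\bullet(B)\otimes\SU(2)$ is generated as an algebra by $\Omega^\bullet(B)\otimes\mathbbm{1}$ and $\mathbbm{1}\otimes\SU(2)$. Condition (\ref{ec.3.7.1}) is stable under products: if it holds for $\varphi$ and $\psi$, it holds for $\varphi\psi$. This uses that $\Delta_\Hor$ is multiplicative and that $\theta\diamondsuit(gh)=(\theta\diamondsuit g)\diamondsuit h$; equivalently one can read it off (\ref{2.f32}). So it suffices to check the regularity identity on the two families of generators.

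For $\varphi=\beta\otimes\mathbbm{1}$ with $\beta\in\Omega^k(B)$, we have $\Delta_\Hor(\varphi)=\varphi\otimes\mathbbm{1}$ and $\theta\diamondsuit\mathbbm{1}=\theta$. By (\ref{ec.3.56}), $\omega(\theta)=A^\omega(\theta^{(0)})\otimes\theta^{(1)}+\mathbbm{1}\otimes\theta$. Graded commutativity of $\Omega^\bullet(B)$ and the sign rule of the graded tensor product then give $\omega(\theta)\varphi=(-1)^k\varphi\,\omega(\theta)$ automatically. Hence these generators impose no condition.

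\textbf{Step 2: the degree-$0$ generators $\varphi=\mathbbm{1}\otimes g$.} Here regularity reads $\omega(\theta)(\mathbbm{1}\otimes g)=(\mathbbm{1}\otimes g^{(1)})\,\omega(\theta\diamondsuit g^{(2)})$.

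The $\omega^\triv$ parts agree by (\ref{ec.2.91}): $g^{(1)}S(g^{(2)})\theta g^{(3)}=\theta g$.

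For the gauge-potential part, I use the covariance identity
$$\ad(\theta\diamondsuit g)=\theta^{(0)}\diamondsuit g^{(2)}\otimes S(g^{(1)})\theta^{(1)}g^{(3)},$$
which follows from $\ad\circ\pi=(\pi\otimes\id)\Ad$ and $\pi(a)\diamondsuit g=\pi(ag-\epsilon(a)g)$. Together with commutativity of $\SU(2)$, the condition becomes
$$A^\omega(\theta^{(0)}\diamondsuit g^{(2)})\otimes\theta^{(1)}g^{(1)}=A^\omega(\theta^{(0)})\otimes\theta^{(1)}g .$$
Applying the invertible ``untwisting'' by $\ad$ (through $S$ and the unitary matrix of Proposition \ref{prop2.9}), this is equivalent to
$$A^\omega(\theta\diamondsuit g)=\epsilon(g)A^\omega(\theta)\quad\text{for all }\theta,\ g.$$
That is, $A^\omega$ must vanish on the subspace $\mathcal{K}:=\mathrm{span}\{\theta\diamondsuit h\mid h\in\Ker(\epsilon)\}$. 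Since $\pi(a)\diamondsuit h=\pi(ah)$ for $a,h\in\Ker(\epsilon)$, we have $\mathcal{K}=\pi(\Ker(\epsilon)^2)$.

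\textbf{Step 3: the main computation, $\pi(\Ker(\epsilon)^2)=\C\eta_4$.} This is where I expect the real work.

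\emph{The inclusion $\supseteq$.} By (\ref{ec.2.88}), $\eta_4$ is a multiple of $\pi(\gamma\gamma^\ast)$, and $\gamma\gamma^\ast\in\Ker(\epsilon)^2$.

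\emph{The inclusion $\subseteq$.} The ideal $\Ker(\epsilon)^2$ is spanned by products $x\,y\,m$ with $x,y\in\{\alpha-\mathbbm{1},\alpha^\ast-\mathbbm{1},\gamma,\gamma^\ast\}$ and $m$ a monomial. Since $\pi(xym)=\pi(xy)\diamondsuit m$, and (\ref{ec.2.90}) together with (\ref{ec.2.91}) show that $\pi(\gamma),\pi(\gamma^\ast),\pi(\alpha-\alpha^\ast)$ have trivial $\diamondsuit$-action, it suffices to compute the ten elements $\pi(xy)$ and then check that $\C\pi(\alpha+\alpha^\ast)$ is $\diamondsuit$-stable. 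The ten elements come from Proposition \ref{prop2.4} and (\ref{ec.2.92}), (\ref{ec.2.93}). For example, $\pi(\gamma(\alpha-\mathbbm{1}))=\pi(\gamma)-\pi(\gamma)=0$, and a short computation gives $\pi((\alpha-\mathbbm{1})(\alpha+\alpha^\ast-2\mathbbm{1}))=-\pi(\alpha+\alpha^\ast)$. Each of them should come out proportional to $\pi(\alpha+\alpha^\ast)$.

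For the stability check, I would compute $\pi(\alpha+\alpha^\ast)\diamondsuit g=\pi((\alpha+\alpha^\ast)g)-2\epsilon(g)\cdots$ for $g=\alpha,\gamma$ and their adjoints, using the same relations. This bookkeeping is the main obstacle: any stray component along $\eta_1,\eta_2,\eta_3$ would make the condition strictly stronger than $A^\omega(\eta_4)=0$.

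\textbf{Conclusion.} With $\mathcal{K}=\C\eta_4$, the condition of Step 2 is exactly $A^\omega(\eta_4)=0$. By Step 1 this is equivalent to regularity of $\omega$, which gives both directions of the proposition.
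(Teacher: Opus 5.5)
Your route is sound. It differs from the paper's mainly in how it is organized.

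The paper does not re-derive the regularity criterion. It quotes Lemma 6.11 of \cite{micho2}: graded commutativity of $A^\omega$ with $\Omega^\bullet(B)$, which is automatic here, plus $A^\omega(\theta\diamondsuit g)=\epsilon(g)A^\omega(\theta)$. It then gets necessity from the single test $\theta=\eta_1$, $g=\gamma^\ast$, and sufficiency from the assertion (\ref{regec}) for $j=1,2,3$.

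Your Steps 1--2 prove that criterion directly for $\zeta_{4D}$. Two small corrections there: the Sweedler placement should be $A^\omega(\theta^{(0)}\diamondsuit g^{(1)})\otimes\theta^{(1)}g^{(2)}$, and no untwisting is needed, since applying $\id\otimes\epsilon$ gives one direction and substitution gives the other. Your Step 3 then replaces the test-element argument by an exact determination of the obstruction space $\mathcal{K}=\pi(\Ker(\epsilon)^2)$.

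This yields both directions at once, and in particular a valid sufficiency argument. The paper's (\ref{regec}) cannot hold for every gauge potential: its own necessity computation shows that $\eta_1\diamondsuit\gamma^\ast$ is a nonzero multiple of $\eta_4$. What is true is $\eta_j\diamondsuit g-\epsilon(g)\eta_j\in\C\eta_4$, which is exactly your statement $\mathcal{K}=\C\eta_4$.

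Step 3 has to be carried out differently, though.

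\emph{Drop the appeal to (\ref{ec.2.90}).} It would force $\pi(\gamma)\diamondsuit\gamma^\ast=\epsilon(\gamma^\ast)\pi(\gamma)=0$, whereas (\ref{ec.2.91}) gives $\pi(\gamma)\diamondsuit\gamma^\ast=\pi(\gamma\gamma^\ast)\neq0$. That is the very element that puts $\eta_4$ into $\mathcal{K}$ in your $\supseteq$ inclusion. Fortunately your reduction never actually needs (\ref{ec.2.90}).

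\emph{Avoid the numerical relations (\ref{ec.2.92}), (\ref{ec.2.93}).} Your sample value is wrong: $(\alpha-\mathbbm{1})(\alpha+\alpha^\ast-2\mathbbm{1})=\tfrac12\bigl[(\alpha+\alpha^\ast-2\mathbbm{1})^2+(\alpha+\alpha^\ast-2\mathbbm{1})(\alpha-\alpha^\ast)\bigr]$ lies in $\mathcal{R}_{4D}$, so its image under $\pi$ is $0$.

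\emph{Work with the generators instead.}
Put $s:=\alpha+\alpha^\ast-2\mathbbm{1}$. Then $\mathcal{R}_1\cup\mathcal{R}_2=s\cdot\{s,\alpha-\alpha^\ast,\gamma,\gamma^\ast\}$, and these four elements generate $\Ker(\epsilon)$ as an ideal, so $s\Ker(\epsilon)\subseteq\mathcal{R}_{4D}$. This also shows that $\eta_4$, not $\eta_1,\eta_2,\eta_3$, is the basis element with trivial $\diamondsuit$-action.

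Among the quadratic monomials in $\alpha-\alpha^\ast,\gamma,\gamma^\ast$, the set $\mathcal{R}_3$ kills all but $\gamma\gamma^\ast$ and gives $(\alpha-\alpha^\ast)^2\equiv-2\gamma\gamma^\ast$. Moreover $\gamma\gamma^\ast\Ker(\epsilon)\subseteq\mathcal{R}_{4D}$, as one checks on the four generators. Hence $\Ker(\epsilon)^2\subseteq\C\gamma\gamma^\ast+\mathcal{R}_{4D}$, so $\mathcal{K}=\C\pi(\gamma\gamma^\ast)$, with no $\diamondsuit$-stability bookkeeping at all.

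Finally, only the non-vanishing of the coefficient in (\ref{ec.2.88}) matters. Rewrite $\alpha\alpha^\ast+\gamma\gamma^\ast=\mathbbm{1}$ as $s+\tfrac14s^2-\tfrac14(\alpha-\alpha^\ast)^2+\gamma\gamma^\ast=0$ and apply $\pi$. This gives $\pi(\gamma\gamma^\ast)=-\tfrac23\pi(\alpha+\alpha^\ast)$, a nonzero multiple of $\eta_4$, and therefore $\mathcal{K}=\C\eta_4$.
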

\begin{proof}
    In light of Lemma 6.11 of reference \cite{micho2}, $\omega$ is regular if and only if $A^\omega$ satisfies
    \begin{enumerate}
        \item $A^\omega(\theta)\,\mu=(-1)^k\,\mu\,A^\omega(\theta)$ for all $\mu$ $\in$ $\Omega^k(B)$ and all $\theta$ $\in$ $\mathfrak{su}(2,\C)^\#_{4D}$.
        \item $A^\omega(\theta \diamondsuit g)=\epsilon(g)A^\omega(\theta)$ for all $\theta$ $\in$ $\mathfrak{su}(2,\C)^\#_{4D}$ and all $g$ $\in$ $\SU(2)$ (see equation (\ref{ec.2.91})).
    \end{enumerate}
    The first point follows for every quantum gauge potential  because $\Omega^\bullet(B)$ is graded--commutative. 
    
    On the other hand, it is easy to see that every quantum gauge potential $A^\omega$ satisfies 
    \begin{equation}
        \label{regec}
        A^\omega(\eta_j \diamondsuit g)=\epsilon(g)A^\omega(\eta_j)
    \end{equation}
    for all $g$ $\in$ $\SU(2)$ and $j=1,2,3$, where $\eta_j$ is defined in equation (\ref{ec.2.89}). 

    Now assume that $\omega$ is regular. Then $$A^\omega(\eta_1\diamondsuit \gamma^\ast)=-{i\,\sqrt{2}\over q_w}A^{\omega}(\pi(\gamma)\diamondsuit \gamma^\ast)=-{i\,\sqrt{2}\over q_w}A^{\omega}(\pi(\gamma  \gamma^\ast))= {i\,\sqrt{2}\over 2\,q_w}A^\omega(\pi(\alpha+\alpha^\ast)),$$ and $$\epsilon(\gamma^\ast)\,A^\omega(\eta_1)=0.$$ This implies that $A^\omega(\pi(\alpha+\alpha^\ast))=0$ and hence $A^\omega(\eta_4)=0$. Reciprocally, if $A^\omega$ satisfies $A^\omega(\eta_4)=0$, by equation (\ref{regec}) we get that $\omega$ is regular.
\end{proof}

\begin{Remark}
    \label{reg}
    In the classical case, every principal connection is regular; so in general, a regular qpc can be interpreted as a classical qpc in the non--commutative geometrical setting.  Hence, the previous proposition shows that qpc’s which do not vanish on the fourth dimension of $\mathfrak{su}(2,\mathbb{C})^\#_{4D}$ (corresponding to $\eta_4$) are genuinely non--classical objects, with no analog in differential geometry in any sense. Moreover, for these qpc’s we have $D^\omega \neq \widehat{D}^\omega$.
\end{Remark}

Let us consider the embedded differential $\Theta$ of equation (\ref{ec.2.98}). Then, we can define the curvature of a qpc by means of equation (\ref{ec.3.10}). In light of Lemma 6.12 of reference \cite{micho2}, the bijection of equation (\ref{ec.3.55}) extends naturally to the curvature by  
\begin{equation}
    \label{ec.3.58}
    R^{\omega}=(F^\omega\otimes \id_{\SU(2)})\circ \ad,
\end{equation}
where 
\begin{equation}
    \label{ec.3.59}
    F^\omega:\mathfrak{su}(2,\C)^\#_{4D}\longrightarrow \Omega^2(B)
\end{equation}
is the linear map defined as (see equation (\ref{brakets}))
\begin{equation}
    \label{ec.3.60}
    F^\omega:=dA^\omega-\langle A^\omega,A^\omega\rangle.
\end{equation}
 The map $F^\omega$ can be interpreted as the {\it quantum field strength} of $\omega$. Notice that $F^{\omega^\triv}=0$ (because $A^{\omega^\triv}=0$) and it follows that $\omega^\triv$ is flat. 
 
 Let $\omega$ be a qpc and consider the linear basis $\widetilde{\beta}^\#_{4D}$ of $\mathfrak{su}(2,\C)^\#_{4D}$ given in equation (\ref{ec.2.103}). Since $\omega$ is a linear map with domain $\mathfrak{su}(2,\C)^\#_{4D}$ and image $\Omega^1(B)$, we identify four  $\C$--valued differential $1$--forms of $\R^4$ associated with $\omega$: $$A^\omega(\beta_1),\; A^\omega(\beta_2),\; A^\omega(\beta_3)\;\;\;\mbox{ and }\;\;\;A^\omega(\beta_4).$$

\begin{Proposition}
    \label{prop3.4}
    For the basis $\widetilde{\beta}^\#_{4D}$, the quantum field strength is given by 
    $$F^\omega(\beta_1)=dA^\omega(\beta_1)+{q_w}\, A^\omega(\beta_2)\wedge A^\omega(\beta_3),$$ $$F^\omega(\beta_2)=dA^\omega(\beta_2)+{q_w}\, A^\omega(\beta_3)\wedge A^\omega(\beta_1), $$ $$F^\omega(\beta_3)=dA^\omega(\beta_3)+{q_w}\, A^\omega(\beta_1)\wedge A^\omega(\beta_2),$$ $$F^\omega(\beta_4)=dA^\omega(\beta_4).$$ Here, $\wedge$ denotes the usual wedge product of $\Omega^\bullet(B)$.
\end{Proposition}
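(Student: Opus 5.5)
The plan is to compute $F^\omega$ directly from its definition (\ref{ec.3.60}), $F^\omega=dA^\omega-\langle A^\omega,A^\omega\rangle$, where by (\ref{brakets}) (with the embedded differential $\Theta$ of (\ref{ec.2.98})) the bracket is
$$\langle A^\omega,A^\omega\rangle=m_{\Omega}\circ(A^\omega\otimes A^\omega)\circ\Theta,$$
and $m_\Omega$ is now just the wedge product of $\Omega^\bullet(B)=\Omega^\bullet_\C(\R^4)$. The $dA^\omega$ term needs no work: by linearity $dA^\omega(\beta_j)=d(A^\omega(\beta_j))$. So everything reduces to evaluating $\langle A^\omega,A^\omega\rangle(\beta_j)$ for $j=1,\dots,4$.

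For this I would use the expressions for $\Theta$ in the basis $\widetilde{\beta}^\#_{4D}$ given in Proposition~\ref{prop2.3.2}. Write $A_j:=A^\omega(\beta_j)\in\Omega^1(B)$; then applying $m_\Omega\circ(A^\omega\otimes A^\omega)$ termwise replaces each $\beta_k\otimes\beta_l$ by $A_k\wedge A_l$. The single key fact is that $\Omega^\bullet(B)$ is graded--commutative, so for $1$--forms $A_k\wedge A_l=-A_l\wedge A_k$ and $A_k\wedge A_k=0$.

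Consequently every symmetric combination appearing in $\Theta$ vanishes: the terms $\frac{i\,q_4}{2}(\beta_j\otimes\beta_4+\beta_4\otimes\beta_j)$, the term $\frac{i\,q_4}{2}\beta_4\otimes\beta_4$, and the term $\frac{i\,q_w^2}{2q_4}(\beta_1\otimes\beta_1+\beta_2\otimes\beta_2+\beta_3\otimes\beta_3)$ all map to zero. Only the antisymmetric parts survive. For example, for $\beta_1$,
$$-\tfrac{q_w}{2}(A_2\wedge A_3-A_3\wedge A_2)=-q_w\,A_2\wedge A_3 .$$
This gives $\langle A^\omega,A^\omega\rangle(\beta_1)=-q_w\,A_2\wedge A_3$, hence $F^\omega(\beta_1)=dA_1+q_w\,A_2\wedge A_3$. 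The cases $\beta_2$ and $\beta_3$ are identical after cyclic permutation, and for $\beta_4$ the bracket vanishes entirely, so $F^\omega(\beta_4)=dA_4$.

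There is no real obstacle here. The only point requiring care is keeping the signs in Proposition~\ref{prop2.3.2} straight, so that the surviving terms come out as $A_2\wedge A_3$, $A_3\wedge A_1$ and $A_1\wedge A_2$, respectively, each with coefficient $+q_w$ after the minus sign in $F^\omega=dA^\omega-\langle A^\omega,A^\omega\rangle$.
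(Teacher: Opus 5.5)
Your proposal is correct and is essentially the paper's own argument: evaluate $\langle A^\omega,A^\omega\rangle$ termwise from the expressions for $\Theta$ in Proposition~\ref{prop2.3.2}, and use graded--commutativity of $\Omega^\bullet(B)$ to kill the symmetric terms and double the antisymmetric ones. Your signs agree with the paper's computation, e.g.\ $\langle A^\omega,A^\omega\rangle(\beta_1)=-q_w\,A^\omega(\beta_2)\wedge A^\omega(\beta_3)$ and $\langle A^\omega,A^\omega\rangle(\beta_4)=0$.
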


\begin{proof}
    The statement easily follows from Proposition \ref{prop2.3.2} and the fact that $\Omega^\bullet(B)$ is graded--commutative. For example, by Proposition \ref{prop2.3.2} we get
    \begin{eqnarray*}
        F^\omega(\beta_1)&=&dA^\omega(\beta_1)-\langle A^\omega,A^\omega\rangle(\beta_1)
        \\
        &=&
        dA^\omega(\beta_1)-{i\,q_4\over 2}(A^\omega(\beta_1)\wedge A^\omega(\beta_4)+A^\omega(\beta_4)\wedge A^\omega(\beta_1))
        \\
        &+&
        {q_w\over 2}(A^\omega(\beta_2)\wedge A^\omega(\beta_3)-A^\omega(\beta_3)\wedge A^\omega(\beta_2))
    \end{eqnarray*}
and since $\Omega^\bullet(B)$ is graded--commutative, we obtain $$A^\omega(\beta_1)\wedge A^\omega(\beta_4)+A^\omega(\beta_4)\wedge A^\omega(\beta_1)=0,$$ $$A^\omega(\beta_2)\wedge A^\omega(\beta_3)-A^\omega(\beta_3)\wedge A^\omega(\beta_2)=2\,A^\omega(\beta_2)\wedge A^\omega(\beta_3).$$ Hence $$ F^\omega(\beta_1)=dA^\omega(\beta_1)+{q_w}\, A^\omega(\beta_2)\wedge A^\omega(\beta_3).$$    
\end{proof}

Also, notice that
\begin{equation}
    \label{ec.3.42.1.1}
    F^\omega(\beta_j)^\ast=F^\omega(\beta^\ast_j)=F^\omega(\beta_j)
\end{equation}
for $j=1,2,3,4$.

In order to prove that the map $F^\omega$ correctly describes the curvature--tensor of the electroweak theory (\cite{notation}), we have
\begin{Proposition}
    \label{prop3.4.1}
    All the equations of Proposition \ref{prop3.4} can be summarized as 
    \begin{equation}
        \label{ec.3.46.8}
        F^j_{\mu\nu}=\partial_\mu A^j_\nu-\partial_\nu A^j_\mu+q_w\,\sum^4_{k,l=1}\,f^{jkl}\,A^k_\mu\,A^l_\nu,
    \end{equation}
    for $\mu,\nu=0,1,2,3$ and $j=1,2,3,4$ in index notation, where 
\begin{equation}
    \label{ec.3.46.10}
    f^{jkl}=\epsilon^{jkl} \quad \mbox{ if } \quad j,k,l \,\in\,\{1,2,3\} \quad \mbox{ and }\quad f^{jkl}=0 \;\; \mbox{ if some index is 4}, 
\end{equation}
with $\epsilon^{jkl}$ the $3$--dimensional Levi--Civita symbol.
\end{Proposition}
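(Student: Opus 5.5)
The plan is to write each quantum gauge potential in coordinates and expand the formulas of Proposition \ref{prop3.4} componentwise. For $j=1,2,3,4$ we put $A^\omega(\beta_j)=\sum_{\mu}A^j_\mu\,dx^\mu$ and $F^\omega(\beta_j)=\frac12\sum_{\mu,\nu}F^j_{\mu\nu}\,dx^\mu\wedge dx^\nu$, with $F^j_{\mu\nu}$ antisymmetric in $\mu,\nu$. Since $\beta_j^\ast=\beta_j$ and $A^\omega$ is hermitian, these coefficients are real functions, in agreement with equation (\ref{ec.3.42.1.1}).

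First I would compute the two ingredients in coordinates.
\begin{itemize}
\item The exterior derivative: $dA^\omega(\beta_j)=\sum_{\mu,\nu}\partial_\mu A^j_\nu\,dx^\mu\wedge dx^\nu=\frac12\sum_{\mu,\nu}(\partial_\mu A^j_\nu-\partial_\nu A^j_\mu)\,dx^\mu\wedge dx^\nu$, by antisymmetrizing.
\item The quadratic terms: $A^\omega(\beta_k)\wedge A^\omega(\beta_l)=\frac12\sum_{\mu,\nu}(A^k_\mu A^l_\nu-A^k_\nu A^l_\mu)\,dx^\mu\wedge dx^\nu$.
\end{itemize}

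Next I would rewrite each quadratic term of Proposition \ref{prop3.4} as a Levi--Civita sum. For instance, $A^\omega(\beta_2)\wedge A^\omega(\beta_3)=\frac12\sum_{k,l}\epsilon^{1kl}A^\omega(\beta_k)\wedge A^\omega(\beta_l)$, because the $(2,3)$ and $(3,2)$ terms coincide by graded commutativity. The same argument works cyclically for $j=2,3$. For $j=4$ there is no quadratic term, which matches $f^{4kl}=0$. Also, $\beta_4$ never appears in the quadratic terms for $j\le3$, which matches $f^{jkl}=0$ whenever $k$ or $l$ equals $4$.

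Then I would substitute the coordinate expressions. This gives $q_w\cdot\frac12\sum_{k,l}f^{jkl}\cdot\frac12(A^k_\mu A^l_\nu-A^k_\nu A^l_\mu)$ as the coefficient of $dx^\mu\wedge dx^\nu$. Using the antisymmetry of $f^{jkl}$ in $k,l$, the two pieces are equal, so the coefficient becomes $\frac12\,q_w\sum_{k,l}f^{jkl}A^k_\mu A^l_\nu$. Comparing with $F^\omega(\beta_j)=\frac12\sum F^j_{\mu\nu}dx^\mu\wedge dx^\nu$ yields equation (\ref{ec.3.46.8}), where $F^j_{\mu\nu}$ is the antisymmetric part of the coefficient.

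The only delicate point is bookkeeping the factors of $\frac12$ and checking that the right-hand side of equation (\ref{ec.3.46.8}) is antisymmetric in $\mu,\nu$. It is, because swapping $\mu$ and $\nu$ together with relabelling $k\leftrightarrow l$ flips the sign of $f^{jkl}$. That antisymmetry is what justifies identifying the coefficients uniquely.
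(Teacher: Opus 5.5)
Your proposal is correct and follows the same route as the paper: write $A^\omega(\beta_j)=\sum_\mu A^j_\mu\,dx^\mu$, substitute into Proposition~\ref{prop3.4}, and read off the coefficients of $\frac12\,dx^\mu\wedge dx^\nu$. Your explicit antisymmetrization and factor-of-$\frac12$ bookkeeping are cleaner than the paper's. Its intermediate displays write the quadratic term as $A^2_\mu A^3_\nu-A^3_\nu A^2_\mu$, which would vanish identically; what is meant is $A^2_\mu A^3_\nu-A^3_\mu A^2_\nu$, exactly as your computation gives.
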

\begin{proof}
    Let $$A^\omega: \mathfrak{su}(2,\C)^\#_{4D}\longrightarrow \Omega^1(B)$$ be a quantum gauge potential. In terms of the linear basis $\widetilde{\beta}^\#_{4D}$ of $\mathfrak{su}(2,\C)^\#_{4D}$, we get 
    \begin{equation}
        \label{ec.3.46.1}
A^\omega(\beta_j)=A^j_0\,dx^0+A^j_1\,dx^1+A^j_2\,dx^2+A^j_3\,dx^3
    \end{equation}
for $j=1,2,3,4$. In the usual tensor index notation, we have
\begin{equation}
    \label{ec.3.46.2}
    A^\omega(\beta_j)\;\;\longleftrightarrow \;\; A^j_\mu=(A^j_0,A^j_1,A^j_2,A^j_3).
\end{equation}
A straightforward calculation substituting equation (\ref{ec.3.46.1}) in the result of the Proposition \ref{prop3.4} shows that
\begin{equation}
    \label{ec.3.46.3}
    F^\omega(\beta_j)=\sum^3_{\mu,\nu=0} {1\over 2}\,F^j_{\mu\nu}\,dx^\mu\wedge dx^\nu,
\end{equation}
where 
\begin{equation*}
    F^1_{\mu\nu}=\partial_\mu A^1_\nu-\partial_\nu A^1_\mu+q_w\,(A^2_\mu\,A^3_\nu-A^3_\nu\,A^2_\mu),
\end{equation*}
\begin{equation*}
    F^2_{\mu\nu}=\partial_\mu A^2_\nu-\partial_\nu A^2_\mu+q_w\,(A^3_\mu\,A^1_\nu-A^1_\nu\,A^3_\mu),
\end{equation*}
\begin{equation*}
    F^3_{\mu\nu}=\partial_\mu A^3_\nu-\partial_\nu A^3_\mu+q_w\,(A^1_\mu\,A^2_\nu-A^2_\nu\,A^1_\mu),
\end{equation*}
\begin{equation*}
    F^4_{\mu\nu}=\partial_\mu A^4_\nu-\partial_\nu A^4_\mu.
\end{equation*}
The previous equations can be summarized by
\begin{equation*}
    F^j_{\mu\nu}=\partial_\mu A^j_\nu-\partial_\nu A^j_\mu+q_w\,\sum^4_{k,l=1}\,f^{jkl}\,A^k_\mu\,A^l_\nu
\end{equation*}
for $j=1,2,3,4$ in index notation, with $f^{jkl}$ as in equation (\ref{ec.3.46.10}).

In the usual tensor index notation, we get
\begin{equation}
    \label{ec.3.46.11}
    F^\omega(\beta_j)\;\;\longleftrightarrow \;\; F^j_{\mu\nu}
\end{equation}
\end{proof}

It is worth mentioning that equation (\ref{ec.3.46.8}) has the usual form of the gauge curvature tensor in electroweak theory found in the literature (\cite{diff1,notation}). Of course, this expression depends on the choice of basis for $\mathfrak{u}(2)$. Likewise, the expression for the curvature in equation~(\ref{ec.3.46.8}) depends on the choice of basis for $\mathfrak{su}(2,\C)^\#_{4D}$. Nevertheless, $F^j_{\mu\nu}$ will always coincide with the corresponding gauge curvature tensor in electroweak theory, for any choice of bases of $\mathfrak{su}(2,\C)^\#_{4D}$ and $\mathfrak{u}(2)$, under the linear isomorphism $$\beta_j\;\longleftrightarrow\;\hat{\varrho}^j:={\sigma_j\over 2} \;\;\;\mbox{ for j=1,2,3}\qquad \mbox{ and }\qquad \beta_4\;\longleftrightarrow \;\hat{\varrho}^4:= {\Id_2\over 2},$$ where $\sigma_1,\sigma_2,\sigma_3$ are the Pauli matrices and $\Id_2$ is the identity matrix of order $2$. In fact, let $$\{ \alpha_j\}^4_{j=1}$$ be another linear basis of $\mathfrak{su}(2,\C)^\#_{4D}$. Then, there exists an invertible matrix $C$ such that $C\beta_j=\alpha_j$. Thus, $$\{\alpha'^j=C\,\hat{\varrho}^j \} $$ is a linear basis of $\mathfrak{u}(2)$ and the gauge curvature tensor in electroweak theory calculated in terms of $\alpha'^j$ coincides with the components of $F^\omega(\alpha_j)$.

According to Lemma $6.10$ of reference \cite{micho2}, there is a bijection between 
\begin{equation}
    \label{ec.3.61}
    \mathrm{Hom}_\C(\mathfrak{su}(2,\C)^\#_{4D},\Omega^\bullet(B))=\{ L^\tau:\mathfrak{su}(2,\C)^\#_{4D}\longrightarrow \Omega^\bullet(B)\mid  L^\tau \mbox{ is linear} \}
\end{equation}
and $\Mor(\ad,\Delta_\Hor)$ (see equation (\ref{ec.3.13})).
This bijection is given by 
\begin{equation}
    \label{ec.3.62}
    \tau=(L^\tau\otimes\id_{\SU(2)})\circ \ad
\end{equation}
for every $\tau$ $\in$ $\Mor(\ad,\Delta_\Hor)$. Furthermore, Lemma $6.12$ of reference \cite{micho2} shows that 
\begin{equation}
    \label{ec.3.63}
    D^\omega(\tau) =(D^\omega_B(L^\tau)\otimes \id_{\SU(2)}
    )\circ \ad,
\end{equation}
where
\begin{equation}
    \label{used3}
    D^\omega_B:\mathrm{Hom}_\C(\mathfrak{su}(2,\C)^\#_{4D},\Omega^\bullet(B))\longrightarrow \mathrm{Hom}_\C(\mathfrak{su}(2,\C)^\#_{4D},\Omega^\bullet(B))
\end{equation}
is given by 
\begin{equation*}
    L^\tau\;\;\longmapsto \;\; dL^\tau-(-1)^{k}[L^\tau,A^\omega]
\end{equation*}
for every $\tau$ $\in$ $\Mor(\ad,\Delta_\Hor)$ such that $\Im(\tau)\subseteq \Hor^k P$; and Proposition 5.4 of reference \cite{sald2} proves that
\begin{equation}
    \label{ec.3.64}
    S^\omega(\tau)=(S^\omega_B(L^\tau)\otimes \id_{\SU(2)})\circ \ad,
\end{equation}
where 
\begin{equation}
    \label{used4}
    S^\omega_B:\mathrm{Hom}_\C(\mathfrak{su}(2,\C)^\#_{4D},\Omega^\bullet(B))\longrightarrow \mathrm{Hom}_\C(\mathfrak{su}(2,\C)^\#_{4D},\Omega^\bullet(B))
\end{equation}
is given by
$$L^\tau\;\;\longmapsto \;\;\langle A^\omega,L^\tau\rangle-(-1)^k\langle L^\tau,A^\omega\rangle-(-1)^k[L^\tau,A^\omega].$$
Therefore, the twisted covariant derivative (see Definition \ref{twisted}) satisfies  
\begin{equation}
\label{twsted.12}
    DS^\omega(\tau)=(DS^\omega_B(L^\tau)\otimes \id_{\mathcal{SU}(2)})\circ \ad,
\end{equation}
where 
\begin{equation}
    \label{twisted.13}
    DS^\omega_B:=D^\omega_B-S^\omega_B:\mathrm{Hom}_\C(\mathfrak{su}(2,\C)^\#_{4D},\Omega^\bullet(B))\longrightarrow \mathrm{Hom}_\C(\mathfrak{su}(2,\C)^\#_{4D},\Omega^\bullet(B))
\end{equation}
is given by
$$L^\tau\;\;\longmapsto\;\; dL^\tau-\langle A^\omega,L^\tau\rangle+(-1)^k\langle L^\tau,A^\omega\rangle.$$

\begin{Proposition}
\label{bianchi}
    The non--commutative geometrical (second) Bianchi identity \begin{equation}
    \label{ec.3.65}
DS^\omega(R^\omega)=\langle\omega,\langle\omega,\omega\rangle\rangle-\langle\langle\omega,\omega\rangle,\omega\rangle
\end{equation} can be written as
\begin{equation}
    \label{ec.3.66}
    \begin{aligned}
        (DS^\omega_B(F^\omega)\otimes \id_{\SU(2)})\circ \ad=((\langle A^\omega,\langle A^\omega,A^\omega\rangle\rangle-\langle \langle A^\omega,A^\omega\rangle,A^\omega\rangle)\otimes \id_{\SU(2)})\circ \ad 
    \end{aligned}
\end{equation}
\end{Proposition}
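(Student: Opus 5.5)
The plan is to translate both sides of the non--commutative geometrical Bianchi identity (\ref{ec.3.16}) through the bijection of equation (\ref{ec.3.62}) between $\Mor(\ad,\Delta_\Hor)$ and $\mathrm{Hom}_\C(\mathfrak{su}(2,\C)^\#_{4D},\Omega^\bullet(B))$. Since that bijection is linear, it suffices to identify separately the element of $\mathrm{Hom}_\C(\mathfrak{su}(2,\C)^\#_{4D},\Omega^\bullet(B))$ corresponding to the left--hand side and the one corresponding to the right--hand side.

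\textbf{Left--hand side.} By equation (\ref{ec.3.58}), $R^\omega=(F^\omega\otimes\id_{\SU(2)})\circ\ad$, so $L^{R^\omega}=F^\omega$. Moreover $R^\omega$ lies in $\Mor(\ad,\Delta_\Hor)$ with image in $\Hor^2 P$ by (\ref{ec.3.14}). Applying equation (\ref{twsted.12}) with $k=2$ then gives directly
$$DS^\omega(R^\omega)=(DS^\omega_B(F^\omega)\otimes\id_{\SU(2)})\circ\ad .$$

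\textbf{Right--hand side.} This is the main step. I would prove the auxiliary claim that, for linear maps $L_1,L_2:\mathfrak{su}(2,\C)^\#_{4D}\longrightarrow\Omega^\bullet(B)$, each of the form $\tau_i=(L_i\otimes\id)\circ\ad+(\text{possibly }\omega^\triv\text{--terms})$, one has
$$\langle\omega,\tau\rangle=(\langle A^\omega,L^\tau\rangle\otimes\id)\circ\ad+\text{terms involving }\omega^\triv .$$
Rather than control the $\omega^\triv$ terms by hand, I would reuse the computation behind Lemma 6.12 of \cite{micho2} and Proposition 5.4 of \cite{sald2}. Those results already establish that $\langle\omega,\omega\rangle$ decomposes as
$$\langle\omega,\omega\rangle=(\langle A^\omega,A^\omega\rangle\otimes\id)\circ\ad+\text{cross terms}+\langle\omega^\triv,\omega^\triv\rangle,$$
and that the cross terms and the $\omega^\triv$--part combine with $d\omega^\triv$, which equals $\langle\omega^\triv,\omega^\triv\rangle$ by the Maurer--Cartan property of $\Theta$ (Definition \ref{def2.7}).

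\textbf{Expected obstacle.} The hard part is the analogous decomposition for the triple brackets $\langle\omega,\langle\omega,\omega\rangle\rangle$ and $\langle\langle\omega,\omega\rangle,\omega\rangle$. I would expand $\omega=(A^\omega\otimes\id)\ad+\omega^\triv$ in each slot and use the $\ad$--covariance of $\Theta$, namely $\ad^{\otimes2}\circ\Theta=(\Theta\otimes\id)\circ\ad$. Applying this twice lets the $\ad$ be moved to the outside for the pure $A^\omega$ terms, which produces exactly $((\langle A^\omega,\langle A^\omega,A^\omega\rangle\rangle-\langle\langle A^\omega,A^\omega\rangle,A^\omega\rangle)\otimes\id)\circ\ad$.

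The terms containing at least one $\omega^\triv$ should cancel in the difference. The pure $\omega^\triv$ terms give $\langle\omega^\triv,\langle\omega^\triv,\omega^\triv\rangle\rangle-\langle\langle\omega^\triv,\omega^\triv\rangle,\omega^\triv\rangle=\mathbbm{1}\otimes m((\id\otimes\Theta)\Theta-(\Theta\otimes\id)\Theta)$. Since $d^2=0$ in $\Gamma^\wedge$, this combination equals $\mathbbm{1}\otimes d^2\theta=0$.

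A cleaner route to the mixed terms, which I would try first, is to apply the whole argument to the flat connection $\omega^\triv$ itself. Because $R^{\omega^\triv}=0$, the Bianchi identity (\ref{ec.3.16}) for $\omega^\triv$ forces its right--hand side to vanish. Comparing with the general $\omega$, using that (\ref{ec.3.16}) holds for every qpc and that the left--hand side depends on $\omega$ only through $A^\omega$ by the first step, then identifies the right--hand side as stated. Combining the two sides yields (\ref{ec.3.66}).
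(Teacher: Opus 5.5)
The gap is in the mixed terms. Your left--hand side step via (\ref{twsted.12}) is fine. So is extracting the pure $A^\omega$ term by applying the $\ad$--covariance of $\Theta$ twice, and killing the pure $\omega^\triv$ terms with $d^2=0$. What is missing is any argument for the terms containing one or two factors $\omega^\triv$ alongside factors $(A^\omega\otimes\id)\circ\ad$.

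Saying they ``should cancel in the difference'' does not establish it. A direct check would require moving $\mathbbm{1}\otimes\theta$ past $A^\omega(\cdot)\otimes g$, i.e.\ using the commutation rule $\theta g=g^{(1)}(\theta\diamondsuit g^{(2)})$ in $\Gamma$. The $\ad$--covariance of $\Theta$ alone does not give the cancellation.

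Your ``cleaner route'' does not close this either. Specializing to $\omega^\triv$ only shows that the $A^\omega$--independent part vanishes, which you already had. The observation that the left--hand side depends on $\omega$ only through $A^\omega$ carries no information, since by (\ref{ec.3.56}) every quantity built from $\omega$ does. The mixed terms are homogeneous of degree one and two in $A^\omega$, so a comparison with $A^\omega=0$ cannot see them. The curvature--level mechanism you want to reuse also has no counterpart here. There, cross terms cancel against the non--horizontal part of $d((A^\omega\otimes\id)\circ\ad)$, but the triple brackets contain no derivative.

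The missing idea is horizontality combined with the bigrading of $\Omega^\bullet(P)=\Omega^\bullet(B)\otimes\Gamma^\wedge$; this is exactly the paper's argument.
\begin{itemize}
\item Since (\ref{ec.3.16}) holds and $DS^\omega(R^\omega)\in\Mor(\ad,\Delta_\Hor)$, the right--hand side takes values in $\Hor^3P=\Omega^3(B)\otimes\SU(2)$.
\item This is the $r=0$ summand of $\Omega^3(P)=\bigoplus_{r=0}^{3}\Omega^{3-r}(B)\otimes\Gamma^{\wedge r}$.
\item A product containing exactly $r$ factors $\omega^\triv(\theta)=\mathbbm{1}\otimes\theta$ lies in the $r$--th summand.
\item Hence for each $r\geq 1$ those terms sum to zero, and only the pure $A^\omega$ term survives.
\end{itemize}

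Alternatively, you can skip the expansion entirely. In $\Omega^\bullet(B)$, $d^2=0$ and the graded Leibniz rule give $dF^\omega=-d\langle A^\omega,A^\omega\rangle=\langle A^\omega,dA^\omega\rangle-\langle dA^\omega,A^\omega\rangle$. With $F^\omega=dA^\omega-\langle A^\omega,A^\omega\rangle$ this yields
\[
DS^\omega_B(F^\omega)=dF^\omega-\langle A^\omega,F^\omega\rangle+\langle F^\omega,A^\omega\rangle=\langle A^\omega,\langle A^\omega,A^\omega\rangle\rangle-\langle\langle A^\omega,A^\omega\rangle,A^\omega\rangle .
\]
Combining this with your left--hand side step and (\ref{ec.3.16}) identifies the right--hand side of (\ref{ec.3.65}) with that of (\ref{ec.3.66}).
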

\begin{proof}
   Equation (\ref{twsted.12}) guaranties that the left--hand side of equation (\ref{ec.3.65}) is $$(DS^\omega_B(F^\omega)\otimes \id_{\SU(2)})\circ \ad.$$ On the other hand, since $\langle\omega,\langle\omega,\omega\rangle\rangle-\langle\langle\omega,\omega\rangle,\omega\rangle$ is equal to $DS^\omega(R^\omega)$, and  $DS^\omega(R^\omega)$ $\in$ $\Mor(\ad,\Delta_\Hor)$, it follows that $$\langle\omega,\langle\omega,\omega\rangle\rangle-\langle\langle\omega,\omega\rangle,\omega\rangle\;\in\;\Mor(\ad,\Delta_\Hor).$$ 
   
   Let $\theta$ $\in$ $\mathfrak{su}(2,\C)^\#_{4D}$ and $\Theta(\theta)=\displaystyle\sum_{i,j}\theta_i\otimes \theta'_j$, for some $\theta_i$, $\theta_j$ $\in$ $\mathfrak{su}(2,\C)^\#_{4D}$. Then 
   \begin{eqnarray*}
\langle\omega,\omega\rangle(\theta)&=&\sum_{i,j}\omega(\theta_i)\, \omega(\theta'_j)
\\
&=&
\sum_{i,j}(A^\omega(\theta^{(0)}_i)\otimes \theta^{(1)}_i+\omega^\triv(\theta_i))\,(A^\omega(\theta'^{(0)}_j)\otimes \theta'^{(1)}_j+\omega^\triv(\theta'_j))
\\
&=&
\sum_{i,j}A^\omega(\theta^{(0)}_i)\,A^\omega(\theta'^{(0)}_j)\otimes \theta^{(1)}_i\,\theta'^{(1)}_j+\omega^\triv(\theta_i)\,((A^\omega\otimes \id_{\SU(2)})\ad(\theta'_j))
\\
&+&
((A^\omega\otimes \id_{\SU(2)})\ad(\theta_i))\,\omega^\triv(\theta'_j)+\omega^\triv(\theta_i)\,\omega^\triv(\theta'_j).
   \end{eqnarray*}

Notice that $$\sum_{i,j} \omega^\triv(\theta_i)\,((A^\omega\otimes \id_{\SU(2)})\ad(\theta'_j))=\langle \omega^\triv,(A^\omega\otimes \id_{\SU(2)})\circ\ad\rangle(\theta),$$ $$\sum_{i,j} ((A^\omega\otimes \id_{\SU(2)})\ad(\theta_i))\,\omega^\triv(\theta'_j)=\langle(A^\omega\otimes \id_{\SU(2)})\circ \ad,\omega^\triv\rangle(\theta),$$
$$\displaystyle\sum_{i,j}\omega^\triv(\theta_i)\,\omega^\triv(\theta'_j)=\langle \omega^\triv,\omega^\triv\rangle(\theta).$$  Furthermore, since $\Theta$ $\in$ $\Mor(\ad,\ad^{\otimes^2})$, it follows that
\begin{eqnarray*}
    \sum^m_{i,j=1} A^\omega(\theta^{(0)}_i)\,A^\omega(\theta'^{(0)}_j)\otimes \theta^{(1)}_i\theta'^{(1)}_j&=& ((m_{\Omega^\bullet}\circ (A^\omega\otimes A^\omega))\otimes \id_H) \ad^{\otimes 2}(\Theta(\theta))
    \\
    &=&
    ((m_{\Omega^\bullet}\circ (A^\omega\otimes A^\omega)\circ \Theta)\otimes \id_H)\ad(\theta)
    \\
    &=&
    (\langle A^\omega,A^\omega\rangle \otimes \id_H)\ad(\theta),
\end{eqnarray*}
 where $m_{\Omega^\bullet}:\Omega^\bullet(B)\otimes \Omega^\bullet(B)\longrightarrow \Omega^\bullet(B)$ is the product map. Thus
 \begin{eqnarray*}
     \langle \omega,\omega\rangle&=& (\langle A^\omega,A^\omega\rangle \otimes \id_H)\circ \ad+\langle \omega^\triv,(A^\omega\otimes \id_{\SU(2)})\circ\ad\rangle
     \\
     &+&
     \langle(A^\omega\otimes \id_{\SU(2)})\circ \ad,\omega^\triv\rangle+\langle \omega^\triv,\omega^\triv\rangle.
 \end{eqnarray*}
 A similar calculation proves that 
 \begin{eqnarray}
 \label{cuentas}
     \langle\omega,\langle\omega,\omega\rangle\rangle-\langle\langle\omega,\omega\rangle,\omega\rangle&=&((\langle A^\omega,\langle A^\omega,A^\omega\rangle\rangle-\langle \langle A^\omega,A^\omega\rangle,A^\omega\rangle)\otimes \id_{\SU(2)})\circ \ad \nonumber
     \\
     &+&
     \langle (A^\omega\otimes \id_{\SU(2)})\circ\ad,\langle \omega^\triv,(A^\omega\otimes \id_{\SU(2)})\circ\ad\rangle\rangle  \nonumber
     \\
     &\vdots& \nonumber
     \\
     &+&
     \langle \omega^\triv,\langle\omega^\triv,\omega^\triv\rangle\rangle 
     \\
     &-&
     \langle \langle\omega^\triv,(A^\omega\otimes \id_{\SU(2)})\circ\ad\rangle,(A^\omega\otimes \id_{\SU(2)})\circ\ad\rangle  \nonumber
     \\
     &\vdots& \nonumber
     \\
     &-&
     \langle \langle\omega^\triv,\omega^\triv\rangle,\omega^\triv\rangle.  \nonumber
 \end{eqnarray}
Now, notice that 
\begin{eqnarray*}
    \Omega^3(P)&=&(\Omega^3(B)\otimes \SU(2))\oplus(\Omega^2(B)\otimes \Gamma)\oplus (\Omega^1(B)\otimes \Gamma^{\wedge\,2})\oplus(B\otimes \Gamma^{\wedge\,3}) 
    \\
    &=&
    \Hor^3\,P\oplus(\Omega^2(B)\otimes \Gamma)\oplus (\Omega^1(B)\otimes \Gamma^{\wedge\,2})\oplus(B\otimes \Gamma^{\wedge\,3}).
\end{eqnarray*}
So, for every $\varphi$, $\psi$ $\in$ $\Hor^\bullet\,P=\Omega^\bullet(B)\otimes \SU(2)$ and every $\theta_1$, $\theta_2$, $\theta_3$ $\in$ $\mathfrak{su}(2,\C)^\#_{4D}\subset \Gamma$, if the following products are not zero, then we have 
$$\varphi\,\omega^\triv(\theta_1)\,\psi=\varphi\,(\mathbbm{1}\otimes \theta_1)\,\psi\,\not\in\, \Hor^\bullet\,P,$$ $$\varphi\,\psi\,\omega^\triv(\theta_1)=\varphi\,\psi\,(\mathbbm{1}\otimes \theta_1)\,\not\in\, \Hor^\bullet\,P, $$ $$\omega^\triv(\theta_1)\,\varphi\,\psi=(\mathbbm{1}\otimes \theta_1)\,\varphi\,\psi\,\not\in\,\Hor^\bullet\,P, $$ $$\varphi\,\omega^\triv(\theta_1)\,\omega^\triv(\theta_2)=\varphi\,(\mathbbm{1}\otimes \theta_1)\,(\mathbbm{1}\otimes \theta_2)\,\not\in\,\Hor^\bullet\,P,$$ $$ \omega^\triv(\theta_1)\,\omega^\triv(\theta_2)\,\varphi=(\mathbbm{1}\otimes \theta_1)\,(\mathbbm{1}\otimes \theta_2)\,\varphi\,\not\in\,\Hor^\bullet\,P,$$ $$\omega^\triv(\theta_1)\,\varphi\,\omega^\triv(\theta_2)=(\mathbbm{1}\otimes \theta_1)\,\varphi\,(\mathbbm{1}\otimes \theta_2)\,\not\in\,\Hor^\bullet\,P, $$ $$\omega^\triv(\theta_1)\,\omega^\triv(\theta_2)\,\omega^\triv(\theta_3)=(\mathbbm{1}\otimes \theta_1)\,(\mathbbm{1}\otimes \theta_2)\,(\mathbbm{1}\otimes \theta_3)=\mathbbm{1}\otimes \theta_1\,\theta_2\,\theta_3 \,\not\in\,\Hor^\bullet\,P.$$ Since $\Im(\langle\omega,\langle\omega,\omega\rangle\rangle-\langle\langle\omega,\omega\rangle,\omega\rangle) \subseteq \Hor^3\,P,$ it follows that the right--hand side of equation (\ref{cuentas}) turns into 
 $$((\langle A^\omega,\langle A^\omega,A^\omega\rangle\rangle-\langle \langle A^\omega,A^\omega\rangle,A^\omega\rangle)\otimes \id_{\SU(2)})\circ \ad.$$
\end{proof}

In this way, by Proposition \ref{bianchi}, in terms of the maps $A^\omega$, $F^\omega$,  the non--commutative Bianchi identity is given by
\begin{equation}
    \label{ec.3.67}
     DS^\omega_B(F^\omega)=\langle A^\omega,\langle A^\omega,A^\omega\rangle\rangle-\langle \langle A^\omega,A^\omega\rangle,A^\omega\rangle.
\end{equation}

At first glance, equation~(\ref{ec.3.67}) is not the \emph{classical} Bianchi identity of the electroweak interaction. However, we have
\begin{Proposition}
    \label{prop3.5}
     For the basis $\widetilde{\beta}^\#_{4D}$, the non--commutative geometrical Bianchi identity is given by $$dF^\omega(\beta_1)-q_w\,(F^\omega(\beta_2)\wedge A^\omega(\beta_3)-F^\omega(\beta_3)\wedge A^\omega(\beta_2))=0,$$ $$dF^\omega(\beta_2)-q_w\,(F^\omega(\beta_3)\wedge A^\omega(\beta_1)-F^\omega(\beta_1)\wedge A^\omega(\beta_3))=0,$$ $$dF^\omega(\beta_3)-q_w\,(F^\omega(\beta_1)\wedge A^\omega(\beta_2)-F^\omega(\beta_2)\wedge A^\omega(\beta_1))=0,$$ $$dF^\omega(\beta_4)=0.$$ Here, $\wedge$ denotes the usual wedge product of $\Omega^\bullet(B)$.
\end{Proposition}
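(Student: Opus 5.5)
We start from the reduced form of the Bianchi identity, equation (\ref{ec.3.67}), namely
$$DS^\omega_B(F^\omega)=\langle A^\omega,\langle A^\omega,A^\omega\rangle\rangle-\langle\langle A^\omega,A^\omega\rangle,A^\omega\rangle,$$
and evaluate both sides on each element of the basis $\widetilde{\beta}^\#_{4D}$. The only inputs are the explicit formulas for $\Theta(\beta_j)$ in Proposition \ref{prop2.3.2}, the formulas for $F^\omega$ in Proposition \ref{prop3.4}, and the graded commutativity of $\Omega^\bullet(B)$. Write $A_j:=A^\omega(\beta_j)$ and $F_j:=F^\omega(\beta_j)$, so the $A_j$ are $1$--forms and the $F_j$ are $2$--forms.

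\textbf{Left--hand side.} Since $F^\omega$ takes values in $\Omega^2(B)$, we have $k=2$, and
$$DS^\omega_B(F^\omega)=dF^\omega-\langle A^\omega,F^\omega\rangle+\langle F^\omega,A^\omega\rangle.$$
Applying $\Theta$ from Proposition \ref{prop2.3.2} and using that $2$--forms commute with $1$--forms, the $i q_4/2$ symmetric terms cancel: $A_j\wedge F_4+A_4\wedge F_j$ is cancelled by $F_j\wedge A_4+F_4\wedge A_j$ with the opposite sign. The antisymmetric $q_w$ terms, by contrast, double. For $\beta_1$ this gives
$$\langle A^\omega,F^\omega\rangle(\beta_1)-\langle F^\omega,A^\omega\rangle(\beta_1)=-\frac{q_w}{2}\bigl(A_2F_3-A_3F_2\bigr)+\frac{q_w}{2}\bigl(F_2A_3-F_3A_2\bigr)=q_w\bigl(F_2\wedge A_3-F_3\wedge A_2\bigr).$$
So the left--hand side on $\beta_1$ is $dF_1-q_w(F_2\wedge A_3-F_3\wedge A_2)$, exactly the expression in the statement. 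The entries for $\beta_2$ and $\beta_3$ follow by cyclic permutation.

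For $\beta_4$ the $\frac{iq_4}{2}\beta_4\otimes\beta_4$ term cancels in the same way. The remaining term is
$$\frac{i q_w^2}{2q_4}\sum_{j=1}^3\bigl(-A_j\wedge F_j+F_j\wedge A_j\bigr),$$
which vanishes by commutativity, leaving $dF_4$.

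\textbf{Right--hand side.} It remains to show that $\langle A,\langle A,A\rangle\rangle-\langle\langle A,A\rangle,A\rangle$ vanishes on every $\beta_j$. First compute $\langle A,A\rangle(\beta_j)=m(A\otimes A)\Theta(\beta_j)$ in the anticommutative algebra $\Omega^\bullet(B)$:
$$\langle A,A\rangle(\beta_1)=-q_w\,A_2\wedge A_3\ \ (\text{cyclically}),\qquad \langle A,A\rangle(\beta_4)=\frac{iq_4}{2}A_4\wedge A_4+\frac{iq_w^2}{2q_4}\sum_j A_j\wedge A_j=0.$$
Both expressions pair via $\Theta$ a $1$--form with the $2$--form $G:=\langle A,A\rangle$. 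Since $1$--forms and $2$--forms commute, $\langle A,G\rangle-\langle G,A\rangle$ is a combination of the form $\sum c\,(A_k\wedge G_l-G_k\wedge A_l)$ taken over the tensor $\Theta(\beta_j)$.

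The symmetric parts of $\Theta$ contribute $A_jG_4+A_4G_j-G_jA_4-G_4A_j=0$. The antisymmetric parts contribute $2q_w$--multiples of terms such as $A_2\wedge G_3-A_3\wedge G_2$. Substituting $G_3=-q_wA_1\wedge A_2$ and $G_2=-q_wA_3\wedge A_1$ gives $A_2\wedge A_1\wedge A_2$ and $A_3\wedge A_3\wedge A_1$, both zero since $A_2\wedge A_2=0=A_3\wedge A_3$. The $\beta_4$ entry, which involves $\sum_j(A_jG_j-G_jA_j)$, vanishes by commutativity.

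\textbf{Expected obstacle.} The main difficulty is sign and index bookkeeping, in particular applying the ordering $\langle L_1,L_2\rangle=m\circ(L_1\otimes L_2)\circ\Theta$ consistently and tracking the graded commutation signs. We will carry out the $\beta_1$ and $\beta_4$ cases explicitly and obtain $\beta_2,\beta_3$ by the cyclic symmetry $1\to2\to3\to1$ visible in Proposition \ref{prop2.3.2}.
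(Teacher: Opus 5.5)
Your proposal is correct and is essentially the paper's proof. Both evaluate the two sides of (\ref{ec.3.67}) on $\widetilde{\beta}^\#_{4D}$ using Proposition~\ref{prop2.3.2} and the graded commutativity of $\Omega^\bullet(B)$, so that the left--hand side reduces to the stated expressions and the right--hand side vanishes. The only differences are cosmetic: the paper computes $\langle A^\omega,\langle A^\omega,A^\omega\rangle\rangle(\beta_1)$ and $\langle\langle A^\omega,A^\omega\rangle,A^\omega\rangle(\beta_1)$ separately, finding both equal to $\frac{i\,q_4\,q_w}{2}A^\omega(\beta_4)\wedge A^\omega(\beta_3)\wedge A^\omega(\beta_2)$, whereas you pair terms through the symmetric and antisymmetric parts of $\Theta$. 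Incidentally, the antisymmetric coefficient there is $-q_w$ rather than a multiple of $2q_w$, which is harmless because those terms vanish.
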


\begin{proof}
    The statement readily follows from Proposition~\ref{prop2.3.2} together with the fact that $\Omega^\bullet(B)$ is graded--commutative. For example, 
    \begin{eqnarray*}
        \langle A^\omega,\langle A^\omega,A^\omega\rangle\rangle(\beta_1)&=&{i\,q_4\over 2}\,(A^\omega(\beta_1)\, \langle A^\omega,A^\omega\rangle(\beta_4)+A^\omega(\beta_4)\, \langle A^\omega,A^\omega\rangle(\beta_1))
        \\
        &-&
        {q_w\over 2}\,(A^\omega(\beta_2)\, \langle A^\omega,A^\omega\rangle(\beta_3)- A^\omega(\beta_3)\,\langle A^\omega,A^\omega\rangle(\beta_2))
        \\
        &=&
        {i\,q_4\over 2}\,A^\omega(\beta_4)\, \langle A^\omega,A^\omega\rangle(\beta_1)
        \\
        &=&
        -{i\,q_4\,q_w\over 4}\,(A^\omega(\beta_4)\,A^\omega(\beta_2)\, A^\omega(\beta_3)- A^\omega(\beta_4)\,A^\omega(\beta_3)\,A^\omega(\beta_2))
        \\
        &=&
        {i\,q_4\,q_w\over 2}\,A^\omega(\beta_4)\,A^\omega(\beta_3)\, A^\omega(\beta_2);
    \end{eqnarray*}
and
\begin{eqnarray*}
   \langle \langle A^\omega,A^\omega\rangle,A^\omega\rangle(\beta_1)&=&{i\,q_4\over 2}\,(\langle A^\omega,A^\omega\rangle(\beta_1)\, A^\omega(\beta_4)+\langle A^\omega,A^\omega\rangle(\beta_4)\, A^\omega(\beta_1))
   \\
   &-&
   {q_w\over 2}\,(\langle A^\omega,A^\omega\rangle(\beta_2)\, A^\omega(\beta_3)- \langle A^\omega,A^\omega\rangle(\beta_3)\,A^\omega(\beta_2))
   \\
   &=&
     {i\,q_4\over 2}\,\langle A^\omega,A^\omega\rangle(\beta_1)\, A^\omega(\beta_4) 
    \\
   &=&
     -{i\,q_4\,q_w\over 4}\,(A^\omega(\beta_2)\, A^\omega(\beta_3)\,A^\omega(\beta_4)- A^\omega(\beta_3)\,A^\omega(\beta_2)\,A^\omega(\beta_4))
     \\
   &=&
   {i\,q_4\,q_w\over 2}\,A^\omega(\beta_4)\,A^\omega(\beta_3)\,A^\omega(\beta_2);
\end{eqnarray*}
so $$ \langle \langle A^\omega,A^\omega\rangle,A^\omega\rangle(\beta_1)-\langle \langle A^\omega,A^\omega\rangle,A^\omega\rangle(\beta_1)=0.$$ On the other hand, by equation (\ref{twisted.13}) we get
\begin{eqnarray*}
    DS^\omega_B(F^\omega)(\beta_1)&=&dF^\omega(\beta_1)-\langle A^\omega,F^\omega\rangle(\beta_1)+\langle F^\omega,A^\omega\rangle(\beta_1)
    \\
    &=&
    dF^\omega(\beta_1)
    \\
    &-&
    {i\,q_4\over 2}\,(A^\omega(\beta_1)\wedge F^\omega(\beta_4)+A^\omega(\beta_4)\wedge F^\omega(\beta_1))
    \\
    &+&
    {q_w\over 2}\,(A^\omega(\beta_2)\wedge F^\omega(\beta_3)- A^\omega(\beta_3)\wedge F^\omega(\beta_2))
    \\
    &+&
    {i\,q_4\over 2}\,(F^\omega(\beta_1)\wedge A^\omega(\beta_4)+F^\omega(\beta_4)\wedge A^\omega(\beta_1))
    \\
    &-&
   {q_w\over 2}\,(F^\omega(\beta_2)\wedge A^\omega(\beta_3)- F^\omega(\beta_3)\wedge A^\omega(\beta_2))
   \\
    &=&
    dF^\omega(\beta_1)-q_w\,(F^\omega(\beta_2)\wedge A^\omega(\beta_3)-F^\omega(\beta_3)\wedge A^\omega(\beta_2))
\end{eqnarray*}
Hence $$ dF^\omega(\beta_1)-q_w\,(F^\omega(\beta_2)\wedge A^\omega(\beta_3)-F^\omega(\beta_3)\wedge A^\omega(\beta_2))=0.$$
\end{proof}

The proof of the following theorem consists merely of substituting the result of Proposition~\ref{prop3.4.1} into that of Proposition~\ref{prop3.5}. Therefore, we omit it.
\begin{Theorem}
    \label{prop.3.5.1}
    All the equations of Proposition \ref{prop3.5} can be summarized as 
    \begin{equation}
    \label{ec.3.46.14.4}
\partial_\mu\,\widetilde{F}^{j\,\mu\nu}+q_w\,\sum^4_{k,l=1}\,f^{jkl}\,A^k_\mu\,\widetilde{F}^{l\,\mu\nu}=0,
\end{equation}
where
\begin{equation}
    \label{ec.3.46.13}
    \widetilde{F}^{j\,\mu\nu}= {1\over 2}\epsilon^{\mu\nu\alpha\beta}F^j_{\alpha\beta}
\end{equation}
with $\epsilon^{\mu\nu\alpha\beta}$ the $4$--dimensional Levi--Civita symbol for $\mu,\nu=0,1,2,3$ and $j=1,2,3,4$.
\end{Theorem}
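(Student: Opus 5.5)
We convert each of the four $3$--form equations of Proposition~\ref{prop3.5} into components. The aim is to show that the $\mu\nu$--component of the Hodge--type dual of each $3$--form is exactly the left--hand side of (\ref{ec.3.46.14.4}).

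First, write $F^\omega(\beta_j)=\frac12 F^j_{\alpha\beta}\,dx^\alpha\wedge dx^\beta$ as in (\ref{ec.3.46.3}) and $A^\omega(\beta_k)=A^k_\gamma\,dx^\gamma$ as in (\ref{ec.3.46.1}). Then compute
$$dF^\omega(\beta_j)=\tfrac12\,\partial_\gamma F^j_{\alpha\beta}\,dx^\gamma\wedge dx^\alpha\wedge dx^\beta$$
and
$$F^\omega(\beta_l)\wedge A^\omega(\beta_k)=\tfrac12 F^l_{\alpha\beta}A^k_\gamma\,dx^\alpha\wedge dx^\beta\wedge dx^\gamma=\tfrac12 A^k_\gamma F^l_{\alpha\beta}\,dx^\gamma\wedge dx^\alpha\wedge dx^\beta.$$
The second identity uses the fact that moving a $1$--form past a $2$--form costs no sign. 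Next, group the nonlinear terms of Proposition~\ref{prop3.5} into the form $\sum_{k,l}f^{jkl}F^\omega(\beta_l)\wedge A^\omega(\beta_k)$.

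Checking the case $j=1$ gives
$$-q_w\bigl(F^2\wedge A^3-F^3\wedge A^2\bigr)=q_w\bigl(f^{123}F^3\wedge A^2+f^{132}F^2\wedge A^3\bigr),$$
using $f^{123}=1$ and $f^{132}=-1$. The cases $j=2,3$ follow by cyclicity. For $j=4$, $f^{4kl}=0$, which matches $dF^\omega(\beta_4)=0$. Each equation therefore takes the single form
$$\tfrac12\bigl(\partial_\gamma F^j_{\alpha\beta}+q_w\sum_{k,l}f^{jkl}A^k_\gamma F^l_{\alpha\beta}\bigr)\,dx^\gamma\wedge dx^\alpha\wedge dx^\beta=0 .$$

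Since $\{dx^\gamma\wedge dx^\alpha\wedge dx^\beta\}_{\gamma<\alpha<\beta}$ is a basis of $3$--forms, this equation is equivalent to the vanishing of the totally antisymmetrized coefficient. To reach the stated form, contract with $\epsilon^{\nu\gamma\alpha\beta}$. This is equivalent because $\epsilon$ gives an isomorphism between totally antisymmetric rank--3 arrays and vectors in four dimensions. The result is
$$\tfrac12\epsilon^{\nu\gamma\alpha\beta}\partial_\gamma F^j_{\alpha\beta}+q_w\sum_{k,l}f^{jkl}A^k_\gamma\,\tfrac12\epsilon^{\nu\gamma\alpha\beta}F^l_{\alpha\beta}=0,$$
that is,
$$\partial_\gamma\widetilde F^{j\,\nu\gamma}+q_w\sum_{k,l}f^{jkl}A^k_\gamma\widetilde F^{l\,\nu\gamma}=0 .$$
Relabelling $\gamma\to\mu$ and using $\epsilon^{\nu\mu\alpha\beta}=-\epsilon^{\mu\nu\alpha\beta}$ gives (\ref{ec.3.46.14.4}) after an overall sign.

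The only delicate point is sign and index bookkeeping. There are three places to check: the orientation of $F\wedge A$ versus $A\wedge F$ (harmless here, since it costs no sign), matching the minus signs in Proposition~\ref{prop3.5} against $f^{jkl}$, and the transposition of the indices of the Levi--Civita symbol. I would verify the $j=1$ case carefully in full and obtain the others by cyclic permutation of $(1,2,3)$.
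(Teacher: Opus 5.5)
Your proposal is correct and carries out exactly the computation the paper omits; the paper only says the theorem follows by substituting the component expressions of Proposition~\ref{prop3.4.1} into the $3$--form equations of Proposition~\ref{prop3.5}. Your regrouping into $dF^\omega(\beta_j)+q_w\sum_{k,l}f^{jkl}F^\omega(\beta_l)\wedge A^\omega(\beta_k)=0$, the sign-free commutation of the $1$--form past the $2$--form, and the final flip $\epsilon^{\nu\mu\alpha\beta}=-\epsilon^{\mu\nu\alpha\beta}$ all check out, and contracting with $\epsilon^{\nu\gamma\alpha\beta}$ makes the equivalence (not just one implication) explicit.
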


There are four remarks to be made concerning the last result. 

\begin{Remark}
    \label{rema3.1}
    It is worth emphasizing that, when expressed in terms of the maps $A^\omega$ and $F^\omega$, the non--commutative geometrical Bianchi identity coincides with the Bianchi identity of the electroweak interaction \cite{diff1}. Proposition \ref{prop.3.5.1} shows that, for the quantum model of the electroweak theory that we aim to construct, the information provided by the non--commutative geometrical Bianchi identity is indeed, the correct one.
\end{Remark}

\begin{Remark}
    \label{rema3.0.2}
    Assume that the correct Bianchi identity in the non--commutative geometrical setting is 
    \begin{equation}
        \label{falsebianchi}
        D^\omega R^\omega = 0
    \end{equation}
    as in the {\it classical} case. By equation (\ref{used3}), when expressed in terms of the maps $A^\omega$ and $F^\omega$, the previous equation takes the form $$D^\omega_B(F^\omega)=0 \qquad \mbox{ with }\qquad D^\omega_B(F^\omega)=dF^\omega - [F^\omega, A^\omega].$$
In particular, by evaluating in $\beta_1$ and using Proposition \ref{prop2.3.3}, we obtain
\begin{eqnarray}
\label{incorrec1}
0 &=& dF^\omega(\beta_1) - [F^\omega, A^\omega](\beta_1)\nonumber 
       \\
      &=&
      dF^\omega(\beta_1)-q_w\,(F^\omega(\beta_2)\wedge A^\omega(\beta_3)-F^\omega(\beta_3)\wedge A^\omega(\beta_2))
      \\
      &-&
      i\,q_4\,F^\omega(\beta_1)\wedge A^\omega(\beta_4)+{q_4\over 2}\,F^\omega(\beta_2)\wedge A^\omega(\beta_4)
      \\
      &=&
      -i\,q_4\,F^\omega(\beta_1)\wedge A^\omega(\beta_4)+{q_4\over 2}\,F^\omega(\beta_2)\wedge A^\omega(\beta_4),
\end{eqnarray}
since $$dF^\omega(\beta_1)-q_w\,(F^\omega(\beta_2)\wedge A^\omega(\beta_3)-F^\omega(\beta_3)\wedge A^\omega(\beta_2))=0.$$
In conclusion, equation (\ref{falsebianchi}) for $\beta_1$ is equivalent to  
\begin{equation}
    \label{incorrect}
    0=i\,q_4\,F^\omega(\beta_1)\wedge A^\omega(\beta_4)-{q_4\over 2}\,F^\omega(\beta_2)\wedge A^\omega(\beta_4), 
\end{equation}
which is clearly incorrect for all maps $A^\omega$. Therefore, as we commented in Section 3.1, in non--commutative geometry, in general, the equation $$D^\omega R^\omega=0$$ does not hold. This is important since in the author's opinion, most of the researchers of the area still believe that equation (\ref{falsebianchi}) always holds in non--commutative geometry, but, this is incorrect.
\end{Remark}

\begin{Remark}
\label{rema3.2}
As we commented in Section 3.1, in Proposition 4.9 of reference \cite{micho2} it is proven that the equation that is satisfied for every qpc $\omega$ (and hence for every $A^\omega$) is $$DS^\omega(R^\omega)=\langle\omega,\langle\omega,\omega\rangle\rangle-\langle\langle\omega,\omega\rangle,\omega\rangle.$$  In other words, the geometry of the quantum principal bundle and its differential calculus necessarily entails the presence of the operator $S^\omega$ in order to get a Bianchi identity in the non--commutative geometrical setting. 

The reader should not be surprised that the twisted covariant derivative $DS^\omega$ is the correct expression for the covariant derivative on elements of $\Mor(\ad,\Delta_\Hor)$, since, as discussed in Section 3.1, there is a discrepancy between the differential structure of $(\Gamma^\wedge,d,\ast)$ and its Lie algebra structure, and the operator $DS^\omega$ precisely corrects this discrepancy.

What is more striking is that the action of the twisted covariant derivative $DS^\omega$ on $\Mor(\ad,\Delta_\Hor)$ of $\zeta_{4D}$ coincides with the action of the classical covariant derivative on basic differential forms of type $\ad^{\mathrm{class}}$ of the principal bundle of the electroweak interaction. 
\end{Remark}

\begin{Remark}
    \label{rema3.2.1}
By equation (\ref{incorrect}), in $\zeta_{4D}$, the typical correspondence in index notation doing in textbooks
$$D^\omega \;\;\longleftrightarrow\;\; D_\mu:=\partial_\mu+q_w\,\sum \,f^{jkl}A^k_\mu$$
is incorrect in our model. The correct correspondence is 
\begin{equation}
    \label{ec.3.85}
    DS^\omega \;\;\;\longleftrightarrow\;\;\; DS_\mu:= \partial_\mu + q_w\,\sum \,f^{jkl}A^k_\mu.
\end{equation}
In other words, the covariant derivative of a hermitian $2$--tensor $C^{j\,\mu\nu}$ is no longer given by $$\partial_\mu C^{j\,\mu\nu}+q_w\,\sum^4_{k,l=1}\,f^{jkl}A^k_\mu\, C^{l\,\mu\nu}.$$
Instead, it is the action of the twisted covariant derivative  $DS^\omega$  on $C^{j\,\mu\nu}$ that corresponds to
$$\partial_\mu C^{j\,\mu\nu}+q_w\,\sum^4_{k,l=1}\,f^{jkl}A^k_\mu\, C^{l\,\mu\nu}.$$
In particular, Theorem \ref{prop3.5} can be summarized as follows:
\begin{equation}
    \label{ec.3.86}
    DS_\mu\,(\widetilde{F}^{j\,\mu\nu})=0
\end{equation}
for $\mu,\nu=0,1,2,3$ and $j=1,2,3,4$.
\end{Remark}

Let $\qGG$ be the quantum gauge group of the quantum principal bundle $\zeta_{4D}$. In general, the quantum gauge group is \emph{larger} than its \emph{classical} counterpart (\cite{sald1}). We will take advantage of this property in non--commutative geometry to prove the following result, which will be very useful for our purposes.

\begin{Theorem}
    \label{prop3.6}
   Consider the principal bundle of the electroweak interaction $$\pi_\class:\R^4\times G\longrightarrow \R^4, \qquad (x,C)\longmapsto x$$ with $G=SU(2)\times U(1)$ and consider its gauge group $\mathfrak{GG}$. Then, $\mathfrak{GG}$ can be embedded in $\qGG$.
\end{Theorem}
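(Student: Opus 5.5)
The plan is to use the description of $\qGG$ by convolution--invertible maps recalled after Definition~\ref{3.def1}. Since $\zeta_{4D}$ is trivial, the fixed--point condition $(\f\otimes\id)\circ\Ad=\Delta_{\Omega^\bullet(P)}\circ\f$ will be satisfied by maps built from functions on $\R^4$ valued in $SU(2)$ and in $U(1)$. A classical gauge transformation of $\pi_\class$ is a smooth map $g=(h,\lambda):\R^4\longrightarrow SU(2)\times U(1)$. I will attach to $g$ a graded left $(B\oplus\Omega^1(B))$--module automorphism $\F_g$ of $P\oplus\Omega^1(P)$, defined first on degree $0$ and then on degree $1$. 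The map $g\longmapsto\F_g$ should be an injective group morphism.

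\emph{Degree $0$.} I will use only the $SU(2)$ factor. For $x\in P=B\otimes\SU(2)$, viewed as a function of $(y,A)\in\R^4\times SU(2)$, I set $\F_g(x)(y,A):=x(y,h(y)A)$; this is the pull--back by the classical vertical automorphism. It lands in $B\otimes\SU(2)$ because $\SU(2)$ consists of matrix--coefficient polynomials. It is left $B$--linear, it preserves $\mathbbm{1}$, and it commutes with $\id_B\otimes\Delta$, since left multiplication commutes with the right action.

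\emph{Degree $1$.} Here $\Omega^1(P)=(\Omega^1(B)\otimes\SU(2))\oplus(B\otimes\Gamma)$. On $\Omega^1(B)\otimes\SU(2)$ I put $\F_g(\mu\otimes a):=\mu\,\F_g(\mathbbm{1}\otimes a)$, which is forced by $\Omega^1(B)$--linearity. On the vertical part I use the basis $\beta^\#_{4D}$: the element $\mathbbm{1}\otimes\theta$ is sent to $\omega^\triv(\theta)$ plus a horizontal correction, in the same way a classical gauge transformation moves the trivial connection. Explicitly,
\[
\F_g(\mathbbm{1}\otimes\theta):=\mathbbm{1}\otimes\theta+(A_g\otimes\id)\ad(\theta),
\]
where $A_g:\su(2,\C)^\#_{4D}\longrightarrow\Omega^1(B)$ is hermitian. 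On $\eta_1,\eta_2,\eta_3$ it is the pull--back of the Maurer--Cartan form of $SU(2)$ by $h$. On $\eta_4$ it is the pull--back by $\lambda$ of the Maurer--Cartan form of $U(1)$, namely a multiple of $\lambda^{-1}d\lambda$ normalized by $q_4$. This is the point where the extra direction $\eta_4$ absorbs the $U(1)$ factor. The identification $\beta_j\leftrightarrow\hat\varrho^j$ stated after Proposition~\ref{prop3.4.1} fixes the normalizations. I then extend $\F_g$ to $B\otimes\Gamma$ by left $B$--linearity; since $\beta^\#_{4D}$ is a left $\SU(2)$--basis of $\Gamma$, I also need to define it on the coefficients through the degree--$0$ formula.

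\emph{Verification.} First, $\F_g$ is an isomorphism with inverse $\F_{g^{-1}}$, and $\F_g\circ\F_{g'}=\F_{gg'}$. In the $A_g$ component this reduces to the cocycle identity for Maurer--Cartan pull--backs, which is abelian in the $\eta_4$ slot. Second, $\Delta_{\Omega^\bullet(P)}$--equivariance holds: on the vertical part it follows from Proposition~\ref{prop2.5} and the corepresentation property of $\ad$, exactly as in the proof that $(A^\omega\otimes\id)\circ\ad+\omega^\triv$ is a qpc. Third, the $\ast$--condition $\F(\Im(\omega)^\ast)=\F(\Im(\omega))^\ast$ holds. By equation~(\ref{ec.3.56}) every qpc has the form $(A^\omega\otimes\id)\circ\ad+\omega^\triv$. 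The composite $\F_g\circ\omega$ is then again of this form, with gauge potential $\Ad_h$--transformed plus $A_g$, and it stays hermitian by Proposition~\ref{prop2.4} and equation~(\ref{ec.3.19.1}). Injectivity is clear: $h$ is recovered from the action on $\mathbbm{1}\otimes\alpha$ and $\mathbbm{1}\otimes\gamma$, and $\lambda$ from the $\eta_4$ component of $A_g$, both up to constants. I will handle the constant ambiguity by allowing constant phases on the degree--$0$ part where needed.

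\emph{Main obstacle.} The hardest step is well--definedness and compatibility on $B\otimes\Gamma$. One needs $\F_g(a\,\theta)=\F_g(a)\F_g(\theta)$, consistent with the commutation relations~(\ref{ec.2.90}) and with the fact that $\pi(\alpha+\alpha^\ast)$ does not commute with $\SU(2)$. If multiplicativity fails there, I will not impose it. Definition~\ref{3.def1} only requires left module linearity over $B\oplus\Omega^1(B)$, not over $\SU(2)$, so I will define $\F_g$ on the free left $B$--basis $\{a_i\otimes\eta_j\}$ directly by the formula above. The remaining work is then to check equivariance via $\ad$ on this basis.
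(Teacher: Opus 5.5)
\textbf{Genuine gap: your map is not injective.} Your construction does not give an embedding. The map $g=(h,\lambda)\longmapsto\F_g$ sees $\lambda$ only through $\lambda^{-1}d\lambda$ in the $\eta_4$ slot. Degree $0$ recovers $h$ exactly, but $(h,\lambda)$ and $(h,c\lambda)$ with $c\in U(1)$ constant have the same image. In particular, every constant $U(1)$ gauge transformation goes to the identity. Your phrase ``allowing constant phases on the degree--$0$ part where needed'' is exactly where the content of the theorem lies, and it is not carried out.

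The missing idea is the paper's: put the $U(1)$ factor into degree $0$ via the Peter--Weyl decomposition of $\SU(2)$. Set $\widehat{\f}(g)=\f_\class(g\otimes z^\ast)$ on the block $\{\alpha,\gamma,-\gamma^\ast,\alpha^\ast\}$ of $\delta^{\C^2}$, and $\widehat{\f}(g)=\f_\class(g\otimes\mathbbm{1})$ on every other block (equation~(\ref{ec.3.97.1})). Then take $\F_\psi(\mu\otimes\vartheta)=\mu\,\widehat{\f}_\psi(\vartheta^{(1)})\otimes\vartheta^{(2)}$. This records $\lambda$ itself. The spin-$\frac{3}{2}$ block carries no phase, so it distinguishes $(-I,-1)$ from the identity. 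It is also what later yields the hypercharge in Proposition~\ref{gauge2}. On $\mathbbm{1}\otimes\theta$ your vertical formula agrees with the paper's $\F_\psi$.

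\textbf{The phase clashes with your composition law.} This phase cannot simply be added to your verification, which silently uses multiplicativity in degree $0$. First, defining $\F_g(\mathbbm{1}\otimes a\theta):=\F_g(\mathbbm{1}\otimes a)\,\F_g(\mathbbm{1}\otimes\theta)$ is automatically well defined: $\Gamma$ is free as a left $\SU(2)$--module on $\beta^\#_{4D}$, and only left linearity is required. So the obstacle you single out is not the real one. With this definition, $\F_g\circ\F_{g'}=\F_{gg'}$ on $B\otimes\Gamma$ reduces to $\F_g(\mathbbm{1}\otimes a\,c)=\F_g(\mathbbm{1}\otimes a)\,\F_g(\mathbbm{1}\otimes c)$. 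Here $c$ runs over the entries of the matrix $U$ of Proposition~\ref{prop2.9}, weighted by the Maurer--Cartan components of $g'$.
\begin{itemize}
\item A phase on the spin-$\frac{1}{2}$ block alone breaks this identity, since $\alpha\cdot\alpha^2=\alpha^3$ has spin $\frac{3}{2}$ and receives no phase.
\item A phase uniform on all half-integer spins respects the identity, but then $(-I,-1)$ acts trivially.
\end{itemize}

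\textbf{The paper's $\F_\psi$ has the same defect.} It is built with the same rule $\widehat{\f}(g\beta_j)=\widehat{\f}(g)\widehat{\f}(\beta_j)$. At a point where $h=h'=I$, compare the $\alpha^3$--coefficients of $\F_{\psi'}\F_\psi(\mathbbm{1}\otimes\alpha\eta_1)$ and $\F_{\psi\psi'}(\mathbbm{1}\otimes\alpha\eta_1)$. They differ by $\bar\lambda(1-\bar\lambda')\,\widehat{\f}_\psi(\eta_1)$ evaluated there. So its asserted inverse and composition identities fail in general, for example for $\psi'=\psi^{-1}$ with $h$ non-constant and $\lambda\neq1$.

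\textbf{A clean proof of the statement as given.} Drop the Maurer--Cartan correction and use the phase-twisted $\widehat{\f}_\psi$ directly:
$\F_\psi(\mu\otimes a):=\mu\,\widehat{\f}_\psi(a^{(1)})\otimes a^{(2)}$ and $\F_\psi(b\otimes a\theta):=b\,\widehat{\f}_\psi(a^{(1)})\otimes a^{(2)}\theta$.
\begin{itemize}
\item It is $\Delta_{\Omega^\bullet(P)}$--equivariant.
\item It satisfies the $\ast$--condition, because $\ad$--coefficients carry no phase.
\item It obeys $\F_{\psi'}\circ\F_\psi=\F_{\psi\psi'}$ by convolution, with the usual order reversal of pull-backs.
\item It is injective.
\end{itemize}
The Maurer--Cartan term is needed only to reproduce the classical action on connections in Proposition~\ref{propgauge}.
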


\begin{proof}
Let 
\begin{equation*}
    F_\class: \R^4\times G\longrightarrow \R^4\times G, \qquad (x,C)\longmapsto (x,\psi(x)\,C)
\end{equation*}
be a gauge transformation of $\pi_\class:\R^4\times G\longrightarrow \R^4$, where $$\psi:\R^4\longrightarrow G $$ is a smooth map. It is well--known that $F_\class$ is equivalent to the $G$--equivariant map 
\begin{equation*}
    f_\class:\R^4\times G\longrightarrow G,\qquad (x,C)\longmapsto C^{-1}\,\psi(x)\,C.
\end{equation*}
We also have the map 
\begin{equation*}
    f^{-1}_\class:\R^4\times G\longrightarrow G,\qquad (x,C)\longmapsto C^{-1}\,\psi^{-1}(x)\,C.
\end{equation*}

Consider 
$$\f_\class:\Gamma^\wedge_G\longrightarrow \Omega^\bullet(P),\qquad \f^{-1}_\class:\Gamma^\wedge_G\longrightarrow \Omega^\bullet(P),$$
$$\f_\psi:\Gamma^\wedge_G\longrightarrow \Omega^\bullet(B),\qquad \f^{-1}_\psi:\Gamma^\wedge_G\longrightarrow \Omega^\bullet(B),$$ the pull--back of $f_\class$, $f^{-1}_\class$, $\psi$ and $\psi^{-1}$ on $\C$--valued differential forms, respectively. Here, $$(\Gamma^\wedge_G,d,\ast)$$  is the universal differential envelope $\ast$--calculus of the \emph{classical} $\ast$--FODC of $\C$--valued differential forms of $G$, i.e.,  $$\Gamma_G=(\SU(2)\otimes\U(1))\otimes (\mathfrak{su}(2,\C)^\#\oplus \mathfrak{u}(1,\C)^\#),$$ where $\mathfrak{g}^\#_\C=\mathfrak{su}(2,\C)^\#\oplus \mathfrak{u}(1,\C)^\#$ is the dual space of the complexification of the Lie algebra $\mathfrak{g}=\mathfrak{su}(2,\R)\oplus \mathfrak{u}(1,\R)$ of $G$, and $$\U(1):=\langle z,z^{-1}=z^\ast\rangle_{\mathrm{alg}}$$ is the canonical $\ast$--Hopf algebra associated to the Lie group $U(1)$ (\cite{woro1}), where $\langle h\rangle_{\mathrm{alg}}$ denotes the algebra generated by $h$, and $$z: U(1)\longrightarrow \C,\qquad \mathrm{e}^{it}\longmapsto \mathrm{e}^{it}.$$ By definition, we have
\begin{equation}
    \label{ec.pull}
    \f_\class(\varsigma)=\f_\psi(\varsigma^{(2)})\otimes S(\varsigma^{(1)})\,\varsigma^{(3)}
\end{equation}
for all $\varsigma$ $\in$ $\Gamma^\wedge_G$ and $\f_\class$ is a convolution invertible map (see Section 4 of reference \cite{sald1}).

We know that
\begin{equation}
    \label{ec.2.80}
    \widetilde{\beta}_\class:=\{ \hat{\varrho}^1:= {\sigma_1\over 2 }, \quad \hat{\varrho}^2:={\sigma_2\over 2 }, \quad \hat{\varrho}^3:={\sigma_3\over 2}, \quad \hat{\varrho}^4:={\Id\over 2 } \}
\end{equation}
is a linear basis of $\su(2,\C)\oplus \u(1,\C)$ with  $\sigma_1$, $\sigma_2$, $\sigma_3$ the Pauli matrices, and hence, its dual basis 
\begin{equation}
    \label{ec.2.80.1}
    \widetilde{\beta}^\#_\class=\{\varrho_1,\,\varrho_2,\,\varrho_3,\,\varrho_4 \}
\end{equation}
is a linear basis of $\mathfrak{g}^\#_\C=\su(2,\C)^\#\oplus \u(1,\C)^\#$. 

Let 
\begin{equation}
    \label{ec.3.97}
    \widehat{\f}:\SU(2)\oplus \Gamma\longrightarrow P\oplus  \Omega^1(P), \qquad \widehat{\f}^{-1}:\SU(2)\oplus \Gamma\longrightarrow P\oplus  \Omega^1(P)
\end{equation}
be the graded linear maps defined as follows: Consider the Weyl decomposition of $\SU(2)$ $$\SU(2)=\bigoplus_{\tau\in \T}\langle \{g^{\tau}_{ij}\}\rangle_{\mathrm{lin}},$$ where $\langle h\rangle_{\mathrm{lin}}$ denotes the linear space generated by $h$ and $\T$ is a complete set of mutually non--equivalent unitary (necessarily finite--dimensional) $\SU(2)$--corepresentations such that the trivial $\SU(2)$--corepresentation $\delta^\C$ on $\C$ belongs to $\T$, the standard $\SU(2)$--corepresentation $\delta^{\C^2}$ belongs to $\T$, and $\ad$ $\in$ $\T$ \cite{woro1}. For example, $$\{g^{\delta^\C}_{ij} \}=\{\mathbbm{1}\},\qquad \{g^{\delta^{\C^2}}_{ij}\}=\{\alpha,\gamma,-\gamma^\ast,\alpha^\ast\}$$ and $\{g^{\ad}_{ij}\}$ is the set of  elements of the matrix $U$ of Proposition (\ref{prop2.9}). For an element $g$ of $\{g^{\delta^{\C^2}}_{ij}\}$, define
\begin{equation}
    \label{ec.3.97.1}
\widehat{\f}(g):=\f_\class(g\otimes z^\ast), \qquad \widehat{\f}^{-1}:=\f^{-1}_\class(g\otimes z^\ast)
\end{equation}
and for any other element $g$ of $\{g^{\tau}_{ij}\}$, define
\begin{equation}
    \label{ec.3.97.2}
    \widehat{\f}(g):=\f_\class(g\otimes \mathbbm{1}), \qquad \widehat{\f}^{-1}:=\f^{-1}_\class(g\otimes \mathbbm{1}).
\end{equation}
Now extend $\widehat{\f}$ and $\widehat{\f}^{-1}$ by linearity to all of $\SU(2)$. Furthermore, define
\begin{equation}
    \label{ec.3.98}
 \widehat{\f}(\beta_j)=\f_\class(\varrho_j),\qquad   \widehat{\f}^{-1}(\beta_j)=\f^{-1}_\class(\varrho_j)
\end{equation}
for $j=1,2,3,4$. Now, we extend $\widehat{\f}$ and $\widehat{\f}^{-1}$ to $\Gamma$ by
$$\widehat{\f}(g\,\beta_j)=\widehat{\f}(g)\,\widehat{\f}(\beta_j),\qquad \widehat{\f}^{-1}(g\,\beta_j)=\widehat{\f}^{-1}(g)\,\widehat{\f}^{-1}(\beta_j).$$ It is worth mentioning that $\widehat{\f}$ preserves the $\ast$ operation except on elements of $\{g^{\delta^{\C^2}}_{ij}\}$.

Also, by defining the graded linear maps 
\begin{equation}
    \label{ecpullback}
     \widehat{\f}_\psi:\SU(2)\oplus \Gamma\longrightarrow B\oplus \Omega^1(B),\qquad \widehat{\f}^{-1}_\psi:\SU(2)\oplus \Gamma\longrightarrow B\oplus \Omega^1(B)
\end{equation}
in the same manner as $\widehat{\f}$, but replacing $\f_\class$ with $\f_\psi$, we have
\begin{equation}
    \label{necesito}
    \widehat{\f}(\vartheta)=\widehat{\f}_\psi(\vartheta^{(2)})\otimes S(\vartheta^{(1)})\,\vartheta^{(3)}
\end{equation}
for all $\vartheta$ $\in$ $\SU(2)\oplus \Gamma$ and $\widehat{\f}$ is actually a convolution invertible map \cite{sald1}.

Consider the graded linear map
\begin{equation}
    \label{ec.3.100}
    \F_\psi:P\oplus \Omega^1(P)\longrightarrow P\oplus \Omega^1(P)
\end{equation}
given by $$\F_\psi(\mu\otimes \vartheta)=(\mu\otimes\vartheta^{(1)})\,\widehat{\f}(\vartheta^{(2)})=\mu\,\widehat{\f}_\psi(\vartheta^{(1)})\otimes \vartheta^{(2)}$$ for $\mu$ $\in$ $B\oplus\Omega^\bullet(B)$, $\vartheta$ $\in$ $\SU(2)\oplus\Gamma$, $\Delta(\vartheta)=\vartheta^{(1)}\otimes \vartheta^{(2)}$. It is straightforward to prove that $\F_\psi$ is a $(B\oplus\Omega^1(B))$--module isomorphism with inverse
\begin{equation}
    \label{ec.3.100.1}
    \F^{-1}_\psi(\mu\otimes \vartheta):=\F_{\psi^{-1}}(\mu\otimes \vartheta)=\mu\,\widehat{\f}_{\psi^{-1}}(\vartheta^{(1)})\otimes \vartheta^{(2)},
\end{equation}
such that 
$$\F_{\psi}(\mathbbm{1})=\mathbbm{1},\qquad \Delta_{\Omega^\bullet(P)}\circ \F_\psi=(\F_\psi\otimes \id_{\Gamma^\wedge})\circ \Delta_{\Omega^\bullet(P)}.$$

In Section 4 of reference \cite{sald1}, quantum gauge transformations are defined without imposing any condition on the $\ast$ operation. Therefore, the map $\F_\psi$ is a quantum gauge transformation in the sense of~\cite{sald1}.

Nevertheless, for this paper, we need to prove another condition. Let $\omega$ be a qpc. According to Proposition 4.8 of reference \cite{sald1}, we have
\begin{equation}
    \label{ec.3.gauge6}
    \F^\circledast_\psi \omega=m_\Omega\circ (\omega\otimes \widehat{\f})\circ \ad+\widehat{\f},
\end{equation}
where $m_\Omega:\Omega^\bullet(P)\otimes \Omega^\bullet(P)\longrightarrow \Omega^\bullet(P)$ is the product map. By equation (\ref{ec.3.98}) and the fact that $\widehat{\f}$ preserves the $\ast$ operation on elements of $\{g^{\ad}_{ij}\}$, it follows that $$\F(\Im(\omega)^\ast)=\F(\Im(\omega))^\ast $$ and we conclude $\F$ $\in$ $\qGG$.

Finally, it is easy to see that the group morphism $$\iota: \mathfrak{GG}\longrightarrow \qGG, \qquad F_\class\longmapsto \F_{\psi} $$ is injective.
\end{proof}

Of course, in differential geometry it is impossible to embed the gauge group of a principal $(SU(2)\times U(1))$--bundle into the gauge group of a principal $SU(2)$--bundle. 

In Section 5, we will see that equation (\ref{ec.3.97.1}) in the definition of $\widehat{\f}$ ensures that all the symmetries of the \emph{classical} case are preserved in our \emph{quantum} model.

\begin{Proposition}
    \label{prop3.gauge1}
    Consider the map $\F_\psi$ of equation (\ref{ec.3.100}). In terms of the map $A^\omega$, equation (\ref{ec.3.gauge6}) is given by
    \begin{equation}
    \label{ec.3.gauge7}
     A^{\F^\circledast_\psi \omega}=m_{\Omega^\bullet}\circ (A^\omega\otimes \widehat{\f}_\psi)\circ \ad+\widehat{\f}_\psi,
\end{equation}
where $m_{\Omega^\bullet}:\Omega^\bullet(B)\otimes \Omega^\bullet(B)\longrightarrow \Omega^\bullet(B)$ is the product map.
\end{Proposition}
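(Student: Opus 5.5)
The plan is to compute $\F^\circledast_\psi\omega$ directly from equation (\ref{ec.3.gauge6}), substituting the decomposition (\ref{ec.3.56}) of $\omega$ and the factorisation (\ref{necesito}) of $\widehat{\f}$. We then read off the $\Omega^1(B)$--part using the uniqueness in the bijection of Lemma 6.11 of \cite{micho2}. Since $\F_\psi\in\qGG$ by Theorem \ref{prop3.6}, the map $\F^\circledast_\psi\omega$ is again a qpc. It therefore has the form $(A^{\F^\circledast_\psi\omega}\otimes\id_{\SU(2)})\circ\ad+\omega^\triv$. It suffices to show that
$$m_\Omega\circ(\omega\otimes\widehat{\f})\circ\ad+\widehat{\f}=\big((m_{\Omega^\bullet}\circ(A^\omega\otimes\widehat{\f}_\psi)\circ\ad+\widehat{\f}_\psi)\otimes\id_{\SU(2)}\big)\circ\ad+\omega^\triv .$$

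\textbf{Step 1: the inhomogeneous term $\widehat{\f}$.} For $\theta\in\su(2,\C)^\#_{4D}\subset\Gamma$, the extended coproduct gives $\Delta(\theta)=\ad(\theta)+\mathbbm{1}\otimes\theta$ by (\ref{ec.2.10}). Hence $(\id\otimes\Delta)\Delta(\theta)$ has Sweedler components of two kinds: $\theta^{(0)}\otimes\theta^{(1)}\otimes\theta^{(2)}$ coming from $(\id\otimes\Delta)\ad$, and $\mathbbm{1}\otimes\mathbbm{1}\otimes\theta$ together with $\mathbbm{1}\otimes\theta$ pieces. Plugging these into (\ref{necesito}) gives
$$\widehat{\f}(\theta)=(\widehat{\f}_\psi\otimes\id_{\SU(2)})\ad(\theta)+\widehat{\f}_\psi(\mathbbm{1})\otimes\theta .$$
Here I use $S(\theta^{(1)})\theta^{(3)}$ applied to the $\ad$-part, which reproduces exactly the $\Ad$ structure. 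Since $\widehat{\f}_\psi(\mathbbm{1})=1$ (because $\f_\psi(\mathbbm{1}\otimes\mathbbm{1})=1$), the last term is $\omega^\triv(\theta)$.

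\textbf{Step 2: the homogeneous term.} Write $\omega(\theta^{(0)})=(A^\omega\otimes\id)\ad(\theta^{(0)})+\mathbbm{1}\otimes\theta^{(0)}$. By the degree-$0$ case of (\ref{necesito}), for $g\in\SU(2)$ we have $\widehat{\f}(g)=\widehat{\f}_\psi(g^{(2)})\otimes S(g^{(1)})g^{(3)}$. Multiplying in $\Omega^\bullet(B)\otimes\Gamma^\wedge$ gives two contributions.

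The $A^\omega$--part equals $\sum A^\omega(\theta^{(0)(0)})\widehat{\f}_\psi(\theta^{(1)(2)})\otimes\theta^{(0)(1)}S(\theta^{(1)(1)})\theta^{(1)(3)}$. Coassociativity and the corepresentation property of $\ad$ should collapse this to $\big((m_{\Omega^\bullet}\circ(A^\omega\otimes\widehat{\f}_\psi)\circ\ad)\otimes\id\big)\ad(\theta)$. The key identity needed is that $\ad$ composed with $(\id\otimes\Ad)$ equals $(\ad\otimes\id)$ followed by a reshuffle, i.e. the $\SU(2)$--legs $\theta^{(0)(1)}$ and $S(\cdot)\cdot$ recombine into a single $\ad$--leg.

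The $\omega^\triv$--part is $\sum(\mathbbm{1}\otimes\theta^{(0)})\widehat{\f}(\theta^{(1)})$. Its components lie in $B\otimes\Gamma$, which is non-horizontal. Combined with the $\mathbbm{1}\otimes\theta$-type terms, it must reorganise into the $\omega^\triv$ and $\widehat{\f}_\psi$ pieces. The cleanest way to argue this is the same projection trick as in Proposition \ref{bianchi}. Both sides are qpc's, so their difference lies in $\overrightarrow{\mathfrak{qpc}(\zeta_{4D})}$ and hence in $\Hor^1P=\Omega^1(B)\otimes\SU(2)$. It therefore suffices to compare the components in $\Omega^1(B)\otimes\SU(2)$. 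In that component, the $\omega^\triv$--part contributes only through $\widehat{\f}$ evaluated on degree-one elements, which is already accounted for by Step 1 after using that $\theta^{(0)}\,\widehat{\f}_\psi(\cdot)$ terms of type $B\otimes\Gamma$ are discarded.

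\textbf{Main obstacle.} The main obstacle is the Sweedler bookkeeping in Step 2: the exchange of $\ad$ with the $S(\cdot)\cdot$ factor. I would first verify it on the basis elements $\beta_j$, using the explicit matrix $U$ of Proposition \ref{prop2.9} and its unitarity $UU^\dagger=\Id$. This reduces the recombination to a matrix identity, and then extends by linearity.
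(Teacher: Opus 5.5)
Your final argument (compare $\Omega^1(B)\otimes\SU(2)$--components, then recombine the $\SU(2)$--legs) is valid, but Step~1 is false as stated. Besides $\mathbbm{1}\otimes\theta^{(0)}\otimes\theta^{(1)}$ and $\mathbbm{1}\otimes\mathbbm{1}\otimes\theta$, the element $(\id\otimes\Delta)\Delta(\theta)$ also contains $(\id\otimes\Delta)\ad(\theta)=\theta^{(0)}\otimes\theta^{(1)}\otimes\theta^{(2)}$, whose first leg lies in $\Gamma$. So (\ref{necesito}) actually gives
\begin{equation*}
\widehat{\f}(\theta)=\widehat{\f}_\psi(\theta^{(1)})\otimes S(\theta^{(0)})\,\theta^{(2)}+(\widehat{\f}_\psi\otimes\id_{\SU(2)})\ad(\theta)+\mathbbm{1}\otimes\theta .
\end{equation*}
The first term is a non--zero element of $B\otimes\Gamma$. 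For $\psi\equiv e$ one has $\widehat{\f}_\psi=\epsilon$ on $\SU(2)$ and $\widehat{\f}_\psi=0$ on $\Gamma$. The antipode axiom $S(\theta^{(0)})\theta^{(1)}=-\theta$ then turns that term into $-\mathbbm{1}\otimes\theta$, so $\widehat{\f}(\theta)=0$ (as it must be, since then $\F_\psi=\id$), not $\mathbbm{1}\otimes\theta$.

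Using $S(\theta)=-\theta^{(0)}S(\theta^{(1)})$, one checks that this term exactly cancels your $\omega^\triv$--part $(\mathbbm{1}\otimes\theta^{(0)})\widehat{\f}(\theta^{(1)})$. That cancellation is why the non--horizontal part ends up being just $\omega^\triv(\theta)$; it should replace your description of that part ``reorganising'' or ``contributing through degree--one elements''. None of this affects your conclusion. The horizontal part of $\widehat{\f}(\theta)$ really is $(\widehat{\f}_\psi\otimes\id)\ad(\theta)$, and the $\omega^\triv$--part has zero horizontal component. You also do not need the right--hand side to be a qpc (its hermiticity is not yet known). It suffices that both sides have the form $(X\otimes\id_{\SU(2)})\circ\ad+\omega^\triv$, so their difference is horizontal.

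With that repaired, your route differs from the paper's. The paper applies $\id_{\Omega^\bullet(B)}\otimes\epsilon$ to $\F^\circledast_\psi\omega(\theta)$, with $\epsilon$ extended to $\Gamma^\wedge$ by zero in positive degree. This does three things at once:
\begin{enumerate}
\item Since $(\id\otimes\epsilon)\circ\ad=\id$ and $\epsilon(\Gamma)=0$, it returns $A^{\F^\circledast_\psi\omega}(\theta)$ directly.
\item It annihilates every $B\otimes\Gamma$ term, so the issue above never arises.
\item Since $\epsilon$ is multiplicative and $\widehat{\f}_\psi(g^{(2)})\,\epsilon(S(g^{(1)})g^{(3)})=\widehat{\f}_\psi(g)$, the $A^\omega$--part collapses at once to $A^\omega(\theta^{(0)})\,\widehat{\f}_\psi(\theta^{(1)})$.
\end{enumerate}
Your projection keeps the $\SU(2)$--leg, so you must carry out the recombination of Step~2. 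It needs only the corepresentation identity $(\ad\otimes\id)\circ\ad=(\id\otimes\Delta)\circ\ad$, coassociativity, and the antipode axiom $g^{(1)}S(g^{(2)})=\epsilon(g)\mathbbm{1}$ applied to the adjacent legs $\theta^{(0)(1)}S(\theta^{(1)(1)})$. No basis computation is needed: your $UU^\dagger=\Id$ is just this axiom for the matrix coefficients of the unitary corepresentation $\ad$, where $S(u_{ij})=u^\ast_{ji}$.

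What your route gains is the full horizontal identity $(A^{\F^\circledast_\psi\omega}\otimes\id)\circ\ad=(A'\otimes\id)\circ\ad$, with $A'$ the right--hand side of (\ref{ec.3.gauge7}). To pass from this to $A^{\F^\circledast_\psi\omega}=A'$ you still apply $\id\otimes\epsilon$, which is the paper's one--step argument.
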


\begin{proof}
    Since  $$\F^\circledast_\psi \omega=m_\Omega\circ (\omega\otimes \widehat{\f})\circ \ad+\widehat{\f} $$ is again a qpc (see Section 4 of reference \cite{sald1}), by Lemma 6.11 of reference \cite{micho2}, there exists a unique hermitian linear map $$A^{\F^\circledast_\psi \omega}: \mathfrak{su}(2,\C)^\#_{4D}\longrightarrow \Omega^1(B)$$ such that 
    $$\F^\circledast_\psi \omega=(A^{\F^\circledast_\psi \omega}\otimes \id_{\SU(2)})\circ \ad+\omega^\triv.$$ In this way, considering the extension of the counit $\epsilon$ of equation (\ref{ec.2.11}), we have  $$(\id_{\Omega^\bullet(B)}\otimes \epsilon)(\F^\circledast_\psi \omega)=A^{\F^\circledast_\psi \omega}.$$ 

    Thus, for all $\theta$ $\in$ $\mathfrak{su}(2,\C)^\#_{4D}$, by equation (\ref{necesito}) we obtain 
    \begin{eqnarray*}
        \F^\circledast_\psi \omega(\theta)&=&m_\Omega(\omega\otimes \widehat{\f}) \ad(\theta)+\widehat{\f}(\theta)
        \\
        &=&
        m_\Omega(((A^\omega\otimes \id_{\SU(2)})\circ \ad+\omega^\triv )\otimes \widehat{\f}) \ad(\theta)+\widehat{\f}(\theta)
        \\
        &=&(A^\omega(\theta^{(0)})\otimes \theta^{(1)})\,\widehat{\f}(\theta^{(2)})+(\mathbbm{1}\otimes \theta^{(0)})\,\widehat{\f}(\theta^{(1)})+\widehat{f}(\theta)
        \\
        &=&
        (A^\omega(\theta^{(0)})\otimes \theta^{(1)})\,(\widehat{\f}_\psi(\theta^{(3)})\otimes S(\theta^{(2)})\, \theta^{(4)})+(\mathbbm{1}\otimes \theta^{(0)})\,(\widehat{\f}_\psi(\theta^{(2)})\otimes S(\theta^{(1)})\,\theta^{(3)})
        \\
        &+&
        \widehat{f}_\psi(\theta^{(2)})\otimes S(\theta^{(1)})\, \theta^{(3)}
        \\
        &=&
    A^\omega(\theta^{(0)})\,\widehat{\f}_\psi(\theta^{(1)})\otimes \theta^{(2)}+\widehat{\f}_\psi(\theta^{(2)})\otimes \theta^{(0)}\, S(\theta^{(1)})\,\theta^{(3)}
    +
    \widehat{f}_\psi(\theta^{(2)})\otimes S(\theta^{(1)})\,\theta^{(3)}
    \end{eqnarray*}
    and hence $$(\id_{\Omega^\bullet(B)}\otimes \epsilon)(\F^\circledast_\psi \omega)(\theta)=A^\omega(\theta^{(0)})\,\widehat{\f}_\psi(\theta^{(1)})+\widehat{\f}_\psi(\theta)=m_{\Omega^\bullet}(A^\omega\otimes \widehat{\f}_\psi)\ad(\theta)+\widehat{\f}_\psi(\theta).$$ We conclude that 
    $$A^{\F^\circledast_\psi \omega}=m_{\Omega^\bullet}\circ (A^\omega\otimes \widehat{\f}_\psi)\circ \ad+\widehat{\f}_\psi.$$
\end{proof}

Consider the principal $G$--bundle $$\pi_\class:\R^4\times G\longrightarrow \R^4,\qquad (x,C)\longmapsto x$$ and its \emph{canonical} section $$s:\R^4\longrightarrow \R^4\times G,\qquad x\longmapsto (x,e),$$ where $e$ is the identity element of $G=SU(2)\times U(1)$. For a given principal connection $\omega_\class$ of this bundle, we have
\begin{equation}
    \label{ec.con.1}
    s^\#\omega_\class: T\R^4\longrightarrow \mathfrak{g}, \qquad X_x\longmapsto \omega_\class(ds(X_x)),
\end{equation}
where $\mathfrak{g}=\su(2,\R)\oplus \mathfrak{u}(1,\R)$ is the Lie algebra of $G$. It is well--known that the action of the action of a gauge transformation $$F_\class:\R^4\times G\longrightarrow \R^4\times G,\qquad (x,C)\longmapsto (x,\psi(x)\,C) $$ of $\pi_\class:\R^4\times G\longrightarrow \R^4$  on $s^\#\omega_\class$ is given by (\cite{diff1})
\begin{equation}
\label{ec.3.gauge8}
(s^\#\omega_\class,\psi)\longmapsto \psi s^\#\omega_\class
:= \ad^\class_{\psi}\circ s^\#\omega_\class+\mathrm{MC}\circ d\psi,
\end{equation}
where $$\ad^\class: \mathfrak{g}\times G\longrightarrow \mathfrak{g},\qquad (v,C)\longmapsto \left.\ad^\class_C(v):={d\, C^{-1}\,\mathrm{exp}(t\,v)\,C\over dt }\right|_{t=0}$$ is the \emph{classical} adjoint action of $G$ on $\mathfrak{g}$, and $$\mathrm{MC}: TG\longrightarrow \mathfrak{g}, \qquad Y_C\longmapsto dL_{C^{-1}}(Y_C)$$ is the Maurer--Cartan form. In a more \emph{traditional} notation, we get
\begin{equation}
    \label{ec.3.gauge10}
    \psi s^\#\omega_\class=\psi^{-1}\cdot s^\#\omega_\class\cdot \psi+\psi^{-1} \cdot d\psi.
\end{equation}

The pull--back of the complexification of $s^\#\omega_\class$ gives us a hermitian linear map 
\begin{equation}
    \label{ec.con.2}
    A_\class: \mathfrak{g}^\#_\C:=\mathfrak{su}(2,\C)^\#\oplus \mathfrak{u}(1,\C)^\#\longrightarrow  \Omega^\bullet(B)
\end{equation}
and we can define another hermitian linear map
\begin{equation}
    \label{ec.con.3}
    A^\omega: \su(2,\C)^\#_{4D}\longrightarrow \Omega^1(B)
\end{equation}
given by $$A^\omega(\beta_j):=A_\class(\varrho_j)$$ for $j=1,2,3,4$ (see equation (\ref{ec.2.80.1})). By equation (\ref{ec.3.56}), the map $A^\omega$ defines a qpc $\omega$ on our quantum bundle $\zeta_{4D}$.

\begin{Proposition}
\label{propgauge}
    The pull--back of the complexification of the map $\psi s^\#\omega_\class$ (equation \ref{ec.3.gauge8}) coincides with equation (\ref{ec.3.gauge7}), for the map $ A^\omega$ of equation (\ref{ec.con.3}).
\end{Proposition}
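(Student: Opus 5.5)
The plan is to compare both sides after evaluating them on the basis element $\beta_j$, for $j=1,2,3,4$. On the quantum side, Proposition~\ref{prop3.gauge1} gives
$$A^{\F^\circledast_\psi\omega}(\beta_j)=\sum_k A^\omega(\beta_k)\,\widehat{\f}_\psi(U_{kj})+\widehat{\f}_\psi(\beta_j),$$
where $U=(U_{kj})$ is the matrix of $\ad$ in the basis $\widetilde{\beta}^\#_{4D}$. This matrix is obtained from the matrix of Proposition~\ref{prop2.9} by the unitary change of basis (\ref{ec.2.103}). Its first $3\times 3$ block is the real rotation matrix of $SU(2)$ acting on $\su(2)$, and $U_{44}=\mathbbm{1}$.

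On the classical side, the pull--back of the complexification of (\ref{ec.3.gauge8}) evaluated on $\varrho_j$ is
$$\sum_k A_\class(\varrho_k)\,\f_\psi(\mathrm{ad}^{\#}_{kj})+\f_\psi(\varrho_j).$$
Here $\mathrm{ad}^\#_{kj}\in\SU(2)\otimes\U(1)$ are the matrix coefficients of the dual adjoint corepresentation of $G$ in the basis $\widetilde{\beta}^\#_\class$, and $\f_\psi(\varrho_j)$ is the $j$--th component of $\psi^{-1}d\psi=\psi^\#\mathrm{MC}$. This formula is just the coordinate form of $\psi^{-1}\,s^\#\omega_\class\,\psi+\psi^{-1}d\psi$, read through (\ref{ec.pull}) with the first leg evaluated at $e$.

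The next step is to match the two expressions term by term. By (\ref{ec.3.98}), $\widehat{\f}_\psi(\beta_j)=\f_\psi(\varrho_j)$, so the inhomogeneous terms agree at once. For the homogeneous terms, I use that the adjoint corepresentation of $G=SU(2)\times U(1)$ on $\mathfrak g_\C^\#$ factors through $SU(2)$, because $U(1)$ is central. Concretely, $\mathrm{ad}^\#_{kj}=U_{kj}\otimes\mathbbm{1}$ once the Pauli matrix basis is identified with $\beta_1,\beta_2,\beta_3$, and $\mathrm{ad}^\#_{44}=\mathbbm{1}\otimes\mathbbm{1}$. The entries $U_{kj}$ are quadratic in $\alpha,\gamma,\alpha^\ast,\gamma^\ast$, so they lie in the $\ad$ isotypic component, or in the trivial one, of the Weyl decomposition. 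They do not lie in the $\delta^{\C^2}$ block. Hence (\ref{ec.3.97.2}) gives $\widehat{\f}_\psi(U_{kj})=\f_\psi(U_{kj}\otimes\mathbbm{1})=\f_\psi(\mathrm{ad}^\#_{kj})$. With $A^\omega(\beta_k)=A_\class(\varrho_k)$ from (\ref{ec.con.3}), the two sides coincide.

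The main obstacle is the identification $\mathrm{ad}^\#_{kj}=U_{kj}\otimes\mathbbm{1}$. It requires an explicit check that the rotation matrix in Proposition~\ref{prop2.5}, after the change to $\widetilde{\beta}^\#_{4D}$, is exactly the classical adjoint matrix in the basis $\{\sigma_j/2\}$, including signs and normalizations. This is also where the constants matter: the normalization $-\tfrac{i\sqrt2}{q_w}$ in (\ref{ec.2.89}) must make the structure match, consistent with Proposition~\ref{prop2.3.3}. I would verify it by computing $C^{-1}(\sigma_j/2)C$ for $C=\begin{pmatrix}\alpha&-\gamma^\ast\\ \gamma&\alpha^\ast\end{pmatrix}$ and comparing the coefficients with $U$. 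A secondary point is that $U_{kj}$ may not be a single Weyl matrix element, only an element of $\langle g^{\ad}_{ij}\rangle_{\mathrm{lin}}\oplus\C\mathbbm{1}$. Linearity of $\widehat{\f}_\psi$ on each isotypic block handles this.
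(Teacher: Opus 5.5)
Your route is the paper's own. You split (\ref{ec.3.gauge7}) into two parts. The inhomogeneous term matches by (\ref{ec.3.98}). The homogeneous term matches once the matrix of $\ad$ is identified with that of the classical $\ad^{\#}$ and (\ref{ec.3.97.2}) is applied to its entries. Your Weyl--decomposition remark is more careful than the paper, which at that point only cites $\f_\psi(\varrho_j)=\widehat{\f}_\psi(\beta_j)$.

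However, the identification you defer as the ``main obstacle'' is false under $\beta_j\leftrightarrow\varrho_j$. Compare torus weights.
\begin{itemize}
\item Classically, for $C=\mathrm{diag}(e^{i\phi},e^{-i\phi})$ one has $(C^{-1}vC)_{21}=e^{2i\phi}v_{21}$. So the functional transforming with $\alpha^2$ is $v\mapsto v_{21}=\tfrac12(\varrho_1+i\varrho_2)(v)$.
\item On the quantum side, at $\gamma=0$ Proposition \ref{prop2.5} gives $\ad(\eta_1)=\eta_1\otimes\alpha^2$ with $\eta_1=(\beta_1-i\beta_2)/\sqrt2$.
\end{itemize}
The actual intertwiner is therefore $\beta_1\mapsto\varrho_1$, $\beta_2\mapsto-\varrho_2$, $\beta_3\mapsto\varrho_3$, $\beta_4\mapsto\varrho_4$. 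Equivalently, if $M$ is the classical matrix in $\{\varrho_j\}$, the matrix of $\ad$ in $\widetilde{\beta}^\#_{4D}$ is $DMD$ with $D=\mathrm{diag}(1,-1,1,1)$. So the $(1,2),(2,1),(2,3),(3,2)$ entries carry the wrong sign.

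A concrete check: take constant $\psi\equiv\mathrm{diag}(e^{i\phi},e^{-i\phi})$, so both inhomogeneous terms vanish.
\begin{itemize}
\item The classical side gives $\cos 2\phi\,A_\class(\varrho_1)-\sin 2\phi\,A_\class(\varrho_2)$ in the $\varrho_1$ slot.
\item (\ref{ec.3.gauge7}) gives $\cos 2\phi\,A^\omega(\beta_1)+\sin 2\phi\,A^\omega(\beta_2)$ in the $\beta_1$ slot.
\end{itemize}
With $A^\omega(\beta_j)=A_\class(\varrho_j)$ these differ whenever $\sin 2\phi\,A_\class(\varrho_2)\neq 0$.

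The paper's proof contains the same slip. The basis dual to $\{\hat{\xi}^j\}$ begins with $(\varrho_1+i\varrho_2)/\sqrt2$, not $(\varrho_1-i\varrho_2)/\sqrt2$. The displayed formulas for $\ad^{\#}$ are correct for the true dual basis, and transporting them through $\eta_1=(\beta_1-i\beta_2)/\sqrt2$ yields $\beta_2\leftrightarrow-\varrho_2$.

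So the statement, and your argument, hold only after correcting the identification in (\ref{ec.con.3}) and (\ref{ec.3.98}) to $A^\omega(\beta_2):=-A_\class(\varrho_2)$ and $\widehat{\f}_\psi(\beta_2):=-\f_\psi(\varrho_2)$. Equivalently, pair $\beta_2$ with $-\sigma_2/2$. With that change, and $\mathrm{ad}^\#_{kj}=U_{kj}\otimes\mathbbm{1}$ replaced by its $D$--conjugated version, the rest of your argument goes through.

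Note also that the coupling constants play no role in the homogeneous comparison, since the same linear identification is applied to $A^\omega$ and to the basis. What matters is the relative phase between $\beta_1$ and $\beta_2$, and that is exactly what is off.
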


\begin{proof}
Notice that $$\ad^\class_\psi  \circ s^\#\omega_\class: T\R^4\longrightarrow \mathfrak{g}, \qquad X_x\longmapsto \ad^\class_{\psi(x)}(\omega_\class(ds(X_x))) $$ and hence we can consider that $$\ad^\class_\psi  \circ s^\#\omega_\class=\ad^\class\circ (s^\#\omega_\class\times\psi ):T\R^4\longrightarrow \mathfrak{g}$$ is given by  $$X_x\longmapsto (s^\#\omega_\class(X_x),\psi(x))\longmapsto \ad^\class_{\psi(x)}(\omega_\class(ds(X_x))). $$ In this way, the pull--back of the complexification of $\ad^\class_\psi  \circ s^\#\omega_\class$ is given by $$m_{\Omega^\bullet}\circ (A_\class\otimes \f_\psi)\circ \ad^{\#},$$ where $$\ad^{\#}:\mathfrak{g}^\#_\C\longrightarrow \mathfrak{g}^\#_\C\otimes (\SU(2)\otimes \U(1)), \qquad \theta \longmapsto \theta\circ \ad^\class$$  is the pull--back of $\ad^{\class}$ on the complexification $\mathfrak{g}^\#_\C$.

    In order to make easier the calculations, consider the linear basis of $\mathfrak{g}_\C$ (see equation (\ref{ec.2.80})) 
    $$\{\hat{\xi}^1:={1\over \sqrt{2}} (\hat{\varrho}^1 -i\,\hat{\varrho}^2),\quad \hat{\xi}^2:={1\over \sqrt{2}}(\hat{\varrho}^1 +i\,\hat{\varrho}^2),\quad  \hat{\xi}^3:=\hat{\varrho}^3, \quad \hat{\xi}^4:=\hat{\varrho}^4 \}$$ and its dual basis in $\mathfrak{g}^\#_\C$ $$\{{\xi}_1:={1\over \sqrt{2}} ({\varrho}_1 -i\,{\varrho}_2),\quad {\xi}_2:={1\over \sqrt{2}}({\varrho}_1 +i\,{\varrho}_2),\quad  {\xi}_3:={\varrho}_3, \quad {\xi}_4:={\varrho}_4  \}.$$ A direct calculation shows that 
    \begin{equation*}
    \ad^{\#}(\xi_1)=(\xi_1\otimes \alpha^2+\xi_2\otimes -\gamma^2+\xi_3\otimes -\sqrt{2}\,\alpha\,\gamma)\otimes \mathbbm{1},
\end{equation*}
\begin{equation*}
    \ad^{\#}(\xi_2)=(\xi_1\otimes -\gamma^{\ast 2}+\xi_2\otimes \alpha^{\ast 2}+\xi_3\otimes -\sqrt{2}\,\alpha^\ast\,\gamma^\ast)\otimes \mathbbm{1},
\end{equation*}
\begin{equation*}
    \ad^{\#}(\xi_3)=(\xi_1\otimes \sqrt{2}\,\alpha\,\gamma^\ast+\xi_2\otimes \sqrt{2}\,\alpha^\ast\,\gamma+\xi_3\otimes(\alpha\,\alpha^\ast-\gamma\,\gamma^\ast))\otimes \mathbbm{1}.
\end{equation*}
\begin{equation*}
    \ad^{\#}(\xi_4)=\xi_4\otimes \mathbbm{1}\otimes \mathbbm{1}.
\end{equation*}
Notice that the matrix associated with $\ad^{\#}$ is $$\left(
\begin{array}{cccc}  
\alpha^2 &  -\gamma^{\ast 2} & \sqrt{2}\,\alpha\,\gamma^\ast &0 \\
-\gamma^2 & \alpha^{\ast 2} & \sqrt{2}\,\alpha^\ast\,\gamma &0 \\
-\sqrt{2}\,\alpha\,\gamma & -\sqrt{2}\,\alpha^\ast\,\gamma^\ast & \alpha\alpha^\ast-\gamma\gamma^\ast & 0\\
0 & 0 &0 & 1
\end{array}
\right)\otimes \mathbbm{1},$$ which is equivalent to the one of Proposition  \ref{prop2.9} and therefore, we conclude that $$\ad^{\#}(\xi_j)\otimes \mathbbm{1}=\ad(\eta_j)$$ for $j=1,2,3,4$. Since $$\{{\eta}_1:={1\over \sqrt{2}} ({\beta}_1 -i\,{\beta}_2),\quad {\eta}_2:={1\over \sqrt{2}}({\beta}_1 +i\,{\beta}_2),\quad  {\eta}_3:={\beta}_3, \quad {\eta}_4:={\beta}_4  \},$$ we get $$\ad^{\#}(\varrho_j)\otimes \mathbbm{1}=\ad(\beta_j).$$ By definition, $$A_\class(\varrho_j)=A^\omega(\beta_j),\qquad  \f_\psi(\varrho_j)= \widehat{\f}_\psi(\beta_j) $$ and hence $$m_{\Omega^\bullet}\circ (A_\class\otimes \f_\psi)\circ \ad^{\#}=m_{\Omega^\bullet}\circ (A^\omega\otimes \widehat{\f}_\psi)\circ \ad.$$

On the other hand, under the isomorphism $$G\times \mathfrak{g}\;\longleftrightarrow\; TG,\qquad (C,v)\;\longleftrightarrow \; dL_C(v),$$ the Maurer--Cartan form is $$\mathrm{MC}:TG\longrightarrow \mathfrak{g},\qquad (C,v)\longmapsto v. $$ Thus, the pull--back of $\mathrm{MC}$ on $\mathfrak{g}^\#_\C$ is given by $$\mathrm{MC}^\#:\mathfrak{g}^\#_\C \longrightarrow \Gamma_G=(\SU(2)\otimes \U(1))\otimes \mathfrak{g}^\#_\C\cong (\SU(2)\otimes \U(1))\, \mathfrak{g}^\#_\C,\qquad \theta \longmapsto \theta \circ \mathrm{MC}= \theta$$ and therefore, the pull--back of $\mathrm{MC}\circ d\psi$ is $$(\mathrm{MC}\circ d\psi)^\#:\mathfrak{g}^\#_\C\longrightarrow \Omega^\bullet(B),\qquad \theta \longmapsto \theta\circ \mathrm{MC}\circ d\psi=  \theta\circ d\psi=\f_\psi(\theta).$$  By definition, we know $$(\mathrm{MC}\circ d\psi)^\#(\varrho_j)=\widehat{\f}_\psi(\beta_j) $$ for $j=1,2,3,4$, and the proposition follows.
\end{proof}

According to the previous proposition, the action of quantum gauge transformations of the form $\F_\psi$ on the space $\mathfrak{qpc}(\zeta_{4D})$ is the dualization, via pull--back, of the action  of the gauge group $\mathfrak{GG}$ associated with the principal bundle $\pi_\class:\R^4\times G\longrightarrow \R^4$ on its space of principal connections, with $G=SU(2)\times U(1)$.

\section{The Yang--Mills Theory of the Electroweak Interaction}

As in the other sections, we will start this section with a brief summary of the general theory in non--commutative geometry.

\subsection{General Theory}
In Durdevich's formulation of quantum principal bundles, there is a Yang--Mills theory, as the reader can check in reference \cite{sald2,saldcon}. In order to use this theory, in accordance with reference \cite{sald2}, it is necessary
    \begin{enumerate}
    \item A quantum principal $A$--bundle $\zeta=(P,B,\Delta_P)$ for which $B$ is a $C^\ast$--algebra or can be completed to one. 
    \item A bicovariant $\ast$--FODC $(\Gamma,d)$ for which the quantum dual Lie algebra $\mathfrak{qa}^\#$ is finite--dimensional. 
    \item An inner product $\langle-|-\rangle_{\mathfrak{qa}^\#}$ of $\mathfrak{qa}^\#$ for which the $A$--corepresentation $\ad$ is unitary. 
    \item A differential calculus on $\zeta$ using the previous bicovariant $\ast$--FODC for which the space of base forms $\Omega^\bullet(B)$ satisfies
    \begin{enumerate}
        \item There exists $n$ $\in$ $\N$ for which $$\Omega^n(B)=B\,\dvol, \qquad \Omega^{n+1}(B)=0,$$ where $\dvol$ $\in$ $\Omega^n(B)$ fulfills $$b\,\dvol=0\;\;\Longrightarrow\;\; b=0.$$
        \item There is a family of $B$--valued non--degenerated sesquilinear form (antilinear in the first coordinate) $$\{\langle-,-\rangle_\r:\Omega^k(B)\times\Omega^k(B)\longrightarrow B\}$$ where for $k=0$ 
\begin{equation*}
\begin{aligned}
\langle-,-\rangle_\r\;:\; B\times B&\longrightarrow B\\
(b_1\,,\,b_2)&\longmapsto b^\ast_1\,b_2
\end{aligned}
\end{equation*} 
and for $k=n$
\begin{equation*}
\begin{aligned}
\langle-,-\rangle_\r:\Omega^n(B)\times \Omega^n(B) \quad &\longrightarrow \quad B\\
(\;b_1\,\dvol\;,\;b_2\,\dvol\;)\;&\longmapsto \quad b^\ast_1\,b_2,
\end{aligned}
\end{equation*} 
such that $$\langle b\,\mu_1,\mu_2\rangle_\r=\langle \mu_1,b^\ast\,\mu_2\, \rangle_\r $$ for all $\mu_1$, $\mu_2$ $\in$ $\Omega^k(B)$, $b$ $\in$ $B$ and $ 1  \leq k< n$. We can combine these maps into a $B$--valued non--degenerated sesquilinear map form   $$\langle-,-\rangle_\r:\Omega^\bullet(B)\times \Omega^\bullet(B)\longrightarrow B$$ by postulating the orthogonality between elements of different degrees.

\item Let $s$ be a faithful state of $B$. We define a quantum integral on $B$ as 
\begin{equation*}
\begin{aligned}
\int_{B}: \Omega^{n}(B)\longrightarrow \C,\qquad 
b\,\dvol\longmapsto s(b) 
\end{aligned}
\end{equation*}
and we can interpret that a given quantum integral satisfies the Stokes theorem by explicitly defining
\begin{equation*}
\begin{aligned}
\int_{\partial B}: \Omega^{n-1}(B)\longrightarrow \C,\qquad
\mu\longmapsto \int_B d\mu.
\end{aligned}
\end{equation*}
We require the existence of a quantum integral for which $\Im(d)$ $\subseteq$ $\Ker\left(\displaystyle\int_B\right)$. 
\item There exists a linear isomorphism  $$\star_\r: \Omega^k(B)\longrightarrow \Omega^{n-k}(B) $$ and it satisfies $$\eta^\ast\,\mu=\langle \eta,\star^{-1}_\r\mu\rangle_\r\,\dvol.$$
    \end{enumerate}
\item For every qpc $\omega$ of $\zeta$, the operator 
\begin{equation}
    \label{ec.4.1.1}
d^{\widehat{S}^{\omega}}:=\widehat{\Upsilon}_{\mathfrak{qa}^\#}\circ \widehat{S}^{\omega} \circ \widehat{\Upsilon}^{-1}_{\mathfrak{qa}^\#}
\end{equation}
is adjointable or formally adjointable in (see equation (\ref{ec.3.19.5.2})) $$ \widehat{\Upsilon}_{\mathfrak{qa}^\#}(\Mor(\ad,\Delta_\H)^\dagger)$$  with respect to the non--degenerated sesquilinar form 
\begin{equation}
    \label{ec.4.2.1}
    \langle-|-\rangle^\bullet_\r: (\Mor(\ad,\Delta_P)\otimes_B \Omega^\bullet(B) )\times (\Mor(\ad,\Delta_P)\otimes_B \Omega^\bullet(B)) \longrightarrow \C
\end{equation}
given by $$\langle T_1\otimes_B \mu_1\mid T_2\otimes_B \mu_2 \rangle^\bullet_\r=\int_B \langle \mu_1,\,(T_1,T_2)_\r\,\mu_2\rangle_\r\,\dvol,$$ where $\mu_1$, $\mu_2$ $\in$ $\Omega^\bullet(B)$, $T_1$, $T_2$ $\in$ $\Mor(\ad,\Delta_P)$. Here, 
\begin{equation}
    \label{ec.4.4.1}
    (-,-)_\r: \Mor(\ad,\Delta_P)\times \Mor(\ad,\Delta_P)\longrightarrow B 
\end{equation}
is given by $$(T_1,T_2)_\r=\sum^m_{k=1} T_1(\theta_k)^\ast\,T_2(\theta_k),$$ with $\{\theta_k \}^m_{k=1}$ an orthonormal basis of $\mathfrak{qa}^\#$ with respect to $\langle-|-\rangle_{\mathfrak{qa}^\#}$, and the map 
\begin{equation}
    \label{ec.4.5.1}
    \widehat{\Upsilon}_{\mathfrak{qa}^\#}: \Mor(\ad,\Delta_\Hor)\longrightarrow \Mor(\ad, \Delta_P)\otimes_B \Omega^\bullet(B)
\end{equation}
is the right $\Omega^\bullet(B)$--module isomorphism defined in equation (87) of reference \cite{sald2}.
\end{enumerate}

Under the previous situation, we have

\begin{Definition}
\label{4.def1}
Considering the second part of equation (\ref{ec.3.14}), we define the right non--commutative geometrical Yang--Mills action as the association 
\begin{equation*}
\begin{aligned}
\qS_{\YM_\r}:\mathfrak{qpc}(\zeta) &\longrightarrow \R\\
\omega &\longmapsto -\dfrac{1}{2}\,||R^\omega||^{\bullet\,2}_\r:=-\dfrac{1}{2}\,\langle \widehat{\Upsilon}_{\mathfrak{qa}^\#}(R^\omega)\mid \widehat{\Upsilon}_{\mathfrak{qa}^\#}(R^\omega)\rangle^\bullet_\r.
\end{aligned}
\end{equation*}
\end{Definition}
\noindent In addition, in total analogy with the {\it classical} case, we get

\begin{Definition}
\label{4.def3}
 A stationary point of $\qS_{\YM_\r}$ is an element $\omega$ $\in$ $\mathfrak{qpc}(\zeta)$ such that for any $\lambda$ $\in$ $\overrightarrow{\mathfrak{qpc}(\zeta)}$ (see equation (\ref{ec.3.7}))  $$\left.\dfrac{d}{d t}\right|_{t=0}\qS_{\YM_\r}(\omega + t\,\lambda)=0$$ for $t$ $\in$ $\R$. Stationary points are also called Yang--Mills qpc's or non--commutative geometrical Yang--Mills fields. In terms of a traditional physical interpretation, they should be considered as {\it free--interaction gauge boson fields}. 
\end{Definition}

It is worth recalling that $\overrightarrow{\mathfrak{qpc}(\zeta)}$ is a $\R$--vector space, so the expression $t\,\lambda$ makes sense for $t$ $\in$ $\R$. Moreover, $\qS_\YM(\omega+t\,\lambda)$ depends polynomially on $t$, so we interpret the derivative with respect to $t$ of $\qS_\YM
$ as a formal derivative in the obvious way \cite{sald2}.

According to Theorem 5.16 of reference \cite{saldcon}, a qpc $\omega$ is a critical point of $\qS_{\YM_\r}$ if and only if 
\begin{equation}
    \label{ec.4.6.1}
    (d^{\widehat{\nabla}^\omega_{\mathfrak{qa}^\#}\star}-d^{\widehat{S}^\omega\star})(R^\omega):=(d^{\widehat{\nabla}^\omega_{\mathfrak{qa}^\#}\star}-d^{\widehat{S}^\omega\star})(\widehat{\Upsilon}_{\mathfrak{qa}^\#}(R^\omega))=0.
\end{equation}
Equation (\ref{ec.4.6.1}) is called the {\it right non--commutative geometrical Yang--Mills equation}. Here
\begin{equation}
    \label{ec.4.7.1}
d^{\widehat{\nabla}^\omega_{\mathfrak{qa}^\#}\star}:\Mor(\ad, \Delta_P)\otimes_B \Omega^\bullet(B)\longrightarrow  \Mor(\ad, \Delta_P)\otimes_B \Omega^\bullet(B)
\end{equation}
is the formal adjoint operator of 
\begin{equation}
    \label{ec.4.8.1}
d^{\widehat{\nabla}^\omega_{\mathfrak{qa}^\#}}:=\widehat{\Upsilon}_{\mathfrak{qa}^\#}\circ \widehat{D}^{\omega} \circ \widehat{\Upsilon}^{-1}_{\mathfrak{qa}^\#}  :\Mor(\ad, \Delta_P)\otimes_B \Omega^\bullet(B) \longrightarrow  \Mor(\ad, \Delta_P)\otimes_B \Omega^\bullet(B)
\end{equation}
with respect to  $\langle-|-\rangle^\bullet_\r$, and 
\begin{equation}
    \label{ec.4.9.1}
    d^{\widehat{S}^\omega\star}: \Mor(\ad, \Delta_P)\otimes_B\Omega^\bullet(B) \longrightarrow  \Mor(\ad, \Delta_P) \otimes_B\Omega^\bullet(B) 
\end{equation}
is the formal adjoint operator of $d^{\widehat{S}^{\omega}}$  with respect to $\langle-|-\rangle^\bullet_\r$. It is worth mentioning that, in light of Theorem 27 of reference \cite{sald2}, the operator $d^{\widehat{\nabla}^\omega_{\mathfrak{qa}^\#}\star_\r}$ is defined by
\begin{equation}
    \label{ec.4.10.1}
    d^{\widehat{\nabla}^\omega_{\mathfrak{qa}^\#}\star}:=(-1)^{k+1}(\id\otimes_B \star^{-1}_\r )\circ d^{\widehat{\nabla}^\omega_{\mathfrak{qa}^\#}} \circ (\id \otimes_B \star_\r)
\end{equation}
on elements of the space $$\Mor(\ad,\Delta_P)\otimes_B \Omega^{k+1}(B).$$ However, there is no a general form of the operator $d^{\widehat{S}^\omega\star}$ and this is why the formal adjointability of the operator $d^{\widehat{S}^\omega}$ is a condition for the theory.

Now, considering Definition \ref{3.def1} and equation (\ref{ec.3.20}), we have (\cite{sald2})
\begin{Definition}
\label{4.def2}
We define the right quantum gauge group of the Yang--Mills model as the group $$\qGG_{\YM_\r}:=\{\F \in \qGG\mid \qS_{\YM_\r}(\omega)=\qS_{\YM_\r}(\F^{\circledast} \omega)\;\mbox{ for all }\, \omega \in \mathfrak{qpc}(\zeta) \} \subseteq \qGG.$$
\end{Definition}

As noted in the previous section, in general, the quantum gauge group is larger than its counterpart in differential geometry. This motivates Definition~(\ref{4.def2}): the group $\qGG$ is sufficiently large that not all of its elements are symmetries of the non--commutative geometrical Yang--Mills actions. At first sight, this might appear to be a drawback of the theory, since \emph{not every quantum gauge transformation} is a symmetry of $\qS_{\YM_\r}$. However, as will be shown in this section, $\qGG_{\YM_\r}$ contains all the necessary elements.

The subscript $\r$ indicates that we are using the right $\Omega^\bullet(B)$--module structure of $$\Mor(\ad,\Delta_\Hor),$$ in accordance with the notation adopted in \cite{sald1,sald2}; and for the right structure of $\Mor(\ad,\Delta_\Hor)$, it is necessary to use the operators $\widehat{D}^\omega$, $\widehat{S}^\omega$ to develop the theory \cite{sald1,sald2}. Analogous definitions to those presented in this subsection can be formulated for the left $\Omega^\bullet(B)$--module structure of $\Mor(\ad,\Delta_\Hor)$; and in this situation, it is necessary to use the operators $D^\omega$, $S^\omega$ to develop the theory. See \cite{sald2} for details.

For the qpb $$\zeta_{4D},$$ the information about the curvature obtained from the right $\Omega^\bullet(B)$--module structure of $\Mor(\ad,\Delta_\Hor)$ coincides with the information about the curvature obtained from the left $\Omega^\bullet(B)$--module structure. Therefore, in this work we restrict ourselves to the right $\Omega^\bullet(B)$--module structure. However, for other corepresentations, this equivalence no longer holds, as we shall see in the next sections.

\subsection{The Model}

Consider the quantum principal $\SU(2)$--bundle $$\zeta_{4D}=(P:=B\otimes \SU(2),B,\Delta_P:=\id_B\otimes \Delta)$$ defined in equation (\ref{ec.3.50}), equipped with the differential calculus given in equation (\ref{ec.3.51}). 

It is clear that $\zeta_{4D}$ satisfy the conditions (1)--(3) listed at the beginning of this section. For the point (4), consider: 
\begin{enumerate}
    \item Point (4a) is satisfied by defining $$\dvol=dx^0\wedge dx^1\wedge dx^2\wedge dx^3.$$
    \item Point (4b) is satisfied by taking the complexification of the Minkowski metric of $\R^4$ on differential forms. For example
    $$\langle \sum^{3}_{\mu=0} f_\mu\, dx^\mu, \sum^3_{\nu=0} h_\nu\, dx^\nu\rangle_\r=\sum^3_{\mu,\nu=0}\eta^{\mu\nu}\,f^\ast_\mu\,h_\nu=f^\ast_0\,h_0-f^\ast_1\,h_1-f^\ast_2\,h_2-f^\ast_3\,h_3$$ in elements of $\Omega^1(B)$, with $\eta^{\mu\alpha}=\eta_{\mu\alpha}=\mathrm{diag}(+,-,-,-)$.
    \item Point (4c) is satisfied by defining the quantum integral as the usual integral on $\R^4$. In this case, we have $\displaystyle \Im(d)\subseteq \Ker\left(\int_B \right)$.
    \item Defining $\star_\r$ as the linear extension of the \emph{classical} Hodge star operator on $\R$--valued differential forms of $\R^4$ ensures that condition (4d) holds.
\end{enumerate}

However, verifying that point (5) is also satisfied is not trivial. In Section 5.2 of reference \cite{sald2} it is proven that the operator $d^{\widehat{S}^\omega}$ admits a formal adjoint operator $d^{\widehat{S}^\omega\star_\r}$, when the quantum base space is the Moyal--Weyl algebra. Nevertheless, taking the \emph{classical} limit ($\Lambda^{\mu\nu}\longrightarrow 0$), the Moyal--Weyl algebra turns into $B=C^\infty_\C(\R^4)$ and hence, the proof also holds for our case. Thus, point (5) is satisfied.

Let us define the inner product $$\langle-|-\rangle^\r_{\mathrm{Hom}}: \mathrm{Hom}_\C(\mathfrak{su}(2,\C)^\#_{4D},\Omega^\bullet(B))\times \mathrm{Hom}_\C(\mathfrak{su}(2,\C)^\#_{4D},\Omega^\bullet(B))\longrightarrow \C$$ given by
$$\langle L^{\tau_1} \mid L^{\tau_2} \rangle^\r_{\mathrm{Hom}}:=\sum^4_{k=1}\langle L^{\tau_1}(\theta_k) \mid L^{\tau_2}(\theta_k) \rangle_\r:=\sum^4_{k=1}\int_B \langle L^{\tau_1}(\theta_k) , L^{\tau_2}(\theta_k) \rangle_\r \, \dvol,$$
where  $\{\theta_k \}^4_{k=1}$ is any orthonormal basis of $\mathfrak{su}(2,\C)^\#_{4D}$. Here $$\mathrm{Hom}_\C(\mathfrak{su}(2,\C)^\#_{4D},\Omega^k(B))^\dagger$$ is the space of all hermitian linear maps between $\mathfrak{su}(2,\C)^\#_{4D}$ and $\Omega^k(B)$. In particular, by equation (\ref{ec.3.42.1.1}), we obtain
\begin{equation}
    \label{used8}
    F^\omega\;\in\; \mathrm{Hom}_\C(\mathfrak{su}(2,\C)^\#_{4D},\Omega^k(B))^\dagger.
\end{equation}

In accordance with Section 5.2 of reference \cite{sald2}, in our case, in terms of the map $A^\omega$, the right non--commutative geometrical Yang--Mills functional is given by 
\begin{equation*}
\begin{aligned}
    A^\omega\;\;\longmapsto \;\;-{1\over 2}\,\langle F^\omega\mid F^\omega \rangle^\r_{\mathrm{Hom}}&=-{1\over 2}\,\sum^4_{k=1}\int_B \langle F^\omega(\beta_k) , F^{\omega}(\beta_k) \rangle_\r \, \dvol
    \\
    &=-{1\over 2}\,\sum^4_{k=1} \int_B F^\omega(\beta_k)^\ast\wedge \star_\r(F^\omega(\beta_k))
    \\
    &=-{1\over 2}\,\sum^4_{k=1} \int_B F^\omega(\beta_k)\wedge \star(F^\omega(\beta_k)).
\end{aligned}
\end{equation*}
In terms of $A^j_\mu$, we have
\begin{equation}
    \label{ec.4.21.2}
    A^j_\mu\;\;\longmapsto \;\; -{1\over 4}\,\int_{\R^4}\,\sum^4_{j=1} F^{j}_{\mu\nu}\,F^{j\,\mu\nu}\,\dvol.
\end{equation}
Therefore, the right non--commutative Yang--Mills functional coincides with the {\it classical} Yang--Mills functional of the electroweak theory (\cite{notation}).

Furthermore, in light of Section 5.2 of \cite{sald2}, in order to determine the explicit form of the operator $d^{S^\omega\star}$, it is sufficient to compute the formal adjoint of the operator (see equations (\ref{ec.3.64}), (\ref{used4})) $$L^\tau\;\; \longmapsto\;\; 
\widehat{S}^{\omega}_B(L^\tau)=\wedge \circ S^{\omega}_B \circ \wedge$$ 
with respect to $\langle-|-\rangle^\r_{\mathrm{Hom}}$ in $\mathrm{Hom}_\C(\mathfrak{su}(2,\C)^\#_{4D},\Omega^k(B))^\dagger$.

It is worth mentioning that by defining the non--degenerate sesquilinear map
$$\langle-|-\rangle_\r:=\int_B \langle-,-\rangle_\r\,\dvol: \Omega^\bullet(B)\times \Omega^\bullet(B)\longrightarrow \C, $$
the following equations hold (\cite{sald2,diff1})
\begin{equation}
    \label{hodgeformula2}
    \langle \psi^\ast_2\,\psi_1\mid \star^{-1}_\r(\psi_3)\rangle_\r=(-1)^{ml}\langle \psi_1\mid \star^{-1}_\r(\psi_2\,\psi_3)\rangle_\r
\end{equation}
    for all $\psi_1$ $\in$ $\Omega^l(B)$, $\psi_2$ $\in$ $\Omega^m(B)$, $\psi_3$ $\in$ $\Omega^k(B)$ such that 
$l+m+k=n=4$. In addition, since $\Omega^\bullet(B)$ is graded--commutative, we immediately have that
\begin{equation}
    \label{hodgeformula3}
    \langle \psi_1\,\psi^\ast_3\mid \star^{-1}_\r(\psi_2)\rangle'_\r=(-1)^{mk}\langle \psi_1\mid \star^{-1}_\r(\psi_2\,\psi_3)\rangle_\r.
\end{equation}

\begin{Theorem}
    \label{the4.1}
    For the embedded differential $\Theta$ of equation (\ref{ec.2.98}), the operator $$\widehat{S}^{\omega\,\star}_B:=(-1)^{k+1}\,\star^{-1}_\r \circ \; \widehat{S}^\omega_B \circ  \star_\r $$ is the formal adjoint operator of $\widehat{S}^\omega_B$ in $\mathrm{Hom}_\C(\mathfrak{su}(2,\C)^\#_{4D},\Omega^k(B))^\dagger$.
\end{Theorem}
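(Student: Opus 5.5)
We have to show that for hermitian $L_1 \in \mathrm{Hom}_\C(\mathfrak{su}(2,\C)^\#_{4D},\Omega^{k}(B))^\dagger$ and $L_2$ of degree $k+1$,
\[
\langle \widehat{S}^\omega_B(L_1)\mid L_2\rangle^\r_{\mathrm{Hom}}=\langle L_1\mid (-1)^{k+1}\star^{-1}_\r\widehat{S}^\omega_B(\star_\r L_2)\rangle^\r_{\mathrm{Hom}} .
\]
Since $\widehat{S}^\omega_B$ raises degree by one, these are the only degrees that pair non-trivially.

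The first step is to make $\widehat{S}^\omega_B$ explicit on the orthonormal basis $\widetilde{\beta}^\#_{4D}$, using Proposition \ref{prop2.3.2} and Proposition \ref{prop2.3.3}. By equation (\ref{used4}),
\[
S^\omega_B(L)=\langle A^\omega,L\rangle-(-1)^k\langle L,A^\omega\rangle-(-1)^k[L,A^\omega].
\]
Because $\Omega^\bullet(B)$ is graded-commutative and $A^\omega$ has degree one, we have $A^\omega(\beta_a)L(\beta_b)=(-1)^kL(\beta_b)A^\omega(\beta_a)$. Applying this to the $\Theta$-terms, the symmetric pieces $\beta_a\otimes\beta_b+\beta_b\otimes\beta_a$ of $\Theta$ cancel. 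Only the antisymmetric $q_w$ parts and the $q_w^2/q_4$ symmetric part of $\Theta(\beta_4)$ survive, combined with the $c^T$-terms.

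The result should be that $S^\omega_B(L)(\beta_j)$ is a finite sum of terms $c^{j}_{ab}\,L(\beta_a)\wedge A^\omega(\beta_b)$ with explicit constants $c^j_{ab}$. Conjugating by $\wedge$, and using hermiticity of $L$ and $A^\omega$ together with $\beta_j^\ast=\beta_j$, gives $\widehat{S}^\omega_B(L)(\beta_j)=\sum \overline{c^j_{ab}}\,L(\beta_a)\wedge A^\omega(\beta_b)$ up to graded-commutativity signs. The key structural point is that each term is a right multiplication by a fixed real or complex 1-form $A^\omega(\beta_b)$ composed with a linear change of index.

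The second step computes the adjoint of the elementary operators $M^{j}_{ab}:L\mapsto L(\beta_a)\wedge A^\omega(\beta_b)$, placed in slot $j$. Using formula (\ref{hodgeformula3}) with $\psi_3=A^\omega(\beta_b)$ and $m=1$, moving a 1-form factor across the pairing $\langle-\mid\star^{-1}_\r(-)\rangle_\r$ costs a sign $(-1)^{k}$ and replaces $A^\omega(\beta_b)$ by $A^\omega(\beta_b)^\ast=A^\omega(\beta_b)$. It also replaces the coefficient $\overline{c^j_{ab}}$ by its conjugate.

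Summing over $j$ with the orthonormal basis, the adjoint of $\widehat{S}^\omega_B$ therefore becomes $(-1)^{k+1}\star^{-1}_\r\circ T\circ\star_\r$. Here $T(L')(\beta_a)=\sum_{j,b}c^{j}_{ab}\,L'(\beta_j)\wedge A^\omega(\beta_b)$, with the index roles transposed.

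The final and main step is to check that $T=\widehat{S}^\omega_B$ on degree $3-k$ hermitian maps. This requires the coefficient identity: the matrix $c^j_{ab}$, with $j$ and $a$ interchanged and conjugated, reproduces the original up to the degree-dependent sign. This amounts to an antisymmetry of the structure constants in the orthonormal basis. For the $q_w$ Levi-Civita parts it follows from the total antisymmetry of $\epsilon^{jab}$. For the $q_4$ terms of $c^T$, such as $-iq_4\beta_1\otimes\beta_4$ and $\frac{q_4}{2}\beta_2\otimes\beta_4$, and for the $q_w^2/q_4$ term of $\Theta(\beta_4)$, the identity must be verified by hand. I expect these non-classical pieces to be the obstacle, since they may seem to break the symmetry. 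My plan is first to check whether they cancel between the $\langle\,,\rangle$ and $[\,,]$ parts of $S^\omega_B$ after applying graded commutativity. Failing that, I will check that the hermiticity restriction to $\mathrm{Hom}^\dagger$ together with the $\wedge$ conjugation makes the conjugated coefficients match. This is exactly why the theorem is stated only on the hermitian subspace.
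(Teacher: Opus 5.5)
Your framework is the paper's: write $\widehat{S}^\omega_B$ explicitly in an orthonormal basis as right multiplication by 1-forms, move the 1-form across the pairing with (\ref{hodgeformula3}), and compare coefficients. The paper uses the basis $\beta^\#_{4D}=\{\eta_j\}$ rather than $\widetilde{\beta}^\#_{4D}$. The gap is that you stop at the step that carries the content, and your predictions about that step are wrong.

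Carrying out your first step gives the following. All of $\Theta(\beta_4)$ is symmetric, including the $q_w^2/q_4$ term since $\beta_a\otimes\beta_a$ is symmetric, so it contributes nothing. The antisymmetric $q_w$ part of each $\Theta(\beta_j)$ is exactly $-\tfrac12$ times the $q_w$ part of $c^T(\beta_j)$, so it cancels against the bracket term. Only the $q_4$ part of $c^T$ survives:
\[
S^\omega_B(L)(\beta_j)=(-1)^k\sum_{a=1}^4 M_{ja}\,L(\beta_a)\wedge A^\omega(\beta_4),\qquad
M=q_4\begin{pmatrix} i & -\tfrac12 & 0 & 0\\ \tfrac12 & i & 0 & 0\\ 0 & 0 & i & 0\\ 0 & 0 & 0 & 0\end{pmatrix}.
\]
Hence $\widehat{S}^\omega_B(L)(\beta_j)=(-1)^k\sum_a \overline{M_{ja}}\,L(\beta_a)\wedge A^\omega(\beta_4)$. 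Your coefficient identity then reduces to $\overline{M}=-M^{T}$, i.e.\ $M$ is anti-Hermitian, which is true.

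Neither of your fallback plans finds this. The $q_4$ terms do not cancel. Hermiticity of $L$ is also not what makes the coefficients match: the formula for $\widehat{S}^\omega_B$ above is linear in $L$ and holds for every $L$. What makes it work is the anti-Hermiticity of the $q_4$-block of $c^T$. In the paper's basis this block is diagonal, $\mathrm{diag}(\tfrac{iq_4}{2},\tfrac{3iq_4}{2},iq_4,0)$, so every coefficient is purely imaginary and the check is immediate.

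You also need to fix the sign in your second step. In (\ref{hodgeformula3}) the 1-form $A^\omega(\beta_4)$ plays the role of $\psi_3$, and $m=\deg \star_\r L_2(\beta_j)=3-k$. So moving it across the pairing costs $(-1)^{3-k}=(-1)^{k+1}$, not $(-1)^k$. This is not cosmetic: with $(-1)^k$ the required identity becomes $M^\dagger=M$, which is false, and the adjoint would come out as $(-1)^{k}\star^{-1}_\r\circ\widehat{S}^\omega_B\circ\star_\r$. Your displayed form $(-1)^{k+1}\star^{-1}_\r\circ T\circ\star_\r$ is the correct one, but it does not follow from the sign you state just before it.
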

\begin{proof}
The proof is very easy using the base $\beta^\#_{4D}$ of equation (\ref{ec.2.89}), Proposition \ref{prop2.6} and equations (\ref{ec.2.98})--(\ref{ec.2.102}). In fact,  let $L^\tau$ $\in$ $\mathrm{Hom}_\C(\mathfrak{su}(2,\C)^\#_{4D},\Omega^{k}(B))^\dagger$. Then, by equation (\ref{used4}) 
     \begin{eqnarray*}
         S^{\omega}_B(L^\tau)(\eta_1)&=&\langle A^\omega,L^\tau\rangle(\eta_1)
-(-1)^k\langle L^\tau,A^\omega\rangle(\eta_1)
-(-1)^k[L^\tau,A^\omega](\eta_1)\\
&=& (-1)^k\,{i\,q_4\over 2} \,L^\tau(\eta_1)\wedge A^\omega(\eta_4),
     \end{eqnarray*}
     \begin{eqnarray*}
         S^{\omega}_B(L^\tau)(\eta_2)&=&\langle A^\omega,L^\tau\rangle(\eta_2)
-(-1)^k\langle L^\tau,A^\omega\rangle(\eta_2)
-(-1)^k[L^\tau,A^\omega](\eta_2)\\
&=& (-1)^k\,{i\,3\,q_4\over 2} \,L^\tau(\eta_2)\wedge A^\omega(\eta_4),
     \end{eqnarray*}
     \begin{eqnarray*}
         S^{\omega}_B(L^\tau)(\eta_3)&=&\langle A^\omega,L^\tau\rangle(\eta_3)
-(-1)^k\langle L^\tau,A^\omega\rangle(\eta_3)
-(-1)^k[L^\tau,A^\omega](\eta_3)\\
&=& (-1)^k\,i\,q_4\,L^\tau(\eta_3)\wedge A^\omega(\eta_4),
     \end{eqnarray*}
and
     \begin{eqnarray*}
         S^{\omega}_B(L^\tau)(\eta_4)&=&\langle A^\omega,L^\tau\rangle(\beta_4)
-(-1)^k\langle L^\tau,A^\omega\rangle(\beta_4)
-(-1)^k[L^\tau,A^\omega](\beta_4)\\
&=& 0.
     \end{eqnarray*}
In this way, for every element $L^\tau$ $\in$ $\mathrm{Hom}_\C(\mathfrak{su}(2,\C)^\#_{4D},\Omega^{k}(B))^\dagger$ we have $$\widehat{S}^{\omega}_B(L^\tau)=\widehat{S^{\omega}_B(\widehat{L^\tau})}=\widehat{S^{\omega}_B(L^\tau)}=\ast\circ S^{\omega}_B(L^\tau)\circ \ast;$$ so
\begin{eqnarray*}
    \widehat{S}^{\omega}_B(L^\tau)(\eta_1)=(\ast\circ S^{\omega}_B(L^\tau)\circ \ast)(\eta_1)&=&(S^{\omega}_B(L^\tau)(\eta_2))^\ast
    \\
    &=&
    -(-1)^k\,{i\,3\,q_4\over 2} \,(L^\tau(\eta_2)\wedge A^\omega(\eta_4))^\ast
    \\
    &=&
    -\,{i\,3\,q_4\over 2} \,A^\omega(\eta_4)^\ast\wedge L^\tau(\eta_2)^\ast
    \\
    &=&
    -\,{i\,3\,q_4\over 2} \,A^\omega(\eta^\ast_4)\wedge L^\tau(\eta^\ast_2)
     \\
    &=&
    -\,{i\,3\,q_4\over 2} \,A^\omega(\eta_4)\wedge L^\tau(\eta_1)
    \\
    &=&
    (-1)^{k+1}\,{i\,3\,q_4\over 2} \,L^\tau(\eta_1)\wedge A^\omega(\eta_4)
\end{eqnarray*}
and similarly, we obtain
$$\widehat{S}^{\omega}_B(L^\tau)(\eta_2)=(-1)^{k+1}\,{i\,q_4\over 2} \,L^\tau(\eta_2)\wedge A^\omega(\eta_4),\qquad \widehat{S}^{\omega}_B(L^\tau)(\eta_3)=(-1)^{k+1}\,i\,q_4\,L^\tau(\eta_3)\wedge A^\omega(\eta_4),$$  $$\widehat{S}^{\omega}_B(L^\tau)(\eta_4)=0.$$

Let $L^\sigma$ $\in$ $\mathrm{Hom}_\C(\mathfrak{su}(2,\C)^\#_{4D},\Omega^{k+1}(B))^\dagger$. For $n=4$, we have 
\begin{eqnarray*}
         \widehat{S}^{\omega\,\star}_B(L^\sigma)(\eta_1)&=&(-1)^{k+1}\star^{-1}_\r(\widehat{S}^{\omega}_B(\star_\r L^\sigma)(\eta_1) )
\\
&=& 
(-1)^{k+1}\,(-1)^{n-k}\,\,{i\,3\,q_4\over 2}\star^{-1}_\r\left(\star_\r L^\tau(\eta_1)\wedge A^\omega(\eta_4)\right)
\\
&=&
(-1)^{n+1}\,\,{i\,3\,q_4\over 2}\star^{-1}_\r\left(\star_\r L^\tau(\eta_1)\wedge A^\omega(\eta_4)\right),
\end{eqnarray*}
\begin{eqnarray*}
         \widehat{S}^{\omega\,\star}_B(L^\sigma)(\eta_2)&=&(-1)^{k+1}\star^{-1}_\r(\widehat{S}^{\omega}_B(\star_\r L^\sigma)(\eta_2) )
\\
&=& 
(-1)^{k+1}\,(-1)^{n-k}\,\,{i\,q_4\over 2}\star^{-1}_\r\left(\star_\r L^\tau(\eta_2)\wedge A^\omega(\eta_4)\right)
\\
&=&
(-1)^{n+1}\,\,{i\,q_4\over 2}\star^{-1}_\r\left(\star_\r L^\tau(\eta_2)\wedge A^\omega(\eta_4)\right),
     \end{eqnarray*}
\begin{eqnarray*}
         \widehat{S}^{\omega\,\star}_B(L^\sigma)(\eta_3)&=&(-1)^{k+1}\star^{-1}_\r(\widehat{S}^{\omega}_B(\star_\r L^\sigma)(\eta_3) )
\\
&=& 
(-1)^{k+1}\,(-1)^{n-k}\,i\,q_4\,\star^{-1}_\r\left(\star_\r L^\tau(\eta_3)\wedge A^\omega(\eta_4)\right)
\\
&=&
(-1)^{n+1}\,i\,q_4\,\star^{-1}_\r\left(\star_\r L^\tau(\eta_3)\wedge A^\omega(\eta_4)\right)
     \end{eqnarray*}
and
\begin{eqnarray*}
        \widehat{S}^{\omega\,\star}_B(L^\sigma)(\eta_4)&=&(-1)^{k+1}\star^{-1}_\r(\widehat{S}^{\omega}_B(\star_\r L^\sigma)(\eta_4) )=0.
     \end{eqnarray*} 

\noindent Thus, by equation (\ref{hodgeformula3})
\begin{eqnarray*}
    \langle L^\tau \mid \widehat{S}^{\omega\,\star}_B(L^\sigma)\rangle^\r_{\mathrm{Hom}}&=&\sum^4_{j=1} \langle L^\tau(\beta_j)\mid \widehat{S}^{\omega\,\star}_B(L^\sigma)(\beta_j)\rangle_\r
    \\
    &=&
    (-1)^{n+1}\,\,{i\,3\,q_4\over 2}\langle L^\tau(\beta_1)\mid \star^{-1}_\r\left(\star_\r L^\sigma(\eta_1)\wedge A^\omega(\eta_4)\right)\rangle_\r
    \\
    &+&
    (-1)^{n+1}\,\,{i\,q_4\over 2}\langle L^\tau(\beta_2)\mid \star^{-1}_\r\left(\star_\r L^\sigma(\eta_2)\wedge A^\omega(\eta_4)\right)\rangle_\r
    \\
    &+&
    (-1)^{n+1}\,i\,q_4\,\langle L^\tau(\beta_2)\mid \star^{-1}_\r\left(\star_\r L^\sigma(\eta_3)\wedge A^\omega(\eta_4)\right)\rangle_\r
    \\
    &=&
    (-1)^{k}\,{i\,3\,q_4\over 2}\langle L^\tau(\beta_1)\wedge A^\omega(\eta_4)^\ast\mid \star^{-1}_\r\left(\star_\r L^\sigma(\eta_1)\right)\rangle_\r
    \\
    &+&
    (-1)^{k}\,{i\,q_4\over 2}\langle L^\tau(\beta_2)\wedge A^\omega(\eta_4)^\ast\mid \star^{-1}_\r\left(\star_\r L^\sigma(\eta_2)\right)\rangle_\r
    \\
    &+&
    (-1)^{k}\,i\,q_4\,\langle L^\tau(\beta_3)\wedge A^\omega(\eta_4)^\ast\mid \star^{-1}_\r\left(\star_\r L^\sigma(\eta_3)\right)\rangle_\r
    \\
    &=&
    \langle (-1)^{k+1}\,{i\,3\,q_4\over 2} L^\tau(\beta_1)\wedge A^\omega(\eta_4)\mid  L^\sigma(\eta_1) \rangle_\r
    \\
    &+&
    \langle(-1)^{k+1}\,{i\,q_4\over 2} L^\tau(\beta_2)\wedge A^\omega(\eta_4)\mid   L^\sigma(\eta_2) \rangle_\r
    \\
    &+&
    \langle (-1)^{k+1}\,i\,q_4 L^\tau(\beta_3)\wedge A^\omega(\eta_4)\mid   L^\sigma(\eta_3) \rangle_\r
    \\
    &=&
    \sum^4_{j=1} \langle \widehat{S}^{\omega}_B(L^\tau)(\beta_j)\mid L^\sigma(\beta_j)\rangle_\r
    \\
    &=&
    \langle \widehat{S}^{\omega}(L^\tau)\mid L^\sigma\rangle^\r_{\mathrm{Hom}}
\end{eqnarray*}
\end{proof}

Since $\widehat{D}^\omega=\wedge\circ D^\omega\circ \wedge$, by equation (\ref{ec.3.62}), it follows that
$$\widehat{D}^\omega(\tau)=(\widehat{D}^\omega_B(L^\tau)\otimes \id_{\SU(2)})\qquad \mbox{ with }\qquad \widehat{D}^\omega_B=\wedge\circ D^\omega_B\circ \wedge $$ and hence,  the operator of equation (\ref{ec.4.10.1}) in terms of the space $\mathrm{Hom}_\C(\mathfrak{su}(2,\C)^\#_{4D},\Omega^\bullet(B))^\dagger$ is given by (see equation (123) of reference \cite{sald2})
\begin{equation}
    \label{ec.4.23}
   L^\tau\;\;\longmapsto \;\; \widehat{D}^{\omega\star}_B(L^\tau):=(-1)^{k+1}\, \star^{-1}_\r\circ\, \widehat{D}^\omega_B\circ \star_\r.
\end{equation}

In this way, we define 
\begin{equation}
    \label{used5}
\widehat{DS}^\omega_B:=\widehat{D}^\omega_B-\widehat{S}^\omega_B:\mathrm{Hom}_\C(\mathfrak{su}(2,\C)^\#_{4D},\Omega^\bullet(B))\longrightarrow \mathrm{Hom}_\C(\mathfrak{su}(2,\C)^\#_{4D},\Omega^\bullet(B))
\end{equation}
and by Theorem \ref{the4.1} and Corollary 5.8 of reference \cite{sald2}, we get that in terms of the maps $A^\omega$, $F^\omega$, the equation 
\begin{equation}
    \label{ec.4.6.1.1}
    (d^{\widehat{\nabla}^\omega_{\mathfrak{qa}^\#}\star}-d^{\widehat{S}^\omega\star})(R^\omega)=0.
\end{equation}
turns into 
 \begin{equation}
     \label{used6}
     0=\widehat{DS}^{\omega\star}_B(F^\omega)
 \end{equation}
 with 
 \begin{equation}
     \label{used9}
     \widehat{DS}^{\omega\star}_B:= \widehat{D}^{\omega\star}_B-\widehat{S}^{\omega\star}_B=(-1)^{k+1}\star^{-1}_\r\circ \,\widehat{DS}^\omega_B\circ \star_\r. 
 \end{equation}
 However, recalling that $DS^\omega=\widehat{DS}^\omega$ in elements of $\Mor(\ad,\Delta_\Hor)^\dagger$ (\cite{sald2}), it follows that 
 \begin{equation}
     \label{used10}
     DS^\omega_B=\widehat{DS}^\omega_B
 \end{equation}
  in elements of $\mathrm{Hom}_\C(\mathfrak{su}(2,\C)^\#_{4D},\Omega^\bullet(B))^\dagger$. In addition, it is clear that $$\star_\r (L^\tau)\;\in \;\mathrm{Hom}_\C(\mathfrak{su}(2,\C)^\#_{4D},\Omega^\bullet(B))^\dagger$$ for every $L^\tau$ $\in$ $\mathrm{Hom}_\C(\mathfrak{su}(2,\C)^\#_{4D},\Omega^\bullet(B))^\dagger$. So, equation (\ref{used6}) can also be written by (see equation (\ref{twisted.13}))
 \begin{equation}
     \label{used7}
     0=DS^{\omega\star}_B(F^\omega) \qquad \mbox{ with }\qquad DS^{\omega\star}_B:=(-1)^{k+1}\star^{-1}_\r\circ \,DS^\omega_B\circ \star_\r.
 \end{equation}
Explicitly, equation (\ref{used6}) (and equation (\ref{used7})) implies that
\begin{equation}
    \label{ec.yang-mills}
    0=d\star_\r F^\omega-\langle A^\omega,\star_\r F^\omega\rangle+\langle \star_\r F^\omega,A^\omega\rangle.
\end{equation}
This is the (right and left) non--commutative geometrical Yang--Mills equation. At first glance, equation~(\ref{ec.yang-mills}) is not the \emph{classical} Yang--Mills equation of the electroweak theory. 
\begin{Proposition}
    \label{prop4.7}
     For the basis $\widetilde{\beta}^\#_{4D}$, the non--commutative geometrical Yang--Mills equation is given by $$d\star_\r F^\omega(\beta_1)-q_w\,(\star_\r F^\omega(\beta_2)\wedge A^\omega(\beta_3)-\star_\r F^\omega(\beta_3)\wedge A^\omega(\beta_2))=0,$$ $$d\star_\r F^\omega(\beta_2)-q_w\,(\star_\r F^\omega(\beta_3)\wedge A^\omega(\beta_1)-\star_\r F^\omega(\beta_1)\wedge A^\omega(\beta_3))=0,$$ $$d\star_\r F^\omega(\beta_3)-q_w\,(\star_\r F^\omega(\beta_1)\wedge A^\omega(\beta_2)-\star_\r F^\omega(\beta_2)\wedge A^\omega(\beta_1))=0,$$ $$d \star_\r F^\omega(\beta_4)=0.$$ Here, $\wedge$ denotes the usual wedge product of $\Omega^\bullet(B)$.
\end{Proposition}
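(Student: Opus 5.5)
The plan is to evaluate equation (\ref{ec.yang-mills}) on each element of the basis $\widetilde{\beta}^\#_{4D}$, exactly as was done for the Bianchi identity in Proposition \ref{prop3.5}. The only difference is that $F^\omega$ is replaced by the hermitian map $\star_\r F^\omega$, which takes values in $\Omega^2(B)$ (here $n=4$ and $k=2$). Since $\star_\r$ is linear and commutes with $\ast$, the map $L:=\star_\r F^\omega$ lies in $\mathrm{Hom}_\C(\mathfrak{su}(2,\C)^\#_{4D},\Omega^2(B))^\dagger$. The expression to compute is
$$dL(\beta_j)-\langle A^\omega,L\rangle(\beta_j)+\langle L,A^\omega\rangle(\beta_j),\qquad j=1,2,3,4,$$
where $\langle-,-\rangle$ is given by $m_{\Omega^\bullet}\circ(-\otimes-)\circ\Theta$.

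The key input is Proposition \ref{prop2.3.2}, which gives $\Theta$ on the basis $\widetilde{\beta}^\#_{4D}$. For $j=1$ it yields
\begin{eqnarray*}
\langle A^\omega,L\rangle(\beta_1)&=&{i\,q_4\over 2}\,(A^\omega(\beta_1)L(\beta_4)+A^\omega(\beta_4)L(\beta_1))-{q_w\over 2}\,(A^\omega(\beta_2)L(\beta_3)-A^\omega(\beta_3)L(\beta_2)),
\\
\langle L,A^\omega\rangle(\beta_1)&=&{i\,q_4\over 2}\,(L(\beta_1)A^\omega(\beta_4)+L(\beta_4)A^\omega(\beta_1))-{q_w\over 2}\,(L(\beta_2)A^\omega(\beta_3)-L(\beta_3)A^\omega(\beta_2)).
\end{eqnarray*}
Because $\Omega^\bullet(B)$ is graded--commutative and $L$ has even degree $2$, every $2$--form commutes with every $1$--form. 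The $q_4$--terms in the two brackets are therefore equal and cancel in the difference $-\langle A^\omega,L\rangle+\langle L,A^\omega\rangle$. The $q_w$--terms, on the other hand, add up. This produces $dL(\beta_1)-q_w\,(L(\beta_2)\wedge A^\omega(\beta_3)-L(\beta_3)\wedge A^\omega(\beta_2))$, which is the first claimed equation. The cases $j=2,3$ are identical after cyclically permuting the indices $1,2,3$, in view of the cyclic form of $\Theta(\beta_2)$ and $\Theta(\beta_3)$ in Proposition \ref{prop2.3.2}.

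For $j=4$ one uses
$$\Theta(\beta_4)={i\,q_4\over 2}\,\beta_4\otimes\beta_4+{i\,q_w^2\over 2q_4}\sum_{l=1}^3\beta_l\otimes\beta_l.$$
This expression is symmetric under the flip of tensor factors. Since $L(\beta_l)$ commutes with $A^\omega(\beta_l)$ by even degree, $\langle A^\omega,L\rangle(\beta_4)=\langle L,A^\omega\rangle(\beta_4)$, and only $d\star_\r F^\omega(\beta_4)=0$ survives.

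The main point to watch is the sign bookkeeping: the equation is applied at degree $k=2$, so the $(-1)^k$ in $DS^\omega_B$ is $+1$, which is what makes the $q_4$--terms cancel rather than add. This is in contrast with the degree--$2$ computation for the Bianchi identity, where $F^\omega$ played the same role. I would double--check this parity against equation (\ref{twisted.13}) before writing out the remaining cases.
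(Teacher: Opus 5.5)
Your proposal is correct and follows the paper's proof: evaluate equation (\ref{ec.yang-mills}) on $\widetilde{\beta}^\#_{4D}$ using Proposition \ref{prop2.3.2}, and use graded--commutativity (the $2$--forms $\star_\r F^\omega(\beta_j)$ commute with $1$--forms) so that the $q_4$--terms cancel and the $q_w$--terms double. The symmetry of $\Theta(\beta_4)$ then disposes of $j=4$. One small correction: your closing remark about a ``contrast'' with the Bianchi identity is off, since the computation in Proposition \ref{prop3.5} is also at degree $2$ and proceeds in exactly the same way, so it is a parallel case rather than a contrasting one.
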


\begin{proof}
    The statement easily follows from Proposition~\ref{prop2.3.2} together with the fact that $\Omega^\bullet(B)$ is graded--commutative. For example,
    \begin{eqnarray*}
        0&=&(d\star_\r F^\omega-\langle A^\omega,\star_\r F^\omega\rangle+\langle \star_\r F^\omega,A^\omega\rangle)(\beta_1)
        \\
        &=& 
        d\star_\r F^\omega(\beta_1)-\langle A^\omega,\star_\r F^\omega\rangle(\beta_1)+\langle \star_\r F^\omega,A^\omega\rangle(\beta_1)
        \\
        &=& 
        d\star_\r F^\omega(\beta_1)
        \\
        &-&
        {i\,q_4\over 2}\,(A^\omega(\beta_1)\wedge \star_\r F^\omega(\beta_4)+A^\omega(\beta_4)\wedge \star_\r F^\omega(\beta_1))
        \\
        &+&
        {q_w\over 2}\,(A^\omega(\beta_2)\wedge \star_\r F^\omega(\beta_3)- A^\omega(\beta_3)\wedge\star_\r F^\omega(\beta_2))
        \\
        &+&
        {i\,q_4\over 2}\,(\star_\r F^\omega(\beta_1)\wedge A^\omega(\beta_4)+\star_\r F^\omega(\beta_4)\wedge A^\omega(\beta_1))
        \\
        &-&
        {q_w\over 2}\,(\star_\r F^\omega(\beta_2)\wedge A^\omega(\beta_3)- \star_\r F^\omega(\beta_3)\wedge A^\omega(\beta_2))
        \\
        &=&
        d\star_\r F^\omega(\beta_1)-q_w\,(\star_\r F^\omega(\beta_2)\wedge A^\omega(\beta_3)- \star_\r F^\omega(\beta_3)\wedge A^\omega(\beta_2))
    \end{eqnarray*}
\end{proof}

The next theorem is the purpose of this section

\begin{Theorem}
    \label{4.prop2}
    The equations of Proposition \ref{prop4.7} can be written as (see equation (\ref{ec.3.85})) 
    \begin{equation}
        \label{ec.4.16}
        0=DS_\mu\,(F^{j\mu\nu})=\partial_\mu F^{j\,\mu\nu} + q_w\,\sum^4_{k,l=1} \,f^{jkl}A^k_\mu\,F^{l\,\mu\nu},
    \end{equation}
    where 
    \begin{equation}
    \label{ec.4.18}
    F^{j\,\alpha\beta}= \eta^{\mu\alpha}\eta^{\nu\beta} \,F^j_{\mu\nu},
\end{equation}
with $\eta^{\mu\alpha}=\eta_{\mu\alpha}=\mathrm{diag}(+,-,-,-)$ for $\mu,\nu=0,1,2,3$ and $j=1,2,3,4$.
\end{Theorem}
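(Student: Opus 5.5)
The plan is to translate each of the four form equations of Proposition \ref{prop4.7} into index notation, using the component expressions already fixed in the proof of Proposition \ref{prop3.4.1}. We write $A^\omega(\beta_j)=A^j_\mu\,dx^\mu$ and $F^\omega(\beta_j)=\tfrac12 F^j_{\mu\nu}\,dx^\mu\wedge dx^\nu$, as in equations (\ref{ec.3.46.1}) and (\ref{ec.3.46.3}). The first step is to compute $\star_\r F^\omega(\beta_j)$ with the Minkowski Hodge star. With the conventions of point (4) it equals $\tfrac14\,\epsilon_{\mu\nu\alpha\beta}\,F^{j\,\mu\nu}\,dx^\alpha\wedge dx^\beta$, up to a global sign that is irrelevant since every equation is homogeneous in $F$. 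Here the indices are raised as in equation (\ref{ec.4.18}).

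Next we compute the two types of terms that appear. For the exterior derivative,
$$d\star_\r F^\omega(\beta_j)=\tfrac14\,\epsilon_{\mu\nu\alpha\beta}\,\partial_\lambda F^{j\,\mu\nu}\,dx^\lambda\wedge dx^\alpha\wedge dx^\beta.$$
For the quadratic terms, $\star_\r F^\omega(\beta_l)\wedge A^\omega(\beta_k)$ produces $\tfrac14\,\epsilon_{\mu\nu\alpha\beta}\,F^{l\,\mu\nu}A^k_\lambda\,dx^\alpha\wedge dx^\beta\wedge dx^\lambda$. A $3$--form on $\R^4$ is determined by its contraction against $\epsilon$, equivalently by its Hodge dual $1$--form. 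We therefore apply $\star^{-1}_\r$, or equivalently wedge with $dx^\sigma$ and compare coefficients of $\dvol$. The key identity is the contraction $\epsilon_{\mu\nu\alpha\beta}\,\epsilon^{\lambda\alpha\beta\sigma}\propto\delta^{\lambda\sigma}_{\mu\nu}$ (antisymmetrized Kronecker delta). It turns $d\star_\r F^\omega(\beta_j)$ into a constant multiple of $\partial_\mu F^{j\,\mu\sigma}\,\star_\r dx_\sigma$. The same identity turns $\star_\r F^\omega(\beta_l)\wedge A^\omega(\beta_k)$ into the same constant multiple of $A^k_\mu F^{l\,\mu\sigma}\,\star_\r dx_\sigma$, with a sign coming from moving the $1$--form $dx^\lambda$ past the $2$--form. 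Since a $1$--form commutes with a $2$--form, the sign is $+$. Because both contractions carry the same constant and sign, the ratio between the derivative term and the quadratic terms is preserved.

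Substituting into the first equation of Proposition \ref{prop4.7} gives
$$\partial_\mu F^{1\,\mu\nu}-q_w\,(F^{2\,\mu\nu}A^3_\mu-F^{3\,\mu\nu}A^2_\mu)=0,$$
that is, $\partial_\mu F^{1\mu\nu}+q_w(\epsilon^{132}A^3_\mu F^{2\mu\nu}+\epsilon^{123}A^2_\mu F^{3\mu\nu})=0$. Cyclically, this gives the equations for $j=2,3$, and $j=4$ gives $\partial_\mu F^{4\mu\nu}=0$. These combine into equation (\ref{ec.4.16}) with $f^{jkl}$ as in equation (\ref{ec.3.46.10}). The identification with $DS_\mu$ then follows by the correspondence (\ref{ec.3.85}) of Remark \ref{rema3.2.1}, exactly as in Theorem \ref{prop.3.5.1}. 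In fact the computation is literally the one behind Theorem \ref{prop.3.5.1}, with $F$ replaced by $\star F$; this explains why $\widetilde F$ there becomes $F^{\mu\nu}$ here.

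The main obstacle is purely bookkeeping: fixing the Lorentzian Hodge star conventions, namely the sign of $\epsilon^{0123}$ versus $\epsilon_{0123}$ and the $(-1)$ factors from $\det\eta$. We must check that the derivative term and the $A\wedge\star F$ terms pick up identical factors, so that the relative sign $+q_w f^{jkl}$ comes out right. I would verify this once on a single component, for instance $\nu=0$, by direct expansion, rather than relying on general contraction identities.
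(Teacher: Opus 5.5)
Your proposal is correct and is essentially the paper's proof. Both arguments expand $A^\omega(\beta_j)$ and $F^\omega(\beta_j)$ in components, apply the Minkowski Hodge star, substitute into the four equations of Proposition \ref{prop4.7}, and read off the coefficients of the basis $3$--forms; the paper does this with an explicit table of $\star_\r(dx^\mu\wedge dx^\nu)$, while you use the $\epsilon$--contraction (wedging with $dx^\sigma$), and the sign bookkeeping you flag works out as you predict. Concretely, the paper's table gives exactly $\star_\r F^\omega(\beta_j)=\tfrac14\epsilon_{\mu\nu\alpha\beta}F^{j\,\mu\nu}dx^\alpha\wedge dx^\beta$ with $\epsilon_{0123}=1$, so that $d\star_\r F^\omega(\beta_j)\wedge dx^\sigma=\partial_\mu F^{j\,\mu\sigma}\,\dvol$ and $\star_\r F^\omega(\beta_l)\wedge A^\omega(\beta_k)\wedge dx^\sigma=A^k_\mu F^{l\,\mu\sigma}\,\dvol$.
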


\begin{proof}
   Let $$A^\omega: \mathfrak{su}(2,\C)^\#_{4D}\longrightarrow \Omega^1(B)$$ be a quantum gauge potential. Then, we know that 
   \begin{equation*}
A^\omega(\beta_j)=A_0\,dx^0+A_1\,dx^1+A_2\,dx^2+A_3\,dx^3+A_4\,dx^4\;\;\longleftrightarrow \;\; A^j_\mu=(A^j_0,A^j_1,A^j_2,A^j_3)
   \end{equation*}
   and 
    \begin{equation*}
    F^\omega(\beta_j)=\sum^3_{\mu, \nu=0} {1\over 2}\,F^j_{\mu\nu}\,dx^\mu\wedge dx^\nu,
\end{equation*}
where  $$F^j_{\mu\nu}=\partial_\mu A^j_\nu-\partial_\nu A^j_\mu+q_w\,\sum^4_{k,l=1} \,f^{jkl}\,A^k_\mu\,A^l_\nu.$$ 

On the other hand, the action of the star Hodge operator in $\Omega^2(B)$ is completely defined by
\begin{equation*}
    \begin{aligned}
    \star_\r (dx^0 \wedge dx^1)=-dx^2\wedge dx^3, \quad \star_\r (dx^0 \wedge dx^2)=dx^1\wedge dx^3, \quad \star_\r (dx^0 \wedge dx^3)=-dx^1\wedge dx^2,\\
    \star_\r (dx^1 \wedge dx^2)=dx^0\wedge dx^3, \quad \star_\r (dx^1 \wedge dx^3)=-dx^0\wedge dx^2, \quad \star_\r (dx^2 \wedge dx^3)=dx^0\wedge dx^1.
    \end{aligned}
\end{equation*}
Thus 
\begin{equation}
    \label{ec.4.19}
    \star_\r F^\omega(\beta_j)=\sum^3_{\mu, \nu=0} {1\over 2}\,F^j_{\mu\nu}\,\star_\r(dx^\mu\wedge dx^\nu).
\end{equation}
A long but simple substitution on the equations of Proposition \ref{prop4.7} shows that
\begin{equation}
    \label{ec.4.19.1}
    \begin{aligned}
        -DS_\mu\,( F^{j\,\mu0})\,dx^1\wedge dx^2\wedge dx^3+DS_\mu\,(  F^{j\,\mu1})\,dx^0\wedge dx^2\wedge dx^3& &\\ -\;DS_\mu\,(F^{j\,\mu2})\,dx^0\wedge dx^1\wedge dx^3-DS_\mu( F^{j\,\mu3})\,dx^0\wedge dx^1\wedge dx^2&=0,
    \end{aligned}
\end{equation}
where (see equation (\ref{ec.3.85}))
\begin{equation*}
    DS_\mu\,(F^{j\,\mu\nu})=\partial_\mu F^{j\,\mu\nu}+q_w\,\sum^4_{k,l=1} \,f^{jkl}\,A^k_\mu\,F^{l\,\mu\nu}
\end{equation*}
for $\mu,\nu=0,1,2,3$ and $j=1,2,3,4$. Therefore
$$DS_\mu\,(F^{j\,\mu\nu})=0.$$ 
\end{proof}

There are five remarks to be made concerning the last result.
\begin{Remark}
    \label{rema4.0}
    By equation (\ref{used10}), we can also write 
    \begin{equation}
    \label{ec.3.85.1}
    \widehat{DS}_\mu\,(F^{j\,\mu\nu}):=\partial_\mu F^{j\,\mu\nu}+q_w\,\sum^4_{k,l=1} \,f^{jkl}\,A^k_\mu\,F^{l\,\mu\nu}
\end{equation}
and hence $$\widehat{DS}_\mu\,(F^{j\,\mu\nu})=0. $$
\end{Remark}

\begin{Remark}
    \label{rema.0.5}
    As proved in Theorem 5.16 of reference \cite{saldcon}, equation (\ref{ec.4.6.1}) characterizes all the critical points of $\qS_{\YM_\r}$ and was obtained by considering the variation $$\omega\;\longmapsto\;\omega+t\lambda,$$ in analogy with the classical case. As we have seen above, the equation (\ref{ec.4.6.1}) turns into $$DS^{\omega\star}_B(F^\omega)=0.$$

    Therefore, it is incorrect to assume that the Yang--Mills equation in the non--commutative geometric setting is
\begin{equation}
\label{falseym}
D^{\omega\star}_B F^\omega = 0,
\end{equation}
as in the classical case. Indeed, not every critical point $\omega$ satisfies equation~(\ref{falseym}); rather, equation~(\ref{ec.4.6.1}) is the equation satisfied by every critical point $\omega$ (\cite{saldcon}). This is important since in the author's opinion, most of the researchers of the area still believe that equation (\ref{falseym}) is the correct one in non--commutative geometry, but this is not the case.
\end{Remark}

\begin{Remark}
    \label{rema4.1}
    It is worth emphasizing that, when expressed in terms of the maps $A^\omega$ and $F^\omega$, the non--commutative geometrical Yang--Mills equation coincides with the  classical Yang--Mills equation of the electroweak interaction  \cite{notation}. Theorem~\ref{4.prop2} shows that, for the quantum model of the electroweak theory that we aim to construct, the dynamical equations of free--interaction gauge boson fields are the correct ones. 
\end{Remark}
 
\begin{Remark}
    \label{rema4.2}
As in the previous subsection,   the most striking part is the fact that the action of $d^{\widehat{\nabla}^\omega\star}-d^{\widehat{S}^\omega\star}$ on $R^\omega$ in $\zeta_{4D}$ in terms of the maps $A^\omega$, $F^{\omega}$, 
coincides with the classical Yang--Mills equation of the electroweak theory.

Additionally, it follows from the classical case that the equation 
\begin{equation}
    \label{ec.4.26}
    \widehat{DS}_\mu\,(\widehat{DS}_\mu\,(F^{j\,\mu\nu}))=0
\end{equation}
is satisfied and it provides the corresponding conservation laws.
\end{Remark}

\begin{Remark}
    \label{rema4.3} 
    Let us consider the quantum gauge group of the non--commutative geometrical Yang--Mills functional, denoted by $\qGG_{\YM_\r}\subseteq \qGG$. By Theorem \ref{prop3.6} we know that $\mathfrak{GG}\subseteq \qGG$ and by Proposition \ref{propgauge}, the action of $\mathfrak{qGG}$ on $\mathfrak{qpc}(\zeta_{4D})$ coincides with the action of $\mathfrak{GG}$ on the space of classical principal connection of $$\pi_\class:\R^4\times G\longrightarrow \R^4,\qquad G=SU(2)\times U(1).$$ In addition, by equation  (\ref{ec.4.21.2}), we know that the non--commutative geometrical Yang--Mills functional agrees with its classical counterpart. Finally, since every element of $\mathfrak{GG}$ is a symmetry of the classical Yang--Mills functional, it follows that
\begin{equation}
    \label{ec.4.29}
    \mathfrak{GG}\subseteq \qGG_{\YM_\r}.
\end{equation}
In other words, every element of $\mathfrak{GG}$ (via the embedding $\iota$ of Theorem \ref{prop3.6}) is a symmetry of the non--commutative geometrical Yang--Mills functional.

Of course, the inclusion in equation~(\ref{ec.4.29}) may be proper. However, if there exists an element $\F\in\qGG$ such that $\F\in\qGG_{\YM_\r}$ but $\F\notin\mathfrak{GG}$, then $\F$ does not arise from a $G=SU(2)\times U(1)$ symmetry. That is, $\F$ does not correspond to a principal bundle automorphism of $\pi_{\class}:\mathbb{R}^4\times G\longrightarrow \mathbb{R}^4$ and is therefore irrelevant in the context of differential geometry and hence, irrelevant under the usual physical interpretation.
\end{Remark}

\begin{Remark}
    \label{rema4.5}
    There is a technical point that needs to be addressed. In physics, all fields are required to vanish at infinity in order to ensure finite energy. Mathematically, this requirement translates into the fact that all relevant maps must be $L^2$--integrable. In this section and in the previous one, we have omitted this detail and worked on the full space $\Omega^\bullet(B)$ in order to avoid introducing additional notation.

In physics, for instance in electromagnetic theory, it is common to use non--$L^2$ functions as toy models for pedagogical purposes. Moreover, in a restricted region of space, such as a laboratory, it is also standard practice to employ non--$L^2$ functions to model physical fields. For example, one often assumes that the electromagnetic field is constant in all the space (although, in reality, it is only constant in a region in the laboratory) and carries out all calculations under this assumption. Thus, non--$L^2$ functions can indeed be useful in appropriate contexts. However, in a physically meaningful theory, all fields are required to be $L^2$--integrable.

In our framework, this issue can be easily addressed by restricting the image of $A^\omega$ (who is a linear map with finite--dimensional domain) to the $L^2$--integrable elements of $\Omega^\bullet(B)$ and mutatis mutandis, all the results of the theory remain valid under this restriction. In other words, it suffices to require that $A^\omega(\beta_j)$ be a $\C$--valued differential $1$--form vanishing at infinity and the condition stated in this remark is satisfied, exactly as in differential geometry.
\end{Remark} 

Summarizing the results of Section 3 and the present section, we conclude that the quantum principal $\SU(2)$--bundle $\zeta_{4D}$ of equation (\ref{ec.3.50}) together with the differential calculus of equation (\ref{ec.3.51}), accurately reproduces all the physics of the pure gauge boson sector of the electroweak interaction.

\section{Yang--Mills--Higgs Theory}

In the \emph{classical} case, the Higgs field is a real scalar field whose interaction with the gauge boson fields $A^j_\mu$ gives them their mass and breaks the symmetry. This field is a section of the associated vector bundle of $\pi_\class:\R^4\times G\longrightarrow \R^4$ for a certain representation of $G$.

\subsection{General Theory}

We will start this section introducing the basics of the theory of associated quantum vector bundles (abbreviated associated qvb) in Durdevich's formulation (\cite{sald1,sald2}). In this section, we shall restrict our attention exclusively to qpb’s for which the quantum base space $B$ is a $C^\ast$--algebra or can be completed into a one. As the reader can verify in \cite{sald1,saldfun}, under this hypothesis it can be proven that every qpb is strong in the sense of Section 5.4.2 of reference \cite{libro}. In other words, under this hypothesis, every quantum principal connection of the bundle is strong (\cite{libro}).     

Let $$\zeta=(P,B,\Delta_P)$$ be a quantum principal $A$--bundle with a differential calculus. Also, let $$\delta^V:V\longrightarrow V\otimes A$$ be a (unitary) finite--dimensional corepresentation of $A$. Then, according to Section 3 of reference \cite{sald1}, the space 
\begin{equation}
    \label{ec.5.1}
    \Mor(\delta^V,\Delta_P):=\{T:V \longrightarrow  P\mid T \mbox{ is linear and }
    (T\otimes \id_A)\circ \delta^V=\Delta_P \circ T\}.
\end{equation}
is a finitely generated projective right $B$--module. Hence, in accordance with the Serre--Swan theorem, we will refer to 
\begin{equation}
    \label{ec.5.2.1}
    E^V_\r:=\Mor(\delta^V,\Delta_P)
\end{equation}
as the associated right quantum vector bundle of $\zeta$ with respect to $\delta^V$ (abbreviated associated right qvb). 

On the other hand, there exists a right $\Omega^\bullet(B)$--module isomorphism (see Section 3 of reference \cite{sald1})
\begin{equation}
    \label{ec.5.3}
 \widehat{\Upsilon}_{V}: \Mor(\delta^V,\Delta_\Hor) \longrightarrow E^V_\r \otimes_B \Omega^\bullet(B).
\end{equation}

The space $$\Mor(\delta^V,\Delta_\Hor):=\{\tau:V \longrightarrow  \Hor^\bullet\,P\mid \tau \mbox{ is linear and }
    (\tau\otimes \id_A)\circ \delta^V=\Delta_\Hor \circ \tau\}$$  will be interpreted as basic quantum differential forms of type $\delta^V$ of $P$, and the space $$E^V_\r \otimes_B \Omega^\bullet(B)$$ will be interpreted as right qvb--valued differential forms of $B$.

Let $\omega$ be a qpc. According to \cite{stheve}, the dual covariant derivative satisfies \begin{equation*}
    \begin{aligned}
        \widehat{D}^\omega  \,\in\, \Mor(\Delta_\Hor, \Delta_\Hor):=\{f:\; &\Hor^\bullet\,P\longrightarrow \Hor^\bullet\,P\\
        &\mid f \mbox{ is linear and }
    (f\otimes \id_A)\circ \Delta_\Hor=\Delta_\Hor \circ f \}.
    \end{aligned}
\end{equation*}
and hence, we have
\begin{equation}
    \label{ec.5.5.1}
\widehat{D}^\omega:\Mor(\delta^V,\Delta_\Hor)\longrightarrow \Mor(\delta^V,\Delta_\Hor), \qquad \tau\longmapsto \widehat{D}^\omega(\tau),
\end{equation}
where $\widehat{D}^\omega(\tau)(v)=\widehat{D}^\omega(\tau(v))=(D^\omega(\tau(v)^\ast))^\ast$ for all $v$ $\in$ $V$.

In this way, we define the gauge quantum linear connection on $E^V_\r$ as 
\begin{equation}
    \label{ec.5.6.1}
    \widehat{\nabla}^\omega_V: E^V_\r\longrightarrow  E^V_\r\otimes_B\Omega^1(B) , \qquad T\longmapsto \widehat{\Upsilon}_V(\widehat{D}^\omega(T)).
\end{equation}
It is clear that $\widehat{\nabla}^\omega_V$ is linear and by equation (\ref{2.f32.1}), it satisfies the right Leibniz rule.

Furthermore, we can extend $\widehat{\nabla}^\omega_V$ to 
\begin{equation}
    \label{ec.5.7.1}
    d^{\widehat{\nabla}^\omega_V}: E^V_\r\otimes_B\Omega^1(B) \longrightarrow E^V_\r\otimes_B\Omega^1(B) 
\end{equation}
  by $$d^{\widehat{\nabla}^\omega_V}(T\otimes_B \mu)=\widehat{\nabla}^\omega_V(T)\,\mu +T\otimes_B d\mu\, $$ for all $\mu$ $\in$ $\Omega^k(B)$, and according to Section 3 of reference \cite{sald1}, we have
\begin{equation}
    \label{ec.5.8.2}
    d^{\widehat{\nabla}^\omega_V}=\widehat{\Upsilon}_V\circ \widehat{D}^\omega\circ \widehat{\Upsilon}^{-1}_V.
\end{equation}

According to Section 3.2 of reference \cite{sald1}, the map  $$(-,-)_\r:E^V_\r\times E^V_\r\longrightarrow B, \qquad (T_1,T_2)\longmapsto \sum^n_{k=1}T_1(v_k)^\ast\,T(v_k),$$
with $\{ v_k\}^m_{k=1}$ an orthonormal basis of $V$, is a $B$--valued inner product. Therefore, there is non--degenerated sesquilinear form
\begin{equation}
    \label{ec.5.8.9.1}
    \langle-|-\rangle^\bullet_\r:(E^V_\r\otimes_B \Omega^\bullet(B) )\times (E^V_\r\otimes_B \Omega^\bullet(B) )\longrightarrow \C,
\end{equation}
given by $$\langle T_1\otimes_B \mu_1\mid T_2\otimes_B \mu_2\rangle^\bullet_\r=\int_B \langle\mu_1,(T_1,T_2)_\r\,\mu_2\rangle_\r\,\dvol.$$

As in the previous section, we will follow the theory presented in reference \cite{sald2} for the non--commutative geometrical Yang--Mills--Higgs theory. However, in order to perfectly coupled the \emph{quantum} case with the \emph{classical} case, we will change a little the theory of \cite{sald2} for the potential $\mathcal{V}$.

\begin{Definition}
\label{6.1.12} 
Consider the map $${\mathcal V}:B\longrightarrow B,\qquad t\longmapsto -a_1\,b+a_2\,b^2$$ for $a_1,a_2\geq 0$ and $\delta^V$ be a (unitary) finite--dimensional $A$--corepresentation. We define the right non--commutative geometrical Yang--Mills scalar matter action as the association \begin{equation*}
\qS_{\YMSM_\r}: \mathfrak{qpc}(\zeta) \times E^V_\r \longrightarrow \R
\end{equation*}
given by $$\qS_{\YMSM_\r}(\omega,T)=\qS_{\YM_\r}(\omega)+\qS_{\SM_\r}(\omega,T),$$ where
\begin{equation*}
\qS_{\SM_\r}(\omega,T)=|| \widehat{\nabla}^{\omega}_{V}T||^{\bullet \,2}_\r-\int_B{\mathcal V}_\r(T)\,\dvol,
\end{equation*}
with ${\mathcal V}_\r(T):={\mathcal V}\circ (T,T)_\r$. 
\end{Definition}

In total analogy with the \emph{classical} case, we get

\begin{Definition}
\label{6.1.14}
A  stationary point of $\qS_{\YMSM_\r}$ is a pair $(\omega,T)$ $\in$ $\mathfrak{qpc}(\zeta)\times E^V_\r$ such that for any $(\lambda, U)$ $\in$ $\overrightarrow{\mathfrak{qpc}(\zeta)} \times E^V_\r$, we have  $$\left.\dfrac{d}{d t}\right|_{t=0}\qS_{\YMSM_\r}(\omega+t\,\lambda ,T)=\left.\dfrac{\partial}{\partial z}\right|_{z=0}\qS_{\YMSM_\r}(\omega,T+z\,U)=0.$$ Stationary points are also called right non--commutative geometrical Yang--Mills Scalar Matter fields and in terms of a traditional physical interpretation, they can be interpreted as {\it scalar matter (or antimatter fields) coupled to gauge boson fields in the presence of the potential ${\mathcal V}$}.
\end{Definition}

As in the previous section, $\qS_\YMSM(\omega+t\,\lambda,T_1,T_2)$ depends polynomially on $t$, and in the same way, $\qS_\YMSM(\omega,T_1+z\,U_1,T_2+z\,U_2)$ depends polynomially on $z$. Hence, we interpret both derivatives of $\qS_\YMSM$ as formal derivatives in the obvious way.

Consider 
\begin{equation}
    \label{ec.5.8.1}
    \mathcal{V}':B\longrightarrow B,\qquad b\longmapsto -a_1\,\mathbbm{1}+2\,a_2\,b
\end{equation}
the formal derivative of $\mathcal{V}$. Then, \emph{mutatis mutandis}, in accordance with Theorem 4.13 of reference \cite{sald2},  $(\omega,T)$ is a critical point of $\qS_{\YMSM_\r}$ if and only if the following equations hold
\begin{equation}
    \label{ec.5.9.2}
\mathrm{Re}\left(\langle(d^{\widehat{\nabla}^{\omega}_{\mathfrak{qa}^\#}\star}-d^{\widehat{S}^{\omega}\star})R^{\omega}\mid  \widehat{\Upsilon}_{\mathfrak{qa}^\#}(\lambda))\rangle^\bullet_\r\right)=\mathrm{Re}\left(\langle \widehat{\Upsilon}_{V}(\widehat{K}^{\lambda}(T))\,|\,\widehat{\nabla}^{\omega}_{V}T\rangle^\bullet_\r\right)
\end{equation}
for all $\lambda$ $\in$ $\mathfrak{qpc}(\zeta)$, and
\begin{equation}
    \label{ec.5.10.2}
\widehat{\nabla}^{\omega\,\star}_{V}\left(\widehat{\nabla}^{\omega}_{V}\,T\right)-{\mathcal V}'_\r(T)\,T=0.
\end{equation}
Here, 
    \begin{equation}
    \label{ec.5.11.1.2}
          \widehat{K}^{\lambda}: \Mor(\delta^V,\Delta_\Hor)\longrightarrow  \Mor(\delta^V,\Delta_\Hor), \qquad
          \tau  \longmapsto \widehat{K}^{\lambda}(\tau)
    \end{equation}
    is given by $$\widehat{K}^{\lambda}(\tau)(v)= (K^\lambda(\tau(v)^\ast))^\ast,\qquad K^\lambda(\tau(v))=-(-1)^k (\tau^{(0)}(v))\,\lambda(\pi(\tau^{(1)}(v))) $$ for all $v$ $\in$ $V$, if $\Im(\tau)\subseteq \Hor^k P$ and $\Delta_\Hor(\tau(v))=\tau^{(0)}(v)\otimes \tau^{(1)}(v)$. Moreover, 
    \begin{equation}
        \label{ec.5.11.2}
        \widehat{\nabla}^{\omega\,\star}_{V}=-(\id \otimes \star^{-1}_\r)\circ d^{\nabla^\omega_V}\circ (\id\otimes \star_\r):E^V_\r\otimes_B \Omega^1(B) \longrightarrow E^V_\r 
    \end{equation}
    is the formal adjoint operator of $\widehat{\nabla}^\omega_V$ with respect to $\langle-|-\rangle^\bullet_\r$. Equations (\ref{ec.5.9.2}), (\ref{ec.5.10.2}) are called the right non--commutative geometrical Yang--Mills scalar matter equations.

Recalling that $\mathfrak{qpc}(\zeta)$ is an affine space modeled by $\overrightarrow{\mathfrak{qpc}(\zeta)}$, for every $\omega$ $\in$ $\mathfrak{qpc}(\zeta)$ and every $\lambda$ $\in$ $\overrightarrow{\mathfrak{qpc}(\zeta)}$, we obtain $\omega+\lambda$ $\in$ $\mathfrak{qpc}(\zeta)$ and hence
$$ \widehat{D}^{\omega+\lambda}=\widehat{D}^\omega+\widehat{K}^\lambda;$$ which implies

\begin{equation}
    \label{2.f30.1.2}  \widehat{\nabla}^{\omega+\lambda}_V=\widehat{\nabla}^\omega_V+\widehat{\Upsilon}_V(\widehat{K}^\lambda).
\end{equation}
Finally, we have (\cite{sald2})
\begin{Definition}
    \label{defgauge2right}
    We define the quantum gauge group of the right non--commutative geometrical Yang-Mills--Matter model as the group $$\qGG_{\YMSM_\r}:=\{\F \in \qGG\mid \qS_{\YMSM_\r}(\omega,T)=\qS_{\YMSM_\r}(\F^\circledast \omega,\widehat{\mathbf{A}}_\F(T))\;\; \mbox{ for all } $$ $$(\omega,T)\, \in \, \mathfrak{qpc}(\zeta)\times E^V_\r \} \subseteq \qGG,$$ where 
    \begin{equation}
        \label{something}
        \widehat{\mathbf{A}}_\F: E^{V}_\r\longrightarrow E^{V}_\r,\qquad \widehat{\mathbf{A}}_\F(T)(v)=\F(T(v)^\ast)^\ast.
    \end{equation}
\end{Definition}

\subsection{The Model}

Consider the quantum principal $\SU(2)$--bundle $$\zeta_{4D}=(P:=B\otimes \SU(2),B,\Delta_P:=\id_B\otimes \Delta)$$ defined in equation (\ref{ec.3.50}), equipped with the differential calculus given in equation (\ref{ec.3.51}).

Consider the Hilbert space $$(W:=\C^2,\langle-|-\rangle_W),$$ with $\langle-|-\rangle_W$ the canonical inner product of $W$ (antilinear in the first coordinate). The standard $\SU(2)$--representation $\delta_W$ on $W$, i.e., the map
\begin{equation}
    \label{ec.5.26}
    \delta_W: SU(2)\times W\longrightarrow W, \qquad (A,\bar{w})\longmapsto A\cdot \bar{w}
\end{equation}
 induces by means of the pull--back, the (right) $\SU(2)$--corepresentation
\begin{equation}
    \label{ec.5.32}
    \delta^W:W\longrightarrow W\otimes \SU(2)
\end{equation}
given by $$\delta^W(\bar{e}_1)=\bar{e}_1\otimes \alpha+\bar{e}_2\otimes \gamma, \qquad  \delta^W(\bar{e}_2)=-\bar{e}_1\otimes \gamma^\ast +\bar{e}_2\otimes \alpha^\ast.$$ 
Here, $$\beta_W:=\{\bar{e}_1=(1,0), \,\bar{e}_2=(0,1)\} $$ is the canonical linear basis of $W$ and notice that the inner product $\langle-|-\rangle_W$ makes $\delta^W$ unitary, since $\delta_W$ is unitary.

According to Section 5 of reference \cite{sald1}, the right associated qvb (see equation (\ref{ec.5.2.1}))
\begin{equation}
    \label{ec.5.32.1}
    E^W_\r:=\Mor(\delta^W,\Delta_P)=\{\Phi:W \longrightarrow  P\mid \Phi \mbox{ is linear and }
    (\Phi\otimes \id_{\SU(2)})\circ \delta^W=\Delta_P \circ \Phi\}
\end{equation}
is trivial, i.e., it is a free finite--dimensional right $B$--module. In addition, by defining
\begin{equation}
    \label{ec.5.32.2}
    g_{11}:=\alpha,\qquad g_{21}:=\gamma,\qquad g_{12}:=-\gamma^\ast,\qquad g_{22}:=\alpha^\ast,
\end{equation}
the linear maps 
\begin{equation}
    \label{ec.5.32.3}
    \Phi^H_i:W\longrightarrow P\qquad \mbox{ such that }\qquad \Phi^H_i(\bar{e}_j)=\mathbbm{1}\otimes g_{ij}
\end{equation}
for $i=1,2$, forms a right $B$--basis of $E^W_\r$, as it is proven in Proposition 5.1 of reference \cite{sald1}. For now on, elements of $E^W_\r$ will be called non--commutative geometrical Higgs fields. Furthermore,   we will refer to the right non--commutative geometrical Yang--Mills scalar matter action of $\zeta_{4D}$ for $\delta^W$ (see Definition \ref{6.1.12}) as the  non--commutative geometrical Yang--Mills-Higgs action.

Thus, every non--commutative geometrical Higgs field $\Phi$ $\in$ $E^W_\r$ is of the form
\begin{equation}
    \label{ec.5.41}
    \Phi=\sum^{2}_{k=1} \phi_k \,\Phi^H_k\quad \mbox{ with }\quad \phi_1,\,\phi_2 \;\in\; B=C^\infty_\C(\R^4).
\end{equation}
Furthermore, in light of Section 3 of reference \cite{sald1}, 
the right $\Omega^\bullet(B)$--module isomorphism  of equation (\ref{ec.5.3}) for $\delta^W$ is given by
\begin{equation}
    \label{ec.5.42}
    \widehat{\Upsilon}_W(\Lambda)= \sum^2_{k=1} \Phi^H_k\otimes_B \xi^\Lambda_k \qquad \mbox{ with }\qquad \xi^\Lambda_k:=\sum^2_{i=1} \Lambda(\bar{e}_i)(\mathbbm{1}\otimes g^\ast_{ki})\,\in\,\Omega^\bullet(B)
\end{equation}
for all $$\Lambda\, \in\, \Mor(\delta^W,\Delta_\H):=\{\Lambda: W\longrightarrow \Hor^\bullet\,P\mid \Lambda \mbox{ is linear and }
    (\Lambda\otimes \id_{\SU(2)})\circ \delta^W=\Delta_\Hor \circ \Lambda \}.$$

\begin{Theorem}
    \label{teorema1}
    The gauge qlc $\widehat{\nabla}^\omega_W$ is given by (see equation (\ref{ec.5.6.1})) $$\widehat{\nabla}^\omega_W(\Phi)=\widehat{\Upsilon}_W(\widehat{D}^\omega(\Phi))=\sum^2_{k=1} \Phi^H_k\otimes_B \xi^{\widehat{D}^\omega\Phi}_k$$ for all $\Phi$ $\in$ $E^W_\r$, where
    \begin{equation}
    \label{ec.5.32.5}
    \begin{aligned}
        \xi^{\widehat{D}^\omega\Phi}_1&=
    d\phi_1-{i\,q_w\over 2}A^\omega(\beta_1)\,\phi_2-{q_w\over 2}A^\omega(\beta_2)\,\phi_2-{i\,q_w\over 2}A^\omega(\beta_3)\,\phi_1-{i\,q_4\over 2}A^\omega(\beta_4)\,\phi_1,
    \\
    \xi^{\widehat{D}^\omega\Phi}_2&=
    d\phi_2 -{i\,q_w\over 2}A^\omega(\beta_1)\,\phi_1+{q_w\over 2}A^\omega(\beta_2)\,\phi_1+{i\,q_w\over 2}A^\omega(\beta_3)\,\phi_2-{i\,q_4\over 2}A^\omega(\beta_4)\,\phi_2. 
    \end{aligned}
\end{equation}
\end{Theorem}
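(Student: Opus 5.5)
We compute $\widehat{D}^\omega$ directly on the basis vectors $\bar e_1,\bar e_2$ and then read off the components with the formula (\ref{ec.5.42}) for $\widehat{\Upsilon}_W$. Write $\Phi=\sum_k\phi_k\,\Phi^H_k$, so that $\Phi(\bar e_j)=\sum_k \phi_k\otimes g_{kj}\in B\otimes\SU(2)=P$, and $\Delta_\Hor(\Phi(\bar e_j))=\sum_{k,l}\phi_k\otimes g_{kl}\otimes g_{lj}$. By (\ref{ec.3.56}), $\omega=\omega^\triv+\lambda$ with $\lambda:=(A^\omega\otimes\id_{\SU(2)})\circ\ad\in\overrightarrow{\mathfrak{qpc}(\zeta_{4D})}$. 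Hence, by the identity $\widehat{D}^{\omega^\triv+\lambda}=\widehat{D}^{\omega^\triv}+\widehat{K}^\lambda$ stated before (\ref{2.f30.1.2}), I split the computation into a trivial part and a gauge-potential part.

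\textbf{Trivial part.} For the trivial part I use (\ref{ec.3.8.2}):
$$\widehat{D}^{\omega^\triv}(\phi\otimes g)=d\phi\otimes g+\phi\otimes dg+\phi\otimes \pi(S^{-1}(g^{(2)}))\,g^{(1)}.$$
Since $S^{-1}=S$ on the commutative algebra $\SU(2)$, the properties (\ref{properties}) give $\pi(S(g^{(2)}))g^{(1)}=-dS(S(g))=-dg$. Thus $\widehat{D}^{\omega^\triv}=d\otimes\id$ on $\Hor^0P$, and this contributes $d\phi_k$ to $\xi_k$.

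\textbf{Gauge-potential part.} Here
$$\widehat{K}^\lambda(\Phi)(\bar e_j)=\sum_{k,l}\lambda\bigl(\pi(S(g_{lj}))\bigr)\,(\phi_k\otimes g_{kl}),$$
where $\lambda(\theta)=A^\omega(\theta^{(0)})\otimes\theta^{(1)}$. I compute the four germs $\pi(S(g_{lj}))$, namely $\pi(\alpha^\ast)$, $\pi(\alpha)$, $-\pi(\gamma)$ and $\pi(\gamma^\ast)$ up to sign. I express them through $\beta^\#_{4D}$ using $\pi(\alpha)=\tfrac12\pi(\alpha+\alpha^\ast)+\tfrac12\pi(\alpha-\alpha^\ast)$, $\pi(\alpha^\ast)=\tfrac12\pi(\alpha+\alpha^\ast)-\tfrac12\pi(\alpha-\alpha^\ast)$, the definitions (\ref{ec.2.89}) of $\eta_1,\dots,\eta_4$ and the change of basis (\ref{ec.2.103}). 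I then insert $\ad(\eta_j)$ from Proposition \ref{prop2.5}.

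\textbf{Extracting the components.} Next I apply $\xi_k=\sum_i\Lambda(\bar e_i)(\mathbbm{1}\otimes g^\ast_{ki})$. Unitarity of the matrix $(g_{ij})$, i.e.\ $\sum_i g_{li}g^\ast_{ki}=\delta_{lk}\mathbbm{1}$, together with unitarity of the matrix $U$ of Proposition \ref{prop2.9}, should collapse all $\SU(2)$ factors to scalars. What remains is $\xi_k$ as a combination of the $A^\omega(\beta_m)\phi_l$. A more economical route is equivariance: $\widehat{K}^\lambda(\Phi)$ lies in $\Mor(\delta^W,\Delta_\Hor)$, so applying $\id\otimes\epsilon$ to $\widehat{K}^\lambda(\Phi)(\bar e_j)$ already determines $\xi$. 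With this, $\ad$ collapses via $(\id\otimes\epsilon)\ad=\id$, and one only needs $\epsilon(g_{kl})=\delta_{kl}$ and the values of $\pi(S(g_{lj}))$ in the basis $\widetilde\beta^\#_{4D}$.

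\textbf{Main obstacle.} The hard step will be bookkeeping the constants. This means tracking the factors $\tfrac{i\sqrt2}{q_w}$, $\tfrac{i}{q_w}$, $\tfrac{i}{q_4}$ through the inversion of (\ref{ec.2.89}), and keeping the order and sign conventions $\theta^{(0)}\otimes\theta^{(1)}$ and $S$ versus $S^{-1}$ straight. In particular, I must confirm that the $\eta_4$ coefficient comes out as $-\tfrac{iq_4}{2}A^\omega(\beta_4)$, with the same sign for both components. I would cross-check the final result against the classical $U(2)$ covariant derivative $d-iq_wA^j\sigma_j/2-iq_4A^4\,\Id_2/2$, which it should reproduce.
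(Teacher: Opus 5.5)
Your plan is correct and matches the paper's proof in substance. The paper likewise cancels $\phi\otimes dg$ against the $\omega^{\triv}$-term via $dg=-\pi(S^{-1}(g^{(2)}))g^{(1)}$, reduces to $\xi_k=d\phi_k+\sum_l A^\omega(\pi(S^{-1}(g_{lk})))\phi_l$, and expands $\pi(\alpha^\ast),-\pi(\gamma),\pi(\gamma^\ast),\pi(\alpha)$ in $\widetilde{\beta}^\#_{4D}$ exactly as you describe. The only difference is cosmetic: the paper collapses the $\SU(2)$ factors with the antipode identity in Sweedler notation before applying (\ref{ec.5.42}), whereas your counit shortcut $\xi_k=(\id\otimes\epsilon)\Lambda(\bar e_k)$ together with $(\id\otimes\epsilon)\circ\ad=\id$ reaches the same formula more directly and makes the vaguer appeal to unitarity of $U$ unnecessary.
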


\begin{proof}
Let $\omega$ be a qpc and consider its dual covariant derivative $\widehat{D}^\omega$ (see equation (\ref{ec.3.8.1}). Since  $dS(g)=-\pi(g^{(1)})\,S(g^{(2)})$ for all $g$ $\in$ $\G$ (\cite{stheve}), and $S\circ S=\id_{\SU(2)}$, it follows that
$$dg=-\pi(S^{-1}(g^{(2)}))\,g^{(1)}$$ and
we obtain 
\begin{eqnarray*}
    \widehat{D}^{\omega}(\phi\otimes g)&=&d(\phi\otimes g)+\omega(\pi(S^{-1}(g^{(2)})))\,(\phi\otimes g^{(1)})
    \\
    &=&
    d\phi\otimes g+\phi\otimes dg+[(A^\omega\otimes \id_{\SU(2)})\ad(\pi(S^{-1}(g^{(2)})))
    \\
    &+&
    \mathbbm{1}\otimes \pi(S^{-1}(g^{(2)}))]\,(\phi\otimes g^{(1)})
    \\
    &=&
    d\phi\otimes g+[(A^\omega\otimes \id_{\SU(2)})\ad(\pi(S^{-1}(g^{(2)})))]\,(\phi\otimes g^{(1)})
    \\
    &=&
    d\phi\otimes g+[((A^\omega\circ \pi)\otimes \id_{\SU(2)})\Ad(S^{-1}(g^{(2)}))]\,(\phi\otimes g^{(1)})
    \\
    &=&
    d\phi\otimes g+[(A^\omega(\pi(S^{-1}(g^{(3)})))\otimes g^{(4)}S^{-1}(g^{(2)})]\,(\phi\otimes g^{(1)})
    \\
    &=&
    d\phi\otimes g+A^{\omega}(S^{-1}(\pi(g^{(1)})))\,\phi\otimes g^{(2)}
\end{eqnarray*}
for all $\phi$ $\in$ $B$.

For a right non--commutative geometrical Higgs field $\Phi$, notice that 
\begin{equation}
    \label{ec.5.need}
\Delta_P(\Phi(\bar{e}_i))=\sum^{2}_{k=1} \phi_k \,\Delta_P(\Phi^H_k(\bar{e}_i))=\sum^{2}_{k=1} \phi_k \otimes\,\Delta(g_{ki})=\sum^{2}_{k,l=1} (\phi_k \otimes\,g_{kl})\otimes g_{li};
\end{equation}
hence 
\begin{eqnarray*}
\widehat{D}^\omega(\Phi(\bar{e}_1))=\sum^2_{k=1}\widehat{D}^\omega(\phi_k\,\Phi^H_k(\bar{e}_1))&=&\sum^2_{k=1}\widehat{D}^\omega(\phi_k\,\Phi^H_k(\bar{e}_1))
\\
&=&\sum^2_{k=1}\widehat{D}^\omega(\phi_k\otimes g_{k1})
\\
&=&
\sum^2_{k=1} d\phi_k\otimes g_{k1}+A^{\omega}(\pi(S^{-1}(g^{(1)}_{k1})))\,\phi_k\otimes g^{(2)}_{k1}
\\
&=&
\sum^2_{k=1} d\phi_k\otimes g_{k1}+\sum^2_{k,l=1}A^{\omega}(\pi(S^{-1}(g_{kl})))\,\phi_k\otimes g_{l1}
\\
&=&
\sum^2_{k=1} [d\phi_k+\sum^2_{l=1}A^{\omega}(\pi(S^{-1}(g_{lk})))\,\phi_l]\otimes g_{k1}
\\
&=&
[d\phi_1+A^{\omega}(\pi(S^{-1}(g_{11})))\,\phi_1+A^{\omega}(\pi(S^{-1}(g_{21})))\,\phi_2]\otimes g_{11}
\\
&+&
[d\phi_2+A^{\omega}(\pi(S^{-1}(g_{12})))\,\phi_1+A^{\omega}(\pi(S^{-1}(g_{22})))\,\phi_2]\otimes g_{21}.
\end{eqnarray*}
Similarly, we have
\begin{eqnarray*}
    \widehat{D}^\omega(\Phi(\bar{e}_2))&=&\sum^2_{k=1} [d\phi_k+\sum^2_{l=1}A^{\omega}(\pi(S^{-1}(g_{lk})))\,\phi_l]\otimes g_{k2}
    \\
    &=&
    [d\phi_1+A^{\omega}(\pi(S^{-1}(g_{11})))\,\phi_1+A^{\omega}(\pi(S^{-1}(g_{21})))\,\phi_2]\otimes g_{12}
    \\
    &+&
    [d\phi_2+A^{\omega}(\pi(S^{-1}(g_{12})))\,\phi_1+A^{\omega}(\pi(S^{-1}(g_{22})))\,\phi_2]\otimes g_{22}.
\end{eqnarray*}
Thus,  by equation (\ref{ec.5.42}) we get
\begin{eqnarray*}
\xi^{\widehat{D}^\omega\Phi}_1=\sum^2_{k=1}\widehat{D}^\omega(\Phi(\bar{e}_k))(\mathbbm{1}\otimes g^\ast_{1k})=
    d\phi_1+A^{\omega}(\pi(S^{-1}(g_{11})))\,\phi_1+A^{\omega}(\pi(S^{-1}(g_{21})))\,\phi_2\;\in\; \Omega^1(B)
\end{eqnarray*}
and
$$\xi^{\widehat{D}^\omega\Phi}_2=\sum^2_{k=1}\widehat{D}^\omega(\Phi(\bar{e}_k))(\mathbbm{1}\otimes g^\ast_{2k})=d\phi_2+A^{\omega}(\pi(S^{-1}(g_{12})))\,\phi_1+A^{\omega}(\pi(S^{-1}(g_{22})))\,\phi_2 \;\in\; \Omega^1(B).$$ 
According to Section 2.1, we have
\begin{equation*}
    \begin{aligned}
        \pi(S^{-1}(g_{11}))&=\pi(S^{-1}(\alpha))=\pi(\alpha^\ast)=-{i\,q_w\over 2}\beta_3-{i\,q_4\over 2}\beta_4,
    \\
    \pi(S^{-1}(g_{21})) &= \pi(S^{-1}(\gamma))=-\pi(\gamma)=-{i\,q_w\over 2}\beta_1-{q_w\over 2}\beta_2,
    \\
   \pi(S^{-1}(g_{12})) &= -\pi(S^{-1}(\gamma^\ast))=\pi(\gamma^\ast)=-{i\,q_w\over 2}\beta_1+{q_w\over 2}\beta_2,
   \\
   \pi(S^{-1}(g_{22}))&=\pi(S^{-1}(\alpha^\ast))=\pi(\alpha)={i\,q_w\over 2}\beta_3-{i\,q_4\over 2}\beta_4 
    \end{aligned}
\end{equation*}
and therefore
\begin{equation*}
    \begin{aligned}
        \xi^{\widehat{D}^\omega\Phi}_1&=
    d\phi_1-{i\,q_w\over 2}A^\omega(\beta_1)\,\phi_2-{q_w\over 2}A^\omega(\beta_2)\,\phi_2-{i\,q_w\over 2}A^\omega(\beta_3)\,\phi_1-{i\,q_4\over 2}A^\omega(\beta_4)\,\phi_1,
    \\
    \xi^{\widehat{D}^\omega\Phi}_2&=
    d\phi_2 -{i\,q_w\over 2}A^\omega(\beta_1)\,\phi_1+{q_w\over 2}A^\omega(\beta_2)\,\phi_1+{i\,q_w\over 2}A^\omega(\beta_3)\,\phi_2-{i\,q_4\over 2}A^\omega(\beta_4)\,\phi_2. 
    \end{aligned}
\end{equation*}
\end{proof}

Let us define
\begin{equation}
    \label{operators}
    T_1:={\sigma_1\over 2}, \qquad T_2:={\sigma_2\over 2},\qquad  T_3:={\sigma_1\over 2},\qquad Y:={\Id_2\over 2}={1\over 2}\,\begin{pmatrix}
1 & 0  \\
0 & 1  \\
\end{pmatrix},
\end{equation}
where $\sigma_1$, $\sigma_2$, $\sigma_3$ are the Pauli matrices. Hence, in index notation, we define
\begin{equation}
    \label{ec.cova}
    \begin{aligned}
        \widehat{D}_\mu\phi_1&:=\partial_\mu\phi_1-{i\,q_w\over 2}\,A^1_\mu\,\phi_2-{q_w\over 2}\,A^2_\mu\,\phi_2-{i\,q_w\over 2}\,A^3_\mu\,\phi_1-{i\,q_4\over 2}\,A^4_\mu\,\phi_1,
        \\
       \widehat{D}_\mu\phi_2 &:=\partial_\mu\phi_2-{i\,q_w\over 2}\,A^1_\mu\,\phi_1+{q_w\over 2}\,A^2_\mu\,\phi_1+{i\,q_w\over 2}\,A^3_\mu\,\phi_2-{i\,q_4\over 2}\,A^4_\mu\,\phi_2
    \end{aligned}
\end{equation}
for $\mu=0,1,2,3$; which can be summarized as
\begin{equation}
    \label{ec.5.29}
    \widehat{D}_\mu \phi:=\partial_\mu \phi- i\,q_w\,A^1_\mu\,T_1\,\phi- i\,q_w\,A^2_\mu\,T_2\,\phi- i\,q_w\,A^3_\mu\,T_3\,\phi-i\,q_4\,A^4_\mu\,Y\,\phi,
\end{equation}
where  
$$\phi:=\begin{pmatrix}
    \phi_1  \\
    \phi_2
\end{pmatrix}. $$

Let us consider the principal bundle $$\pi_\class: \R^4\times G\longrightarrow \R^4, \qquad G=SU(2)\times U(1)$$ and a principal connection $\omega_\class$. Let us define the $G$--representation 
\begin{equation}
    \label{ec.5.13}
    \delta^H_W: SU(2)\times U(1)\times W\longrightarrow W,\qquad (A,\mathrm{e}^{it},\bar{w})\longmapsto A\cdot \begin{pmatrix} 
\mathrm{e}^{it} & 0 \\
0 & \mathrm{e}^{it}   
\end{pmatrix} \cdot \bar{w}.
\end{equation}
 and consider the space of smooth sections $\Gamma(W^H\R^4)$ of the associated vector bundle $$W^H\R^4\cong\R^4\times W$$ of $\pi_\class$ with respect to $\delta^H_W$. Elements of $\Gamma(W^H\R^4)$ are called Higgs fields (\cite{gtvp}), and since 
\begin{equation}
    \label{ec.5.13.1}
    \{s_i: \R^4\longrightarrow  W^H\R^4, \qquad x\longmapsto (x,\bar{e}_i) \}^2_{i=1}
\end{equation}
is a $B$--basis of $\Gamma(W^H\R^4)$, it follows that every Higgs field is of the form 
\begin{equation}
    \label{ec.5.13.2}
   \phi=\sum^2_{i=1}\phi_i\,s_i=\begin{pmatrix}
    \phi_1  \\
    \phi_2
\end{pmatrix}, \qquad \phi_1,\,\phi_2\,\in\,B=C^\infty_\C(\R^4). 
\end{equation}
In this way, we identify a non--commutative geometrical Higgs field $\Phi$ with a \emph{classical} Higgs field $\phi$, i.e., 
\begin{equation}
    \label{ec.5.48}
\Phi=\sum_{i=1}\phi_i\,\Phi^H_i\;\;\longleftrightarrow\;\;\phi=\sum^2_{i=1}\phi_i\,s_i \;\;\longleftrightarrow\;\; \phi=\begin{pmatrix}
    \phi_1  \\
    \phi_2
\end{pmatrix}
\end{equation}
and
\begin{equation}
    \label{ec.5.49}
    \widehat{\nabla}^\omega_W\Phi\;\;\longleftrightarrow \;\; \widehat{D}_\mu\phi,\qquad \xi^{\widehat{D}^\omega\Phi}_1\;\;\longleftrightarrow\;\; \widehat{D}_\mu\phi_1, \qquad \xi^{\widehat{D}^\omega\Phi}_2\;\;\longleftrightarrow\;\; \widehat{D}_\mu\phi_2.
\end{equation}

\begin{Remark}
    \label{remacovaright}
     The most striking feature is that $\widehat{\nabla}^\omega_W\Phi$ coincides exactly with the action of the gauge covariant derivative (also called the gauge linear connection) on the Higgs field $\phi$ in the  electroweak theory (\cite{notation}), including of course, the correct corresponding value of the hypercharge $$ Y={1\over 2}.$$  
\end{Remark}

Furthermore, as in the proof of Proposition A.1 of reference \cite{sald2}, we have
\begin{eqnarray*}
    ||\widehat{\nabla}^\omega_W\Phi||^{\bullet 2}_\r=\sum^2_{k,l=1}\int_B \langle \xi^{\widehat{D}^\omega\Phi}_k \,(\Phi_k,\Phi_l), \xi^{\widehat{D}^\omega\Phi}_j\rangle_\r\,\dvol=
    \sum^2_{k=1}\int_B \langle \xi^{\widehat{D}^\omega\Phi}_k, \xi^{\widehat{D}^\omega\Phi}_k\rangle_\r\,\dvol
\end{eqnarray*}
and since $\langle-,-\rangle_\r$ is the Minkowski metric on $\C$--valued differential forms (antilinear in the first coordinate), we get
\begin{equation}
    \label{ec.5.50}
    ||\widehat{\nabla}^\omega_W\Phi||^{\bullet 2}_\r= \int_{\R^4} (\widehat{D}^\mu\phi)^\dagger\,(\widehat{D}_\mu\phi)\,\dvol,
\end{equation}
where $\dagger$ indicates the complex transpose. In the same way, we have
\begin{equation}
    \label{ec.5.51}
    \int_B \mathcal{V}_\r(\Phi)\,\dvol= \sum^2_{k=1}\int_B \mathcal{V}(\phi^\dagger_k\,\phi_k)\,\dvol=:\int_{\R^4}\mathcal{V}(\phi)\,\dvol.
\end{equation}

\begin{Remark}
    \label{rema2}
    Equations (\ref{ec.5.50}), (\ref{ec.5.51}), together with equation (\ref{ec.4.21.2}) shows that the right non--commutative geometrical Yang--Mills--Higgs action is exactly the classical Yang--Mills--Higgs action (\cite{notation}). 
\end{Remark}

 Since the gauge qlc $\widehat{\nabla}^\omega_W$ coincides with the \emph{classical} gauge linear connection of a Higgs field and the formal adjoint operator $\widehat{\nabla}^{\omega\,\star}_W$ is defined exactly as the formal adjoint operator of the \emph{classical} gauge linear connection (\cite{diff1}), it follows that
\begin{equation}
    \label{ec.5.52}
    \widehat{\nabla}^{\omega\,\star}_W(\widehat{\nabla}^\omega_W \Phi)\;\;\longleftrightarrow \;\; \;\;\widehat{D}^\mu\,\widehat{D}_\mu\phi.
\end{equation}
This shows that 
\begin{equation}
    \label{ec.5.53}
    \widehat{\nabla}^{\omega\,\star}_W(\widehat{\nabla}^\omega_W \Phi)=\mathcal{V}'_\r(\Phi)\Phi \;\; \longleftrightarrow \;\;\widehat{D}^\mu\,\widehat{D}_\mu\phi=\mathcal{V}'(\phi)\phi.
\end{equation}
In other words, equation (\ref{ec.5.10.2}) for $\zeta_{4D}$ and $\delta^W$ correctly reproduces the equation of motion of the \emph{classical} Higgs field in the electroweak theory (\cite{diff1,notation}).

Consider the principal bundle $$\pi_\class:\R^4\times G\longrightarrow \R^4,\qquad G=SU(2)\times U(1)$$ and a principal connection $\omega_\class$. Then, for every basic differential $1$--form $\lambda_\class$ of type $\ad^\class$, we know that  $\omega_\class+\lambda_\class$ is again a principal connection, and the gauge covariant derivative on the associated vector bundle $W^H\R^4\cong \R^4\times W$ of $\pi_\class$ with respect to $\delta^H_W$, satisfies 
\begin{equation}
    \label{clas}
    \nabla^{\omega_\class+\lambda_\class} (\phi) =\nabla^{\omega_\class}(\phi)+\delta^{H'}_W(\lambda_\class)\phi
\end{equation}
for all smooth sections $\phi$ of $W^H\R^4$,  where  $$\delta^{H'}_W(\lambda)(\phi)(X_p)=\left.{d\over dt}\right|_{t=0}\delta^{H}_W(\mathrm{exp}(t\,\lambda_\class(Y_x)))\phi(p)$$
for all $X_p$ $\in$ $T_p \R^4$, with $Y_x$ $\in$ $T_x(\R^4\times G)$ such that $(d\pi_\class)_x(Y_x)=X_p$ \cite{gtvp}. According to Theorem \ref{teorema1}, we have $$\widehat{\nabla}^{\omega+\lambda}_W(\Phi)\;\;\longleftrightarrow\;\; \nabla^{\omega_\class+\lambda_\class} (\phi),\qquad   \widehat{\nabla}^{\omega}_W(\Phi)\;\;\longleftrightarrow\;\; \nabla^{\omega_\class} (\phi)$$ and hence, by equations (\ref{2.f30.1.2}), (\ref{clas}), we obtain  
\begin{equation}
    \label{ec.k1}
\widehat{\Upsilon}_V(\widehat{K}^\lambda(\Phi))\;\;\longleftrightarrow \;\;\delta^{H'}_W(\lambda_\class)\phi.
\end{equation}
In this way, according to Section 7.2 of reference \cite{gtvp}, 
Theorems \ref{4.prop2}, \ref{teorema1} and equation (\ref{ec.k1}), the equation (\ref{ec.5.9.2}) coincides with its \emph{classical} counterpart and therefore, in index notation, it turns into 
\begin{equation}
    \label{ec.5.46}
    \widehat{DS}_\mu\,(F^{j\,\mu\nu})={\mathcal{J}}^{j\,\nu}.
\end{equation}
where ${\mathcal{J}}^{j\,\nu}$ $\in$ $B$, for $\mu,\nu=0,1,2,3$ and $j=1,2,3,4$ \cite{gtvp,notation}. The tensor ${\mathcal{J}}^{j\,\nu}$ is called \emph{the current} and in accordance with reference \cite{notation}, it is given by
\begin{equation}
    \label{ec.5.47}
    {\mathcal{J}}^{j\,\nu}=-{i\,q_w\over 2}\,(\phi^\dagger\,\sigma^j\,(\widehat{D}^\mu\phi)-(\widehat{D}^\mu\phi)^\dagger\,\sigma^j\,\phi )
\end{equation}
for $j=1,2,3$, where $\sigma^1$, $\sigma^2$, $\sigma^3$ are the Pauli matrices and $$\widehat{D}^\mu=\eta^{\mu\nu}\,\widehat{D}_\nu$$ with $\eta^{\mu\nu}=\mathrm{diag}(+,-,-,-)$. In addition (\cite{notation})
\begin{equation}
    \label{ec.5.48.1}
    {\mathcal{J}}^{4\,\nu}=-{i\,q_4\over 2}\,(\phi^\dagger\,(\widehat{D}^\mu\phi)-(\widehat{D}^\mu\phi)^\dagger\,\phi).
\end{equation}

\begin{Remark}
    \label{rema4.3.5}
    Equations (\ref{ec.5.53}), (\ref{ec.5.46}) show that right non--commutative geometrical Yang--Mills--Higgs fields are actually the classical Yang--Mills--Higgs fields.
\end{Remark}

In the \emph{classical} case, the gauge group $\mathfrak{GG}$ of the principal bundle $$\pi_\class: \R^4\times G\longrightarrow \R^4,\qquad G=SU(2)\times U(1) $$ acts on right of $\Gamma(W^H\,\R^4)$ by means of 
\begin{equation}
    \label{gaugeactionsections1}
    {\bf{A}}_{F_\class} : \Gamma(W^H\R^4)\longrightarrow \Gamma(W^H\R^4),\quad \phi=\sum^2_{i=1}\phi_i\,s_i\longmapsto {\bf{A}}_{F_\class}(\phi)=\sum^2_{i=1}\phi_i\,A_{F_\class}(s_i),
\end{equation}
where $${\bf{A}}_{F_\class}(s_i):\R^4\longrightarrow  W^H\R^4,\qquad x\longmapsto (x,\overline{e}_i\cdot \psi(x))$$ and $\phi_1$, $\phi_2$ $\in$ $B=C^\infty_\C(\R^4)$, for a gauge transformation $$F_\class: \R^4\times G\longrightarrow \R^4\times G,\qquad (x,C)\longmapsto (x,\psi(x)\,C).$$ In a more \emph{traditional} notation, we get 
\begin{equation}
    \label{gaugeactionsections2}
    (\phi,\psi)\longmapsto \phi\cdot \psi \qquad \mbox{ with }\qquad \phi=\begin{pmatrix}
    \phi_1  \\
    \phi_2
\end{pmatrix}.
\end{equation}

\begin{Proposition}
    \label{gauge2}
    Consider the map $\F_\psi$ of equation (\ref{ec.3.100}). Then, the action of $\F_\psi$ on $E^W_\r$ by means of equation (\ref{something}) coincides with  ${\bf{A}}_{F_\class}$ by identifying $s_i$ (a $B$--basis of $\Gamma(W^H\R^4)$) with $\Phi^H_i$ (a $B$--basis of $E^H_\r$), for $i=1,2$.
\end{Proposition}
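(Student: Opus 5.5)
The plan is to compute $\widehat{\mathbf{A}}_{\F_\psi}$ on the right $B$--basis $\{\Phi^H_i\}$ of $E^W_\r$ and compare it with ${\bf{A}}_{F_\class}(s_i)$. Since both actions should be right $B$--linear, it suffices to check the basis elements. For ${\bf{A}}_{F_\class}$ this is built into the definition. For $\widehat{\mathbf{A}}_{\F_\psi}$ it follows because $\F_\psi$ is a left $B$--module map, $B$ is commutative, and $\widehat{\mathbf{A}}_\F(T)(v)=\F(T(v)^\ast)^\ast$. Hence $\widehat{\mathbf{A}}_{\F_\psi}(\phi\,\Phi^H_i)(\bar e_j)=\F_\psi(\phi^\ast\,\Phi^H_i(\bar e_j)^\ast)^\ast=\phi\,\widehat{\mathbf{A}}_{\F_\psi}(\Phi^H_i)(\bar e_j)$.

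Next, write $\Phi^H_i(\bar e_j)^\ast=\mathbbm{1}\otimes g_{ij}^\ast$ and apply the definition of $\F_\psi$ in equation (\ref{ec.3.100}). This gives
\[
\F_\psi(\mathbbm{1}\otimes g_{ij}^\ast)=\sum_{k}\widehat{\f}_\psi\big((g_{ij}^\ast)^{(1)}\big)\otimes (g_{ij}^\ast)^{(2)} .
\]
We have $\Delta(g_{ij}^\ast)=\sum_k g_{ik}^\ast\otimes g_{kj}^\ast$. Moreover, $g_{ik}^\ast$ again lies in the span of the matrix coefficients of $\delta^{\C^2}$, since $\{\alpha,\gamma,-\gamma^\ast,\alpha^\ast\}$ is closed under $\ast$ up to sign. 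Therefore the relevant values are given by equation (\ref{ec.3.97.1}):
\[
\widehat{\f}_\psi(g_{ik}^\ast)=\f_\psi(g_{ik}^\ast\otimes z^\ast)=(g_{ik}^\ast\otimes z^\ast)\circ\psi .
\]
Writing $\psi(x)=(U(x),\mathrm{e}^{it(x)})$, this is $\overline{U_{ik}(x)}\,\mathrm{e}^{-it(x)}$. Taking $\ast$ yields
\[
\widehat{\mathbf{A}}_{\F_\psi}(\Phi^H_i)(\bar e_j)=\sum_k U_{ik}\,\mathrm{e}^{it}\,\otimes g_{kj}=\sum_k \big(U_{ik}\mathrm{e}^{it}\big)\,\Phi^H_k(\bar e_j),
\]
that is, $\widehat{\mathbf{A}}_{\F_\psi}(\Phi^H_i)=\sum_k (\psi)_{ik}\,\Phi^H_k$, where $\psi$ acts on $W$ via $\delta^H_W$.

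On the classical side, $\bar e_i\cdot\psi(x)=\sum_k(\delta^H_W(\psi(x)))_{ik}\,\bar e_k$, read as the right action of row vectors. Hence ${\bf{A}}_{F_\class}(s_i)=\sum_k (U_{ik}\mathrm{e}^{it})\,s_k$. Under $s_i\leftrightarrow\Phi^H_i$ the two expressions agree, and by $B$--linearity the actions agree on all of $E^W_\r$.

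The main obstacle is bookkeeping rather than any new idea. One must check that the twist by $z^\ast$ in equation (\ref{ec.3.97.1}), combined with the two $\ast$'s in $\widehat{\mathbf{A}}_\F$, produces exactly $\mathrm{e}^{+it}$ (hypercharge $1/2$) and not its inverse. One must also fix the index and transpose convention so that $\overline{U}$ conjugated back gives $U_{ik}$ rather than $U_{ki}$. I would carry this out first by the explicit computation for $\Phi^H_1(\bar e_1)=\mathbbm{1}\otimes\alpha$, using $\Delta(\alpha^\ast)=\alpha^\ast\otimes\alpha^\ast-\gamma\otimes\gamma^\ast$. I would then match signs with the convention $g_{12}=-\gamma^\ast$, and only afterwards generalize to all $i,j$.
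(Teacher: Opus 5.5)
Your proposal is correct and takes essentially the same route as the paper. Both compute $\widehat{\mathbf{A}}_{\F_\psi}(\Phi^H_i)(\bar e_j)=\F_\psi(\mathbbm{1}\otimes g^\ast_{ij})^\ast$ using $\Delta(g^\ast_{ij})=\sum_k g^\ast_{ik}\otimes g^\ast_{kj}$ and the $z^\ast$--twisted definition of $\widehat{\f}_\psi$, obtain $\sum_k \f_\psi(g_{ik}\otimes z)\,\Phi^H_k$, and match this with ${\bf{A}}_{F_\class}(s_i)=\sum_k\psi_{ik}\,s_k$. Your explicit check that $\widehat{\mathbf{A}}_{\F_\psi}$ is right $B$--linear is a small addition, since the paper uses this only implicitly when passing from the basis to a general $\Phi$.
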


\begin{proof}
    Consider (see equations (\ref{ec.5.13.1}), (\ref{ec.5.13.1})) $$\phi=\sum^2_{i=1}\phi_i\,s_i=\begin{pmatrix}
    \phi_1  \\
    \phi_2
\end{pmatrix} \,\in\, \Gamma(W^H\R^4), \qquad \phi_1,\,\phi_2\,\in\,B=C^\infty_\C(\R^4).$$  Denoting $$\psi(x)=\begin{pmatrix}
    \psi_{11}(x) & \psi_{12}(x)  \\
    \psi_{21}(x) & \psi_{22}(x)
\end{pmatrix} \; \in\; SU(2)\times U(1),\qquad x\in \R^4,$$ we get 
\begin{equation*}
    \begin{aligned}
        {\bf{A}}_{F_\class}(s_1)(x)&=\psi_{11}(x)\,s_1(x)+\psi_{12}(x)\,s_2(x)\\
        &= \f_\psi(g_{11}\otimes z)\,s_1(x)+\f_\psi(g_{12}\otimes z)\,s_2(x)
    \end{aligned}
\end{equation*}
and 
\begin{equation*}
    \begin{aligned}
        {\bf{A}}_{F_\class}(s_2)(x)&=\psi_{21}(x)\,s_1(x)+\psi_{22}(x)\,s_2(x)\\
        &= \f_\psi(g_{21}\otimes z)\,s_1(x)+\f_\psi(g_{22}\otimes z)\,s_2(x),
    \end{aligned}
\end{equation*}
recalling that $\f_\psi$ denotes the pull--back of $\psi$. Hence $${\bf{A}}_{F_\class}(\phi)=  (\phi_1\, \f_\psi(g_{11}\otimes z)+\phi_2\, \f_\psi(g_{21}\otimes z))\,s_1 + (\phi_1\, \f_\psi(g_{12}\otimes z)+\phi_2\, \f_\psi(g_{22}\otimes z))\,s_2.$$

On the other hand, by the definition of $\widehat{\f}_\psi$, we obtain
\begin{eqnarray*}
    \widehat{{\bf A}}_{\F_\psi}(\Phi^H_1)(\overline{e}_i)=\F_\psi(\Phi^{H\ast}_1(\overline{e}_i))^\ast=\F_\psi(\mathbbm{1}\otimes g^\ast_{1i})^\ast&=&\sum^2_{k=1} \widehat{\f}_\psi(g^\ast_{1k})^\ast\otimes g_{ki}\\
        &=&\sum^2_{k=1} \f_\psi(g^\ast_{1k}\otimes z^\ast)^\ast\otimes g_{ki}\\
        &=& \sum^2_{k=1} \f_\psi(g_{1k}\otimes z)\otimes g_{ki}\\
        &=& \f_\psi(g_{11}\otimes z)\otimes g_{1i}+\f_\psi(g_{12}\otimes z)\otimes g_{2i}
        \\
        &=&
        \f_\psi(g_{11}\otimes z)\, \Phi^H_1(\overline{e}_i)+\f_\psi(g_{12}\otimes z)\, \Phi^H_2(\overline{e}_i)
\end{eqnarray*}
and similarly, we have $$\widehat{{\bf A}}_{\F_\psi}(\Phi^H_2)(\overline{e}_i)=\f_\psi(g_{21}\otimes z)\, \Phi^H_1(\overline{e}_i)+\f_\psi(g_{22}\otimes z)\, \Phi^H_2(\overline{e}_i).$$ Therefore $$\widehat{{\bf A}}_{\F_\psi}(\Phi)=(\phi_1\,\f_\psi(g_{11}\otimes z)+\phi_2 \,\f_\psi(g_{21}\otimes z))\, \Phi^H_1 + (\phi_1\,\f_\psi(g_{12}\otimes z)+\phi_2 \,\f_\psi(g_{22}\otimes z))\, \Phi^H_2$$ and the proposition follows by identifying $s_i$ with $\Phi^H_i$.
\end{proof}

\begin{Remark}
    \label{rematubote}
    By the previous proposition, the fact that the non--commutative geometrical Yang--Mills--Higgs functional is equal to its \emph{classical} counterpart, the fact that (see Remark \ref{rema4.3}) $$\mathfrak{GG} \subseteq \qGG_{\YM_\r}$$ and Definition \ref{defgauge2right}, we conclude that 
    \begin{eqnarray}
        \label{rematubote2}
        \mathfrak{GG} \subseteq \qGG_{\YMSM_\r}.
    \end{eqnarray}
    As in the previous section,  the inclusion in equation~(\ref{rematubote2}) may be proper. However, if there exists an element $\F\in\qGG$ such that $\F\in\qGG_{\YMSM_\r}$ but $\F\notin\mathfrak{GG}$, then $\F$ does not arise from a $G=SU(2)\times U(1)$ symmetry. That is, $\F$ does not correspond to a principal bundle automorphism of $\pi_{\class}:\mathbb{R}^4\times G\longrightarrow \mathbb{R}^4$ and is therefore irrelevant in the context of  differential geometry and hence, irrelevant under the usual physical interpretation.
\end{Remark}

In summary, Remarks \ref{rema2}, \ref{rema4.3.5},  \ref{rematubote} and equation (\ref{ec.5.46}) show that the non--commutative geometrical Yang--Mills--Higgs theory for right structures of $\zeta_{4D}$ coincides with the \emph{classical} Yang--Mills--Higgs theory of the electroweak interaction.

\subsection{The Higgs Mechanism}

Since all the operators, symmetries and actions coincide with those of the \emph{classical} case, it follows that the Higgs mechanism holds in our model. However, given its importance, we will nevertheless develop this part of the theory.

\begin{Definition}
    \label{def6.1}
    We have
    \begin{enumerate}
        \item A vector $\bar{w}_0$ $\in$ $W:=\C^2$ is called a vacuum vector if it is a non--zero minimum of $$\mathcal{V}: W\longrightarrow \R,\qquad \bar{w}\longmapsto \mathcal{V}(\bar{w}):=-a_1\,||\bar{w}||^2+a_2\,||\bar{w}||^4.$$ The set of vacuum vectors is called the vacuum manifold of $W$.
        \item A vacuum gauge is an element $\Phi_0$ $\in$ $E^W $ such that $$\Phi_0=w_1\,\Phi_1+w_2\,\Phi_2$$ with $\bar{w}_0=(w_1,w_2)$ $\in$ $W=\C^2$ a vacuum vector. Here, $\{\Phi_k \}^2_{k=1}$ is the right $B$--basis of $E^W $ of equation (\ref{ec.5.32.3}).
        \item Given vacuum gauge $\Phi_0$, the pair $$(\omega^\triv,\Phi_0)$$ is called a \emph{vacuum configuration}. 
    \end{enumerate}
\end{Definition}
Since $A^{\omega^\triv}=0$, a vacuum configuration is a  non--commutative geometrical Yang--Mills--Higgs field, i.e., a vacuum configuration is a classical  Yang--Mills--Higgs fields.

We assume that, moments after the Big Bang, the symmetry was spontaneously broken; that is, the  non--commutative geometrical Higgs field $$\Phi_0=w_1\,\Phi_1+w_2\,\Phi_2$$ with $\bar{w}=(w_1,w_2)$ transitions spontaneously from the unstable value\footnote{Unstable with respect to the potential $\mathcal{V}$.} $$\bar{w}=0$$ to the vacuum vector $$\bar{w}_0\neq 0.$$  

It is easy to check that a vector $\bar{w}_0$ $\in$ $W$ is a vacuum vector if and only if (\cite{diff1})
\begin{equation}
    \label{ec.6.1}
    ||\bar{w}_0||=\sqrt{a_1\over 2\,a_2 }. 
\end{equation}
It follows that the vacuum manifold is a $3$--sphere in $W=\C^2$, centered at $0$ and with radius $\displaystyle \sqrt{a_1\over 2\,a_2 }$. The \emph{traditional} choice is  (\cite{diff1})
\begin{equation}
    \label{ec.6.2}
    \bar{w}_0:= \begin{pmatrix}
    0  \\
    \sqrt{a_1\over 2\,a_2 }
\end{pmatrix}={1\over \sqrt{2}}\begin{pmatrix}
    0  \\
    \v
\end{pmatrix} \quad \mbox{ with }\quad \v=\sqrt{2}\,||\bar{w}_0||=\sqrt{a_1\over a_2}.
\end{equation}
Hence 
\begin{equation}
    \label{ec.6.3}
   (\omega^\triv,\Phi_0=0\,\Phi_1+ {1\over \sqrt{2}}\,\v\,\Phi_2)
\end{equation}
is a vacuum configuration. By taking a \emph{small} variation around $\Phi_0$, in a unitary gauge (\cite{diff1}), we get
\begin{equation}
    \label{ec.6.4}
    (\omega,\Phi=\Phi_0+\widehat{\Phi}), \qquad \widehat{\Phi}:=0\,\Phi_1+ {1\over \sqrt{2}}\,  \h (x)\,\Phi_2 \quad \mbox{ with }\quad  \h(x)\,\in\,B=C^\infty_\C(\R^4).
\end{equation}
Under the identification
\begin{equation}
    \label{higgsfield}
    \Phi\;\;\longleftrightarrow\;\;\phi:={1\over \sqrt{2}}\begin{pmatrix}
    0  \\
    \v+ \h
\end{pmatrix}
\end{equation}
we have 
\begin{equation*}
\begin{aligned}
    \displaystyle\widehat{D}_\mu\phi&=\begin{pmatrix}
    -{i\,q_w\over 2}\,A^1_\mu \,{1\over \sqrt{2}}\,(\v+ \h)-{q_w\over 2}\,A^2_\mu\, {1\over \sqrt{2}}\,(\v+ \h)  \\\\
    \partial_\mu\, {1\over \sqrt{2}}\, \h+{i\,q_w\over 2}\,A^3_\mu\,{1\over \sqrt{2}}\,(\v+ \h)-{i\,q_4\over 2}\,A^4_\mu\,{1\over \sqrt{2}}\,(\v+ \h)
\end{pmatrix}\\\\
&=\begin{pmatrix}
    -{i\,q_w\over 2}\,(A^1_\mu-i\,A^2_\mu)\,{1\over \sqrt{2}}\,(\v+ \h) \\\\
    \partial_\mu \,{1\over \sqrt{2}}\, \h+{i\over 2}\,(q_w\,A^3_\mu-q_4\,A^4_\mu)\,{1\over \sqrt{2}}\,(\v+ \h)
\end{pmatrix}.
\end{aligned}
\end{equation*}
Consider now the linear basis  of $\mathfrak{su}(2,\C)^\#_{4D}$ given by
\begin{equation}
    \label{ec.6.5}
    \alpha^\#_{4D}:=\{\alpha_1:={\beta_1-i\beta_2\over \sqrt{2}}, \quad \alpha_2:={\beta_1+i\beta_2\over \sqrt{2}},\quad \alpha_3:={q_3\,\beta_3-q_4\,\beta_4\over \sqrt{q^2_w+q^2_4}}, \quad \alpha_4:={q_4\,\beta_3+q_w\,\beta_4\over \sqrt{q^2_w+q^2_4}}\}. 
\end{equation}
It is worth mentioning that $$\mathrm{span}_\C\{\alpha_1,\,\alpha_2 \}=\mathrm{span}_\C\{\beta_1,\,\beta_2 \},\qquad \mathrm{span}_\C\{\alpha_3,\,\alpha_4 \}=\mathrm{span}_\C\{\beta_3,\,\beta_4 \}.$$ The linear independent set $\{\beta_1,\beta_2\}$ transforms into $\{\alpha_1,\alpha_2\}$ by means of $${1\over \sqrt{2}}\begin{pmatrix}
1 & -i  \\
1 & i
\end{pmatrix}\begin{pmatrix}
\beta_1  \\
\beta_2 
\end{pmatrix}=\begin{pmatrix}
\alpha_1  \\
\alpha_2 
\end{pmatrix};$$ while $\{\beta_3,\beta_4\}$ transforms into $\{\alpha_3,\alpha_4\}$ by means of the rotation by 
\begin{equation}
    \label{ec.5.44}
    \theta_W:=\mathrm{tan}^{-1}\left({q_4\over q_w}\right), \qquad \theta_W\,\in\,[0,{\pi\over 2}).
\end{equation}
In other words  
\begin{equation}
    \label{ec.5.45}
    \begin{pmatrix}
\mathrm{cos}(\theta_W) & -\mathrm{sin}(\theta_W) \\
\mathrm{sin}(\theta_W) & \mathrm{cos}(\theta_W)
\end{pmatrix}\begin{pmatrix}
\beta_3  \\
\beta_4 
\end{pmatrix}=\begin{pmatrix}
\alpha_3  \\
\alpha_4 
\end{pmatrix}.
\end{equation}
Therefore, the linear basis $\alpha^\#_{4D}$ is orthonormal with respect to the inner product $\langle-|-\rangle_{4D}$ (see Section 2). We have the following identifications of the quantum gauge potential $A^\omega$ 
\begin{equation}
    \label{ec.6.6}
    \begin{aligned}
        A^\omega(\alpha_1)&={A^\omega(\beta_1)-iA^2(\beta_2)\over \sqrt{2}}\quad\longleftrightarrow \quad \W^+_\mu:={A^1_\mu-iA^2_\mu\over \sqrt{2}},\\
        A^\omega(\alpha_2)&={A^\omega(\beta_1)+iA^2(\beta_2)\over \sqrt{2}}\quad\longleftrightarrow \quad \W^-_\mu:={A^1_\mu+iA^2_\mu\over \sqrt{2}},
        \\
        A^\omega(\alpha_3)&={q_3\,A^\omega(\beta_3)-q_4\,A^\omega(\beta_4)\over \sqrt{q^2_w+q^2_4}}\quad \longleftrightarrow \quad \Z^{\circ}_\mu:={ q_w\,A^3_\mu-q_4\,A^4_\mu\over \sqrt{q^2_w+q^2_4}},
        \\
        A^\omega(\alpha_4)&={q_4\,A^\omega(\beta_3)+q_3\,A^\omega(\beta_4)\over \sqrt{q^2_w+q^2_4}}\quad \longleftrightarrow \quad \A_\mu:={ q_4\,A^3_\mu+q_w\,A^4_\mu\over \sqrt{q^2_w+q^2_4}}.
    \end{aligned}
\end{equation}
for $\mu=0,1,2,3$, and under this change of basis, we obtain
\begin{equation*}
\widehat{D}_\mu\phi= \begin{pmatrix}
    -{i\,q_w\over \sqrt{2}}\,\W^+_\mu\,{1\over \sqrt{2}}\,(\v+ \h) \\\\
    \partial_\mu {1\over \sqrt{2}}\, \h+{i\over 2}\,\sqrt{q^2_w+q^2_4}\,\,\Z^\circ_\mu \,{1\over \sqrt{2}}\,(\v+ \h)
\end{pmatrix}.
\end{equation*}
It is worth mentioning that there is no interaction between $\A_4$ and the field $\phi$. In this way, we get
\begin{equation*}
    \begin{aligned}
        (\widehat{D}^\mu\phi)^\dagger (\displaystyle\widehat{D}_\mu\phi)&=\left| \begin{pmatrix}
    -{i\,q_w\over \sqrt{2}}\,\W^+_\mu\,{1\over \sqrt{2}}\,(\v+\widehat{\h}) \\\\
    \partial_\mu {1\over \sqrt{2}}\, \h+{i\over 2}\,\sqrt{q^2_w+q^2_4}\,\,\Z^\circ_\mu \,{1\over \sqrt{2}}\,(\v+ \h)
\end{pmatrix} \right|^2\\
&={1\over 2}\,(\partial^\mu\, \h)^\dagger\,(\partial_\mu\, \h)+ {1\over 4}\,\left(q^2_w\,(\W^+_\mu)^\dagger\,\W^{+\mu}+{q^2_w+q^2_4\over 2}\,(\Z^\circ_\mu)^\dagger\,\Z^{\circ \mu} \right)\,(\v+ \h)^2
\\
&= {1\over 2}\,(\partial^\mu\, \h)^\dagger\,(\partial_\mu\, \h)+ {q^2_w\over 8}\,((\W^+_\mu)^\dagger\,\W^{+\mu}+(\W^-_\mu)^\dagger\,\W^{-\mu})\,(\v+ \h)^2 
\\
&+{q^2_w+q^2_4\over 8}\,(\Z^\circ_\mu)^\dagger\,\Z^{\circ \mu} \,(\v+ \h)^2,
    \end{aligned}
\end{equation*}
where, as usual, $\dagger$ denotes the complex transpose. Since $$\mathcal{V}(\phi)=-a_1\,\phi^\dagger\,\phi+a_2\,\phi^\dagger\,\phi=-a_1\,\left({\v+ \h\over \sqrt{2}} \right)^2+a_2\,\left({\v+ \h\over \sqrt{2}} \right)^4=a_1\, \h^2+\cdots $$
the action $\qS_{\SM }$ (see Definition \ref{6.1.12}) is given by
\begin{equation}
    \label{ec.6.7}
    \begin{aligned}
        \int_{\R^4}\, &{1\over 2}\,[(\partial^\mu\, \h)^\dagger\,(\partial_\mu\, \h)-\m^2_{H}\, \h^2]
        \\
        &+ {\m^2_+\over 2}\,(\W^+_\mu)^\dagger\,\W^{+\mu}+{\m^2_-\over 2}\,(\W^-_\mu)^\dagger\,\W^{-\mu}+{\m^2_z\over 2}\,(\Z^\circ_\mu)^\dagger\,\Z^{\circ \mu}+{\m^2_4\over 2}\,(\A_\mu)^\dagger\,\A^{\mu}
        \\
        &+ \cdots,
    \end{aligned}
\end{equation}
where 
\begin{equation}
    \label{ec.6.8}
    \m_{H}:=\sqrt{2\,a_1},\qquad \m_+:=\m_-:={q_w\,\v\over 2}, \qquad \m_z:={(q^2_w+q^2_4)\,\v\over 2}, \qquad \m_4=0. 
\end{equation}

On the other hand, as in Proposition \ref{prop3.4.1}, for the quantum field strength $F^\omega$ we have
\begin{equation}
    \label{ec.6.9}
    \begin{aligned}
F^\omega(\alpha_1)\;\;\longleftrightarrow\;\;\mathbbm{F}^+_{\mu\nu},\quad &F^\omega(\alpha_2)\;\;\longleftrightarrow\;\;\mathbbm{F}^-_{\mu\nu},\quad F^\omega(\alpha_3)\;\;\longleftrightarrow\;\;\mathbbm{F}^\circ_{\mu\nu}\\ &F^\omega(\alpha_4)\;\;\longleftrightarrow\;\;\mathbbm{F}^4_{\mu\nu},
    \end{aligned}
\end{equation}
with 
\begin{equation}
    \label{ec.6.10}
    \mathbbm{F}^+_{\mu\nu}=\partial_\mu \W^+_\nu-\partial_\nu \W^+_\mu-\,{i\,q^2_w\over\sqrt{q^2_w+q^2_4} }\, (\W^+_\mu\,\Z^\circ_\nu-\W^+_\nu\,\Z^\circ_\mu)-\,{i\,q_w\,q_4\over\sqrt{q^2_w+q^2_4} }(\W^+_\mu\,\A_\nu-\W^+_\nu\,\A_\mu),
\end{equation}
\begin{equation}
    \label{ec.6.11}
    \mathbbm{F}^-_{\mu\nu}=\partial_\mu \W^-_\nu-\partial_\nu \W^-_\mu+\,{i\,q^2_w\over\sqrt{q^2_w+q^2_4} }\, (\W^-_\mu\,\Z^\circ_\mu-\W^-_\nu\,\Z^\circ_\mu)+\,{i\,q_w\,q_4\over\sqrt{q^2_w+q^2_4} }(\W^-_\mu\,\A_\nu-\W^-_\nu\,\A_\mu),
\end{equation}
\begin{equation}
    \label{ec.6.12}
    \mathbbm{F}^\circ_{\mu\nu}=\partial_\mu \Z^\circ_\nu-\partial_\nu \Z^\circ_\mu+\,{i\,q^2_w\over\sqrt{q^2_w+q^2_4} }\, (\W^+_\mu\,\W^-_\nu-\W^+_\nu\,\W^-_\mu),
\end{equation}
\begin{equation}
    \label{ec.6.13}
    \mathbbm{F}^4_{\mu\nu}=\partial_\mu \A_\nu-\partial_\nu \A_\mu+\,{i\,q_w\,q_4\over\sqrt{q^2_w+q^2_4} }\, (\W^+_\mu\,\W^-_\nu-\W^+_\nu\,\W^-_\mu).
\end{equation}
Furthermore, the action $\qS_{\YM }$ of Definition (\ref{4.def1}) is given by (see equation (\ref{ec.4.21.2}))
\begin{equation}
    \label{ec.6.14}
    \begin{aligned}
        -{1\over 4}\int_{\R^4} \mathbbm{F}^+_{\mu\nu}\,\mathbbm{F}^{+\mu\nu}+\mathbbm{F}^-_{\mu\nu}\,\mathbbm{F}^{-\mu\nu}+\mathbbm{F}^\circ_{\mu\nu}\,\mathbbm{F}^{\circ \mu\nu}+\mathbbm{F}^4_{\mu\nu}\,\mathbbm{F}^{4 \mu\nu}.
    \end{aligned}
\end{equation}
This complete the characterization of the  non--commutative geometrical Yang--Mills--Higgs action in the bases $\alpha^\#_{4D}$.  

Now, consider (see equation (\ref{operators}))
\begin{equation}
\label{ec.6.15}
  T^\pm:={T_1\pm i\,T_2\over\sqrt{2}},\qquad Q:=T_3+Y.
\end{equation}
By equation (\ref{ec.5.29}), for a general  non--commutative geometrical field $$\Phi\;\;\longleftrightarrow \;\;\phi=\begin{pmatrix}
  \phi_1 \\
  \phi_2
\end{pmatrix},\qquad \phi_1,\,\phi_2\,\in\, B=C^\infty_\C(\R^4),$$ we get
\begin{equation}
    \label{ec.6.16}
    \begin{aligned}
        \widehat{D}_\mu\phi&=\partial_\mu\phi-i\,q_w\,(\W^+_\mu\,T^++\W^-_\mu\,T^-)\,\phi-i\,\left({q^2_w\,T_3-q^2_4\,Y\over \sqrt{q^2_w+q^2_4}}\right)\,\Z^\circ_\mu\,\phi
        -\,i\,e\,Q\,\A_\mu\,\phi, 
    \end{aligned}
\end{equation}
where
\begin{equation}
    \label{ec.6.17}
    e:={q_w\,q_4\over \sqrt{q^2_w+q^2_4}}.
\end{equation}

\begin{Remark}
    \label{higgs1}
   As in the classical case, we have (\cite{diff1}):
\begin{enumerate}
    \item The field $\h$ $\in$ $B$ is what is usually called the Higgs boson. Its self--interaction gives rise to its mass $\m_H$ (the Higgs mass).
    \item The fields $\W^+_\mu$ and $\W^-_\mu$ are called the $W$--boson fields. Their interaction with $\h$ gives rise to their masses $\m_+$ and $\m_-$, respectively.
    \item The field $\Z^\circ_\mu$ is called the $Z$--boson field. Its interaction with $\h$ gives rise to its mass $\m_z$.
    \item The field $\A_\mu$ is the photon field. It does not interact with $\h$ and is therefore massless ($\m_4=0$).
    \item The eigenvalues $t_3$ of $T_3$ are called the weak isospin, the eigenvalues $y$ of $Y$ are called the hypercharge and the eigenvalues $q$ of $Q$ correspond to the electric charge.
   \item The constant $e$ is known as the electromagnetic coupling constant, or as the elementary electric charge.
   \item For the non--commutative geometrical Higgs field $\Phi$ of equation (\ref{higgsfield}), the weak isospin is $$t_3= -{1\over 2},$$ the hypercharge is $$ y={1\over 2}$$ and hence its electric charge is $$q=0.$$ These are the correct quantum numbers of the Higgs field.
   \item As usual, by taking $$\widetilde{W}^+_\mu:=W^+_\mu T^+,\qquad \widetilde{W}^-_\mu:=W^-_\mu T^-,$$ we find that $$ [T_3,\widetilde{W}^+_\mu]=\widetilde{W}^+_\mu, \qquad [T_3,\widetilde{W}^-_\mu]=-\widetilde{W}^-_\mu.$$ Hence, we can conclude that the gauge boson fields $W^+_\mu$, $W^-_\mu$ have weak isospin $$t_3=\pm 1,$$ respectively. In addition, $$[Y,\widetilde{W}^+_\mu]=[Y,\widetilde{W}^-_\mu]=0$$ and therefore, we can conclude that the fields $W^+_\mu$, $W^-_\mu$ have hypercharge $$y=0,$$ respectively. These imply that  the fields $W^+_\mu$, $W^-_\mu$ have electric charge $$q=0,$$ respectively. These are the correct quantum numbers of the $W$ bosons.
   \item Similarly, the quantum numbers of the gauge boson field $Z^\circ$ are $$t_3=0, \qquad y=0, \qquad q=0,$$ which are the correct ones.
   \item As in the classical case, this breaks the $SU(2)\times U(1)$ symmetry of the model (or, to be more precisely, the non--commutative geometrical $SU(2)$ symmetry) and leaves only a residual $U(1)$ symmetry.
\end{enumerate}
\end{Remark}

In summary, as expected, the Higgs mechanism  for the right structure of $\Mor(\delta^W,\Delta_P)$ coincides with the \emph{classical} Higgs mechanism of the electroweak interaction.

\section{Fermionic Fields}
Unfortunately, at the time of writing, there is, to the best of the author’s knowledge, no general theory for non--commutative geometrical fermionic fields based on dualization via the pull--back of \emph{classical} principal bundles, as is done for gauge boson fields and scalar matter fields in~\cite{sald2}. Therefore, for this part of the work, we adopt an alternative approach drawing on other results in non--commutative geometry.

For the fermionic sector of the electroweak theory in non--commutative geometry, we follow the development of A.~Connes presented in Section~6 of~\cite{con}. In that reference, Connes proves that his formulation of the \emph{non--commutative} Dirac operator coincides with the \emph{classical} Dirac theory, provided that the underlying spaces are \emph{classical} (smooth manifolds) and that the \emph{classical} gauge covariant derivative is used. Moreover, he shows that the fermionic kinetic action 
\begin{equation}
    \label{used1}
    \int_{\R^4} i\,\psi^\dagger\;\overline{\sigma}^\mu\;D_\mu\,\psi\,\dvol, 
\end{equation}
where $\bar{\sigma}^\mu=(\sigma^1,\sigma^2,\sigma^3,\Id)$ can be recovered within his framework. Once the fermionic action and the Higgs action is introduced, we can also introduce the Yukawa action
\begin{equation}
    \label{used2}
   \int_{\R^4} i\,\mathrm{y}'\,\psi^\dagger\,\h\,\psi\,\dvol,
\end{equation}
where $\mathrm{y}'$ is the Yukawa coupling.

Since Connes' theory is  more well--known than Durdevich’s formulation of quantum principal bundles, and to maintain a reasonable length for this paper, we will not present the foundational aspects of this framework here. Nevertheless, the reader can consult \cite{con} for further details.

On the other hand, in reference \cite{saldcon} it is  proven that Durdevich's theory of qpb's can be coupled to A. Connes' theory. In this way, we only need to show that our model reproduces the correct gauge covariant derivatives of fermionic fields to guarantee that the theory reproduces the fermionic sector of the electroweak theory.

\subsection{General Theory}

Let $$\zeta=(P,B,\Delta_P)$$ be a quantum principal $A$--bundle with a differential calculus. Also, consider $$\delta^V:V\longrightarrow V\otimes A$$ be a (unitary) finite--dimensional corepresentation of $A$. Then, according to Section 3 of reference \cite{sald1}, the space 
\begin{equation}
    \label{ec.5.1.1}
    \Mor(\delta^V,\Delta_P):=\{T:V \longrightarrow  P\mid T \mbox{ is linear and }
    (T\otimes \id_A)\circ \delta^V=\Delta_P \circ T\}.
\end{equation}
is a finitely generated projective left  $B$--module. Hence, in accordance with the Serre--Swan theorem, we will refer to
\begin{equation}
    \label{ec.5.2}
    E^V_\l:=\Mor(\delta^V,\Delta_P)
\end{equation}
as the associated left quantum vector bundle of $\zeta$ with respect to $\delta^V$ (abbreviated associated left qvb).

On the other hand, there exists a left $\Omega^\bullet(B)$--module isomorphism (see Section 3 of reference \cite{sald1})
\begin{equation}
    \label{ec.5.3.1}
 \Upsilon_{V}: \Mor(\delta^V,\Delta_\Hor) \longrightarrow \Omega^\bullet(B)\otimes_B E^V_\l.
\end{equation}

As in the previous section, the space $$\Mor(\delta^V,\Delta_\Hor):=\{\tau:V \longrightarrow  \Hor^\bullet\,P\mid \tau \mbox{ is linear and }
    (\tau\otimes \id_A)\circ \delta^V=\Delta_\Hor \circ \tau\}$$  will be interpreted as basic quantum differential forms of type $\delta^V$ of $P$ and the space $$\Omega^\bullet(B)\otimes_B E^V_\l$$ will be interpreted as left  qvb--valued differential forms of $B$.

Let $\omega$ be a qpc. According to \cite{stheve}, the covariant derivative satisfies 
\begin{equation*}
    \begin{aligned}
        D^\omega  \,\in\, \Mor(\Delta_\Hor, \Delta_\Hor):=\{f:\, &\Hor^\bullet\,P\longrightarrow \Hor^\bullet\,P\\
        &\mid f \mbox{ is linear and }
    (f\otimes \id_A)\circ \Delta_\Hor=\Delta_\Hor \circ f \}.
    \end{aligned}
\end{equation*}
and hence, we have
\begin{equation}
    \label{ec.5.5}
    D^\omega:\Mor(\delta^V,\Delta_\Hor)\longrightarrow \Mor(\delta^V,\Delta_\Hor), \qquad \tau\longmapsto D^\omega(\tau)
\end{equation}
where $D^\omega(\tau)(v)=D^\omega(\tau(v))$ for all $v$ $\in$ $V$.

In this way, we define the gauge quantum linear connection (abbreviated guage qlc) on $E^V_\l$ as 
\begin{equation}
    \label{ec.5.6}
    \nabla^\omega_V: E^V_\l\longrightarrow \Omega^1(B)\otimes_B E^V_\l, \qquad T\longmapsto \Upsilon_V(D^\omega(T)).
\end{equation}
It is clear that $\nabla^\omega_V$ is linear and by equation (\ref{2.f32}), it satisfies the left Leibniz rule.

\begin{Remark}
    \label{remacon}
In general, by equation (\ref{2.f32.1}), the map $\widehat{D}^\omega$ does not induce a qlc on $E^V_\l$; while by equation (\ref{2.f32}), the map $D^\omega$ does not induce a qlc on $E^V_\r$. Notice that if $\omega$ is regular, then $D^\omega=\widehat{D}^\omega$ and hence $\widehat{D}^\omega$ induces a qlc on $E^V_\l$ and $D^\omega$ induces a qlc on $E^V_\r$.
\end{Remark}

\subsection{The Model: Left--Handed Lepton Doublets}
Consider the quantum principal $\SU(2)$--bundle $$\zeta_{4D}=(P:=B\otimes \SU(2),B,\Delta_P:=\id_B\otimes \Delta)$$ defined in equation (\ref{ec.3.50}), equipped with the differential calculus given in equation (\ref{ec.3.51}) and consider the $\SU(2)$--corepresentation $\delta^W$ of equation (\ref{ec.5.32}). 

According to Section 5 of reference \cite{sald1}, the left associated qvb (see equation (\ref{ec.5.2}))
\begin{equation}
    \label{5.49.6}
    E^W_\l:=\Mor(\delta^W,\Delta_P)=\{\Psi:W \longrightarrow  P\mid \Psi \mbox{ is linear and }
    (\Psi\otimes \id_{\SU(2)})\circ \delta^W=\Delta_P \circ \Psi\}
\end{equation}
is trivial, i.e., it is a free finite--dimensional left $B$--module. Elements of $E^V_\l$ will be called left non--commutative geometrical fields. In addition, by defining
\begin{equation}
    \label{5.49}
    g_{11}:=\alpha,\qquad g_{21}:=\gamma,\qquad g_{12}:=-\gamma^\ast,\qquad g_{22}:=\alpha^\ast,
\end{equation}
the linear maps 
\begin{equation}
    \label{5.50}
    \Psi^L_i:W\longrightarrow P\qquad \mbox{ such that }\qquad \Psi^L_i(\bar{e}_j)=\mathbbm{1}\otimes g_{ij}
\end{equation}
for $i=1,2$, forms a left $B$--basis of $E^W_\l$ (\cite{sald1}). 
Thus, every non--commutative geometrical field $\Psi$ $\in$ $E^W_\l$ is of the form
\begin{equation}
    \label{ec.5.41.5}
    \Psi=\sum^{2}_{k=1} \psi_k \,\Psi^H_k\quad \mbox{ with }\quad \psi_1,\,\psi_2 \;\in\; B=C^\infty_\C(\R^4).
\end{equation}

In light of Section 3 of reference \cite{sald1}, 
the left $\Omega^\bullet(B)$--module isomorphism  of equation (\ref{ec.5.3.1}) for $\delta^W$ is given by
\begin{equation}
    \label{5.52}
    {\Upsilon}_W(\Lambda)= \sum^2_{k=1}  \xi^\Lambda_k \otimes_B \Psi^L_k \qquad \mbox{ with }\qquad \xi^\Lambda_k:=\sum^2_{i=1} \Lambda(\bar{e}_i)(\mathbbm{1}\otimes g^\ast_{ki})\,\in\,\Omega^\bullet(B)
\end{equation}
for all $$\Lambda\,\in\ \Mor(\delta^W,\Delta_\H):=\{\Lambda: W\longrightarrow \Hor^\bullet\,P\mid \Lambda \mbox{ is linear and }
    (\Lambda\otimes \id_{\SU(2)})\circ \delta^W=\Delta_\Hor \circ \Lambda \}.$$

\begin{Theorem}
    \label{teorema2}
    The gauge qlc ${\nabla}^\omega_W$ is given by  $${\nabla}^\omega_W(\Psi)={\Upsilon}_W({D}^\omega(\Psi))=\sum^2_{k=1}  \xi^{{D}^\omega\Psi}_k\otimes_B \Psi^L_k$$ for all $\Psi$ $\in$ $E^W_\r$, where
    \begin{equation}
    \label{5.53}
    \begin{aligned}
        \xi^{D^\omega\Psi}_1&=d\psi_1-{i\,q_w\over 2}\,A^\omega(\beta_1)\,\psi_2-{q_w\over 2}\,A^\omega(\beta_2)\,\psi_2- {i\,q_w\over 2}\,A^\omega(\beta_3)\,\psi_1+{i\,q_4\over 2}\,A^\omega(\beta_4)\,\psi_1,
    \\
    \xi^{D^\omega\Psi}_2&=
    d\psi_2-{i\,q_w\over 2}\,A^\omega(\beta_1)\,\psi_1+{q_w\over 2}\,A^\omega(\beta_2)\,\psi_1+{i\,q_w\over 2}\,A^\omega(\beta_3)\,\psi_2+{i\,q_4\over 2}\,A^\omega(\beta_4)\,\psi_2. 
    \end{aligned}
\end{equation}
\end{Theorem}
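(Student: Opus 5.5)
The plan mirrors the proof of Theorem \ref{teorema1}, using $D^\omega$ instead of $\widehat{D}^\omega$. First compute $D^\omega$ on an elementary horizontal $0$--form $\phi\otimes g\in B\otimes\SU(2)=\Hor^0 P$. By the definition of $D^\omega$ with $k=0$,
$$D^\omega(\phi\otimes g)=d(\phi\otimes g)-(\phi\otimes g^{(1)})\,\omega(\pi(g^{(2)})).$$
Substitute $\omega=(A^\omega\otimes\id)\circ\ad+\omega^\triv$ from equation (\ref{ec.3.56}) and use $dg=g^{(1)}\pi(g^{(2)})$ from (\ref{properties}). The $\omega^\triv$ term then cancels the $\phi\otimes dg$ piece. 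For the remaining piece, use $\ad\circ\pi=(\pi\otimes\id)\Ad$ together with $g^{(1)}S(g^{(2)})=\epsilon(g)$, which collapses it to
$$D^\omega(\phi\otimes g)=d\phi\otimes g-\phi\,A^\omega(\pi(g^{(1)}))\otimes g^{(2)}.$$
Here $\Omega^\bullet(B)$ is graded--commutative and $\phi$ has degree $0$, so the order of $\phi$ and $A^\omega(\cdot)$ does not matter.

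Next, write $\Psi=\sum_k\psi_k\Psi^L_k$ and apply this formula to $\Psi(\bar e_i)=\sum_k\psi_k\otimes g_{ki}$, using $\Delta(g_{ki})=\sum_l g_{kl}\otimes g_{li}$ as in equation (\ref{ec.5.need}). Relabelling indices gives
$$D^\omega(\Psi(\bar e_i))=\sum_k\Big[d\psi_k-\sum_l A^\omega(\pi(g_{lk}))\,\psi_l\Big]\otimes g_{ki}.$$
Equation (\ref{5.52}) and the unitarity relation $\sum_i g_{ki}g^\ast_{ji}=\delta_{kj}$ then identify
$$\xi^{D^\omega\Psi}_k=d\psi_k-\sum_l A^\omega(\pi(g_{lk}))\,\psi_l .$$

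The last step is to express $\pi(\alpha),\pi(\gamma),\pi(-\gamma^\ast),\pi(\alpha^\ast)$ in the basis $\widetilde\beta^\#_{4D}$. From equation (\ref{ec.2.89}) and equation (\ref{ec.2.103}):
$$\pi(\alpha)=\tfrac{iq_w}{2}\beta_3-\tfrac{iq_4}{2}\beta_4,\qquad \pi(\alpha^\ast)=-\tfrac{iq_w}{2}\beta_3-\tfrac{iq_4}{2}\beta_4,$$
$$\pi(\gamma)=\tfrac{iq_w}{2}\beta_1+\tfrac{q_w}{2}\beta_2,\qquad \pi(\gamma^\ast)=-\tfrac{iq_w}{2}\beta_1+\tfrac{q_w}{2}\beta_2.$$
These are the same values used in the proof of Theorem \ref{teorema1}. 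Substituting them should give the stated formulas.

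The main obstacle is the bookkeeping of signs, since the $\beta_4$ coefficient must come out with the opposite sign to Theorem \ref{teorema1}. I expect this to work as follows. The $\widehat D$ case involves $\pi(S^{-1}(g_{lk}))$ with a plus sign, while here we get $-\pi(g_{lk})$. Since $S(\alpha)=\alpha^\ast$ and $S(\gamma)=-\gamma$, the $SU(2)$ parts ($\beta_1,\beta_2,\beta_3$) agree in the two cases. The $\beta_4$ part flips, because $\pi(\alpha)$ and $\pi(\alpha^\ast)$ share the same $\beta_4$ coefficient rather than opposite ones. This is consistent with Remark \ref{reg}, which says $D^\omega\neq\widehat D^\omega$ exactly when $A^\omega(\eta_4)\neq0$. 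I would double--check the index placement $g_{lk}$ against $g_{kl}$ by verifying the $\beta_3$ term for $\psi_1$ directly.
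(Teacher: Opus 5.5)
Your proposal is correct and follows the paper's proof essentially step for step. Both arguments compute $D^\omega(\psi\otimes g)=d\psi\otimes g-\psi\,A^\omega(\pi(g^{(1)}))\otimes g^{(2)}$, expand $\Psi(\bar e_i)=\sum_k\psi_k\otimes g_{ki}$ via $\Delta(g_{ki})=\sum_l g_{kl}\otimes g_{li}$, read off $\xi^{D^\omega\Psi}_k=d\psi_k-\sum_l\psi_l\,A^\omega(\pi(g_{lk}))$ from equation (\ref{5.52}), and substitute $\pi(\alpha),\pi(\gamma),-\pi(\gamma^\ast),\pi(\alpha^\ast)$ in the basis $\widetilde{\beta}^\#_{4D}$. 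Two points you make explicit are left implicit in the paper, and both check out: the unitarity relation $\sum_i g_{ki}g^\ast_{ji}=\delta_{kj}$, and the origin of the flipped $\beta_4$ sign in $\pi(\alpha)+\pi(\alpha^\ast)=-i\,q_4\,\eta_4\neq0$.
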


\begin{proof}
    Let $\omega$ be a qpc and consider its covariant derivative $D^\omega$ (see equation (\ref{ec.3.8})). Since $$dg=g^{(1)}\pi(g^{(2)}) $$ for all $g$ $\in$ $\G$ (\cite{stheve}), we obtain 
\begin{eqnarray*}
    D^{\omega}(\psi\otimes g)&=&d(\psi\otimes g)-(\psi\otimes g^{(1)})\,\omega(\pi(g^{(2)}))
    \\
    &=&
    d\psi\otimes g+\psi\otimes dg-(\psi\otimes g^{(1)})\,[(A^\omega\otimes \id_\G)\ad(\pi(g^{(2)}))+\mathbbm{1}\otimes \pi(g^{(2)})]
    \\
    &=&
    d\psi\otimes g-(\psi\otimes g^{(1)})\,[(A^\omega\otimes \id_\G)\ad(\pi(g^{(2)}))]
    \\
    &=&
    d\psi\otimes g-(\psi\otimes g^{(1)})\,[((A^\omega\circ \pi)\otimes \id_\G)\Ad(g^{(2)})]
    \\
    &=&
    d\psi\otimes g-(\psi\otimes g^{(1)})\,[(A^\omega(\pi(g^{(3)}))\otimes S(g^{(2)})g^{(4)}]
    \\
    &=&
    d\psi\otimes g-\psi\,A^{\omega}(\pi(g^{(1)}))\otimes g^{(2)}
\end{eqnarray*}
for all $\psi$ $\in$ $B$. 

By equation (\ref{ec.5.need}), it follows that
\begin{eqnarray*}
D^\omega(\Psi(\bar{e}_1))=\sum^2_{k=1}D^\omega(\psi_k\,\Psi_k(\bar{e}_1))&=&\sum^2_{k=1}D^\omega(\psi_k\,\Psi_k(\bar{e}_1))
\\
&=&\sum^2_{k=1}D^\omega(\psi_k\otimes g_{k1})
\\
&=&
\sum^2_{k=1} d\psi_k\otimes g_{k1}-\psi_k\,A^{\omega}(\pi(g^{(1)}_{k1}))\otimes g^{(2)}_{k1}
\\
&=&
\sum^2_{k=1} d\psi_k\otimes g_{k1}-\sum^2_{k,l=1}\psi_k\,A^{\omega}(\pi(g_{kl}))\otimes g_{l1}
\\
&=&
\sum^2_{k=1} [d\psi_k-\sum^2_{l=1}\psi_l\,A^{\omega}(\pi(g_{lk}))]\otimes g_{k1}
\\
&=&
[d\psi_1-\psi_1\,A^{\omega}(\pi(g_{11}))-\psi_2\,A^{\omega}(\pi(g_{21}))]\otimes g_{11}
\\
&+&
[d\psi_2-\psi_1\,A^{\omega}(\pi(g_{12}))-\psi_2\,A^{\omega}(\pi(g_{22}))]\otimes g_{21}.
\end{eqnarray*}
Similarly, we have
\begin{eqnarray*}
    D^\omega(\Psi(\bar{e}_2))&=&\sum^2_{k=1} [d\psi_k-\sum^2_{l=1}\psi_l\,A^{\omega}(\pi(g_{lk}))]\otimes g_{k2}
    \\
    &=&
    [d\psi_1-\psi_1\,A^{\omega}(\pi(g_{11}))-\psi_2\,A^{\omega}(\pi(g_{21}))]\otimes g_{12}
    \\
    &+&
    [d\psi_2-\psi_1\,A^{\omega}(\pi(g_{12}))-\psi_2\,A^{\omega}(\pi(g_{22}))]\otimes g_{22}.
\end{eqnarray*}
Thus, by equation (\ref{5.52}) we get
$$\xi^{D^\omega\Psi}_1=\sum^2_{k=1}D^\omega(\Psi(\bar{e}_k))(\mathbbm{1}\otimes g^\ast_{1k})=
    d\psi_1-\psi_1\,A^{\omega}(\pi(g_{11}))-\psi_2\,A^{\omega}(\pi(g_{21}))\;\in\; \Omega^1(B), $$
$$\xi^{D^\omega\Psi}_2=\sum^2_{k=1}D^\omega(\Psi(\bar{e}_k))(\mathbbm{1}\otimes g^\ast_{2k})=d\psi_2-\psi_1\,A^{\omega}(\pi(g_{12}))-\psi_2\,A^{\omega}(\pi(g_{22})) \;\in\; \Omega^1(B).$$ 

According to Section 2.1, we obtain
$$\pi(g_{11})=\pi(\alpha)={i\,q_w\over 2}\beta_3-{i\,q_4\over 2}\beta_4,\qquad \pi(g_{22})=\pi(\alpha^\ast)=-{i\,q_w\over 2}\beta_3-{i\,q_4\over 2}\beta_4,$$ $$\pi(g_{21})=\pi(\gamma)={i\,q_w\over 2}\beta_1+{q_w\over 2}\beta_2,\qquad \pi(g_{12})=-\pi(\gamma^\ast)={i\,q_w\over 2}\beta_1-{q_w\over 2}\beta_2$$  
and therefore,
\begin{equation*}
    \xi^{D^\omega\Psi}_1=
    d\psi_1-{i\,q_w\over 2}\,\psi_2\,A^\omega(\beta_1)-{q_w\over 2}\,\psi_2\,A^\omega(\beta_2)- {i\,q_w\over 2}\,\psi_1\,A^\omega(\beta_3)+{i\,q_4\over 2}\,\psi_1\,A^\omega(\beta_4),
\end{equation*}
\begin{equation*}
    \xi^{D^\omega\Psi}_2=
    d\psi_2-{i\,q_w\over 2}\,\psi_1\,A^\omega(\beta_1)+{q_w\over 2}\,\psi_1\,A^\omega(\beta_2)+{i\,q_w\over 2}\,\psi_2\,A^\omega(\beta_3)+{i\,q_4\over 2}\,\psi_2\,A^\omega(\beta_4). 
\end{equation*}
\end{proof}

Let us denote by $$\sigma_W: \Omega^\bullet(B)\otimes_B E^W_\l\longrightarrow E^W_\r\otimes_B \Omega^\bullet(B)$$ the canonical flip map. Then, by comparison Theorem (\ref{teorema1}) with Theorem (\ref{teorema2}), it is clear that 
\begin{equation}
    \label{ec.5.55.5}
    \sigma_W\circ \nabla^\omega_W=\widehat{\nabla}^\omega_W
\end{equation}
if and only if $A^\omega(\beta_4)=0$, i.e., if and only if $\omega$ is regular (see Proposition \ref{regular}). Equation (\ref{ec.5.55.5}) shows that the gauge qlc's $\nabla^\omega_W$, $\widehat{\nabla}^\omega_W$ can be combined into a bimodule qlc on $\Mor(\delta^W,\Delta_P)$ if only if $\omega$ is regular and in any other case, they are not related.

As before, let us define
\begin{equation}
    \label{operators1}
    T_1:={\sigma_1\over 2}, \qquad T_2:={\sigma_2\over 2},\qquad  T_3:={\sigma_1\over 2},\qquad Y:=-{\Id_2\over 2}=-{1\over 2}\,\begin{pmatrix}
1 & 0  \\
0 & 1  \\
\end{pmatrix},
\end{equation}
where $\sigma_1$, $\sigma_2$, $\sigma_3$ are the Pauli matrices. Hence, in index notation, let us define
\begin{equation}
    \label{ec.cova1}
    \begin{aligned}
        D_\mu\psi_1&:=\partial_\mu\psi_1-{i\,q_w\over 2}\,A^1_\mu\,\psi_2-{q_w\over 2}\,A^2_\mu\,\psi_2-{i\,q_w\over 2}\,A^3_\mu\,\psi_1+{i\,q_4\over 2}\,A^4_\mu\,\psi_1,
        \\
       D_\mu\psi_2 &:=\partial_\mu\psi_2-{i\,q_w\over 2}\,A^1_\mu\,\psi_1+{q_w\over 2}\,A^2_\mu\,\psi_1+{i\,q_w\over 2}\,A^3_\mu\,\psi_2+{i\,q_4\over 2}\,A^4_\mu\,\psi_2
    \end{aligned}
\end{equation}
for $\mu=0,1,2,3$. These formulas can be summarized as
\begin{equation}
    \label{ec.5.29.1}
    D_\mu \psi:=\partial_\mu \psi- i\,q_w\,A^1_\mu\,T_1\,\psi- i\,q_w\,A^2_\mu\,T_2\,\psi- i\,q_w\,A^3_\mu\,T_3\,\psi-i\,q_4\,A^4_\mu\,Y\,\psi,
\end{equation}
where 
$$\psi:=\begin{pmatrix}
    \psi_1  \\
    \psi_2
\end{pmatrix}. $$

As before, we define 
\begin{equation}
    \label{ec.electric}
    Q=T_3+Y.
\end{equation}
The eigenvalues $t_3$ of $T_3$ are called the weak isospin, and the eigenvalues $y$ of $Y$ are called the hypercharge. Thus, the eigenvalues $q$ of $Q$ correspond to the electric charge.

\begin{Remark}
    \label{remacovaleft}
    If we identify a left non--commutative geometrical field $\Psi$ with $\psi$, i.e., 
\begin{equation}
    \label{5.54}
    \Psi\;\;\longleftrightarrow \;\; \psi
\end{equation}
we can identify 
\begin{equation}
    \label{5.57}
    \nabla^\omega_W\Psi\;\;\longleftrightarrow \;\; D_\mu\psi,\qquad \xi^{D^\omega\psi}_1\;\;\longleftrightarrow\;\; D_\mu\psi_1, \qquad \xi^{D^\omega\psi}_2\;\;\longleftrightarrow\;\; D_\mu\psi_2.
\end{equation}
The most striking feature is that $\nabla^\omega_W\Psi$ coincides exactly with the gauge covariant derivative (also called the gauge linear connection) on the left--handed lepton doublet fields $$ \psi=\begin{pmatrix}
    \psi_1  \\
    \psi_2
\end{pmatrix}=\left\{\begin{pmatrix}
    \nu_{\mathrm{e}}  \\
    \mathrm{e}
\end{pmatrix}_\l, \; \begin{pmatrix}
    \nu_{\mu}  \\
    \mu
\end{pmatrix}_\l,\; \begin{pmatrix}
    \nu_{\tau}  \\
    \tau
\end{pmatrix}_\l \right\}$$ in the electroweak theory, including of course, the correct corresponding value of the hypercharge $$y=-{1\over 2}.$$ In addition, $$\psi_1\;\;\longleftrightarrow\;\; \begin{pmatrix}
    \psi_1  \\
    0
\end{pmatrix}=\left\{\begin{pmatrix}
    \nu_{\mathrm{e}}  \\
    0
\end{pmatrix}_\l, \; \begin{pmatrix}
    \nu_{\mu}  \\
    0
\end{pmatrix}_\l,\; \begin{pmatrix}
    \nu_{\tau}  \\
    0
\end{pmatrix}_\l \right\} $$ always has weak isospin $$t_3={1\over 2} $$ and hence, its electric charge is $$q=0.$$ Moreover, 
$$\psi_2\;\;\longleftrightarrow\;\; \begin{pmatrix}
    0  \\
    \psi_2
\end{pmatrix}=\left\{\begin{pmatrix}
    0  \\
    \mathrm{e}
\end{pmatrix}_\l, \; \begin{pmatrix}
    0  \\
    \mu
\end{pmatrix}_\l,\; \begin{pmatrix}
    0  \\
    \tau
\end{pmatrix}_\l \right\} $$ always has weak isospin $$t_3=-{1\over 2} $$ and hence, its electric charge is $$q=1.$$ These are the correct quantum numbers for left--handed lepton fields (\cite{diff1}).
\end{Remark}

\subsection{The Model: Right--Handed Lepton Singlets}

Unfortunately, the description of the rest of fermions in not so \emph{natural} as the description of the Higgs field and left--handed lepton doublet fields. The reason is that, in the \emph{classical} case, the rest of fermion fields are described as sections of the associated vector bundle of $$\pi_\class:\R^4\times G\longrightarrow \R^4,\qquad G=SU(2)\times U(1)$$ with respect to the tensor product of a $SU(2)$--representation and a $U(1)$--representation that is not the standard one.

However, in our model, we only have access to the standard representations of $U(1)$. In order to solve this, we will use the fact that the space of qpc's $\mathfrak{qpc}(\zeta_{4D})$ is in bijection with the space of quantum gauge potentials (see equation (\ref{ec.3.55})).

\begin{Definition}
    \label{Y_e}
    We define the map $${\mathrm{Lep}_\r}: \mathfrak{qpc}(\zeta_{4D})\longrightarrow \mathfrak{qpc}(\zeta_{4D}), \qquad \omega\longmapsto {\mathrm{Lep}_\r}(\omega),$$
    where ${\mathrm{Lep}_\r}(\omega)$ is the qpc defined by the quantum gauge potential  $$A^{{\mathrm{Lep}_\r}(\omega)}:\su(2,\C)^\#_{4D}\longrightarrow \Omega^1(B)$$ given by $$ A^{{\mathrm{Lep}_\r}(\omega)}(\beta_1)=A^{{\mathrm{Lep}_\r}(\omega)}(\beta_2)=A^{{\mathrm{Lep}_\r}(\omega)}(\beta_3)=0,\qquad A^{{\mathrm{Lep}_\r}(\omega)}(\beta_4)=-2\,A^\omega(\beta_4),$$ where $A^\omega$ is the quantum gauge potential associated with $\omega$. 
\end{Definition}
In this way, for a given element $\Psi$ of $E^W_\l$  $$\Psi=\sum^2_{i=1}\psi_k\,\Psi_k \qquad \mbox{ with }\qquad \psi_1=\psi_2,$$ we will consider the association $$(\omega,\Psi)\;\;\longmapsto \;\; \nabla^{{\mathrm{Lep}_\r}(\omega)}_W(\Psi).$$ By equation (\ref{ec.cova1}), we get
\begin{equation}
    \label{ec.5.29.2}
    \widehat{D}_\mu \psi_1=\partial_\mu \psi_1-i\,q_4\,A^4_\mu\,\psi_1
\end{equation}
and
\begin{equation}
    \label{ec.5.29.3}
    \widehat{D}_\mu \psi=\partial_\mu \psi-i\,q_4\,A^4_\mu\,Y\,\psi,\qquad \psi= \begin{pmatrix}
    \psi_1  \\
    \psi_1
\end{pmatrix}, \qquad  Y=\begin{pmatrix}
    1 & 0  \\
    0 & 1
\end{pmatrix}
\end{equation}
As before, we define 
\begin{equation}
    \label{ec.electric1}
    Q=T_3+Y.
\end{equation}
\begin{Remark}
    \label{remacovaright1}
The most striking feature is that $\nabla^{{\mathrm{Lep}_\r}(\omega)}_W\Psi$ for every qpc $\omega$ coincides exactly with the gauge covariant derivative (also called the gauge linear connection) on the right--handed lepton singlet field $$ \psi_1=\{\mathrm{e}_\r,\;\mu_\r,\;\tau_\r  \}$$ in the  electroweak theory, including of course, the correct corresponding value of the hypercharge $$y=1.$$ Clearly $$t_3=0$$ and hence $$q=1.$$  These are the correct quantum numbers for right--handed lepton singlets (\cite{diff1}).
\end{Remark}

\subsection{The Model:  Left--Handed Quarks Doublets}

As in the previous subsection, we have

\begin{Definition}
    \label{Y_q}
    We define the map $${\mathrm{Quark}_\l}: \mathfrak{qpc}(\zeta_{4D})\longrightarrow \mathfrak{qpc}(\zeta_{4D}), \qquad \omega\longmapsto {\mathrm{Quark}_\l}(\omega),$$
    where $\mathrm{Quark}_\l(\omega)$ is the qpc defined by the quantum gauge potential  $$A^{{\mathrm{Quark}_\l}(\omega)}:\su(2,\C)^\#_{4D}\longrightarrow \Omega^1(B)$$ given by $$A^{{\mathrm{Quark}_\l}(\omega)}(\beta_j)=A^{\omega}(\beta_j) \quad \mbox{ for }\quad j=1,2,3 \quad \mbox{ and }\quad  A^{{\mathrm{Quark}_\l}(\omega)}(\beta_4)=-{1\over 3}\,A^\omega(\beta_4),$$ where $A^\omega$ is the quantum gauge potential associated with $\omega$. 
\end{Definition}
In this way, for a given element $\Psi$ of $E^W_\l$  $$\Psi=\sum^2_{i=1}\psi_k\,\Psi_k \;\;\longleftrightarrow \;\;\psi=\begin{pmatrix}
    \psi_1   \\
    \psi_2 
\end{pmatrix}$$ we will consider the association $$(\omega,\Psi)\;\;\longmapsto \;\; \nabla^{{\mathrm{Quark}_\l}(\omega)}_W(\Psi).$$ By equation (\ref{ec.5.29.1}), we get
\begin{equation}
    \label{ec.5.29.55}
    D_\mu \psi:=\partial_\mu \psi- i\,q_w\,A^1_\mu\,T_1\,\psi- i\,q_w\,A^2_\mu\,T_2\,\psi- i\,q_w\,A^3_\mu\,T_3\,\psi-i\,q_4\,A^4_\mu\,Y\,\psi,
\end{equation}
with
\begin{equation}
    \label{ec.5.29.4}
  Y={1\over 6}\begin{pmatrix}
    1 & 0  \\
    0 & 1
\end{pmatrix}
\end{equation}
As before, we define 
\begin{equation}
    \label{ec.electric2}
    Q=T_3+Y.
\end{equation}
\begin{Remark}
    \label{remacovaleft2}
The most striking feature is that $\nabla^{{\mathrm{Quark}_\l}(\omega)}_W\Psi$ coincides exactly with the gauge covariant derivative (also called the gauge linear connection) on the left--handed quark doublet fields $$ \psi=\begin{pmatrix}
    \psi_1  \\
    \psi_2
\end{pmatrix}=\left\{\begin{pmatrix}
    u  \\
    d
\end{pmatrix}_\l, \; \begin{pmatrix}
    c  \\
    s
\end{pmatrix}_\l,\; \begin{pmatrix}
    t  \\
    b
\end{pmatrix}_\l \right\}$$ in the electroweak theory, including of course, the correct corresponding value of the hypercharge $$y={1\over 6}.$$ In addition, $$\psi_1\;\;\longleftrightarrow\;\; \begin{pmatrix}
    \psi_1  \\
    0
\end{pmatrix}=\left\{\begin{pmatrix}
    u  \\
    0
\end{pmatrix}_\l, \; \begin{pmatrix}
    c  \\
    0
\end{pmatrix}_\l,\; \begin{pmatrix}
    t  \\
    0
\end{pmatrix}_\l \right\} $$ always has weak isospin $$t_3={1\over 2} $$ and hence, its electric charge is $$q={2\over 3}.$$ Moreover, 
$$\psi_2\;\;\longleftrightarrow\;\; \begin{pmatrix}
    0  \\
    \psi_2
\end{pmatrix}=\left\{\begin{pmatrix}
    0  \\
    d
\end{pmatrix}_\l, \; \begin{pmatrix}
    0  \\
    s
\end{pmatrix}_\l,\; \begin{pmatrix}
    0  \\
    b
\end{pmatrix}_\l \right\} $$ always has weak isospin $$t_3=-{1\over 2} $$ and hence, its electric charge is $$q=-{1\over 3}.$$ These are the correct quantum numbers for left--handed quark fields (\cite{diff1}).
\end{Remark}

\subsection{The Model:  Right--Handed Quarks Singlets}

As in the previous subsection, we have

\begin{Definition}
    \label{Y_qr}
    We define the map $$\mathrm{Quark}_\r: \mathfrak{qpc}(\zeta_{4D})\longrightarrow \mathfrak{qpc}(\zeta_{4D}), \qquad \omega\longmapsto \mathrm{Quark}_\r(\omega),$$
    where $\mathrm{Quark}_\r(\omega)$ is the qpc defined by the quantum gauge potential  $$A^{\mathrm{Quark}_\r(\omega)}:\su(2,\C)^\#_{4D}\longrightarrow \Omega^1(B)$$ given by $$ A^{\mathrm{Quark}_\r(\omega)}(\beta_1)=A^{\mathrm{Quark}_\r(\omega)}(\beta_2)=A^{\mathrm{Quark}_\r(\omega)}(\beta_3)=0,\qquad A^{\mathrm{Quark}_\r(\omega)}(\beta_4)=-{4\over 3}\,A^\omega(\beta_4),$$ where $A^\omega$ is the quantum gauge potential associated with $\omega$. 
\end{Definition}
In this way, for a given element $\Psi$ of $E^W_\l$  $$\Psi=\sum^2_{i=1}\psi_k\,\Psi_k \qquad \mbox{ with }\qquad \psi_1=\psi_2,$$ we will consider $$(\omega,\Psi)\;\;\longmapsto \;\; \nabla^{\mathrm{Quark}_\r(\omega)}_W(\Psi).$$ By equation (\ref{ec.cova1}), we get
\begin{equation}
    \label{ec.5.29.2.1}
    \widehat{D}_\mu \psi_1=\partial_\mu \psi_1-{i\,2\,q_4\over 3}\,A^4_\mu\,\psi_1.
\end{equation}
and
\begin{equation}
    \label{ec.5.29.3.1}
    \widehat{D}_\mu \psi=\partial_\mu \psi-i\,q_4\,A^4_\mu\,Y\,\psi,\qquad \psi= \begin{pmatrix}
    \psi_1  \\
    \psi_1
\end{pmatrix}, \qquad  Y={2\over 3}\begin{pmatrix}
    1 & 0  \\
    0 & 1
\end{pmatrix}
\end{equation}
As before, we define 
\begin{equation}
    \label{ec.electric1.1}
    Q=T_3+Y.
\end{equation}
\begin{Remark}
    \label{remacovaright1.1}
The most striking feature is that $\nabla^{\mathrm{Quark}_\r(\omega)}_W\Psi$ for every qpc $\omega$ coincides exactly with the gauge covariant derivative (also called the gauge linear connection) on the right--handed quark singlet field $$ \psi_1=\{u_\r,\;c_\r,\;t_\r  \}$$ in the electroweak theory, including of course, the correct corresponding value of the hypercharge $$y={2\over 3}.$$ Clearly $$t_3=0$$ and hence $$q={2\over 3}.$$  These are the correct quantum numbers for $\{u_\r,c_\r,t_\r  \}$ (\cite{diff1}).
\end{Remark}

By using the map
$$\mathrm{Quark}'_\r: \mathfrak{qpc}(\zeta_{4D})\longrightarrow \mathfrak{qpc}(\zeta_{4D}), \qquad \omega\longmapsto \mathrm{Quark}'_\r(\omega),$$
    where $\mathrm{Quark}'_\r(\omega)$ is the qpc defined by the quantum gauge potential  $$A^{\mathrm{Quark}'_\r(\omega)}:\su(2,\C)^\#_{4D}\longrightarrow \Omega^1(B)$$ given by $$ A^{\mathrm{Quark}'_\r(\omega)}(\beta_1)=A^{\mathrm{Quark}'_\r(\omega)}(\beta_2)=A^{\mathrm{Quark}'_\r(\omega)}(\beta_3)=0,\qquad A^{\mathrm{Quark}'_\r(\omega)}(\beta_4)={2\over 3}\,A^\omega(\beta_4),$$
we find that that $\nabla^{\mathrm{Quark}'_\r(\omega)}_W\Psi$ for every qpc $\omega$ coincides exactly with the gauge covariant derivative (also called the gauge linear connection) on the right--handed quark singlet field $$ \psi_1=\{d_\r,\;s_\r,\;b_\r  \}$$ in the classical electroweak theory, including of course, the correct corresponding value of the hypercharge $$y=-{1\over 3}.$$ Clearly $$t_3=0$$ and hence $$q=-{1\over 3},$$ as expected (\cite{diff1}).  

\begin{Remark}
     Proposition~\ref{gauge2} also holds for fermionic fields and non--commutative geometrical fermionic fields, and the proof is completely similar. Hence, our model also has the necessary symmetries for the action of equation (\ref{used1}).
\end{Remark}

\begin{Remark}
    Since all the equations of our model coincides with the ones of the electroweak interaction, it follows that our equations are also Lorentz--invariant.
\end{Remark}

    As we commented in the introductory section, the purpose of this paper is not to provide new information on the electroweak interaction, rather, the purpose of this paper is to show that the non--commutative geometrical formulation of the Yang--Mills theory presented in~\cite{sald1,sald2,saldcon} allows one to describe the electroweak theory using solely a quantum principal $SU(2)$--bundle. Since our model satisfactorily reproduces all the actions, symmetries and field equations of the electroweak theory, in principle, all the physics obtained after a second quantization and renormalization procedure  can also be reproduced. 
    
    However, the reader should be careful: our model is not an ordinary $SU(2)$--Yang--Mills theory with $3$ gauge boson fields. Rather, it is a non--commutative geometrical $SU(2)$--Yang--Mills theory with $4$ gauge boson fields.\\

{\bf Note to the reviewer:} I understand that the manuscript is relatively long. This is because of all the subsections titled \emph{General Theory}. However, as I commented in the introductory section, I really think that these subsections are necessary because I believe that a paper should be reasonably self--contained, especially when the underlying theory is not widely known (unfortunately, this is the case for Durdevich’s formulation). Since the reader must be familiar with this framework in order to properly understand the model I am proposing, a clear and sufficiently detailed summary of the parts of Durdevich’s theory relevant to this paper is necessary.

If after this brief explanation, you still consider that the paper is too long, I would be grateful if you could indicate which parts of these subsections you believe could be removed without affecting the rest of the paper. I will gladly make the necessary modifications to ensure that the manuscript fully addresses your comments on this matter, as well as any additional concerns regarding the manuscript that you may have.

\end{document}